\def\FcopyforlaterA#1#2#3#4{%
\long\def#1{\begin{#2}\label{#3}#4\end{#2}}%
}
\def\FcopyforlaterB#1#2#3#4{%
\renewcommand{\thetheorem}{\ref{#3}}%
\begin{#2}#4\end{#2}%
\renewcommand{\thetheorem}{\arabic{section}.\arabic{theorem}}%
}
\def\Fcopyforlater#1#2#3#4{
  \FcopyforlaterA{#1}{#2}{#3}{#4}
  \FcopyforlaterB{#1}{#2}{#3}{#4}
}
\def\Fpasteinsection#1{%
#1
}
\title{Succinct Choice Dictionaries}
\author{Torben Hagerup and Frank Kammer}
\institute{Institut f\"ur Informatik, Universit\"at Augsburg, 86135
Augsburg, Germany
\email{\{hagerup,kammer\}@informatik.uni-augsburg.de}}
\def\Tvn#1{\hbox{\textit{#1\/}}}
\def\Tfloor#1{\lfloor #1\rfloor}
\def\Tceil#1{\lceil #1\rceil}
\def\TbbbN{\mathbb{N}}
\def\Ttwodots{\mathinner{\ldotp\ldotp}}%
\def\Tsup#1{^{\mbox{\scriptsize #1}}}%
\gdef\Tsub#1{_{\mbox{\scriptsize #1}}}%
\def\cchoice{\overline{\Tvn{choice}}}
\def\citerate{\overline{\Tvn{iterate}}}
\def\jj{{\bar j}}
\def\Trho{R}
\def\rbar{\overline{r}}
\newbox\Twbox\setbox\Twbox=\hbox{\small($\mskip 1mu\lower0.0pt\hbox{\epsffile{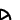}}\mskip 1mu,\mskip 2mu$3%
 $\mskip 2mu:\mskip 2mu$2$\mskip 0mu$)}
\def\Tmyw{\copy\Twbox}
\newbox\Tsbox\setbox\Tsbox=\hbox{$\lower0.0pt\hbox{\epsffile{weight.eps}}\mskip 1mu$}
\def\Tmys{\copy\Tsbox}
\def\Ttvn#1{t_{\mathit{#1}}}
\def\eqalign#1{%
 \null\,\vcenter{\openup\jot\m@th\ialign{\strut\hfil$\displaystyle{%
  ##}$&$\displaystyle{{}##}$\hfil\crcr#1\crcr}}\,}%
\begin{document}

\maketitle{}%

\begin{abstract}%
The \emph{choice dictionary} is introduced as a
data structure that can be initialized
with a parameter $n\in\TbbbN=\{1,2,\ldots\}$
and subsequently maintains an initially empty subset $S$
of $\{1,\ldots,n\}$ under insertion, deletion, membership queries
and an operation \Tvn{choice} that returns
an arbitrary element of $S$.
The choice dictionary appears to be fundamental
in space-efficient computing.
We show that there is a choice dictionary
that can be initialized
with $n$ and an additional parameter $t\in\TbbbN$
and subsequently
occupies $n+O(n({t/w})^t+\log n)$ bits of memory
and executes each of the four operations \Tvn{insert},
\Tvn{delete}, \Tvn{contains} (i.e., a membership query)
and \Tvn{choice} in $O(t)$ time
on a word RAM with a word length of $w=\Omega(\log n)$ bits.
In particular, with $w=\Theta(\log n)$,
we can support \Tvn{insert}, \Tvn{delete},
\Tvn{contains} and \Tvn{choice} in constant time using
$n+O({n/{(\log n)^t})}$ bits for arbitrary fixed~$t$.
We extend our results to maintaining several
pairwise disjoint subsets of $\{1,\ldots,n\}$.

A static representation of a subset $S$ of $\{1,\ldots,n\}$
that consists of $n+s$ bits $b_1,\ldots,b_{n+s}$
is called \emph{systematic} if
$b_\ell=1\Leftrightarrow \ell\in S$ for $\ell=1,\ldots,n$
and is said to have \emph{redundancy} $s$.
We extend the former definition to dynamic data structures
and prove that the minimum redundancy of a systematic choice
dictionary with parameter $n$ that executes
every operation in $O(t)$ time on a
$w$-bit word RAM is $\Theta({n/{(t w)}})$,
provided that $t w=O({n/{\log n}})$.
Allowing
a redundancy of
$\Theta({{n\log(t\log n)}/{(t\log n)}}+n^\epsilon)$
for arbitrary fixed $\epsilon>0$,
we can support additional $O(t)$-time operations
\Tvn{p-rank} and \Tvn{p-select} that
realize a bijection from $S$ to $\{1,\ldots,|S|\}$
and its inverse.
The bijection may be chosen arbitrarily by the
data structure, but must remain fixed
as long as $S$ is not changed.
In particular, an element of $S$ can be
drawn uniformly at random in constant time
with a redundancy of $O({{n\log\log n}/{\log n}})$.

We study additional space-efficient
data structures for subsets $S$ of $\{1,\ldots,n\}$, including
one that supports only insertion and an
operation \Tvn{extract-choice} that returns
and deletes an arbitrary element of~$S$.
All our main data structures can be
initialized in constant time and
support efficient iteration over the set~$S$, and we can allow
changes to~$S$ while an iteration over~$S$ is in progress.
We use these abilities crucially in designing the most
space-efficient algorithms known for solving
a number of graph and other combinatorial problems
in linear time.
In particular, given an undirected graph $G$ with
$n$ vertices and $m$ edges, we can output a spanning
forest of $G$ in $O(n+m)$ time with at most
$(1+\epsilon)n$ bits of
working memory for arbitrary fixed $\epsilon>0$,
and if $G$ is connected,
we can output a shortest-path spanning
tree of $G$ rooted at a designated vertex
in $O(n+m)$ time
with $n\log_2 3+O({n/{(\log n)^t}})$ bits
of working memory for arbitrary fixed $t\in\TbbbN$.\\

{\bf Keywords.} Data structures, space efficiency, bounded universes,
constant-time initialization, lower bounds, bit probes,
graph algorithms, random generation.
\end{abstract}

\pagestyle{plain}
\thispagestyle{plain}

\section{Introduction}
\label{sec:intro}%
The \emph{redundancy} of
a data structure $D$ capable of representing an arbitrary
object in a nonempty set $\mathcal{S}$
is the (worst-case) number of bits
of memory occupied by $D$ beyond
the so-called in\-for\-ma\-tion-theoretic lower bound,
i.e., beyond
$\Tceil{\log|\mathcal{S}|}$---in this paper
``$\log$'' always denotes the
binary logarithm function $\log_2$.
If $\mathcal{S}$ depends on one or more size
parameters, $D$ is said to be \emph{succinct}
if its redundancy is $o(\log|\mathcal{S}|)$.
Whereas constant factors have traditionally been ignored
for both time and space bounds in the theoretical
analysis of algorithms and data structures,
in recent years there has been increased
interest in succinct data
structures \cite{BarAHM12,BroM99,Cla96,FerMM09,GeaRRR06,GolGGRR07,Jac89,MunRRR12,Pag01,Pat08,RamRS07}.
Most of the succinct data structures
developed to date are \emph{static}, i.e., they support
certain queries about the object $S\in\mathcal{S}$ stored,
but no updates of~$S$, and, in fact, even the time
to construct the data structure from $S$ has
frequently been ignored.
Of the dynamic succinct data structures developed
to date, a major part is
concerned with navigation in trees
\cite{Arr08,DavR11,FarM11,MunRS01,NavS14,RamR03},
and there are only few other contributions in
areas such as text processing
\cite{HeM10,LeeP09,MakN08,RusNO11}
and the maintenance of arrays, dictionaries and prefix sums
\cite{BroCDMS99,GupHSV07b,RamRR01}.
We add to the rather small collection of
known dynamic succinct data structures
that represent structures other than trees.

Data structures that represent an
(arbitrary) subset $S$ of a universe
of the form $U=\{1,\ldots,n\}$ and support various
sets of operations
have been studied in computer science for decades
\cite{Ajt88,AjtFK84,AndT07,BeaF02,BosDDHM13,DieKMHRT94,BoaKZ77,FiaN93,FreKS84,%
FreS89,FreW93,GupHSV07a,Joh82,PatT14,Ram96,Ruz08,TarY79,Tho00b,Wil84,Yao81}.
Our work continues this tradition and suggests
new sets of operations to be supported.
In the setting under consideration,
the condition of succinctness translates into space requirements
of $n+o(n)$ bits.
A powerful dynamic data type that we now call a
\emph{ragged dictionary}
was introduced in~\cite{ElmHK15}
and shown there to have a number of applications in
space-efficient graph algorithms.
In many situations the full power of the ragged
dictionary is not needed,
and the currently known construction of
ragged dictionaries is so involved that its
description is still in preparation.
In this paper we trim the ragged dictionary,
retaining only a set of operations that is simpler to
implement, allows a succinct realization,
and suffices in
most---but not all---applications of ragged dictionaries.
The resulting data type is characterized formally below.

\begin{definition}
\label{def:choice}%
A \emph{choice dictionary} is a data type that
can be initialized with an arbitrary integer
$n\in\TbbbN=\{1,2,\ldots\}$,
subsequently maintains
an initially empty subset $S$ of $U=\{1,\ldots,n\}$
and supports the following operations,
whose preconditions are stated in
parentheses:

\begin{tabbing}
\quad\=\hskip 2.4cm\=\hskip 2cm\=\kill
\>$\Tvn{insert}(\ell)$\>($\ell\in U$):\>
Replaces $S$ by $S\cup\{\ell\}$.\\
\>$\Tvn{delete}(\ell)$\>($\ell\in U$):\>
Replaces $S$ by $S\setminus\{\ell\}$.\\
\>$\Tvn{contains}(\ell)$\>($\ell\in U$):\>
Returns 1 if $\ell\in S$, 0 otherwise.\\
\>$\Tvn{choice}$:\>
\>Returns an (arbitrary) element of $S$
if $S\not=\emptyset$, 0 otherwise.
\end{tabbing}
\end{definition}

As is common and convenient, we use the term
``choice dictionary'' also to denote data structures
that implement the choice-dictionary data type.
Following the initialization of a choice dictionary
$D$ with an integer~$n$,
we call (the constant) $n$ the \emph{universe size}
of $D$ and (the variable) $S$ its
\emph{client set}.
The operation \Tvn{choice}, named so by
analogy with the axiom of choice, is central
and lends its name to the entire data type
as its most characteristic feature.
The operation is unusual in that a client set $S$ is
not mapped deterministically to a unique
prescribed return value;
instead, many return values may be legal for a given~$S$.
The operation, while not exactly new,
appears not to have been considered often in the past.
In fact, it is not uncommon for algorithms to comprise
steps that could be implemented via calls of \Tvn{choice}.
For many classic data structures, however, finding
an (arbitrary) element is no easier than finding
a certain specific element
(such as the minimum or the element most
recently inserted), for which reason such steps are often
overspecified by being formulated as queries for
specific elements.
In our setting, the flexibility inherent in
\Tvn{choice} is crucial to obtaining the most
efficient choice dictionaries and algorithms.

For integers $n_1$ and $n_2$ with $n_1\le n_2$,
the \emph{bit-vector representation over
$U'=\{n_1,\ldots,n_2\}$} of a subset
$S$ of $U'$ is the sequence
$(b_{n_1},\ldots,b_{n_2})$ of $|U'|$ bits with
$b_\ell=1\Leftrightarrow \ell\in S$, for $\ell=n_1,\ldots,n_2$,
or its obvious layout in $|U'|$ successive bits in memory.
If only the operations \Tvn{insert}, \Tvn{delete}
and \Tvn{contains} are to be supported,
a subset $S$ of $U=\{1,\ldots,n\}$ can be
stored simply as its bit-vector representation over $U$.
On the other hand, if
the operation \Tvn{delete} is omitted, the three
remaining operations are trivial to support in constant time
with close to $n$ bits.
It is the combination of \Tvn{insert} and
\Tvn{delete} with \Tvn{choice} that makes
the choice dictionary useful and its design interesting.

It is often possible to equip a choice dictionary
with facilities beyond the four core operations.
One of the most useful extensions is an operation
\Tvn{iterate}, which allows a user to process
the elements of~$S$ one by one.
In fact, we consider \Tvn{iterate} as a virtual
operation that is a shorthand for three concrete operations:
$\Tvn{iterate}.\Tvn{init}$, which prepares
for a new iteration over~$S$,
$\Tvn{iterate}.\Tvn{next}$, which
yields the next element $\ell$ of~$S$
(we say that $\ell$ is \emph{enumerated};
if all elements have already been enumerated,
0 is returned), and
$\Tvn{iterate}.\Tvn{more}$, which returns 1 if
one or more elements of~$S$ remain to be enumerated
and 0 otherwise.
When stating that a choice dictionary allows
iteration in a certain time $t$, what we mean
is that each of the three operations
$\Tvn{iterate}.\Tvn{init}$,
$\Tvn{iterate}.\Tvn{next}$ and
$\Tvn{iterate}.\Tvn{more}$
runs in time bounded by~$t$.
Our iterations are \emph{robust}, by
which we mean the following:
First and foremost, changes to the
client set $S$ through insertions
and deletions can be tolerated during an iteration.
Second, every element of $U$ present in $S$ during
the entire iteration is certain to be enumerated
by the iteration, while on the other hand no
element is enumerated more than once or at a time when it
does not belong to~$S$---in particular, if an element
does not belong to $S$ at any time during the
iteration, it is certain not to be enumerated.

Another useful extension is the ability to work
not only with the client set $S$, but also with
its complement $\overline{S}=U\setminus S$.
This involves an operation
$\cchoice$, which returns an arbitrary element
of $\overline{S}$ (0 if $\overline{S}=\emptyset$),
and possibly a virtual operation $\citerate$,
whose three concrete suboperations enumerate
$\overline{S}$.
Viewing membership in $S$ and in $\overline{S}$
as two different \emph{colors}, we call a choice
dictionary extended in this way a $2$-color choice dictionary,
whereas the original bare-bones choice dictionary
will be said to be \emph{colorless}.
We extend the concept of color
to $c$ colors, for integer $c\in\TbbbN$.
A $c$-color choice dictionary maintains a
\emph{semipartition} $(S_0,\ldots,S_{c-1})$
of $U=\{1,\ldots,n\}$, i.e.,
a sequence of (possibly empty)
disjoint subsets of $U$ whose union is~$U$,
called its \emph{client vector}.
The operations \Tvn{insert}, \Tvn{delete}
and \Tvn{contains} are replaced by

\begin{description}
\item[\normalfont$\Tvn{setcolor}(j,\ell)$]
($j\in\{0,\ldots,c-1\}$ and $\ell\in U$):
Changes the color of $\ell$ to $j$, i.e.,
moves $\ell$ to $S_j$ (if it is not already there).
\item[\normalfont$\Tvn{color}(\ell)$:]
Returns the color of~$\ell$, i.e.,
the unique $j\in\{0,\ldots,c-1\}$ with $\ell\in S_j$.
\end{description}

Moreover,
the operations \Tvn{choice} and \Tvn{iterate}
(with its three suboperations) take an additional
(first) argument $j\in\{0,\ldots,c-1\}$
that indicates the set $S_j$ to which the
operations are to apply;
e.g., $\Tvn{choice}(j)$ returns an arbitrary
element of $S_j$ (0 if $S_j=\emptyset$).
In applications $c$ is often a small constant.
To emphasize this view, we may write the argument $j$
as a subscript of the operation name
(e.g., $\Tvn{iterate}_j.\Tvn{next}$
instead of $\Tvn{iterate}(j).\Tvn{next}$).
Initially all elements of $U$ belong to~$S_0$.
In the special case $c=2$, we may write, e.g.,
\Tvn{choice} and $\cchoice$ or
$\Tvn{choice}_1$ and $\Tvn{choice}_0$, as convenient.
We have not attempted to optimize our results
for large values of~$c$.
Formally, we allow $c=1$, but a choice dictionary
with just one color is trivial and useless,
and in proofs we tacitly assume $c\ge 2$.
A view equivalent to that of a semipartition
$(S_0,\ldots,S_{c-1})$ of $\{1,\ldots,n\}$
is that a $c$-color choice dictionary with
universe size~$n$ must
maintain an array of $n$ values drawn from
$\{0,\ldots,c-1\}$ under certain
obvious operations.

Of course, all operations of the
colorless choice
dictionary with universe size~$n$ and many more can be supported in
$O(\log n)$ time by a balanced binary tree.
Our interest, however, lies with data structures that
are more efficient than binary trees in terms of
both time and space.
Our model of computation is a word RAM~\cite{AngV79,Hag98}
with a word length of $w\in\TbbbN$ bits, where we assume that $w$ is
large enough to allow all memory words in use to be addressed.
As part of ensuring this, in the context of a universe or
an input of size $n$, we always assume that
$w\ge\log n$.
The word RAM has constant-time operations
for addition, subtraction and multiplication
modulo $2^w$, division with truncation
($(x,y)\mapsto\Tfloor{{x/y}}$ for $y>0$),
left shift modulo $2^w$
($(x,y)\mapsto (x\ll y)\bmod 2^w$,
where $x\ll y=x\cdot 2^y$),
right shift
($(x,y)\mapsto x\gg y=\Tfloor{{x/{2^y}}}$),
and bitwise Boolean operations
($\textsc{and}$, $\textsc{or}$ and $\textsc{xor}$
(exclusive or)).
We also assume a constant-time operation to
load an integer that deviates from $\sqrt{w}$
by at most a constant factor---this enables the
proof of Lemma~\ref{lem:word}(a).
We do not assume the availability of
constant-time exponentiation,
a feature that would simplify some of
our data structures.
When nothing else is clear from the context,
integers manipulated algorithmically are assumed
to be of $O(w)$ bits, so that they can be
operated on in constant time.
Integers for which this assumption is not
made may be qualified as ``multiword''.
Multiword integers are assumed to be represented
in the positional system with base
$2^w$, i.e., in
a sequence of words, $w$ bits per word.

Our most surprising
result, proved in Section~\ref{sec:nonsystematic},
yields a colorless choice dictionary
that can be initialized for universe size~$n$
in constant time,
that executes \Tvn{insert}, \Tvn{delete},
\Tvn{contains} and \Tvn{choice} in constant time
and whose redundancy is $O({n/{(\log n)^t}})$
for arbitrary fixed $t\in\TbbbN$,
significantly better
than the best bound of $O(n)$
known for ragged dictionaries
used as choice dictionaries.
We generalize to several colors and to
an upper-bound tradeoff between time and space:

\Fcopyforlater{\unsystematic}{theorem}{thm:unsystematic-c}{%
For every fixed $\epsilon>0$,
there is a choice dictionary that, for arbitrary
$n,c,t\in\TbbbN$, can be initialized
for universe size $n$, $c$ colors and tradeoff
parameter $t$ in constant time and
subsequently occupies
$n\log_2 c+O(cn({{\epsilon c^2(\log c)t}/{\log (n+1)}})^t+c^{3 c}n^\epsilon)$
bits and supports
\Tvn{color}, \Tvn{setcolor}, \Tvn{choice} and
robust iteration in $O(t)$ time.
}

When $c$ is a power of~2, and in particular
for $c=2$, we achieve a better
space bound of
$n\log c+O(c n({{c^2(\log c)t}/w})^t+c^{\epsilon c^2}+\log n)$
bits, albeit with a time bound for
\Tvn{setcolor} of $O(t+c)$ instead of $O(t)$
(Theorem~\ref{thm:unsystematic-f}).
For $c=2$, this yields a redundancy of essentially
$O(n({t/w})^t)$ for execution times
of $O(t)$, the same as that achieved
by P\v atra\c scu for a different
problem~\cite[Theorem~4]{Pat08}.
Interestingly, we employ
P\v atra\c scu's technique,
as extended by Dodis,
P\u atra\c scu and Thorup~\cite{DodPT10},
in the proof of Theorem~\ref{thm:unsystematic-c},
but not in that of Theorem~\ref{thm:unsystematic-f}.
At a technical level, the problem of realizing
\Tvn{choice} can be viewed as that of finding an arbitrary
leaf with a given color in a tree with colored leaves,
but practically no space available for
navigational information at inner nodes.
Our solution forms groups of leaves and exploits
the fact that if a leaf group lacks some color
completely, it offers a certain potential for storing
foreign (namely navigational) information.
If below an inner node $u$ there is no such ``deficient'' leaf
group, on the other hand, the search can proceed
blindly from $u$---there are no ``dangerous'' subtrees.

For $n\in\TbbbN$,
a static data structure that represents a subset $S$ of
$U=\{1,\ldots,n\}$ is called \emph{systematic} if
its encoding of $S$ has the bit-vector
representation of $S$ over $U$ as a prefix~\cite{GalM07}---in
other words, $S$ is stored as its ``raw'' form,
possibly followed by other information.
The definition can be applied as it is to dynamic
data structures, but then precludes initialization
in $o({n/w})$ time and, more significantly,
prevents the representation from having
a size indication such as
an encoding of the integer~$n$ as a prefix.
We therefore use the following alternative definition:
A dynamic data structure $D$ that represents a
subset $S$ of $U=\{1,\ldots,n\}$ is systematic
if, beginning in a bit position that depends
only on~$n$, it contains a sequence $(b_1,\ldots,b_n)$
of $n$ bits
such that for each $\ell\in U$, $b_\ell=1\Leftrightarrow \ell\in S$
holds at all times after $D$'s first writing to~$b_\ell$, if any.
In a word RAM, the bit $b_\ell$ is part of a
word $H$ in memory, and $D$ first writes to $b_\ell$
when it first stores a value in $H$.
Until that point in time, we assume that $H$ and therefore
$b_\ell$ may contain arbitrary values (``be uninitialized'').
It is sometimes considered desirable
for a data structure to be systematic~\cite{GalM07}.
Our proof of Theorem~\ref{thm:unsystematic-c} does
not yield a systematic data structure,
but in Section~\ref{sec:systematic}
we propose an alternative and
systematic choice dictionary:

\Fcopyforlater{\systematic2}{theorem}{thm:systematic-2}{%
There is a 2-color systematic
choice dictionary that, for arbitrary $n,t\in\TbbbN$,
can be initialized for universe size $n$ and
tradeoff parameter $t$ in constant time
and subsequently occupies
$n+{n/{(t w)}}+O({n/{(t w)^2}}+\log n)$ bits
and executes \Tvn{insert}, \Tvn{delete},
\Tvn{contains}, \Tvn{choice}, $\cchoice$  and
robust iteration over the client set
and its complement in $O(t)$ time.
}

For $t w=O({n/{\log n}})$ (a condition that excludes
only cases of scant interest),
the product of the redundancy
and the execution time per operation is $O({n/w})$
for the choice dictionary of Theorem~\ref{thm:systematic-2}.
We prove in Subsection~\ref{subsec:lower} that this is
optimal in the sense that every systematic choice
dictionary with universe size~$n$
must have a redundancy-time product of $\Omega({n/w})$.
Our result, in fact, is considerably more precise:
In the bit-probe model~\cite{GalM07,Yao81}, 
if a systematic choice dictionary with universe size~$n$
has redundancy $s$ and inspects at most $t$ bits during
each execution of an operation, then
$(s+O(1))t\ge\alpha n$, where
$\alpha={1/{(e\ln 2)}}\approx 0.53$, and
we argue that this statement
does not hold if $\alpha$ is replaced by an
arbitrary constant larger than~1.
In a certain sense, therefore, the tradeoff between
redundancy and operation time has been determined to within
a factor of less than~2.
While there are linear or near-linear lower bounds
for the product of redundancy and query time
for certain static systematic data structures,
such as ones that support queries
for the sum, modulo 2, of the bits in prefixes
of a fixed bit string
(the prefix-sum problem)~\cite{GalM07},
we are not aware of nontrivial previous such bounds
for dynamic data structures.

Following the introduction of the ragged dictionary,
another systematic choice dictionary was developed
independently by Banerjee,
Chakraborty and Raman~\cite{BanCR16}.
Their construction is similar to that of the
special case of Theorem~\ref{thm:systematic-2}
obtained by taking $t=\Theta(1)$ and $w=\Theta(\log n)$.
The redundancy is indicated only as $o(n)$,
however, and an inspection of the proof shows
the redundancy to be $\Theta({{n\log\log n}/{\log n}})$,
not the optimal $O({n/{\log n}})$ of
Theorem~\ref{thm:systematic-2}.
Moreover, the data structure of~\cite{BanCR16}
supports neither robust iteration nor
$\cchoice$, and it cannot be initialized
in constant time.
An early choice dictionary
(with the \Tvn{choice} operation called \Tvn{choose-one})
was described by Briggs and Torczon~\cite{BriT93}.
Their data structure requires $\Theta(n\log n)$ bits.

A first space-efficient algorithm
for (essentially) the problem of linear-time
computation of a
shortest-path tree with a given root in a connected
unweighted graph was indicated in~\cite[Theorem 5.1]{ElmHK15}.
For input graphs with $n$ vertices and $m$ edges,
this took the form of a simple $O(n+m)$-time
reduction
to the problem of executing $O(n+m)$ operations on a
4-color choice dictionary with universe size~$n$.
Given that the interest in~\cite{ElmHK15} was not
with constant factors, a ragged dictionary was used
for the choice dictionary, and the bound on the
necessary amount of working memory
(i.e., memory in addition to read-only memory
that holds the input)
was indicated as $O(n)$ bits.
Restating the reduction and plugging in their own
choice dictionary, Banerjee et al.~\cite{BanCR16} derived a new
space bound for the problem, given as $2 n+o(n)$ bits.
Substituting our superior choice dictionaries of
either Theorem~\ref{thm:systematic-2} or
Theorem~\ref{thm:unsystematic-c}, we could
improve the lower-order term of this bound.
We instead obtain a more substantial
improvement (Theorem~\ref{thm:bfs}) by giving a new reduction of the
shortest-path problem to that of executing
$O(n+m)$ operations on a
choice dictionary that has only 3 colors
but must support robust iteration.
With Theorem~\ref{thm:unsystematic-c}, our space
bound becomes $n\log 3+O({n/{(\log n)^t}})$ bits
for arbitrary fixed $t\in\TbbbN$.

Much previous work has gone into the development of
\emph{rank-select structures} (also known as
\emph{indexable dictionaries}) that support
operations \Tvn{rank} and \Tvn{select}
\cite{Jac89}.
Formulated in terms of a client set
$S\subseteq\{1,\ldots,n\}$,
the two operations
are defined as follows:

\begin{description}
\item[\normalfont$\Tvn{rank}(\ell)$]
($\ell\in\{0,\ldots,n\}$):
Returns $|S\cap\{1,\ldots,\ell\}|$.
\item[\normalfont$\Tvn{select}(k)$]
($k\in\{1,\ldots,n\}$):
Returns the unique $\ell\in S$ with $\Tvn{rank}(\ell)=k$
if $k\le|S|$, 0 otherwise.
\end{description}

P\v atra\c scu showed that for
arbitrary fixed $t\in\TbbbN$, there is a
static rank-select structure that occupies
$n+O({n/{(\log n)^t}})$ bits and executes
both \Tvn{rank} and \Tvn{select} in constant time
\cite[Theorem~2]{Pat08} (his result, in fact, is more general).
For systematic static rank-select structures
with constant query time
the optimal redundancy is known to be
$\Theta({{n\log\log n}/{\log n}})$
\cite{Gol07,RamRS07}.
For the corresponding dynamic data type,
i.e., one that supports \Tvn{insert} and
\Tvn{delete} in addition to \Tvn{rank}
and \Tvn{select}, a lower bound of
$\Omega({{\log n}/{\log w}})$ on the execution
time of the slowest operation~\cite{FreS89}
precludes all hope of achieving a similar performance.
Returning to the setting of $c$ colors,
we introduce ``poor man's substitutes'' for
\Tvn{rank} and \Tvn{select} called
\Tvn{p-rank} and \Tvn{p-select} and show
in Subsection~\ref{subsec:random} that,
for arbitrary fixed $c\in\TbbbN$ and
$\epsilon>0$, for arbitrary
$t\in\TbbbN$ and allowing a redundancy of
$\Theta({{n\log(t\log n)}/{(t\log n)}}+n^\epsilon)$,
we can support \Tvn{p-rank} and \Tvn{p-select}
in $O(t)$ time in addition
to the usual choice-dictionary operations
(Theorem~\ref{thm:p}).
When the client vector is $(S_0,\ldots,S_{c-1})$,
both operations are defined in terms of a
sequence $(\pi_0,\ldots,\pi_{c-1})$, where
$\pi_j$ is a bijection from $S_j$ to
$\{1,\ldots,|S_j|\}$, for $j=0,\ldots,c-1$:

\begin{description}
\item[\normalfont$\Tvn{p-rank}(\ell)$]
($\ell\in U$):
Returns $\pi_j(\ell)$, where $j$ is the
color of $\ell$.
\item[\normalfont$\Tvn{p-select}(j,k)$]
($j\in\{0,\ldots,c-1\}$ and $k\in\{1,\ldots,n\}$):
Returns $\pi_j^{-1}(k)$ if $k\le|S_j|$, and 0 otherwise.
\end{description}

If the bijections $\pi_0,\ldots,\pi_{c-1}$ are viewed
as numbering the elements within each of the sets
$S_0,\ldots,S_{c-1}$, $\Tvn{p-rank}(\ell)$ therefore
returns the number of $\ell$ in its set,
and $\Tvn{p-select}(j,k)$ (that we may write
as $\Tvn{p-select}_j(k)$) returns the element
in $S_j$ numbered~$k$ (0 if there is no such element).
The sequence $(\pi_0,\ldots,\pi_{c-1})$
may be chosen arbitrarily by the choice dictionary,
subject only to the condition that it must remain
unchanged between calls of $\Tvn{setcolor}$
(or of $\Tvn{insert}$ and \Tvn{delete}).
The operations \Tvn{p-rank} and \Tvn{p-select} are
approximate inverses of each other in the sense that
$\Tvn{p-select}(\Tvn{color}(\ell),\Tvn{p-rank}(\ell))=\ell$
for all $\ell\in U$ and
$\Tvn{p-rank}(\Tvn{p-select}(j,k))=k$
for all $j\in\{0,\ldots,c-1\}$ and all
$k\in\{1,\ldots,|S_j|\}$.
The operations \Tvn{rank} and \Tvn{select}
generalize approximately to $c$ colors
as the special cases $\Tvn{rank}_j$ and
$\Tvn{select}_j$ of $\Tvn{p-rank}_j$ and $\Tvn{p-select}_j$
obtained by requiring $\pi_j$ to be the increasing
bijection from $S_j$ to $\{1,\ldots,|S_j|\}$,
for $j=0,\ldots,c-1$.
We obtain our result through a nonobvious combination
(illustrated in Fig.~\ref{fig:select}
on p.~\pageref{fig:select})
of (usual) choice dictionaries and other data structures.

The ``\Tvn{p-}'' in \Tvn{p-rank} and \Tvn{p-select} can be thought
of as an abbreviation for ``pseudo-'' or ``permuted''.
The operations \Tvn{p-rank} and \Tvn{p-select}
are closely related to
the classic ranking and
unranking operations within static but more complicated
classes of combinatorial objects~\cite{KreS99}.
Despite the arbitrariness inherent in
\Tvn{p-rank} and \Tvn{p-select}, the latter operation
has at least one important application, namely
to the generation of random elements:

\begin{description}
\item[\normalfont$\Tvn{uniform-choice}(j)$]
($j\in\{0,\ldots,c-1\}$):
Returns an element drawn uniformly at random
from $S_j$ if $S_j\not=\emptyset$, and 0 otherwise.
\end{description}

The realization of \Tvn{uniform-choice} in
terms of $\Tvn{p-select}$ is obvious:
A call $\Tvn{uniform-choice}(j)$
draws an integer $k$ uniformly at random
from $\{1,\ldots,|S_j|\}$ and returns
$\Tvn{p-select}(j,k)$.
The uncolored version of
\Tvn{uniform-choice} is called \Tvn{sample}
in \cite[Problem 1.3.35]{SedW11}.

We also study choice and choice-like dictionaries
that use fewer than $n$ bits when the number
$m$ of elements of nonzero color is considerably
smaller than~$n$.
In particular, we
show that constant-time
\Tvn{insert}, \Tvn{delete}, \Tvn{contains} and \Tvn{choice}
can be achieved with
$O(c m n^\epsilon+1)$ bits, for arbitrary
fixed $\epsilon>0$ (Theorem~\ref{thm:m-c}),
and in Subsection~\ref{subsec:mlog}
we describe
a data structure that uses
$O(m\log(2+{n/{(m+1)}})+1)$ bits
and supports
constant-time \Tvn{insert} and \Tvn{extract-choice},
where the latter operation
removes and returns an arbitrary element
of the client set~$S$
(if $S$ is empty, 0 is returned).
Our data structure is similar to the \emph{pool}
data structure of~\cite{HerS08}, where the operations
\Tvn{insert} and \Tvn{extract-choice} are called
\Tvn{put} and \Tvn{get}, respectively.
To represent $S$
using little space, our data structure stores $S$
in difference form, i.e., as a sequence of differences
between consecutive elements of~$S$.
For this to make sense, $S$ must be sorted, but this
renders constant-time insertion in $S$ difficult.
We set up a system of sorted reservoirs and
unsorted buffers and merge buffers into reservoirs
before they become too large.
Employing this data structure as the work-horse,
we can compute a spanning forest of an undirected
graph
with $n$ vertices and $m$ edges
in $O(n+m)$
time with at most $(1+\epsilon)n$ bits of
working memory, for arbitrary
fixed $\epsilon>0$ (Theorem~\ref{thm:cc}).
An algorithm that, slightly modified, can solve
the same problem in linear time was described previously
\cite[Theorem~5.1]{ElmHK15}, but the number of bits was
specified only as $O(n)$, and even with our best
choice dictionary the algorithm of
\cite{ElmHK15} would use at least
$n\log 3$ bits.

Our choice dictionaries have found uses elsewhere as
modest but crucial components of space-efficient
algorithms for Euler partition and
edge coloring of bipartite graphs~\cite{HagKL17}
and recognition of
outerplanar graphs~\cite{KamKL16}.
We currently explore their applications in
space-efficient solutions to a number of
vertex-coloring problems.

Although we do not assume the memory allocated
to hold a data structure to have been initialized in
any way---it may hold arbitrary values---all
our main data structures can be initialized in constant time.
Whereas this is standard and trivial to achieve for
data structures such as binary trees, we have to
develop new techniques to achieve the same
for our succinct data structures for universes
of the form $\{1,\ldots,n\}$.
It is a convenient property to have, and it
is essential to some of our algorithmic applications.
What makes initialization in constant time difficult is,
above all, that small instances cannot in general
be handled by means of table lookup.

\section{A Very Simple Choice Dictionary}
\label{sec:quick}%
Before embarking on a more
comprehensive
development, in this section
we indicate the shortest route to one of
our results that, though elementary,
suffices for many applications:
A basic choice dictonary that supports each
of the four core operations in constant time
and uses $n+O({n/{\log n}})$ bits to
maintain a subset of $\{1,\ldots,n\}$.
The description will demonstrate, in particular,
that our choice dictionary not only uses less
space, but is also simpler than the construction
of Banerjee et al.~\cite{BanCR16}.
Readers who want more details, a greater generality,
additional operations or a tighter space bound
are referred to Sections
\ref{sec:preliminaries}--\ref{sec:systematic}.

The result is
obtained by the combination of three ingredients,
each of which is very simple.
One is a choice dictionary that is wasteful in terms
of space, one is a choice
dictionary for very small universes, and the final component is 
the combination of many choice dictionaries
in the standard pattern of a trie.

Recall that a systematic choice dictionary with
universe size~$n$ contains a bit-vector representation
$B$ over $\{1,\ldots,n\}$ of the client set
and that $B$ immediately supports \Tvn{insert},
\Tvn{delete} and \Tvn{contains}, so that the only
remaining problem is to support \Tvn{choice}.
Once the search for a~1 in $B$ has been narrowed
down to a \emph{group} of $\beta\log n$ consecutive bits, for
a suitable constant $\beta>0$, it
can be concluded, e.g., by table lookup
(aiming for constant-time initialization,
we do it differently).
Representing each group by the disjunction of its
constituent bits (altogether
$O({n/{\log n}})$ bits),
we are left with the task of locating
a 1 among the group bits, i.e., the universe size
has been reduced by a factor of $\Theta(\log n)$.
Playing the same trick once more, we have
``supergroups'' of $\Theta((\log n)^2)$ bits each,
some of which are empty, and the task is to find
a nonempty supergroup.
To this end we spend $O({n/{\log n}})$ bits
on storing a permutation $\pi$ that sorts the
supergroups by their \empty{status},
empty or nonempty, and direct \Tvn{choice} to
the supergroup at the ``nonempty'' end of the sorted list.
When a supergroup changes its status, the sorting
can be maintained by first interchanging the
supergroup in question (located with the aid of $\pi^{-1}$)
with a supergroup at the border between
empty and nonempty.

\section{Preliminaries}
\label{sec:preliminaries}%

We view our data structures as
``coming to life'' during an initialization that
fixes certain parameters, typically a
universe size, $n$, and possibly a number
of colors, $c$, and/or a tradeoff parameter, $t$,
that expresses the relative weight to be given
to speed versus economy of space.
After initialization, we may consider these
parameters as constants.
It is natural, e.g., to speak of a choice
dictionary \emph{with} a particular universe size.

When we state that a data structure uses a certain
number of bits of memory, this is a statement about
the number of bits occupied by the data structure
when it is in a quiescent state, i.e., between the
execution of operations.
During the execution of an operation, the data structure
may temporarily need more working space---we speak
of \emph{transient} space requirements.
By definition, the $w$-bit word RAM uses at least
$\Theta(w)$ bits whenever it executes an instruction,
so that every operation of every data structure
has transient space requirements of at least
$\Theta(w)$ bits.
All our operations
get by with $\Theta(w)$ bits of transient space
that will not be mentioned explicitly.
Most of our data structures must store a
constant number of integers such as the parameters
with which the structures were initialized.
In consequence, most of our space bounds include
a term of $O(\log n)$ bits that will not be
discussed in every case.
When several data structures are initialized with
the same parameters and do not need to support
independent iterations, they can generally share the
same $O(\log n)$ bits.

Many operations of a choice dictionary can be faced
with ``unusual'' situations, such as the insertion of
an element that is already present, or \Tvn{choice}
called when the client set is empty.
We have chosen---fairly arbitrarily---to define the
operations so that they either do nothing or return
the special value~0 in such circumstances.
Since the unusual situations can easily be detected,
the operations could be redefined to instead issue an
error message or take some other suitable action.

The following is an attempted formalization
of the standard
``initialization on the fly''
technique of \cite[Exercise 2.12]{AhoHU74}.

\begin{lemma}
\label{lem:2.12}%
There is a data structure with the following
properties:
First, for every $n\in\TbbbN$, it can be
initialized for universe size~$n$ and
subsequently maintains
a function $g$ from
$U=\{1,\ldots,n\}$ to $\{0,\ldots,n\}$,
initially the zero function that maps every
element of $U$ to~0,
under evaluation of $g$
and the following operation:

\begin{description}
\item[\normalfont$\Tvn{allocate}(\ell)$]
($\ell\in U$):
If $g(\ell)=0$,
changes the value of $g$ on $\ell$ from 0 to
an element of $\{1,\ldots,n\}\setminus g(U)$.
Otherwise does nothing.
\end{description}

\noindent
Second, for known $n$, the data structure uses at most
$2 n\Tceil{\log n}+\Tceil{\log(n+1)}$ bits, can be initialized
in constant time, and evaluates $g$ and
supports \Tvn{allocate} in constant time.
(If $n$ is not known, it can be stored in the
data structure in another $O(\log(n+1))$ bits).
\end{lemma}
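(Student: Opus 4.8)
The plan is to implement the well-known technique of \cite[Exercise 2.12]{AhoHU74} in its standard two-array-plus-counter form, observing that the particular shape of $\Tvn{allocate}$ lets us avoid a separate array for the values of~$g$. Concretely, I would maintain a counter $m\in\{0,\ldots,n\}$, initially $0$, together with two arrays $P$ and $Q$, each consisting of $n$ fields of $\Tceil{\log n}$ bits; a field is read as an integer in $\{0,\ldots,2^{\Tceil{\log n}}-1\}$, and values meant to lie in $\{1,\ldots,n\}$ are stored decreased by~$1$, which fits because $n\le 2^{\Tceil{\log n}}$. The intended meaning is that, for $k\in\{1,\ldots,m\}$, $Q[k]$ is the $k$th element of $U$ on which $\Tvn{allocate}$ was invoked successfully, and that $P[\ell]=k$ whenever $\ell=Q[k]$ with $k\in\{1,\ldots,m\}$; the fields $P[\ell]$ for not-yet-allocated $\ell$ and the fields $Q[k]$ for $k>m$ may hold arbitrary uninitialized values. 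Evaluation of $g$ would return $P[\ell]$ if $1\le P[\ell]\le m$ and $Q[P[\ell]]=\ell$, and $0$ otherwise; this test uses two field reads and $O(1)$ further operations, so it runs in constant time, as does initialization, which merely sets $m:=0$.

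$\Tvn{allocate}(\ell)$ would first evaluate $g(\ell)$; if the result is nonzero it returns with no change, and otherwise it increments $m$ and sets $Q[m]:=\ell$ and $P[\ell]:=m$, so that the new value of $g(\ell)$ becomes the new value of~$m$. For correctness I would carry the invariant that, between operations, $Q$ restricted to $\{1,\ldots,m\}$ is injective with image exactly $\{\ell\in U:g(\ell)\ne 0\}$ and that $g(U)\setminus\{0\}=\{1,\ldots,m\}$. The first part is what makes reading uninitialized memory safe: if $\ell$ has never been allocated then $\ell\notin\{Q[1],\ldots,Q[m]\}$, so regardless of the garbage in $P[\ell]$ the condition $Q[P[\ell]]=\ell$ fails whenever $P[\ell]\in\{1,\ldots,m\}$, and the range test fails otherwise; conversely no false negative can occur because a successful $\Tvn{allocate}$ writes $P[\ell]$ and $Q[m]$ explicitly. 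The second part is what makes $\Tvn{allocate}$ legal: when it actually changes $g$ we have $m<n$ beforehand (at most $n-1$ elements are allocated, as $\ell$ is not one of them), so the returned value $m+1$ lies in $\{1,\ldots,n\}$ and, since $g(U)=\{0,1,\ldots,m\}$ at that moment, outside $g(U)$; both invariants are then immediately restored, and in particular $g$ stays a function into $\{0,\ldots,n\}$. Every step is a constant number of word operations.

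It remains to check the bookkeeping. The arrays contribute $2n\Tceil{\log n}$ bits and the counter $\Tceil{\log(n+1)}$ bits, matching the stated bound, and storing $n$ as well costs one more field of $\Tceil{\log(n+1)}$ bits. The only point needing a word of comment is that field addressing and the widths of field accesses require the number $\Tceil{\log n}$; this is harmless because $\Tfloor{\log n}$, hence $\Tceil{\log n}$, is computable from $n$ in constant time by standard bit tricks, and in any case $\Tceil{\log n}$ could be kept in the $O(\log n)$ bits already permitted for storing~$n$. Small cases, e.g.\ $n=1$ with zero-width fields in $P$ and $Q$, are covered by the same description and need no separate argument. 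I expect the only real content to be the two invariants of the preceding paragraph --- above all the injectivity of $Q$ on $\{1,\ldots,m\}$, which is precisely what licenses the constant-time test on potentially uninitialized data.
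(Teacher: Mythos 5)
Your proposal is correct and follows essentially the same route as the paper: the same counter plus two cross-referencing arrays (your $m,P,Q$ are the paper's $\mu,G,G^{-1}$), with evaluation by the range-and-cross-check test and allocation by incrementing the counter and writing both arrays. The extra details you supply (injectivity invariant, safety of reading uninitialized fields, storing values shifted by one to fit $\Tceil{\log n}$-bit fields) are just a more explicit version of what the paper leaves implicit.
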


\begin{proof}
The data structure stores an integer $\mu$,
initially~0, and two arrays
$G[1\Ttwodots n]$ and $G^{-1}[1\Ttwodots n]$ such that
for all $\ell\in U$ with $g(\ell)\not=0$,
$G[\ell]=g(\ell)\le\mu$ and $G^{-1}[G[\ell]]=\ell$.
To execute $\Tvn{allocate}(\ell)$ for $\ell\in U$
when $g(\ell)=0$, increment $\mu$ and store $\mu$ in $G[\ell]$ and
$\ell$ in $G^{-1}[\mu]$.
To evaluate $g(\ell)$ for $\ell\in U$, test whether
$1\le G[\ell]\le \mu$ and $G^{-1}[G[\ell]]=\ell$.
If this is the case, return $G[\ell]$;
otherwise return~0.
\end{proof}

The application of
Lemma~\ref{lem:2.12}
highlighted in \cite[Exercise 2.12]{AhoHU74}
is to the constant-time initialization of all
entries of an array $A[1\Ttwodots n]$ to some
value $\xi_0$.
More generally,
if $\Xi$ is the set of values storable in cells of $A$,
we can allow $\xi_0$ to be an
arbitrary function from $U=\{1,\ldots,n\}$ to $\Xi$
that can be evaluated in constant time using
a negligible amount of memory.
The access to $A$ can take the form of two functions:
$\Tvn{read}(\ell)$, where $\ell\in U$, returns $A[\ell]$,
and $\Tvn{write}(\ell,\xi)$, where $\ell\in U$ and $\xi\in \Xi$,
assigns the value $\xi$ to $A[\ell]$.
If $D$ is an instance of the data structure of
Lemma~\ref{lem:2.12} for universe size~$n$
and $g$ is the function that it maintains,
\Tvn{read} and \Tvn{write} can be realized as follows:

\begin{tabbing}
\quad\=\hskip 3cm\=\kill
\>$\Tvn{read}(\ell)$:
\>\textbf{if} $g(\ell)=0$ \textbf{then return} $\xi_0(\ell)$;
\textbf{else return} $A[g(\ell)]$;\\
\>$\Tvn{write}(\ell,\xi)$:
\>\textbf{if} $g(\ell)=0$ \textbf{then} \=$D.\Tvn{allocate}(\ell)$;\\
\>\>$A[g(\ell)]:=\xi$;
\end{tabbing}

Thus an array of $n$ entries can be assumed initialized
at the price of an additional $O(n\log n)$ bits.
By using such an array with single-bit entries
only to keep track of the
initialization of segments of $A$ of $\Theta(w)$ bits each and
representing $\xi_0(\ell)$ by the bit pattern $00\cdots 0$
for $\ell=1,\ldots,n$, using the vacated bit pattern
for $\xi_0(\ell)$ to represent the value that used
to be represented by $00\cdots 0$ (thus initializing
a segment amounts to clearing an area of
$\Theta(w)$ bits), we can reduce the number
of additional bits to $O({{(N\log n)}/w})$,
where $N$ is the number of bits
occupied by~$A$.
These considerations imply, in particular,
that an array can always
be assumed initialized at the price of a constant-factor
overhead in the space requirements.
Stronger results are known (see~\cite{FreK16}),
but the bound indicated
suffices for our purposes.

In addition to the operations considered in the
introduction, our discussion will refer to
a number of further operations that can be
added to a $c$-color choice dictionary with
universe size $n$ and client vector $(S_0,\ldots,S_{c-1})$
and are collected here for easy reference:

\begin{description}
\item[\normalfont\Tvn{universe-size}:]
Returns~$n$.
\item[\normalfont$\Tvn{size}(j)$]
($j\in\{0,\ldots,c-1\}$):
Returns $|S_j|$.
\item[\normalfont$\Tvn{isempty}(j)$]
($j\in\{0,\ldots,c-1\}$):
Returns 1 if $S_j=\emptyset$, and 0 otherwise.
\item[\normalfont$\Tvn{swap-colors}(j,j')$]
($j,j'\in\{0,\ldots,c-1\}$):
Interchanges $S_j$ and $S_{j'}$ (does nothing
if $j=j'$).
\item[\normalfont$\Tvn{elements}(j)$]
($j\in\{0,\ldots,c-1\}$):
Returns all elements of $S_j$
(packaged, e.g., in an array or a list).
\item[\normalfont$\Tvn{successor}(j,\ell)$]
($j\in\{0,\ldots,c-1\}$ and $\ell$ is an integer):
With $I=\{i\in S_j\mid i>\ell\}$,
returns $\min I$ if $I\not=\emptyset$, and 0 otherwise.
\item[\normalfont$\Tvn{predecessor}(j,\ell)$]
($j\in\{0,\ldots,c-1\}$ and $\ell$ is an integer):
With $I=\{i\in S_j\mid i<\ell\}$,
returns $\max I$ if $I\not=\emptyset$, and 0 otherwise.
\end{description}

The first three operations can be added to an arbitrary
choice dictionary at a very modest price,
namely constant time per call of the new operations,
constant additional time per call of the
original operations, and
$O(c\log(n+1))$ additional bits, used to store
$(n,|S_0|,\ldots,|S_{c-1}|)$ while preserving a
constant initialization time with
Lemma~\ref{lem:2.12}.
Similarly, using Lemma~\ref{lem:2.12},
we can realize \Tvn{swap-colors} in constant time
by storing a permutation that translates
between ``internal'' and ``external'' colors
and needs an additional $O(c\log c)$ bits.
So as not to clutter the picture, these operations
were not included in the repertoire of
Definition~\ref{def:choice}.
On the other hand, they can usually be assumed
to be available.
If the original choice dictionary supports
iteration in constant time,
$\Tvn{elements}(j)$ can carry out its job
in $O(|S_j|+1)$ time by executing a full
iteration over~$S_j$.

Whenever convenient, we can assume that the
argument $\ell$ of $\Tvn{successor}_j$ and
$\Tvn{predecessor}_j$ satisfies $\ell\in\{1,\ldots,n-1\}$:
For $\Tvn{successor}_j$, a value of $\ell$ larger
than $n-1$ is always associated with a return value of~0,
a value of $\ell$ smaller than 0 is
equivalent to $\ell=0$, and $\ell=0$ is
equivalent to $\ell=1$ unless $1\in S_j$,
in which case the return value is~1.

Several reductions among different operations
are obvious.
E.g., \Tvn{choice} reduces to \Tvn{p-select}
in the sense that if \Tvn{p-select} is available,
$\Tvn{choice}(j)$ can be implemented simply as
$\Tvn{p-select}(j,1)$.
We may express this succinctly
by writing

\begin{tabbing}
\quad\=\hskip 3cm\=\kill
\>$\Tvn{choice}(j)$:\>$\Tvn{p-select}(j,1)$;
\end{tabbing}

\noindent
Similarly, \Tvn{choice} reduces to \Tvn{iterate},
except that a call of \Tvn{choice} executed in this
manner interferes with the ongoing iteration, if any:

\begin{tabbing}
\quad\=\hskip 3cm\=\kill
\>$\Tvn{choice}(j)$:\>$\Tvn{iterate}(j).\Tvn{init}$;
 \textbf{return} $\Tvn{iterate}(j).\Tvn{next}$;
\end{tabbing}

\noindent
For colorless data structures,
\Tvn{choice} and \Tvn{extract-choice} are
mutually reducible if insertion and
deletion are available
and calls of 
\Tvn{choice} and \Tvn{extract-choice} can be allowed
to interfere with iteration,
\Tvn{p-rank} and \Tvn{p-select}:

\begin{tabbing}
\quad\=\hskip 3cm\=\kill
\>$\Tvn{choice}$:\>$\ell:=\Tvn{extract-choice}$;
 $\Tvn{insert}(\ell)$; \textbf{return} $\ell$;\\
\>$\Tvn{extract-choice}$:\>$\ell:=\Tvn{choice}$;
 $\Tvn{delete}(\ell)$; \textbf{return} $\ell$;
\end{tabbing}

\noindent
The following reductions were mentioned earlier.
Again, a call of \Tvn{elements} interferes with
any ongoing iteration.
A call $\Tvn{random}(k)$ is assumed to return an integer
drawn uniformly at random from $\{1,\ldots,k\}$.

\begin{tabbing}
\quad\=\hskip 3cm\=\kill
\>$\Tvn{uniform-choice}(j)$:\>
 $\Tvn{p-select}(j,\Tvn{random}(\Tvn{size}(j)))$;\\
\>$\Tvn{elements}(j)$:\>
 $X:=\emptyset$; $\Tvn{iterate}(j).\Tvn{init}$;\\
\>\>\textbf{while} $\Tvn{iterate}(j).\Tvn{more}$
 \textbf{do} $X:=X\cup\{\Tvn{iterate}(j).\Tvn{next}\}$;\\
\>\>\textbf{return} $X$;
\end{tabbing}

\noindent
Finally, if $\Tceil{\log(n+1)}$ additional bits
per iteration are available to hold a
private variable~$\ell$, several simultaneous
iterations reduce to any one of
\Tvn{successor}, \Tvn{predecessor}
and \Tvn{p-select}, the latter only if
robustness of the iteration is not required.
We give the details in the case of \Tvn{successor}.

\begin{tabbing}
\quad\=\hskip 3cm\=\kill
\>$\Tvn{iterate}(j).\Tvn{init}$:\>$\ell:=0$;\\
\>$\Tvn{iterate}(j).\Tvn{next}$:\>
 $\ell:=\Tvn{successor}(j,\ell)$; \textbf{return} $\ell$;\\
\>$\Tvn{iterate}(j).\Tvn{more}$:\>
 \textbf{if} $\Tvn{successor}(j,\ell)=0$ \textbf{then return} $0$;
 \textbf{else return} $1$;
\end{tabbing}

For $m,f\in\TbbbN$, let
$1_{m,f}=\sum_{i=0}^{m-1} 2^{i f}
={{(2^{m f}-1)}/{(2^f-1)}}$.
If the $(m f)$-bit binary representation of
$1_{m,f}$ is divided into $m$ fields of
$f$ bits each, each field contains the value~1.
The possibly multiword integer $1_{m,f}$ can be
computed from $m$ and $f$ in $O(1+{{m f}/w})$
time \cite[Theorem 2.5]{Hag15}.
Given a sequence $A=(a_1,\ldots,a_m)$ of
$m$ integers and an integer $k$,
let $\Tvn{rank}(k,A)=|\{i\in\TbbbN:1\le i\le m$ and $k\ge a_i\}|$.
The following lemma is proved with standard
word-RAM techniques,
more background on which can be found,
e.g., in~\cite{Hag98}.

\begin{lemma}
\label{lem:word}%
Let $m$ and $f$ be given integers
with $1\le m,f<2^w$ and suppose that a
sequence $A=(a_1,\ldots,a_m)$ with
$a_i\in\{0,\ldots,2^f-1\}$ for $i=1,\ldots,m$
is given
in the form of the $(m f)$-bit binary representation
of the integer
$x=\sum_{i=0}^{m-1} 2^{i f}a_{i+1}$.
Then the following holds:
\begin{itemize}
\item[(a)]
Let $I_{>0}=\{i\in\TbbbN:1\le i\le m$ and $a_i>0\}$.
Then, in $O(1+{{m f}/w})$ time,
we can test whether $I_{>0}=\emptyset$ and, if not,
compute $\min I_{>0}$ and $\max I_{>0}$.
\item[(b)]
Let $I_0=\{i\in\TbbbN:1\le i\le m$ and $a_i=0\}$.
Then, in $O(1+{{m f}/w})$ time,
we can test whether $I_0=\emptyset$ and, if not,
compute $\min I_0$ and $\max I_0$.
\item[(c)]
If an additional integer $k\in\{0,\ldots,2^f-1\}$
is given, then $O(1+{{m f}/w})$ time
suffices to compute the integer
$z=\sum_{i=0}^{m-1}2^{i f}b_{i+1}$, where
$b_i=1$ if $k\ge a_i$ and $b_i=0$ otherwise
for $i=1,\ldots,m$.
\item[(d)]
If $m<2^f$ and
an additional integer $k\in\{0,\ldots,2^f-1\}$
is given, then $\Tvn{rank}(k,A)$
can be computed
in $O(1+{{m f}/w})$ time.
\end{itemize}
\end{lemma}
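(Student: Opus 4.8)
The plan is to reduce each part to $O(1)$ word operations per $\Theta(w)$ input bits and to use a single field-parallel comparison as the one nontrivial primitive. If $f\ge w$, each field is itself multiword, comparing $a_i$ against $k$ costs $O(f/w)$ time, and all of (a)--(d) follow by $m$ such comparisons in $O(1+mf/w)$ time. So assume $f<w$ and partition the $mf$ input bits into $O(1+mf/w)$ consecutive blocks of $\Tfloor{w/f}$ fields each; loading one block into a single word takes $O(1)$ shifts since a block straddles at most two machine words. Each claim then follows by sweeping the blocks left to right while keeping a running answer: for (a) and (b), the indices of the first and the last block containing a nonzero (respectively zero) field, together with the exact index inside that one block; for (c), the part of $z$ produced so far; for (d), a running count, which stays below $2^f$ because $\Tvn{rank}(k,A)\le m<2^f$. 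Thus it suffices to handle a single block, i.e., the case where $x$ and the auxiliary words all fit in one machine word.

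The heart of the matter is part (c): for a one-word $x$, compute for each field in parallel the bit $[k\ge a_i]$ and pack the results into $z$. The obstacle is that a field uses all $f$ of its bits, leaving no room for a "test bit"; I would resolve this by splitting off the top bit of each field. Extracting the mask of the field top bits of $x$ and comparing it, position by position, with the mask built from the top bit of $k$ handles the fields whose top bit differs from that of $k$ (there $[k\ge a_i]$ equals the top bit of $k$); when the top bits agree, it remains to compare the low $f-1$ bits, and for this comparison bit $f-1$ of each field is free and serves as a test bit. With $M=2^{f-1}\cdot 1_{m,f}$ and $y$ the word holding the low $f-1$ bits of $k$ in every field (both computable in $O(1)$ time via \cite[Theorem~2.5]{Hag15}), and with $x'$ equal to $x$ with the top bit of every field cleared, the field-wise value of $(M+y)-x'$ lies in $\{1,\ldots,2^f-1\}$, so it stays inside its field and propagates no borrow, and its bit $f-1$ equals $1$ exactly when the low part of $k$ is at least the low part of $a_i$. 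Combining the two cases with a few shifts and bitwise operations yields a word whose bit in position $if+f-1$ is $[k\ge a_i]$; a single right shift by $f-1$ turns it into $z$. This gives (c) in $O(1)$ time per block, hence $O(1+mf/w)$ overall.

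Parts (b), (a) and (d) then reduce to (c) plus two more standard tricks. For (b): run (c) with $k=0$ to get a word $z_0$ that has a $1$ in bit position $if$ exactly when $a_{i+1}=0$; then $I_0=\emptyset$ iff $z_0=0$, the lowest set bit of $z_0$ (isolated as $z_0\;\textsc{and}\;(-z_0)$) sits at position $(\min I_0-1)f$, and its highest set bit at position $(\max I_0-1)f$. For (a): $I_{>0}=\emptyset$ iff $x=0$, and the lowest and highest $1$-bits of $x$ itself already lie in its lowest and highest nonzero fields. For (d): the fields of $z$ hold $b_1,\ldots,b_m$, and the field in position $m-1$ of the product $z\cdot 1_{m,f}$ equals $\sum_{i=1}^{m}b_i=\Tvn{rank}(k,A)$, because every partial sum is $<m<2^f$ and so no carry corrupts that field; one multiplication, a right shift by $(m-1)f$ and a mask extract it (carried out per block in the multiword case, adding the per-block counts).

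\textbf{The main obstacle} is the $O(1)$-time computation of the position of the most (and least) significant set bit of a word, which parts (a) and (b) need in order to turn an isolated bit into a block or field index; this is exactly where the model's assumed constant-time loading of an integer within a constant factor of $\sqrt w$ is used. Following Fredman and Willard, one splits the word into $\Theta(\sqrt w)$ blocks of $\Theta(\sqrt w)$ bits, uses a small instance of the parallel comparison of part (c) to locate the topmost nonzero block, and recurses inside that block, each level needing only constants of magnitude $\Theta(\sqrt w)$. The remaining work---aligning masks at fixed within-field offsets, tracking block boundaries, and reassembling the outputs across blocks---is routine bookkeeping, which is why the lemma may be credited to standard word-RAM techniques (see~\cite{Hag98}).
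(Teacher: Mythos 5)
Your proposal is correct and follows essentially the same word-RAM approach as the paper: a field-parallel comparison using a spare bit per field for (c), reduction of the min/max index computations to constant-time most/least-significant-bit location via Fredman--Willard, summation of the comparison bits by multiplication with $1_{m,f}$ for (d), and handling of the multiword case by processing $O(w)$ bits at a time (your explicit blocks play the role of the paper's streaming remark). The only deviations are at the gadget level and are harmless: you obtain (b) by running (c) with $k=0$ instead of the paper's separate zero-field transformation feeding into (a), and in (c) you split off the top bit of each field and treat the two top-bit cases separately, where the paper first ignores the test bits of $k$ and $x$ and then folds them in with a single Boolean combination.
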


\begin{proof}
(a)
$I_{>0}=\emptyset$ if and only if $x=0$,
which is trivial to test.
Assume that $x>0$.
Then $\max I_{>0}=\Tfloor{{{\Tfloor{\log x}}/f}}+1$,
so the problem of finding $\max I_{>0}$
reduces to that of computing $\Tfloor{\log x}$.
Fredman and Willard~\cite[pp.\ 431--432]{FreW93}
showed how to do this in
constant time for $m f\le w$
(a number of quantities needed by their algorithm,
such as $C_1$, can be computed in constant time
with the methods of~\cite{Hag15}).
Testing $w$ bits
at a time for being zero, it is easy to extend
their algorithm to the general case.
Computing
$\min I_{>0}$ reduces to computing $\max I_{>0}$
if one replaces $x$ by
${{((x\textsc{ xor }(x-1))+1)}/2}$
(cf.~\cite[Eq.\ 7.1.3-(40)]{Knu10}).

(b) If the binary representation of~$x$
is viewed as consisting of $m$ fields of $f$ bits each,
the task is to locate the leftmost or rightmost
zero field in~$x$.
We reduce this problem to that solved
in part~(a) by computing an $(m f)$-bit
integer $\overline{x}$, each of whose fields
is nonzero if and only if the corresponding field
in $x$ is zero:

\begin{tabbing}
\quad\=\kill
\>$t:=1_{m,f}\ll(f-1)$;\\
\>$y:=x\mathbin{\textsc{and}}t$;\\
\>$z:=(x-y)\mathbin{\textsc{or}}(y\gg(f-1))$;\\
\>$\overline{x}:=(t-z)\mathbin{\textsc{and}}t$;
\end{tabbing}

\noindent
Within each field, $t$ has a 1 in the most significant
bit position, called the position of the
\emph{test bit}, and
$y$ has only the test bits of~$x$.
If $f=1$, $z$ equals $x$, while otherwise
all test bits in $z$ are 0.
In either case, a field is nonzero in $z$ if
and only if it is nonzero in~$x$.
It is now easy to see that $\overline{x}$
has the required property.

(c) Reusing the notions of fields and test bits
of the proof of part~(b),
we first compute an integer~$z'$ such that the $i$th
test bit in $z'$, counted from the right,
is 1 if and only if $k\ge a_i$,
for $i=1,\ldots,m$.
Disregarding the values of the test bits in $k$
and $x$, this can be done by replicating the
value of $k$ to all fields through a multiplication
by $1_{m,f}$, setting all test bits in the
resulting integer $k'$ to~1, clearing them in $x$,
and subtracting the latter from the former
to obtain an integer~$y$.
Subsequently the original test-bit values of $k$
and $x$ are incorporated in the test to obtain $z'$
through bitwise manipulation of $k'$, $y$ and $x$.
In detail, the value of a particular test bit
in $z'$ should be 1 exactly if at least one of the
corresponding bits in $k'$ and $y$ is~1 and
either both these bits are~1 or the corresponding
bit in $x$ is~0.
Obtaining $z$ from $z'$ is just a matter of
``masking away'' unwanted bits and shifting
right by $f-1$ bits.

\begin{tabbing}
\quad\=\kill
\>$t:=1_{m,f}\ll(f-1)$;\\
\>$k':=k\cdot 1_{m,f}$;\\
\>$y:=(k'\mathbin{\textsc{or}}t)-(x-(x\mathbin{\textsc{and}}t))$;\\
\>$z':=(k'\mathbin{\textsc{or}}y)\mathbin{\textsc{and}}
((k'\mathbin{\textsc{and}}y)\mathbin{\textsc{or}}
(x\mathbin{\textsc{xor}}t))$;\\
\>$z:=(z'\mathbin{\textsc{and}}t)\gg(f-1)$;
\end{tabbing}

(d) The task reduces to summing the
bits in the integer $z$ of part~(c),
which can be carried out in
$O(1+{{m f}/w})$ time by computing
$((z\cdot 1_{m,f})\gg((m-1)f))\mathbin{\textsc{and}}(2^f-1)$.

For all parts of the lemma,
intermediate results should be produced and
consumed in streams, $O(w)$ bits at a time,
in order to keep the transient space at $O(w)$ bits.
\end{proof}

We assume the memory available to a data structure
to be a single sequence of $w$-bit words.
Occasionally, however, it will be convenient to
assume the availability of $k$ independent
memories, where $k\in\TbbbN$ is a constant.
It is a simple matter to simulate $k$ virtual
memories in the single actual memory.
For $i=1,\ldots,k$, let $s_i$ be the number
of bits used in the $i$th virtual memory
and take $s=\sum_{i=1}^k s_i$.
During times when $s\le w$, we store a unary encoding
of $(s_1,\ldots,s_k)$ followed by the actual
contents of the virtual memories in $O(s)$ bits.
When $s>w$, the actual memory words are
instead distributed among the $k$ virtual
memories in a round-robin fashion.
For $i=1,\ldots,k$, the contents of the $i$th
virtual memory are therefore stored in the
actual memory words numbered $i,i+k,i+2 k,\ldots,$
so that the total number of bits used is
$O(k(s+w))=O(s)$.
Both representations support reading and writing
of virtual memory words in constant time---in the
case of the first representation,
carrying out the necessary unary-to-binary conversion
with an algorithm of Lemma~\ref{lem:word}(a)---and
we can also switch between the two representations
in constant time.
The number of bits used is always $O(s)$.
When employing this technique, we will say that we use
\emph{memory interleaving}.

\section{Tries of Choice Dictionaries}
\label{sec:trie}%

We shall often have occasion to combine several
choice dictionaries in a trie structure to
obtain a choice dictionary
for a larger universe.
This section explains the simple principles
involved without formalizing them completely.

Let $n\ge 2$ and suppose that we have a data
structure that realizes an ordered tree $T$
with the leaf set $U=\{1,\ldots,n\}$ in which
each inner
node $u$ has an associated choice dictionary~$D_u$
whose universe size equals the degree
(number of children) of~$u$
and all leaves have the same depth and that
also maintains a \emph{current node} in~$T$.
Let $r$ be the root of~$T$.
Suppose that the data structure supports
the following operations:

\begin{description}
\item[\normalfont$\Tvn{movetoroot}$:]
Sets the current node to be $r$.
\item[\normalfont$\Tvn{movetoparent}$]
(the current node is not $r$):
Replaces the current node by its parent in~$T$.
\item[\normalfont$\Tvn{movetochild}(i)$]
(the current node $u$ is an inner node in $T$
and $i$ is a positive integer bounded by the
degree of~$u$):
Replaces the current node by its $i$th child
(in the order from left to right).
\item[\normalfont$\Tvn{height}$:]
Returns the height in $T$ of the current node.
\item[\normalfont$\Tvn{degree}$:]
Returns the degree in $T$ of the current node.
\item[\normalfont$\Tvn{leftindex}$:]
Returns one more than the number of nodes in $T$ of the same
height as the current node and strictly to its left.
\item[\normalfont$\Tvn{viachild}(\ell)$]
(the current node $u$ is an inner node in $T$ and
$\ell$ is a leaf descendant of $u$):
Returns the integer $i$ such that the $i$th child
of $u$ is an ancestor of~$\ell$.
\item[\normalfont$\Tvn{data}$:]
Returns the memory address of the choice
dictionary associated with the current node.
\end{description}

Then, after initializing $D_r$,
we can execute the operations 
of a colorless choice dictionary
with universe size~$n$ and client set $S$
as described below.
For each inner node $u$ in~$T$,
the client set of $D_u$ will contain an integer~$i$
if and only if at least one leaf descendant of
$u$'s $i$th child belongs to~$S$.
In the interest of clarity, we indicate the
current node as a (first) argument of
\Tvn{height}, \Tvn{leftindex} and \Tvn{viachild}.

\begin{description}
\item[\normalfont\Tvn{choice}:]
Return 0 if $D_r.\Tvn{isempty}=1$.
Otherwise,
starting at $r$ and as long as the current node
$u$ is not a leaf, step from $u$ to its $i$th child,
where $i$ is obtained with a call of $D_u.\Tvn{choice}$.
When a leaf $v$ is reached, return $\Tvn{leftindex}(v)$.
\item[\normalfont\Tvn{contains}$(\ell)$:]
Starting at $r$ and as long as the height in
$T$ of the current node $u$
is at least~2, let $i=\Tvn{viachild}(u,\ell)$
and, if $D_u.\Tvn{contains}(i)=1$,
step from $u$ to its $i$th child;
otherwise return~0.
If and when a node $u$ of height 1 is reached,
return $D_u.\Tvn{contains}(\Tvn{viachild}(u,\ell))$.
\item[\normalfont\Tvn{insert}$(\ell)$:]
Starting at $r$ and as long as
the height in $T$ of
the current node $u$ is at least~2,
let $i=\Tvn{viachild}(u,\ell)$ and let
$v$ be the $i$th child of~$u$.
If $D_u.\Tvn{contains}(i)=0$,
initialize $D_v$ (possibly not for the first time)
for universe size $d$, where $d$ is the degree of~$v$,
and execute $D_u.\Tvn{insert}(i)$.
Subsequently step from $u$ to~$v$.
When a node $u$ of height 1 is reached,
execute $D_u.\Tvn{insert}(\Tvn{viachild}(u,\ell))$.
\item[\normalfont\Tvn{delete}$(\ell)$:]
Starting at $r$ and as long as
the height in $T$ of the current node $u$ is at least~2,
let $i=\Tvn{viachild}(u,\ell)$.
If $D_u.\Tvn{contains}(i)=1$,
step from $u$ to its $i$th child;
otherwise abandon the deletion ($\ell\not\in S$).
If and when a node $v$ of height 1 is reached,
execute $D_v.\Tvn{delete}(viachild(v,\ell))$.
Then, as long as the current node $v$ is not $r$
and $D_v.\Tvn{isempty}=1$,
step from $v$ to its parent~$u$
and execute $D_u.\Tvn{delete}(\Tvn{viachild}(u,\ell))$.
\end{description}

A tree data structure that supports the operations
\Tvn{movetoroot}, etc., in constant time is simple
to design if $T$ is sufficiently regular.
For given $n\ge 2$, let $(p_1,p_2,\ldots)$ be
a finite or
infinite
\emph{degree sequence}
of positive integers whose product
is at least $n$
and, for $j=0,1,\ldots,$
take $P_j=\prod_{i=1}^j p_i$.
Let $h$ be the smallest positive
integer with $P_h\ge n$.
Then we can let $T$ be an ordered tree on the leaf
set $\{1,\ldots,n\}$ in which all leaves have
depth~$h$ and every node of height~$j$, except
possibly the rightmost one, has degree exactly $p_j$,
for $j=1,\ldots,h$.
Suppose that we represent the current node $u$
through the triple $(j,k,P_j)$, where
$j=\Tvn{height}(u)$ and $k=\Tvn{leftindex}(u)$.
Then we can navigate in $T$ through the following
simple observations:
The parent of $u$ is
$(j+1,\Tceil{{k/{p_{j+1}}}},P_j\cdot p_{j+1})$,
its $i$th child is
$(j-1,(k-1)p_j+i,{{P_j}/{p_j}})$,
and $\Tvn{viachild}(u,\ell)=
 \Tceil{p_j({\ell/{P_j}}-k+1)}$.
The evaluation of the operations
$\Tvn{height}$ and $\Tvn{leftindex}$ is trivial,
and the root of $T$ is (represented by) $(h,1,P_h)$.
As for accessing the choice dictionary
of the current node, i.e., evaluating \Tvn{data},
suppose that $f_1,\ldots,f_h$ are given nonnegative
integers and that each choice dictionary of a
node of height $j$ in $T$ can be accommodated in
a block of memory of $f_j$ bits, for $j=1,\ldots,h$.
For $j=0,\ldots,h$, let
$F_j=\sum_{i=1}^j f_j\Tceil{{n/{P_j}}}$ be
the total number of bits needed for the
blocks of nodes in $T$ of height at most~$j$.
Then, when the current node is $(j,k,P_j)$ and $j\ge 1$,
$\Tvn{data}$ can return $F_{j-1}+(k-1)f_j$
plus the starting address of a global segment of
$F_h$ bits reserved for all blocks of nodes in~$T$.
If we also maintain $F_j$, i.e., if we extend the
triple $(j,k,P_j)$ by $F_j$ as a fourth component,
\Tvn{data} can be executed in constant time as well.

If the choice dictionaries of all nodes in $T$
support \Tvn{iterate}, the overall choice
dictionary can also support \Tvn{iterate}
with the following procedure, which is explained
below:

\begin{description}
\item[\normalfont\Tvn{iterate}.\Tvn{init}:]
Execute $D_r.\Tvn{iterate}.\Tvn{init}$
and initialize an integer $\ell$ to~0.
\item[\normalfont\Tvn{iterate}.\Tvn{more}:]
If $\ell=0$, return $D_r.\Tvn{iterate}.\Tvn{more}$.
Otherwise,
starting at $r$ and as long as the current node
$u$ is not a leaf and no value was returned,
return 1 if $D_u.\Tvn{iterate}.\Tvn{more}=1$.
Otherwise step to the $i$th child of~$u$,
where $i=\Tvn{viachild}(u,\ell)$.
If and when a leaf is reached, return~0.
\item[\normalfont\Tvn{iterate}.\Tvn{next}:]
If $\Tvn{iterate}.\Tvn{more}=0$, return~0.
Otherwise proceed as follows:

If $\ell=0$, start at $r$ and, as long as the current node
$u$ is not of height~1,
step to the $i$th child $v$ of $u$,
where $i=D_u.\Tvn{iterate}.\Tvn{next}$,
and execute $D_v.\Tvn{iterate}.\Tvn{init}$.

If $\ell>0$, instead start at $r$ and, as long as the current node $u$
is not of height~1, step to the $i$th child of $u$,
where $i=\Tvn{viachild}(u,\ell)$.
Then, as long as $D_u.\Tvn{iterate}.\Tvn{more}=0$,
where $u$ is the current node, step to
the parent of~$u$.
Subsequently, as long as the current node $u$
is not of height~1,
step to the $i$th child $v$ of $u$,
where $i=D_u.\Tvn{iterate}.\Tvn{next}$,
and execute $D_v.\Tvn{iterate}.\Tvn{init}$.

Whether or not $\ell=0$,
when a node $u$ of height~1 is reached,
let $v$ be its $i$th child,
where $i=D_u.\Tvn{iterate}.\Tvn{next}$,
set $\ell:=\Tvn{leftindex}(v)$ and return~$\ell$.
\end{description}

\noindent
If we say that the choice dictionary of
a node $u$ in $T$ is \emph{activated}
through a call of $D_u.\Tvn{iterate}.\Tvn{init}$,
becomes \emph{exhausted} when
$D_u.\Tvn{iterate}.\Tvn{more}$ first evaluates to~0,
and is \emph{active} between the two events,
the iteration procedure above maintains a single
root-to-leaf path of active choice dictionaries,
which it remembers in the integer~$\ell$, with
$\ell=0$ denoting an initial situation in which
such an \emph{active path} has not yet been
established.
The global call $\Tvn{iterate}.\Tvn{next}$
finds a first active path (if $\ell=0$) or
(if $\ell>0$) exhausts the choice dictionaries of 
the current active path in a bottom-up fashion
until reaching a node $u$ with
$D_u.\Tvn{iterate}.\Tvn{more}=1$,
then steps to the ``next'' child $v$ of $u$ and
changes the last part of the current path to
be the path from $v$ to its ``first'' leaf descendant.

If the choice dictionaries of some nodes in $T$
support \Tvn{successor} (or \Tvn{predecessor})
instead of \Tvn{iterate}, the overall choice dictionary
can still support iteration through the reduction
of \Tvn{iterate} to
\Tvn{successor} (or \Tvn{predecessor})
described in Section~\ref{sec:preliminaries}.
This needs additional space for a set of
``state variables'' that record the active path,
but it is easy to see that
the single variable $\ell$ can
represent these compactly,
so that the overall space cost of an iteration
is $\Tceil{\log(n+1)}$ bits.

If the nodes of height~1 in $T$ have
$c$-color choice dictionaries, for some $c\ge 2$,
the overall choice dictionary can also support
$c$ colors and therefore maintain a client vector
$(S_0,\ldots,S_{c-1})$.
In this case we equip every node $u$ in $T$
of height $\ge 2$ with $c$
choice dictionaries, each with universe size
equal to the degree of~$u$ and associated
with a different color in $\{0,\ldots,c-1\}$.
Conceptually, the choice dictionaries
associated with each color $j\in\{0,\ldots,c-1\}$
form an \emph{upper tree} $T_j$ that realizes a 
choice dictionary $D_j$, called the
choice dictionary of $T_j$, with
universe size $n_1=\Tceil{n/{p_1}}$
and with client set
$\{i\in\TbbbN\mid 1\le i\le n_1$ and
$S_j\cap\{(i-1)p_1+1,\ldots,i p_1\}\not=\emptyset\}$.
For $j\in\{1,\ldots,c-1\}$ the choice dictionaries
in $T_j$ and $D_j$ are all colorless.
Because $S_0=U$ initially, the choice dictionaries
in $T_0$ and $D_0$ must instead allow
two colors and use the elements of color~0
as their ``client set''.
For $i=1,\ldots,n_1$,
the $i$th leaves of all of $T_0,\ldots,T_{c-1}$
are associated with the same $i$th
\emph{lower tree}, the tree induced
by the $i$th node of height~1 in $T$ and the children of that node.
An additional colorless dictionary $D^*$
with universe size $n_1+c$ is used to keep
track of which choice dictionaries of upper and lower trees
have been initialized.
The realization of the ``colored'' choice-dictionary
operations in terms of ``colorless'' operations on
upper trees and ``colored'' operations on lower
trees is easy.
For instance,
to execute $\Tvn{choice}(j)$,
call $D_j.\Tvn{choice}$ to find a lower tree $\widetilde{T}$
in which the color $j$ is ``represented'' and
call $\Tvn{choice}(j)$ in the choice dictionary
of (the root of) $\widetilde{T}$ to determine an element of $S_j$.
To execute $\Tvn{color}(\ell)$, consult the appropriate
lower tree $\widetilde{T}$.
In all cases, before operating on the
dictionary $D$ of an upper or
lower tree, use $D^*$ to initialize $D$
if this has not been done before.
The remaining details are left to the reader.
When putting together a choice dictionary as described
in this section, we will say that we apply the
\emph{trie-combination} method.

\section{Systematic and Related Choice Dictionaries}
\label{sec:systematic}%

\subsection{Upper Bounds}

One is frequently faced with the problem of
maintaining a permutation $\pi$ of $\{1,\ldots,n\}$
initialized to the identity permutation of that
set, say, under inspection of function values
and updates of $\pi$ of some kind.
Allowing an initialization time of $\Theta(n)$,
the problem is trivial.
Assume that we want the initialization time to be constant.
Proceeding as described after Lemma~\ref{lem:2.12},
we can maintain $\pi$ using around $n\log n$ bits
for the values of $\pi$ itself and $2 n\log n$ bits 
for its ``initialization on the fly'' component.
If the inverse permutation $\pi^{-1}$ is also
maintained in the same manner, the space
requirements grow to approximately
$6 n\log n$ bits.
In the following lemma we demonstrate how to maintain
both $\pi$ and $\pi^{-1}$ using only about
a third of this space.
Our data structure shows some
similarity to an algorithm of Brassard and Kannan
for computing random permutations
``on the fly''~\cite{BraK88}.

The data structure of
Lemma~\ref{lem:permutation}
must be employed with a little care because the
user acquires full ``control'' over $\pi$ only
gradually in the course of
$n$ calls of an operation \Tvn{consolidate}.
More precisely, when $r\le n$ calls of \Tvn{consolidate}
have been executed, the value of $\pi$ after
an update, which is supposed to ``rotate'' 
the function values within a
given subset of $\{1,\ldots,n\}$,
is in fact known only on the $r$ largest elements
of $\{1,\ldots,n\}$.
One way of coping with the associated uncertainty
is illustrated in the
proof of Theorem~\ref{thm:nlogn}.

The space savings by a factor of~3 discussed
above plays no role in our development after
Theorem~\ref{thm:nlogn}, but we consider
Lemma~\ref{lem:permutation} to be
of independent interest.

\begin{lemma}
\label{lem:permutation}%
There is a data structure with the following properties:
First, for every $n\in\TbbbN$, it can be
initialized for universe size $n$ and
subsequently maintains a pair $(\pi,\mu)$
composed of a permutation $\pi$ of $U=\{1,\ldots,n\}$,
initially the identity permutation
$\Tvn{id}_n$ of $U$, and an integer $\mu$, initially $n$,
under evaluation of $\pi$ and $\pi^{-1}$
and the following operations:

\begin{description}
\item[{\normalfont \Tvn{consolidate}:}]
Replaces $\mu$ by $\max\{\mu-1,0\}$.
\item[{\normalfont $\Tvn{rotate}(j_1,\ldots,j_k)$}]
($k\in\TbbbN$ and $j_1,\ldots,j_k$ are distinct
elements of $U$):
Replaces $\pi$ by a permutation of $U$ that agrees
on $\{\mu+1,\mu+2,\ldots,n\}$ with the permutation
$\pi'$ of $U$ with $\pi'(j_i)=\pi(j_{i+1})$ for
$i=1,\ldots,k-1$, $\pi'(j_k)=\pi(j_1)$,
and $\pi'(\ell)=\pi(\ell)$ for all
$\ell\in U\setminus\{j_1,\ldots,j_k\}$.
\end{description}

\noindent
Second, for known $n$, the data structure uses at most
$(2 n+1)\Tceil{\log n}$ bits,
can be initialized in constant time, executes queries
and calls of $\Tvn{consolidate}$ in constant time and
executes $k$-argument calls of $\Tvn{rotate}$
in $O(k)$ time, for all $k\in\TbbbN$.
\end{lemma}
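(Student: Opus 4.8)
The plan is to store $\pi$ directly in an array $A[1\Ttwodots n]$ (with $A[\ell]=\pi(\ell)$) and $\pi^{-1}$ directly in an array $B[1\Ttwodots n]$ (with $B[\ell]=\pi^{-1}(\ell)$), each cell holding $\Tceil{\log n}$ bits, plus a single cell for $\mu$; this already consumes essentially the whole budget of $(2n+1)\Tceil{\log n}$ bits, so the entire difficulty lies in (i) avoiding $\Theta(n)$ time to initialize $A$ and $B$ and (ii) avoiding the extra ``initialization on the fly'' structure that the discussion after Lemma~\ref{lem:2.12} would otherwise attach to each array — this is where the claimed factor-of-$3$ saving comes from. (The minor point that $\mu$ ranges over $n+1$ values, which does not quite fit in $\Tceil{\log n}$ bits when $n$ is a power of~$2$, is absorbed by, e.g., keeping the initial state $\mu=n$ implicit until the first output-producing call, or by storing values as ``value${}-1$'' and borrowing one bit; I will not dwell on it.)

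The key idea is to let the \emph{public} counter $\mu$ play the role that the private counter plays in Lemma~\ref{lem:2.12}: declare an index $\ell$ \emph{committed} exactly when $\ell>\mu$, so that initially ($\mu=n$) nothing is committed — matching the fact that $\pi$ is then unknown everywhere — and each call of \Tvn{consolidate} commits exactly one more index, the old value of $\mu$. The invariant is: for every committed index $\ell$, $A[\ell]=\pi(\ell)$ and $B[\pi(\ell)]=\ell$, and these cells have been written explicitly (hence are trustworthy); the remaining ``reserve'' cells $A[1\Ttwodots\mu]$ hold, in some order, precisely the values not yet assigned to a committed index, each reserve cell $i$ defaulting to hold the value $i$ (so that, absent any \Tvn{rotate}, $\pi$ is the identity). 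A query $\pi(\ell)$ with $\ell>\mu$ returns $A[\ell]$. A query $\pi^{-1}(\ell)$ tests whether $B[\ell]>\mu$ and $A[B[\ell]]=\ell$, returning $B[\ell]$ if so and (say) $\ell$ otherwise; this test is \emph{self-certifying}, since $B[\ell]>\mu$ forces $A[B[\ell]]$ to be a committed, trustworthy cell, so a passing test can only mean $\pi(B[\ell])=\ell$, i.e.\ $B[\ell]=\pi^{-1}(\ell)$ — and conversely the test does pass whenever $\pi^{-1}(\ell)$ is committed. Thus all answers are correct on the ``known'' part, which is exactly what the contract demands.

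\Tvn{consolidate} decrements $\mu$ from $m$ to $m-1$: read the value $v$ currently in reserve cell $m$ (using the lazy-read procedure discussed below), then set $A[m]:=v$ and $B[v]:=m$; since $v$ was one of the uncommitted values, the invariant is restored, all in $O(1)$ time. $\Tvn{rotate}(j_1,\ldots,j_k)$ reads the current ``value at position $j_i$'' for each $i$ (namely $\pi(j_i)$ if $j_i$ is committed, else the reserve value at $j_i$), then rewrites $A[j_1],\ldots,A[j_k]$ and the matching cells of $B$ so as to shift these $k$ values cyclically around $(j_1,\ldots,j_k)$; on committed positions this installs exactly the value prescribed by $\pi'$, and on reserve positions it merely redistributes the uncommitted values, which the specification leaves free — one checks that $A$ and $B$ remain mutually inverse and that no committed cell is disturbed except to its new correct value. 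This takes $O(k)$ time.

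The main obstacle is point (i): since memory starts as garbage, the lazy read of a reserve cell of $A$ — needed already on the first \Tvn{consolidate} — must not return junk, and we cannot afford a full companion structure. The plan is to adapt the ``vacated bit pattern'' variant following Lemma~\ref{lem:2.12}: maintain, tracking $\Theta(w)$-bit blocks of $A$, a marker of which blocks have been written, treat an unwritten reserve cell $i$ as holding its default value $i$, and preserve the defining property ``the defaults of the currently unwritten reserve cells are exactly the still-available values'' by writing a reserve cell the instant a \Tvn{rotate} would otherwise falsify its default (this is always possible because \Tvn{rotate} is free to choose $\pi$ on reserve positions, and it pushes exactly as many values out of the reserve as it pulls in). The block-marker array is itself tiny and is maintained by one further, now affordable, level of the same technique, or stored in slack within $B$/within the cells; verifying that these defaults can be kept consistent under \emph{arbitrary} \Tvn{rotate} arguments, and that the auxiliary data genuinely fits inside $(2n+1)\Tceil{\log n}$ bits, is the crux, while the rest is routine bookkeeping. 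With this in place, initialization is trivially $O(1)$: set $\mu:=n$ and clear the small auxiliary structures. The residual uncertainty the caller inherits — that updates affect $\pi$ provably only on the top $n-\mu$ elements — is exactly the ``$\ell>\mu$'' boundary above and is dealt with at the call site, as in the proof of Theorem~\ref{thm:nlogn}.
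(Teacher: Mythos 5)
Your architecture (two mutually certifying arrays with $\mu$ as the trust threshold and the identity as the fallback) is in the spirit of the paper's proof, but the point where you deviate is precisely where the argument breaks, and you flag it yourself as unresolved. You require the uncommitted (``reserve'') cells of $A$ to physically hold exactly the still-available values, with an unwritten cell $i$ read as its default $i$; since memory starts as garbage, this forces a lazy-initialization marker recording which $\Theta(w)$-bit blocks of $A$ have been written. That marker needs $\Theta(n\log n/w)$ bits, which for $w=\Theta(\log n)$ is $\Theta(n)$ bits, while the budget $(2n+1)\Tceil{\log n}$ leaves only about one cell of slack beyond $A$, $B$ and $\mu$; and there is no per-cell slack to hide it in, since encoding a value in $\{1,\ldots,n\}$ together with one extra bit needs $\Tceil{\log(2n)}>\Tceil{\log n}$ bits whenever $n$ is a power of two (and in general $2n>2^{\Tceil{\log n}}$). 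Making blocks larger does not help, because a block must be cleared to its defaults in constant time on its first write. So the stated space bound is not met, and the ``crux'' you leave open is a genuine gap, not routine bookkeeping. (The $\Tceil{\log(n+1)}$-bit issue for $\mu$ that you also defer is handled in the paper by executing one \Tvn{consolidate} during initialization.)

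The paper avoids needing trustworthy defaults altogether: it never reads an uncertified cell as data. An index $\ell$ is \emph{proper} only if $P[\ell]\in U$, $P^{-1}[P[\ell]]=\ell$ \emph{and} $\max\{\ell,P[\ell]\}>\mu$, and $\pi(\ell)$ is defined as $P[\ell]$ if $\ell$ is proper and as $\ell$ otherwise. The threshold condition makes arbitrary garbage harmless at initialization (with $\mu=n$ nothing is proper, so $\pi=\Tvn{id}_n$); \Tvn{consolidate} writes $P[\mu]$ and $P^{-1}[\mu]$ before decrementing $\mu$ when necessary; and \Tvn{rotate} first sets $P[j]:=j$ for its improper arguments, performs the cyclic shift, and then repairs the invariant on the symmetric difference of $A=\{j\in J\cap\{1,\ldots,\mu\}: P[j]\le\mu\}$ and $B=\{P[j]: j\in A\}$ by remapping $P$ on $A\setminus B$ into $\pi'(B\setminus A)$. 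Crucially, this repair may silently reset $\pi$ to the identity on some elements $\le\mu$, which the specification of \Tvn{rotate} permits; because the invariant never claims that reserve cells store the available values, no marker bits are needed and $2n\Tceil{\log n}$ bits plus $\mu$ suffice. To fix your write-up you would have to replace your ``reserve cells hold the available values'' invariant by this weaker, two-sidedly certified one.
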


\begin{proof}
The permutation $\pi$ is represented through two
arrays $P[1\Ttwodots n]$ and $P^{-1}[1\Ttwodots n]$,
each of whose entries can hold an arbitrary element of~$U$.
For $\ell\in U$, say that $\ell$ is \emph{proper}
in $P$ if $P[\ell]\in U$, $P^{-1}[P[\ell]]=\ell$, and
$\max\{\ell,P[\ell]\}>\mu$.
Correspondingly, $\ell$ is proper in $P^{-1}$
if $P^{-1}[\ell]\in U$, $P[P^{-1}[\ell]]=\ell$, and
$\max\{\ell,P^{-1}[\ell]\}>\mu$.
If some $\ell\in U$ is not proper in $P$ or $P^{-1}$,
we say that $\ell$ is \emph{improper} in that array.
Observe that if $\ell\in U$ is proper in~$P$,
then $P[\ell]$ is proper in $P^{-1}$.
When $\ell$ is improper in $P$, say, $P[\ell]$
may contain an arbitrary value
(``be uninitialized'').
The following invariant will hold at all times
between operations:
For all $\ell\in U$, $\ell$ is proper in $P$
if and only if $\ell$ is proper in $P^{-1}$;
for $\ell=\mu+1,\ldots,n$, $\ell$ is proper in both $P$ and $P^{-1}$.
When saying simply that $\ell$ is proper,
we will mean that $\ell$ is proper in both $P$ and $P^{-1}$.
The arrays $P$ and $P^{-1}$ represent a
permutation $\pi$ of $U$ in the following manner:
For $\ell\in U$, if $\ell$ is proper,
then $\pi(\ell)=P[\ell]$; if not, $\pi(\ell)=\ell$.
To see that this really defines $\pi$ as a
permutation of $U$, let $W=\{\ell\in U\mid \ell$ is proper$\}$
and observe that $\pi$ is a function from $U$ to $U$ that
maps $W$ to $W$ and is injective both on $W$
(because $P^{-1}[\pi(\ell)]=\ell$ for each $\ell\in W$)
and on $U\setminus W$.
It is easy to see that
$\pi$ and $\pi^{-1}$ can be evaluated in
constant time on arbitrary arguments in~$U$.
Informally, $\ell$ is proper in $P$
and $P[\ell]=\pi(\ell)$ if $P[\ell]$ is a ``plausible''
value for $\pi(\ell)$ (i.e., $P[\ell]\in U$)
and that value is confirmed by $P^{-1}$
(i.e., $P^{-1}[P[\ell]]=\ell$).
However, only values of $P[\ell]$ and $P^{-1}[\ell]$
with $\ell>\mu$ are considered trustworthy,
and if both $\ell$ and $P[\ell]$ are $\le \mu$,
$\ell$ is improper and $P[\ell]$ is ignored.
Initially, the invariant is
satisfied, and the permutation $\pi$
represented through $P$ and $P^{-1}$ is
the identity permutation $\Tvn{id}_n$.

To execute \Tvn{consolidate} when $\mu>0$,
store $\mu$ in both $P[\mu]$ and $P^{-1}[\mu]$
if $\mu$ is improper.
Then, whether or not $\mu$ is proper, decrement $\mu$.
It can be seen that neither step
invalidates the invariant or changes~$\pi$.

The implementation of
$\Tvn{rotate}$ is illustrated in
Fig~\ref{fig:rotation}.
To execute $\Tvn{rotate}(j_1,\ldots,j_k)$
in the situation of
Fig.~\ref{fig:rotation}(a),
let $J=\{j_1,\ldots,j_k\}$ and begin by
setting $P[j]:=j$ for each improper $j\in J$
(Fig.~\ref{fig:rotation}(b)).
Then change $P$ in a way that
reflects the permutation $\pi'$ in the
definition of \Tvn{rotate}:
Save $P[j_1]$ in a temporary variable,
then, for $i=1,\ldots,k-1$,
execute $P[j_i]:=P[j_{i+1}]$, and
next store the original value of $P[j_1]$ in $P[j_k]$.
Subsequently change $P^{-1}$ accordingly
by setting $P^{-1}[P[j]]:=j$ for all $j\in J$.
At this point $P[j]=\pi'(j)$ for all $j\in J$,
but the invariant may be violated
(Fig.~\ref{fig:rotation}(c)).

\begin{figure}
\begin{center}
\epsffile{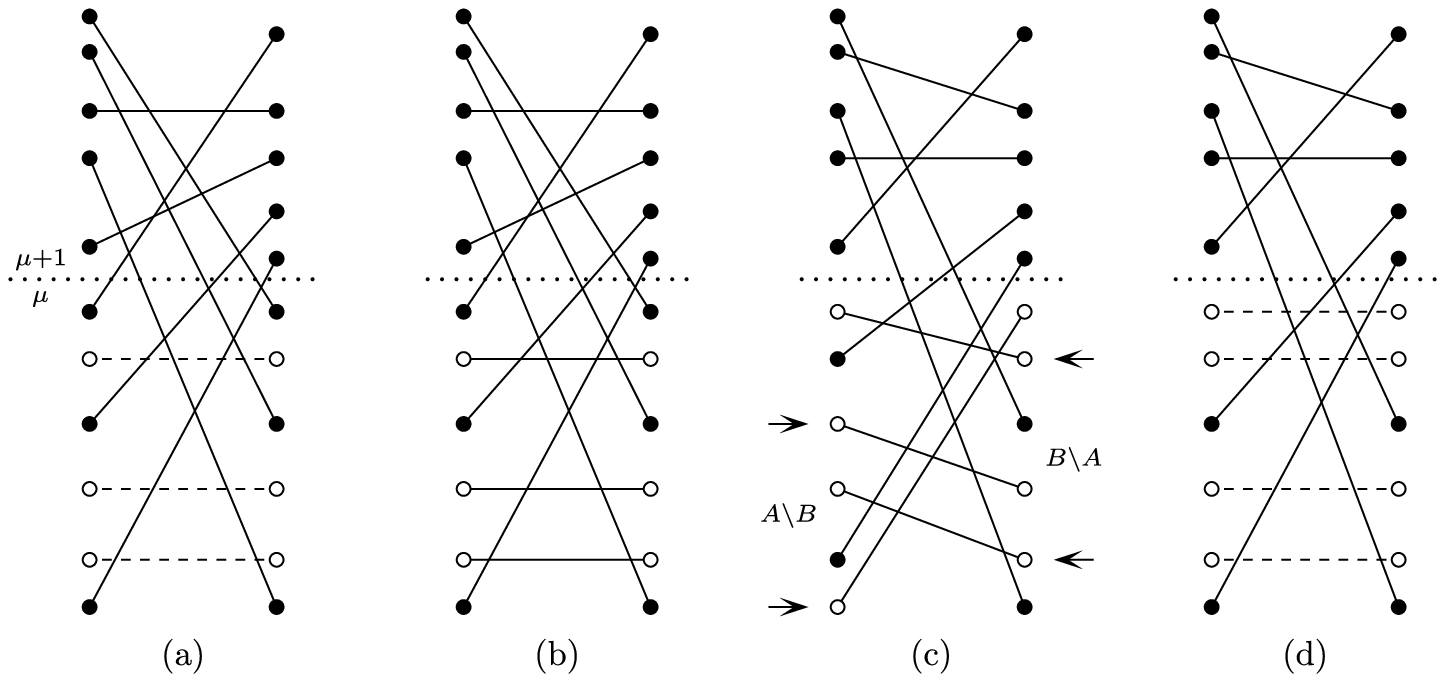}
\end{center}
\caption{The execution of $\Tvn{rotate}(j_1,\ldots,j_k)$, step
by step, starting from an example permutation~$\pi$.
The example is chosen to have $j_1>\cdots>j_k$.
Each of parts (a)--(d) shows $J=\{j_1,\ldots,j_k\}$ on the left
and $\pi(J)$ on the right.
The sets $\{1,\ldots,\mu\}$ and $\{\mu+1,\ldots,n\}$
are separated by a dotted line.
(a): The initial situation.
For each $j\in J$, $j$ and $\pi(j)$ are connected by a fully
drawn line if $j$ is proper (then $P[j]=\pi(j)$) and by a dashed line if
$j$ is improper (then $P[j]$ may be arbitrary).
(b): After the execution of $P[j]:=j$ for each improper $j\in J$.
Each $j\in J$ is connected to $P[j]$.
(c): After the actual rotation.
Now $P[j]=\pi'(j)$ for all $j\in J$,
where $\pi'$ is as in Lemma~\ref{lem:permutation}.
The elements of $A\setminus B$ and $B\setminus A$
are indicated by arrows.
(d): After the restoration of the invariant.
The final permutation is shown
with conventions as in part~(a).}
\label{fig:rotation}
\end{figure}

Let us say that the invariant is violated
at some $\ell\in U$ if $\ell\le \mu$ and $\ell$ is
improper in exactly one of $P$ and $P^{-1}$ or $\ell>\mu$ and $\ell$ is
improper in at least one of $P$ and $P^{-1}$.
Let $J'=J\cap\{1,\ldots,\mu\}$, 
$A=\{j\in J'\mid P[j]\le \mu\}$ and $B=\{P[j]\mid j\in A\}$.
Obviously $|A|=|B|$.
It can be seen that the invariant is not
violated at any element outside of $A\cup B$.
Moreover, for $j\in A\cup B$, $j$ is improper in $P$
exactly if $j\in A$, whereas $j$ is improper in $P^{-1}$
exactly if $j\in B$.
Therefore the invariant is violated exactly at each $j$
in the symmetric difference of $A$ and~$B$.
Observe that $|A\setminus B|=|B\setminus A|$ and finally
restore the invariant by changing the values of $P$
on $A\setminus B$ to make $P$
map $A\setminus B$ injectively to $\pi'(B\setminus A)$
and then setting $P^{-1}[P[j]]:=j$ for all $j\in A\setminus B$.
This simultaneously makes the elements
of $A\setminus B$ proper in $P$ and makes the elements
of $B\setminus A$ improper in $P$
(Fig.~\ref{fig:rotation}(d)).

The data structure uses slightly
more space than claimed because $\mu$ can take
arbitrary values in $\{0,\ldots,n\}$ and so needs
$\Tceil{\log(n+1)}$ bits for its storage.
To lower this to $\Tceil{\log n}$ bits, execute
\Tvn{consolidate} one first time already as part
of the initialization, so that $\mu$ never has
the value~$n$.
\end{proof}

Recall from Section~\ref{sec:quick} that
our main result about systematic choice dictionaries
is
obtained by the combination of three simple ingredients:
A choice dictionary that is wasteful in terms
of space (Theorem~\ref{thm:nlogn}), a choice
dictionary for very small universes
(Lemma~\ref{lem:atomic-c}), and the
trie-combination method of
Section~\ref{sec:trie}.

\begin{theorem}
\label{thm:nlogn}%
There is a choice dictionary that,
for arbitrary $n,c\in\TbbbN$,
can be initialized for universe size~$n$
and $c$ colors in constant time
and subsequently
occupies at most
$(2 n+4 c)\Tceil{\log(n+1)}+n\Tceil{\log(2 c)}
 +O(c\log c)=O((n+c)\log(n+c))$
bits and supports \Tvn{color}, \Tvn{p-rank}, \Tvn{p-select}
(and hence \Tvn{choice} and \Tvn{uniform-choice})
and robust iteration in constant time 
and \Tvn{setcolor} in $O(c)$ time.
A more precise time bound for \Tvn{setcolor} is
that the execution of a call $\Tvn{setcolor}(j',\ell)$,
for all $j'\in\{0,\ldots,c-1\}$ and $\ell\in\{1,\ldots,n\}$,
takes $O(|j'-j|+1)$ time, where $j$ is the color of $\ell$
immediately before the call.
\end{theorem}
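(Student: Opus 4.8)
The plan is to keep the elements of $U$ in an array of $n$ \emph{slots}, grouped by colour into $c$ contiguous \emph{blocks}: for boundaries $0=b_0\le b_1\le\cdots\le b_c=n$ the slots $b_j{+}1,\ldots,b_{j+1}$ hold exactly $S_j$. The slot--element bijection (and its inverse) is maintained with the data structure of Lemma~\ref{lem:permutation}, its permutation $\pi$ playing the role of ``element $\mapsto$ slot'', which is what buys constant-time initialization. In addition I store the boundaries, the sizes $|S_j|$, a few per-colour heads, and similar $O(c)$-word tables in the ``initialization on the fly'' form of Lemma~\ref{lem:2.12} (their initial values are simple functions of the index), at a cost of $O(c)$ words plus $O(c\log c)$ bits; and an element-indexed colour array $C$ stores, per element, its colour together with one spare bit ($2c$ values, $n\Tceil{\log(2c)}$ bits), the all-zero pattern meaning ``colour $0$, bit off'', so that $C$ too can be made to look initialized by the segmented technique of the paragraph following Lemma~\ref{lem:2.12}. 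Then $\Tvn{color}$ is a lookup in $C$; $\Tvn{p-rank}$, $\Tvn{p-select}$ (hence $\Tvn{choice}$ and $\Tvn{uniform-choice}$) and iteration read off the block layout; and $\Tvn{setcolor}(j',\ell)$, with $\ell$ of current colour $j$, walks $\ell$ from block $j$ to block $j'$ one block at a time, each step swapping $\ell$'s slot with an end slot of the block being vacated and moving one boundary (and finally updating $C$ and the two sizes), so that only $O(|j'{-}j|)$ slots are touched.

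The crux --- and the step I expect to be the main obstacle --- is that Lemma~\ref{lem:permutation} hands over control of $\pi$ only gradually: after $r$ calls of $\Tvn{consolidate}$, $\pi$ is reliable only on the $r$ largest elements, i.e.\ on $\{\mu{+}1,\ldots,n\}$. I would perform one $\Tvn{consolidate}$ per call of $\Tvn{setcolor}$ and maintain the invariant $\mu<|S_0|$, which holds for free, since after $r$ calls of $\Tvn{setcolor}$ at most $r$ elements have nonzero colour, whence $|S_0|\ge n-r>n-1-r\ge\mu$. Consequently every not-yet-reliable slot lies in block $0$, so all nonzero-colour blocks --- and every slot ever touched by the swaps of $\Tvn{setcolor}$ --- lie in the reliable region and behave exactly as intended. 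A call $\Tvn{setcolor}(j',\ell)$ on an element $\ell\le\mu$ cannot physically relocate $\ell$; instead I record the new colour in $C$, turn on $\ell$'s spare bit to flag it \emph{exceptional}, and link $\ell$ into a per-colour list of exceptional elements --- threaded, at no extra space cost, through fields that the permutation structure carries but does not yet rely on. Then, in the $\Tvn{consolidate}$ that eventually makes $\ell$ reliable, I physically capture $\ell$ into block $j'$ (cost $O(c)$, charged to the $\Tvn{setcolor}$ bound) and clear the flag; since exactly one element becomes reliable per $\Tvn{consolidate}$, there is never a backlog, and the bound $O(|j'{-}j|{+}1)$ survives.

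It remains to make $\Tvn{p-rank}$, $\Tvn{p-select}$ and iteration consistent with the exceptional elements and the shrinking reliable region. I would choose each bijection $\pi_j$ to number the reliable members of $S_j$ by slot position and its exceptional members (at most $\mu$ of them, enumerable via the per-colour list) afterwards; $\Tvn{choice}(j)$ then always has a reliable witness, because block $0$ retains its at least $|S_0|{-}\mu\ge1$ reliable slots and every nonzero block is wholly reliable. The space bound is the sum of Lemma~\ref{lem:permutation}'s $(2n{+}1)\Tceil{\log n}$ bits, the $n\Tceil{\log(2c)}$-bit array $C$, $O(c)$ words for the boundaries, sizes, list heads and the like, and $O(c\log c)$ bits for their ``on the fly'' components, all within the claimed total; initialization is constant-time because each ingredient is. Robustness of iteration follows by the same index bookkeeping used for the paper's other choice dictionaries (an intervening $\Tvn{setcolor}$ shifts $S_j$'s block by at most one slot), and the point I expect to need the most care is precisely this last one: verifying that an element lying in $S_j$ throughout an iteration over $S_j$ is enumerated exactly once even though it may pass from ``exceptional'' to ``captured'' midway through the iteration.
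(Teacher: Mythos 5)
The root of the trouble is your choice to orient $\pi$ as ``element $\mapsto$ slot''. Lemma~\ref{lem:permutation} only guarantees $\pi$ on arguments in $\{\mu+1,\ldots,n\}$, so with your orientation you lose control precisely over \emph{which slot a specific small-numbered element occupies}, and this is what forces the exceptional-element patch. That patch does not deliver the theorem's bounds. First, the claim that ``there is never a backlog'' conflates two things: each \Tvn{consolidate} captures at most the one element that becomes reliable, but the standing population of exceptional elements can be $\Theta(n)$ (recolor elements $1,2,\ldots,n/2$ in increasing order: after $r$ calls, $\mu=n-1-r$, so all of $1,\ldots,r$ remain exceptional for $r<n/2$). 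Consequently your per-colour linked lists must support $\Tvn{p-select}_j(k)$ and \Tvn{p-rank} on up to $\Theta(n)$ list members in constant time, which a list cannot do; and for colour~$0$ the situation is worse, because the exceptional elements sit (by the improper fallback $\pi(\ell)=\ell$) scattered inside block~$0$'s slot range, so ``number the reliable members of $S_0$ by slot position'' is not a bijection onto $\{1,\ldots,|S_0|\}$ and evaluating it amounts to a dynamic rank query over the holes. Second, the deferred capture of an exceptional element costs $O(c)$ and is performed during the \Tvn{consolidate} of some \emph{later, unrelated} call $\Tvn{setcolor}(j'',\ell')$, so the stated per-call bound $O(|j''-j|+1)$ fails in the worst case (at best you get it amortized).

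The second gap is robust iteration, which you flag but do not resolve, and which your space accounting forecloses: you spend the factor $2$ in the $n\Tceil{\log(2c)}$-bit array on the ``exceptional'' flag, whereas that factor is exactly what is needed (as a second ``hue'' per colour) to make iteration robust. ``Index bookkeeping, the block shifts by at most one slot'' does not suffice: if during an iteration over $S_j$ an already-enumerated element of block~$j$ is recoloured, your scheme swaps it with an \emph{end} slot of the block, which drags a not-yet-enumerated element into the enumerated region while the block shrinks at the far end, so that element is skipped even though it was in $S_j$ throughout; symmetrically, an element that leaves and re-enters $S_j$, or an exceptional element enumerated from its list and then captured into the unenumerated part of block~$j$, can be enumerated twice. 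Preventing this requires partitioning each colour's block into an ``already enumerated'' and a ``still to be enumerated'' part and rotating through that frontier, which is precisely the paper's two-hues-per-colour device. The paper's proof avoids your whole exceptional-element apparatus by orienting $\pi$ the other way (positions $\mapsto$ elements, sorted by hue) and keeping the invariant $\mu\le m_0$: then the uncontrolled arguments of $\pi$ are positions inside the hue-$0$ block, where the arrangement is immaterial, no element-specific control over the unreliable region is ever needed, and all bounds hold worst-case.
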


\begin{proof}
Denote the client vector by $(S_0,\ldots,S_{c-1})$.
The choice dictionary maintains a semipartition
$(R_0,\ldots,R_{2 c-1})$
of $U=\{1,\ldots,n\}$, whose sets
will be called \emph{segments}.
The intended meaning of the segments is that
for $j=0,\ldots,c-1$, $R_{2 j}$ and $R_{2 j+1}$
are the sets of those elements of $S_j$ that are (still)
to be enumerated and are not to be enumerated, respectively,
in the current iteration over $S_j$, if any;
thus at all times $R_{2 j}\cup R_{2 j+1}=S_j$.
For brevity, let us say that the elements in $R_k$
are of \emph{hue} $k$, for $k=0,\ldots,2 c-1$, and
denote the hue of each $\ell\in U$ by $\Tvn{hue}(\ell)$.
The segments are realized via $2 c$ integers
$m_0,\ldots,m_{2 c-1}$ that store $|R_0|,\ldots,|R_{2 c-1}|$,
respectively, and a pair $(\pi,\mu)$,
where $\pi$ is a permutation of~$U$ and
$\mu\in\{0,\ldots,n\}$,
together with
the convention that
$\pi$ sorts the elements of $U$ by hue, i.e.,
$\Tvn{hue}(\pi(1))\le\cdots\le\Tvn{hue}(\pi(n))$.
We also maintain the prefix sums
$s_k=\sum_{i=0}^k m_i$, for $k=-1,\ldots,2 c-1$,
and the hue of each element of $U\setminus R_0$
explicitly in two arrays, so that $\Tvn{hue}(\ell)$
can be determined in constant time for each $\ell\in U$.
The invariant $\mu\le m_0$ will hold at all times.

The pair $(\pi,\mu)$ is maintained in an instance
$D$ of the data structure of Lemma~\ref{lem:permutation}.
$D$'s \Tvn{rotate} operation can be used
to move elements from one segment to another.
E.g., to move an element $\ell$ from $R_j$ to
$R_{j'}$, where $j<j'$,
execute $D.\Tvn{rotate}(j_1,\ldots,j_k)$,
where $(j_1,\ldots,j_k)$ is the sequence obtained
from $(\pi^{-1}(\ell),s_j,s_{j+1},\ldots,s_{j'-1})$
by eliminating duplicates, i.e., by removing every
element equal to an earlier element,
and subsequently decrement $m_j$
and each of $s_j,\ldots,s_{j'-1}$ and increment $m_{j'}$.
This takes $O(j'-j)$ time.
Note how the condition $\mu\le m_0$ prevents
unintended transfers of elements from one segment to another
by the \Tvn{rotate} operation.
The operations of the choice dictionary
are implemented as follows:

\begin{description}
\item[\normalfont $\Tvn{color}(\ell)$:]
Return $\Tfloor{{{\Tvn{hue}(\ell)}/2}}$.
\item[\normalfont $\Tvn{setcolor}_j(\ell)$:]
If $\Tvn{color}(\ell)\not=j$,
then execute $D.\Tvn{consolidate}$ and subsequently move $\ell$
from its current segment to $R_{2 j+1}$
and record $\Tvn{hue}(\ell)=2 j+1$.
\item[\normalfont $\Tvn{p-rank}(\ell)$:]
Return $\pi^{-1}(\ell)-s_{2 j-1}$, where $j=\Tvn{color}(\ell)$.
\item[\normalfont $\Tvn{p-select}_j(k)$:]
Return $\pi(s_{2 j-1}+k)$ if
$1\le k\le |S_j|=m_{2 j}+m_{2 j+1}$, and 0 otherwise.
\item[\normalfont $\Tvn{iterate}_j.\Tvn{init}$:]
Merge $R_{2 j+1}$ into $R_{2 j}$, i.e.,
execute first $m_{2 j}:=m_{2 j}+m_{2 j+1}$
and $s_{2 j}:=s_{2 j+1}$ and then $m_{2 j+1}:=0$.
\item[\normalfont $\Tvn{iterate}_j.\Tvn{more}$:]
Return 1 if $R_{2 j}\not=\emptyset$, i.e., if $m_{2 j}>0$,
and 0 otherwise.
\item[\normalfont $\Tvn{iterate}_j.\Tvn{next}$:]
Return $0$ if $\Tvn{iterate}.\Tvn{more}=0$.
Otherwise execute $D.\Tvn{consolidate}$ and subsequently
move the boundary between $S_{2 j}$ and $S_{2 j+1}$ backward
by one element and return the element that crosses the boundary.
In other words, decrement $m_{2 j}$ and $s_{2 j}$, increment
$m_{2 j+1}$ and return $\pi(s_{2 j}+1)$.
\end{description}

The initialization
sets $m_0:=n$,
$m_j:=0$ for $j=1,\ldots,2 c-1$,
$s_{-1}:=0$ and $s_j:=n$ for $j=0,\ldots,2 c-1$
and initializes $D$.
To achieve a constant
initialization time, use Lemma~\ref{lem:2.12}.
After the initialization $R_0=U$,
$R_1=\cdots=R_{2 c-1}=\emptyset$,
$\mu=n$ and $\pi$ is the identity permutation $\Tvn{id}_n$,
so the client vector represented is
$(U,\emptyset,\ldots,\emptyset)$,
as required, and the invariant is satisfied.
The only operations that may decrease $m_0$
are \Tvn{setcolor} and $\Tvn{iterate}.\Tvn{next}$,
and the decrease is only by~1.
Both operations call
$D.\Tvn{consolidate}$ before they carry out any
other change, so
the invariant $\mu\le m_0$ is always satisfied.
Only the operation $\Tvn{setcolor}$
calls $D.\Tvn{rotate}$, and therefore the elements
returned by calls of $\Tvn{p-rank}$ and
\Tvn{p-select} are consistent
with bijections that do not change between
calls of \Tvn{setcolor}.
Storing elements that are moved to $S_j$
in $R_{2 j+1}$ rather than in $R_{2 j}$
prevents the elements from being enumerated
more than once during an iteration over~$S_j$.
Therefore the iterations over $S_j$ are robust.

An accurate count of the size of the data
structures introduced above yields an upper bound of
$(2 n+4 c+2)\Tceil{\log(n+1)}+n\Tceil{\log(2 c)}
 +O(c\log c)$ bits.
To this should be added a number of bits needed
to store the parameters $n$ and $c$.
On the other hand, we can omit every second
prefix sum $s_i$, so the space bound stated
in the theorem is easily achievable.
\end{proof}

\begin{lemma}
\label{lem:atomic-c}%
There is a choice dictionary that, for arbitrary
$n,c\in\TbbbN$, can be initialized for universe size~$n$
and $c$ colors in $O(1+{{(n\log c)}/w})$ time
and subsequently occupies $n\Tceil{\log c}$
bits and executes \Tvn{color} and \Tvn{setcolor}
in constant time and \Tvn{successor} and
\Tvn{predecessor} (and hence also \Tvn{choice})
in $O(1+{{(n\log c)}/w})$ time.
\end{lemma}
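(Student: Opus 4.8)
The plan is to store the semipartition $(S_0,\ldots,S_{c-1})$ of $U=\{1,\ldots,n\}$ as a single bit-packed array $A[1\Ttwodots n]$ in which $A[\ell]$ holds the color of $\ell$, each entry occupying a field of $f=\Tceil{\log c}$ bits, so that $A$ occupies $\Tceil{{nf}/w}$ memory words and exactly $nf=n\Tceil{\log c}$ bits (the integers $n$ and $c$ are assumed known, i.e.\ stored elsewhere). Since initially every element has color~$0$, initialization is just a matter of zeroing these words, which takes $O(1+{{nf}/w})=O(1+{{(n\log c)}/w})$ time. A call $\Tvn{color}(\ell)$ reads the field of $A[\ell]$ and $\Tvn{setcolor}_j(\ell)$ overwrites it with~$j$; as a single field spans $O(1)$ words, each of these reduces to a handful of shifts and maskings and runs in constant time.

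For $\Tvn{successor}(j,\ell)$---where, by the remark after the list of operations in Section~\ref{sec:preliminaries}, we may assume $\ell\in\{1,\ldots,n-1\}$---I would scan $A$ from position $\ell+1$ onward in chunks of $m'=\max\{\Tfloor{{w/f}},1\}$ consecutive fields, each chunk being a contiguous block of at most $m'f\le w$ bits that straddles at most two memory words and can therefore be assembled into a single word~$x$ in constant time. Precomputing $1_{m',f}$ once in $O(1)$ time, I form $x':=x\mathbin{\textsc{xor}}(j\cdot 1_{m',f})$, so that the $i$th field of $x'$ (counted from the least significant end) is zero exactly when the corresponding entry of $A$ equals~$j$; I then make nonzero, by a bitwise \textsc{or} with a suitable constant, those low-order fields of $x'$ that correspond to positions $\le\ell$ in the first chunk and those high-order fields that correspond to positions $>n$ in the last chunk, and apply Lemma~\ref{lem:word}(b) to obtain $\min I_0$, the chunk-local index of the first remaining entry equal to~$j$, if any. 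Translating that local index to a global position in $U$ gives the answer; if the chunk has no such entry I continue with the next one. $\Tvn{predecessor}(j,\ell)$ is symmetric, scanning from the chunk containing position $\ell-1$ downward and using $\max I_0$ in place of $\min I_0$. Each chunk costs $O(1)$ time and there are $O(1+{n/{m'}})=O(1+{{(n\log c)}/w})$ of them (when $f>w$ one instead compares the $O({f/w})$-word fields one at a time, with the same total bound), so both operations run in $O(1+{{(n\log c)}/w})$ time. Finally $\Tvn{choice}(j)$ reduces to these: return~$1$ if $A[1]=j$, and otherwise return $\Tvn{successor}(j,1)$ (which is~$0$ exactly when $S_j=\emptyset$).

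Correctness of $\Tvn{color}$ and $\Tvn{setcolor}$ is immediate from the representation, and correctness of $\Tvn{successor}$, $\Tvn{predecessor}$ and $\Tvn{choice}$ follows because the \textsc{xor} with a replicated $j$ zeros out precisely the fields of the stated color and Lemma~\ref{lem:word}(b) then reports the extreme such field within each word-sized chunk. The one point that needs care---the only real ``obstacle,'' though a minor one---is the bookkeeping at the two ends of the scan: converting a position $\ell\in U$ into a (chunk number, within-chunk offset) pair and back, and masking out of consideration those fields of the first and last chunks that lie outside the interval $\{\ell+1,\ldots,n\}$ (resp.\ $\{1,\ldots,\ell-1\}$) or beyond the $nf$th bit of $A$. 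This is a routine consequence of the packed layout together with the primitives of Lemma~\ref{lem:word}.
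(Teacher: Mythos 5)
Your proposal is correct and takes essentially the same route as the paper: pack the $n$ colors into $\Tceil{\log c}$-bit fields, answer \Tvn{color} and \Tvn{setcolor} by direct field access, and realize \Tvn{successor}/\Tvn{predecessor} by \textsc{xor}-ing the fields with a replicated $j$ and locating the extreme zero field via Lemma~\ref{lem:word}(b). The only difference is that you run Lemma~\ref{lem:word}(b) chunk-by-chunk in an explicit word scan (with end masking), whereas the paper applies it once to the whole multiword suffix after discarding the first $\ell$ fields---an implementation detail with the same $O(1+{{(n\log c)}/w})$ bound, not a different argument.
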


\begin{proof}
Store only the $n$ color values, each in a
field of $\Tceil{\log c}$ bits.
The realization of $\Tvn{color}(\ell)$ and
$\Tvn{setcolor}_j(\ell)$ is obvious---read and overwrite
the contents of the $\ell$th field,
respectively.
To execute $\Tvn{successor}_j(\ell)$ for
$j\in\{0,\ldots,c-1\}$ and $\ell\in\{0,\ldots,n\}$,
remove the $\ell$ leftmost fields in a copy, replace
the value in
every remaining field by its bitwise
\textsc{xor} with $j$, and use an algorithm
of Lemma~\ref{lem:word}(b).
The implementation of $\Tvn{predecessor}$ is
analogous.
\end{proof}

\Fpasteinsection{\systematic2}

\begin{proof}
Take $k=t w$ and assume without loss of generality that $k\ge 2$.
Compute $q\in\TbbbN$ so that
$q\ge n^{1/t}$, but $q=O(n^{1/t})$.
We compose the choice dictionaries of
Theorem \ref{thm:nlogn}
and Lemma~\ref{lem:atomic-c}, both
initialized for 2 colors,
with the trie-combination method of
Section~\ref{sec:trie} and with the degree
sequence $(p_1,p_2,\ldots)$,
where $p_1=p_2=k$,
$p_3=\Theta(\log n)$, and
$p_j=q$ for $j\ge 4$.
Every inner node of height at most~3 in the resulting trie $T$
is equipped with an instance of the choice dictionary
of Lemma~\ref{lem:atomic-c},
while every node in $T$
of height at least~4
has an instance of
the choice dictionary of Theorem~\ref{thm:nlogn}.
Every operation
on the overall choice dictionary spends $O(t)$
time on each of the three bottom levels of $T$ above the leaves
and constant time on every other level.
Since the height of $T$ is $O(t)$, this
sums to~$O(t)$.
The choice dictionaries of the nodes in $T$
of height $1$ need a total of exactly $n$ bits,
and the most natural layout ensures that the
overall dictionary is systematic.
The height-2 and height-3 choice dictionaries,
if present, need
$\Tceil{n/k}$ bits and $O({n/{k^2}})$ bits,
respectively.
The number of nodes in $T$ of height $4$
is $O(({{{n/{k^2}})}/{\log n}})$, so
the number of bits required for all instances
of the dictionary of Theorem~\ref{thm:nlogn}
is $O({n/{k^2}})$.
\end{proof}

If we allow 
the dictionary not to be systematic,
we can generalize to several colors and
obtain an additional space bound that depends
on the maximum size of the client set.
In order to support $c$ simultaneous iterations,
one for each color, the theorem below requires
$O(c\log n)$ additional bits.
In general, with enough additional space to keep
track of their states, a smaller or larger number
of simultaneous iterations can be supported,
here and in data structures described later.

\begin{theorem}
\label{thm:m-c}%
There is a choice dictionary
that, for arbitrary $n,c,t,k\in\TbbbN$
with $k\log c=O(t w)$,
can be initialized
for universe size $n$, $c$ colors
and tradeoff parameters $t$ and $k$
in constant time and subsequently uses
$n\Tceil{\log_2 c}+{{c n}/k}+O({{c n}/{k^2}}+\log n)$
bits of memory
and supports \Tvn{color}, \Tvn{setcolor},
\Tvn{choice} and,
given $O(c\log n)$ additional bits, robust
iteration in $O(t)$ time.
Moreover, as long as the number of elements
of nonzero color
remains bounded by $m\in\TbbbN$,
the number of bits of memory used is
$O(((t+c)n^{1/t}+c k^2)m\log n)$.
In particular, for every fixed $\epsilon>0$, there is
a choice dictionary that executes all
operations in constant time and uses
$O(c m n^\epsilon+1)$ bits to store
semipartitions that never have more than~$m$
elements of nonzero color.
\end{theorem}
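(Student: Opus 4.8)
The plan is to generalize the proof of Theorem~\ref{thm:systematic-2} from $2$ to $c$ colors and then to superimpose on-demand allocation of subtrees in order to obtain the conditional $m$-dependent bound. For the unconditional bound I would apply the trie-combination method of Section~\ref{sec:trie} to a trie~$T$ with degree sequence $p_1=p_2=k$, $p_3=\Theta(\log n)$ and $p_j=q$ for $j\ge 4$, where $q\in\TbbbN$ satisfies $n^{1/t}\le q=O(n^{1/t})$, so that $T$ has height $O(t)$. Each node of height~$1$ carries the $c$-color choice dictionary of Lemma~\ref{lem:atomic-c}, which simply stores the $n$ colors and thus accounts for the term $n\Tceil{\log_2 c}$; each node of height~$\ge 2$ carries the $c$ colorless dictionaries prescribed by the colored-trie construction, realized by Lemma~\ref{lem:atomic-c} at heights~$2$ and~$3$ and by Theorem~\ref{thm:nlogn} at heights~$\ge 4$. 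The height-$2$ dictionaries need $cn/k$ bits and the height-$3$ dictionaries $O(cn/k^2)$ bits; and since a dictionary of Theorem~\ref{thm:nlogn} on a universe of size~$q$ needs only $O(q\log q)$ bits while the universe has already been divided by $\Theta(\log n)$ at height~$3$, the factor $\log q=\Theta((\log n)/t)$ is absorbed and the levels of height~$\ge 4$ together contribute only $O(cn/k^2)$ bits. Because $k\log c=O(tw)$, every operation of Lemma~\ref{lem:atomic-c} on a universe of size~$k$ runs in $O(t)$ time, and the inner-node dictionaries are colorless and hence carry no $\log c$ factor, so an operation costs $O(t)$ at each of heights~$1$ and~$2$ and $O(1)$ at each of the $O(t)$ remaining levels, for $O(t)$ in total; robust iteration and its $O(c\log n)$-bit state cost are inherited from Section~\ref{sec:trie}, and constant-time initialization comes from that of Theorem~\ref{thm:nlogn} and the lazy initialization already built into the trie-combination method.

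For the conditional bound the point is that a subtree of~$T$ none of whose leaves has nonzero color need never come into being. The memory holding a node's dictionaries---at heights~$1$ and~$2$, the ``bottom block'' consisting of a height-$2$ node together with its $k$ height-$1$ children---is allocated the first time the subtree acquires an element of nonzero color and is released again as soon as the subtree reverts to being monochromatic in color~$0$, the latter condition being tested while returning up the access path of a \Tvn{setcolor} call, exactly as emptiness is tested in the trie-combination method. At most $m$ leaves can simultaneously have nonzero color, so at most $m$ bottom blocks and, on each of the $O(t)$ higher levels, at most $m$ inner dictionaries are allocated at any moment; a bottom block occupies $O(k^2\log c+ck)=O(ck^2)$ bits, an inner dictionary of Theorem~\ref{thm:nlogn} (in $O(c)$ copies, universe~$O(q)$) occupies $O(cq\log q)$ bits, and adding $O(\log n)$ bits per allocated object for addressing and bookkeeping yields the claimed $O(((t+c)n^{1/t}+ck^2)m\log n)$ bits. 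The final assertion then follows by choosing $t$ with $n^{1/t}\le n^{\epsilon/2}$ and $k=O(1)$, which makes the running time $O(1)$ and the space $O(cn^{\epsilon/2}m\log n+1)=O(cmn^\epsilon+1)$.

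The only step that is not a routine reprise of earlier material is the dynamic memory management behind the conditional bound. Plain ``initialization on the fly'' as in Lemma~\ref{lem:2.12} never reclaims storage and, worse, uses bookkeeping arrays of size proportional to the number of \emph{possible} blocks---about $n/k^2$ at the bottom level---which would wreck the $m$-dependence; what is needed instead is a scheme that allocates, frees and relocates blocks with total overhead $O(m\log n)$ bits and $O(t)$-time access while still initializing in constant time. I would realize this by keeping the set of currently allocated block identifiers, separately at each level, in a compact dynamic dictionary obtained by applying the whole construction recursively to the (much smaller) universe of block identifiers, letting the recursion bottom out after $O(t)$ rounds on a universe of constant size; the recursion depth accounts for the extra factor of~$t$ and the per-round reduction of the universe by $n^{1/t}$ for the factor $n^{1/t}$ in the first summand of the bound. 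Making the per-round overheads telescope exactly into the stated bound, and checking that robust iteration over a sparsely populated color class still descends only into allocated nodes and hence stays $O(t)$-time per sub-operation, are the points that will require genuine care.
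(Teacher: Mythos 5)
Your overall architecture is the same as the paper's: the trie of Theorem~\ref{thm:systematic-2} with $p_3=\Theta(\log n)$ and degree $q=\Theta(n^{1/t})$ higher up, Lemma~\ref{lem:atomic-c} at the bottom levels and Theorem~\ref{thm:nlogn} above, $c$ colorless copies per inner node as in Section~\ref{sec:trie}, and on-demand allocation of bottom blocks and inner dictionaries with reclamation when a subtree reverts to color~0. (One small quantitative slip: with $p_1=p_2=k$ the height-2 level needs $(c+1)\lceil n/k\rceil$ bits once you include the per-node structure that records which children have been initialized, overshooting the stated $cn/k$ term by $\Theta(n/k)$, which is not absorbed by $O(cn/k^2+\log n)$; the paper takes $p_1=p_2=4k$ precisely so that these trackers and the constants fit under $cn/k$.)

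The genuine gap is the dynamic memory management, which you correctly identify as the non-routine step but do not resolve. Maintaining the \emph{set} of currently allocated block identifiers (your recursive choice dictionary) is not what is needed: a choice dictionary carries no satellite data, so it neither maps an allocated block to its physical address nor packs the live blocks into a small region, and your budget of $O(\log n)$ addressing bits per allocated object cannot hold the $q=\Theta(n^{1/t})$ child pointers that an inner node must know once its children no longer sit at fixed addresses. The paper's mechanism is more pedestrian and is where the claimed factors actually come from: memory is pre-partitioned into fixed-size \emph{leaf slots} of $O(ck^2\log n)$ bits (one per height-3 subtree, allocated together with all its descendants' dictionaries) and \emph{inner slots} of $O(q\log n+cq\log q)$ bits (the $q+1$ explicit child pointers included), freed slots are threaded into free lists kept inside the slots themselves with a high-water mark per slot type (constant-time allocation and constant-time initialization), and the two slot types are interleaved statically in the pattern ``one leaf slot, then $t$ inner slots,'' so that the at most $m$ leaf slots and $tm$ inner slots ever in service occupy a memory prefix of $O(((t+c)n^{1/t}+ck^2)m\log n)$ bits. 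In this accounting the factor $t$ is the number of allocated ancestors forced by each nonempty leaf chunk and $n^{1/t}$ is the pointer cost per inner slot, not a recursion depth and per-round universe reduction as in your sketch; without parent-stored child pointers (or an explicit identifier-to-address map of the right size, which your scheme does not provide) the $O(t)$-time descent, the reuse of freed space, and hence the conditional bound do not follow.
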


\begin{proof}
If $n<8 k$, the result follows
from Lemma~\ref{lem:atomic-c}.
Assume therefore that $n\ge 8 k$.
We use largely the same construction as in the previous proof
and with $p_3=\Theta(\log n)$ and $q$ chosen as there, but 
now for general values of $k$ and with $p_1=p_2=4 k$
instead of $p_1=p_2=k$.
There are two additional changes:

First, the choice
dictionaries of nodes of height~1
in the trie $T$ are initialized for $c$ rather
than~2 colors and, as detailed in
Section~\ref{sec:trie}, each choice dictionary
of a node of height 2 or more in $T$
is replaced by $c$ independent 2-color
choice dictionaries, one for each color.
As also discussed in Section~\ref{sec:trie},
this change makes it necessary to keep track
explicitly of the initialization of
upper and lower trees.
Instead of using a single dictionary $D^*$
of universe size $\Tceil{n/{p_1}}+c$ as
suggested in Section~\ref{sec:trie},
we handle the initialization of the
$c$ upper trees in a separate choice
dictionary with universe size~$c$
(realized according to Theorem~\ref{thm:systematic-2}, say)
and equip each node $u$ of height~2 with
a colorless instance of the choice dictionary 
of Lemma~\ref{lem:atomic-c}
that records the initialization
of the choice dictionaries at $u$'s children.
The total number of bits needed for the
dictionaries that take the place of $D^*$
can be bounded by
$\Tceil{n/{p_1}}+2 c$.

Second, rather than reserving space
permanently for every choice dictionary,
we allocate space to the $c$ choice dictionaries
of a node $u$ of height $\ge 3$ in $T$
only when one of them acquires its first element
($u$ becomes \emph{nonempty}) and reclaim that
space if and when $u$ returns to being empty.
When space for the choice dictionaries of a node $u$ of
height~$\le 3$ is allocated, we also allocate space for the
choice dictionaries of all children of~$u$
(and, recursively, for those of their children).

If $n\le (4 k)^2$, the height of $T$ is bounded by~2,
and its choice dictionaries can be accommodated
in a total of 
$n\Tceil{\log c}+(c+1)\Tceil{n/{p_1}}+2 c
\le n\Tceil{\log c}+2 c({n/{(4 k)}}+1)+2 c
=n\Tceil{\log c}+c({n/{(2 k)}}+4)
\le n\Tceil{\log c}+{{c n}/k}$ bits,
a bound easily seen to be covered
by those of the theorem (recall that $k^2=\Omega(n)$).
In the rest of the proof assume that $n>(4 k)^2$,
so that $T$ is of height at least~3.
 
The total number of bits needed by the choice dictionaries
of the descendants of a node of height~3
is $s\Tsub L=p_1 p_2 p_3\Tceil{\log c}+(c+1)p_2 p_3+c p_3$,
and these choice dictionaries are accommodated in
a \emph{leaf chunk} of $s\Tsub L=O(c k^2\log n)$ bits.
An exception concerns the descendants of
the rightmost node of height~3,
whose choice dictionaries may need less space;
exactly the required number of bits
is set aside statically for these dictionaries.
In the interest of simplicity, let us ignore
this exception for most of the following
discussion and return to it briefly at the end of the proof.
The $c$ choice dictionaries of a node $u$ of height $\ge 4$
occupy $O(c q\log q)$ bits.
Because the neighbors of~$u$ in $T$ are no longer
stored in fixed places in memory, the representation
of $u$ must be augmented by $q+1$ explicit pointers
of $O(\log n)$ bits each that allow
navigation in~$T$.
Altogether, $u$ and its choice dictionaries can be
accommodated in an \emph{inner chunk} of
$s\Tsub I=O(q\log n+c q\log q)=
O((1+{c/t})q\log n)$ bits.

The total number of nodes of height~3 in $T$
is $n\Tsub L=\Tceil{n/{(p_1 p_2 p_3)}}$, and
the total number $n\Tsub I$ of nodes of height
$\ge 4$ can be computed in $O(t)$ time.
Accordingly, the available memory is conceptually
partitioned into $n\Tsub L$ \emph{leaf slots}
of $s\Tsub L$ bits each
and $n\Tsub I$ \emph{inner slots}
of $s\Tsub I$ bits each.
When space for a chunk is needed, a free slot of
the right size is allocated to it, and returned
slots are kept in one of two \emph{free lists},
one for each chunk size, that can
easily be maintained in the free slots themselves.
When a free slot is requested, it is taken from
the relevant free list unless the latter is empty.
If the relevant free list is empty, the first
slot of the right size and
unused so far is put into service; two
simple variables suffice to keep track of the
borders between slots that were allocated at least
once and new slots.

A leaf slot is exactly as large as the choice dictionaries
that may be stored in the slot.
An inner slot is larger by the $O(q\log n)$ bits
for pointers to other slots, but since the number
of inner slots is $O({n/{(q k^2\log n)}})$,
the total number of
additional bits is $O({n/{k^2}})$.
Therefore the number of bits used by the entire
data structure never exceeds
$n\Tceil{\log c}+(c+1)\Tceil{n/{(4 k)}}+O({{c n}/{k^2}}+\log n)
=n\Tceil{\log c}+{{c n}/k}+O({{c n}/{k^2}}+\log n)$.

As long as the number of elements with
nonzero colors remains bounded by~$m\in\TbbbN$,
the data structure allocates at most the first
$t m$ inner slots and the first $m$ leaf slots.
The total number of bits in these slots is
$t m s\Tsub I+m s\Tsub L=
O(t m(1+{c/t})q\log n+c m k^2\log n)=
O(((t+c)n^{1/t}+c k^2)m\log n)$.
The slots cannot be packed tightly because they
are allocated from two different pools, but we can
still ensure that they come from a block of
memory of $O(((t+c)n^{1/t}+c k^2)m\log n)$ bits by
laying out the slots in memory according to
the following pattern:
First a leaf slot, then $t$ inner slots, then
again a leaf slot, and so on.
The space bound easily admits the few choice
dictionaries that were allocated statically above.
\end{proof}

\subsection{A Lower Bound for Systematic Choice Dictionaries}
\label{subsec:lower}%

In this subsection we show that the systematic choice
dictionary of Theorem~\ref{thm:systematic-2} is optimal,
up to a constant factor, in the tradeoff that it offers
among redundancy, execution time and word length.

For all integers $h,r,s$, let an
\emph{$(h,r,s)$-language} be a
language $L$ over
$\Sigma=\{\mathtt{a}_0,\mathtt{a}_1,\mathtt{b}_0,\mathtt{b}_1\}$
that does not contain two words of the
form $u a v_1$ and $u b v_2$ with
$u,v_1,v_2\in\Sigma^*$, $a\in\{\mathtt{a}_0,\mathtt{a}_1\}$
and $b\in\{\mathtt{b}_0,\mathtt{b}_1\}$ and
for which each
$u\in L$ satisfies $|u|=h$, $|u|_{\mathtt{a}_1}\le r$
and $|u|_{\mathtt{b}_0}+|u|_{\mathtt{b}_1}\le s$.
Here $|u|_{\mathtt{a}_1}$, e.g., denotes the number
of occurrences of the character $\mathtt{a}_1$ in~$u$.
Let $\TbbbN_0=\TbbbN\cup\{0\}$.

\begin{lemma}
\label{lem:counting}%
For all integers $h,r,s$, the cardinality
of every $(h,r,s)$-language 
is bounded by $2^{s'}{{h-s'+r}\choose r}$,
where $s'=\min\{h,s\}$.
\end{lemma}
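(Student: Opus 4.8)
The plan is to view an $(h,r,s)$-language $L$ as the set of leaves of its prefix trie and to induct on $h$. Negative parameter values are trivial: if $h<0$, $r<0$ or $s<0$ the constraints on the words of $L$ cannot be met, so $L=\emptyset$ while the claimed bound is nonnegative; henceforth I assume $h,r,s\ge 0$, and I adopt the convention $\binom{m}{-1}=0$, which will surface in the recursion. Let $T$ be the trie whose nodes are the prefixes of the words of $L$. Since every word of $L$ has length exactly $h$, the leaves of $T$ are precisely the words of $L$, so $|L|$ is the number of leaves. The key structural fact---forced by the defining property that $L$ contains no two words of the form $u a v_1$ and $u b v_2$ with $a\in\{\mathtt{a}_0,\mathtt{a}_1\}$ and $b\in\{\mathtt{b}_0,\mathtt{b}_1\}$---is that every internal node of $T$ has all of its outgoing edge labels in $\{\mathtt{a}_0,\mathtt{a}_1\}$ (call it an \emph{$\mathtt{a}$-node}) or all of them in $\{\mathtt{b}_0,\mathtt{b}_1\}$ (a \emph{$\mathtt{b}$-node}); in particular it has at most two children. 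Along any root-to-leaf path, the $\mathtt{b}$-nodes correspond exactly to the $\mathtt{b}$-letters and so number at most $\min\{h,s\}$, while the $\mathtt{a}_1$-labelled edges number at most $r$.

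Write $g(h,r,s)=2^{s'}\binom{h-s'+r}{r}$ with $s'=\min\{h,s\}$; I would show $|L|\le g(h,r,s)$ by induction on $h$. For $h=0$, $T$ is a single node, so $|L|\le 1=g(0,r,s)$. For $h\ge 1$ I may assume $L\neq\emptyset$, so the root $\rho$ of $T$ is internal. If $\rho$ is an $\mathtt{a}$-node, then $L$ is the disjoint union of $\{\mathtt{a}_0 v:\mathtt{a}_0 v\in L\}$ and $\{\mathtt{a}_1 v:\mathtt{a}_1 v\in L\}$; stripping the leading letter turns the former into an $(h-1,r,s)$-language and the latter into an $(h-1,r-1,s)$-language (with the convention above this remains correct when $r=0$, where the latter set is empty), so by induction $|L|\le g(h-1,r,s)+g(h-1,r-1,s)$. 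If $\rho$ is a $\mathtt{b}$-node, the same argument yields two $(h-1,r,s-1)$-languages and $|L|\le 2\,g(h-1,r,s-1)$; note that this case forces $s\ge 1$.

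It then suffices to establish the numerical facts
\[
g(h-1,r,s)+g(h-1,r-1,s)\le g(h,r,s)
\quad\text{and}\quad
2\,g(h-1,r,s-1)= g(h,r,s)
\]
for $h\ge 1$, $r\ge 0$, and $s\ge 0$ (resp.\ $s\ge 1$ for the second). The second is immediate from the identity $\min\{h,s\}=\min\{h-1,s-1\}+1$ (valid for $h,s\ge 1$): both sides then equal $2^{\min\{h,s\}}\binom{h-\min\{h,s\}+r}{r}$. For the first I would split on the value of $\min\{h-1,s\}$. If $s\le h-1$, then $\min\{h-1,s\}=\min\{h,s\}=s$; after dividing by $2^s$, using $g(h-1,r-1,s)=2^s\binom{h-2-s+r}{r-1}$ and Pascal's rule $\binom{h-s+r}{r}=\binom{h-1-s+r}{r}+\binom{h-1-s+r}{r-1}$, the claim reduces to $\binom{h-2-s+r}{r-1}\le\binom{h-1-s+r}{r-1}$, which holds by monotonicity of $\binom{\cdot}{r-1}$. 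If $s\ge h$, then $\min\{h-1,s\}=h-1$ and $\min\{h,s\}=h$, so $g(h-1,r,s)=g(h-1,r-1,s)=2^{h-1}$ (the second term being $0$ when $r=0$) while $g(h,r,s)=2^h$, and the inequality is clear. This closes the induction and proves the lemma.

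The only real difficulty is the bookkeeping: in the recursion only $r$ drops in the $\mathtt{a}$-case and only $s$ in the $\mathtt{b}$-case, and one must track carefully how $\min\{\cdot,\cdot\}$ interacts with the decrement of $h$, together with the $\binom{\cdot}{-1}=0$ convention needed when $r=0$ and the observation that the $\mathtt{b}$-case cannot arise when $s=0$. Once these points are pinned down, both numerical inequalities are one-line computations.
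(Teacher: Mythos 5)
Your proof is correct and follows essentially the same route as the paper: the paper also splits on whether all words begin with an $\mathtt{a}$- or a $\mathtt{b}$-character (forced by the no-mixed-prefix condition), obtaining the recurrence $N(h,r,s)\le\max\{2N(h-1,r,s-1),\,N(h-1,r,s)+N(h-1,r-1,s)\}$ with the same boundary cases, and then bounds $N$ by induction on $h$. Your write-up merely makes explicit the verification, left implicit in the paper, that $2^{s'}\binom{h-s'+r}{r}$ dominates this recurrence.
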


\noindent\textbf{Proof.}
For all integers $h,r,s$, let
$N(h,r,s)$ be the maximum cardinality of
an $(h,r,s)$-language.
The bound of the lemma can be shown by induction
on $h$ using the recurrence
\[
N(h,r,s)=
\begin{cases}
0,&\mbox{if $h<0$ or $r<0$ or $s<0$};\cr
1,&\mbox{if $h=0$ and $r,s\ge 0$};\cr
\eqalign{\max\{2&N(h-1,r,s-1),\cr
\noalign{\vglue -1pt}
&N(h-1,r,s)+N(h-1,r-1,s)\}\cr}&
 \mbox{if $h>0$ and $r,s\ge 0$}.\cr
\end{cases}
\eqno{\lower 8mm\hbox{$\Box$}}
\]

\begin{theorem}
\label{thm:lower}%
Let $n,s,t\in\TbbbN_0$, let $n\ge 2$ and assume that
some systematic data structure $D$
can represent every subset of $U=\{1,\ldots,n\}$
in a sequence $B$ of $n+s$ bits.
Assume further that for each $r\in\{1,\ldots,n\}$,
it is possible to distinguish
among the $n\choose r$ subsets $S$ of $U$
of size $r$ with at most $r t$ bit probes to an
arbitrary sequence $B(S)$ that represents each set~$S$
according to $D$'s conventions.
Then $(s+{1/{\ln 2}})t\ge {n/{(e\ln 2)}}$
and, if $s>0$, $s t\ge{n/2}$.
\end{theorem}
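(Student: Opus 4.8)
The plan is, for each $r$, to recast the hypothesis as a cardinality bound on a formal language and then to invoke Lemma~\ref{lem:counting}. First note that $t\ge 1$, since for $n\ge 2$ even distinguishing the $n$ singletons requires at least one probe; and I may assume $s\le n$, since otherwise $st\ge s>n$ and, as $1/(e\ln 2)<1$, both conclusions are immediate. Now fix $r\in\{1,\dots,n\}$, set $h=rt$, and let $T_r$ be a decision tree of depth at most $h$ that, given oracle access to the $n+s$ bits of a representation $B(S)$ of an $r$-element set $S\subseteq U$---positions $1,\dots,n$ carrying the characteristic vector of $S$ and positions $n+1,\dots,n+s$ the redundancy bits---determines $S$. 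Pruning forced repetitions, I may assume $T_r$ probes no position twice along a root-to-leaf path; then, since $|S|=r$, each such path contains at most $s$ probes of positions $>n$ and at most $r$ probes of positions $\le n$ answered $1$. Encode the path taken on input $B(S)$ as a word over $\Sigma=\{\mathtt a_0,\mathtt a_1,\mathtt b_0,\mathtt b_1\}$, writing $\mathtt a_i$ for a probe of a position $\le n$ answered $i$ and $\mathtt b_i$ for a probe of a position $>n$ answered $i$, and pad each word to length exactly $h$ by re-probing, at the leaf, some element of $U\setminus S$ (nonempty unless $r=n$, a trivial case) and appending $\mathtt a_0$'s. Let $L$ be the resulting set of words. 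Distinct $r$-sets reach distinct leaves, hence give distinct padded words, so $|L|=\binom nr$; and $L$ is an $(h,r,s)$-language, since each of its words has length $h$, at most $r$ letters $\mathtt a_1$, and at most $s$ letters from $\{\mathtt b_0,\mathtt b_1\}$, while $L$ contains no pair $u\,a\,v_1,\ u\,b\,v_2$ with $a\in\{\mathtt a_0,\mathtt a_1\}$ and $b\in\{\mathtt b_0,\mathtt b_1\}$---the position probed at the node reached after the common answer sequence $u$ is fixed, hence of a single type, and if that node is a leaf then neither word proceeds from it. Thus Lemma~\ref{lem:counting} yields, for every $r\in\{1,\dots,n\}$ and with $s'=\min\{rt,s\}$,
\[
  \binom nr \ \le\ 2^{s'}\binom{rt-s'+r}{r}.
\]

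For $st\ge n/2$ (assuming $s>0$, hence $1\le s\le n$), apply the displayed inequality with $r=s$: since $t\ge 1$ we have $st\ge s$, so $s'=s$ and the inequality becomes $\binom ns\le 2^s\binom{st}{s}$. Every factor of $\binom{2st}{s}\big/\binom{st}{s}=\prod_{i=0}^{s-1}\frac{2st-i}{st-i}$ equals $\frac{2-x}{1-x}$ with $x=i/(st)\in[0,1)$ and is therefore at least $2$, so $\binom ns\le 2^s\binom{st}{s}\le\binom{2st}{s}$; as $\binom{\cdot}{s}$ is strictly increasing on $\{s,s+1,\dots\}$ and $n,2st\ge s$, this gives $n\le 2st$.

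For $(s+1/\ln 2)t\ge n/(e\ln 2)$, i.e.\ $n\le e(s\ln 2+1)t$: if $s=0$, the displayed inequality with $r=1$ gives $n\le t+1\le et$. If $s\ge 1$, set $r=\lceil(\ln 2)s\rceil$ (so $r\le n$ in the non-trivial case $n\ge 4$, and $2^{s/r}\le 2^{1/\ln 2}=e$). If $rt<s$---which forces $t=1$---then $s'=rt=r$, and the inequality reads $\binom nr\le 2^r\binom rr=2^r\le\binom{2r}{r}$, whence $n\le 2r\le 2(\ln 2)s+2\le e(s\ln 2+1)t$. If $rt\ge s$, then $s'=s$ and the inequality reads $\binom nr\le 2^s\binom mr$ with $m:=r(t+1)-s\ (\ge r)$; a short computation (using $r<(\ln 2)s+1$, and handling the few values $s\le 4$ directly) shows that $M:=\lfloor e(s\ln 2+1)t\rfloor$ satisfies $M\ge m\cdot 2^{s/r}$, so every factor of $\binom Mr\big/\binom mr=\prod_{i=0}^{r-1}\frac{M-i}{m-i}$ is at least $M/m\ge 2^{s/r}$, the product is at least $(2^{s/r})^r=2^s$, and therefore $\binom nr\le 2^s\binom mr\le\binom Mr$, giving $n\le M\le e(s\ln 2+1)t$.

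The crux is the choice $r\approx(\ln 2)s$, which makes $2^{s/r}$ approach $2^{1/\ln 2}=e$---precisely the common first-order value of the ratios $\frac{M-i}{m-i}$ at the extremal parameters, so that the estimate is asymptotically tight and the additive term $\tfrac1{\ln 2}t$ in the statement is exactly what absorbs the ceiling in $r$, the floor in $M$, and the lower-order terms. I expect the verification of $M\ge m\cdot 2^{s/r}$, together with the disposal of the cases $t=1$ and $s\le 4$, to be the only genuine obstacle; all remaining steps are elementary.
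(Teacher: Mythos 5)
Your proposal is correct, and its first half is essentially the paper's argument: run a repetition-free distinguisher, encode its probe transcript over $\{\mathtt{a}_0,\mathtt{a}_1,\mathtt{b}_0,\mathtt{b}_1\}$, pad with $\mathtt{a}_0$'s, note that the resulting set of words is an $(rt,r,s)$-language, and apply Lemma~\ref{lem:counting} to get $\binom{n}{r}\le 2^{s'}\binom{rt-s'+r}{r}$ with $s'=\min\{rt,s\}$. One small caveat: under the paper's (dynamic) definition of ``systematic'', bits among $b_1,\ldots,b_n$ that were never written may hold arbitrary values, so you cannot literally assume that positions $1,\ldots,n$ carry the characteristic vector of $S$; the paper handles this by running the distinguisher on a \emph{minimal} representation $B(S)$, which is all that is needed, since the only thing you use is that at most $r$ probes among the first $n$ positions are answered~$1$.

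Where you genuinely diverge is in extracting the numerical bounds from the counting inequality. The paper keeps $r\le s$ so that (after a harmless reduction to $rt\le n$) $\binom{n}{r}\big/\binom{rt}{r}\ge (n/(rt))^r$, takes $r=s$ to get $st\ge n/2$, and then $r=\Tfloor{n/(et)}$---using the just-proved $s\ge n/(2t)$ to guarantee $r\le s$---to get $(s\ln 2+1)t\ge n/e$. You instead compare binomial coefficients directly: $2^s\binom{st}{s}\le\binom{2st}{s}$ yields $st\ge n/2$ with no side condition, and with $r=\Tceil{(\ln 2)s}$ the second bound reduces to your deferred claim $M=\Tfloor{e(s\ln 2+1)t}\ge m\,2^{s/r}$ with $m=r(t+1)-s$. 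That claim does hold: since $2^{s/r}\le e$ and $r-s\ln 2-1<0$, for $t\ge 1$ it suffices that $2r\le s(1+\ln 2)+1-1/e$, which follows from $r<s\ln 2+1$ whenever $s(1-\ln 2)\ge 1+1/e$, i.e., $s\ge 5$; the remaining cases $s\le 4$, as well as your $t=1$, $rt<s$ branch, check directly, exactly as you anticipate. So your route is complete modulo this routine verification; it buys independence of the two bounds and avoids the $rt\le n$ reduction, while the paper's $(n/(rt))^r$ estimate avoids any numerical case analysis at the cost of chaining the second bound through the first.
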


\begin{proof}
The second assumption of the theorem
cannot hold for $t=0$.
We can therefore assume without loss
of generality that $t\ge 1$ and that $s\le n$.
For an $r\in\{1,\ldots,n\}$ to be chosen later,
we associate a word $u_S$ over
$\Sigma$ with each
$S\in\mathcal{S}=\{S\subseteq\{1,\ldots,n\}:|S|=r\}$.
Let $\mathcal{A}$ be an algorithm that can
distinguish among the sets in $\mathcal{S}$
with at most $r t$ bit probes.
Without loss of generality, $\mathcal{A}$
probes no bit more than once.
For each $S\in\mathcal{S}$, we apply $\mathcal{A}$
to a bit sequence $B(S)$ used by $D$ to represent~$S$
and chosen to be \emph{minimal} in the sense
that no sequence $B\not=B(S)$ of $n+s$ bits
that also represents $S$ satisfies $B\le B(S)$,
where $\le$ denotes the conjunction of
bitwise $\le$ in all bit positions.
Informally, the minimality of $B(S)$ implies
that every uninitialized bit in $B(S)$
has the value~0.
Without loss of generality, assume that the first
$n$ bits of $B(S)$ are the bits $b_1,\ldots,b_n$
referred to in the definition of a systematic
data structure (informally, the bit-vector
representation of~$S$).
For each $S\in\mathcal{S}$, $u_S$ is obtained as follows:
Initialize $u_S$ to be the empty word
and append a character to $u_S$
at each probe carried out by $\mathcal{A}$
on input $B(S)$, choosing the character as $c_i$,
where $i\in\{0,1\}$ is the value of the bit
probed, $c=\mathtt{a}$ if the bit probed is
among the first $n$ bits of $B(S)$, and 
$c=\mathtt{b}$ if the bit probed is
among the last $s$ bits of $B(S)$.
At this point, since $\mathcal{A}$ uses at
most $h=r t$ probes, $|u_S|\le h$.
Finally increase $|u_S|$ to exactly $h$ 
by appending $h-|u_S|$ occurrences of $\mathtt{a}_0$ to $u_S$.

For each
$S\in\mathcal{S}$, with $u=u_S$,
$\mathcal{A}$ probes each bit at most once, and so
$|u|_{\mathtt{a}_1}\le r$
since $B(S)$ is minimal and $|S|=r$, and
$|u|_{\mathtt{b}_0}+|u|_{\mathtt{b}_1}\le s$
since there are only $s$ bits in addition
to $b_1,\ldots,b_n$.
$L=\{u_S\mid S\in\mathcal{S}\}$ is therefore an
$(h,r,s)$-language.
For $S_1,S_2\in\mathcal{S}$ with $S_1\not=S_2$,
we cannot have $u_{S_1}=u_{S_2}$, so
Lemma~\ref{lem:counting} shows that
\[
{n\choose r}=|\mathcal{S}|=|L|\le 2^{s'}{{h-s'+r}\choose r},
\eqno{(*)}
\]
where $s'=\min\{h,s\}$.
If $s=0$, choose $r=1$, which turns the inequality $(*)$ into
$n\le t+1$ or $t\ge n-1$.
Adding $t\ge 1$, we obtain $t\ge {n/2}$, which implies the inequality
$(s\ln 2+1)t\ge {n/e}$
of the theorem.
In the following assume that $s\ge 1$.

We will make sure to choose
$r\le s$ and therefore $r\le s'$,
so that $h-s'+r\le h$.
Then, since $h=r t\le s t$, we may assume
without loss of generality that $h\le n$.
Now
\[
2^s\ge 2^{s'}\ge{{n\choose r}\Biggm/{h\choose r}}
={{n(n-1)\cdots(n-r+1)}\over
 {h(h-1)\cdots(h-r+1)}}
\ge\left({n\over h}\right)^r
=\left({n\over{r t}}\right)^r.
\eqno{(**)}
\]
If we choose $r=s$, the requirement $r\le s$
is certainly satisfied, $(**)$ becomes
\[
2^s\ge \left({n\over{s t}}\right)^s,
\]
and the inequalities $2\ge {n/{(s t)}}$
and $s t\ge{n/2}$ follow.
If $t\le{n/e}$ and we instead choose
$r=\Tfloor{{n/{(e t)}}}\ge 1$,
the inequality $s\ge{n/{(2 t)}}$ that was just
established shows that the requirement
$r\le s$ is again satisfied.
With this choice of $r$,
${n/{(r t)}}\ge e$ and therefore
$2^s\ge e^r$ and $s\ln 2+1\ge r+1\ge {n/{(e t)}}$.
Thus $(s\ln 2+1)t\ge{n/e}$,
a relation that also holds if $t>{n/e}$.
The theorem follows.
\end{proof}

\begin{corollary}
Let $n\in\TbbbN$ and $s\in\TbbbN_0$ and let $D$ be a systematic
choice dictionary with universe size $n$ that
never occupies more than $n+s$ bits.
Let $\Ttvn{delete}$ and $\Ttvn{choice}$
be upper bounds on the number of bits read from
memory during an execution of $D$'s operations
\Tvn{delete} and \Tvn{choice}, respectively
(the two quantities may depend on~$n$).
Then
$(s+{1/{\ln 2}})(\Ttvn{delete}+\Ttvn{choice})\ge {n/{(e\ln 2)}}$
and, if $s>0$,
$s(\Ttvn{delete}+\Ttvn{choice})\ge{n/2}$.
\end{corollary}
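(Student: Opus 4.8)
The plan is to derive the corollary directly from Theorem~\ref{thm:lower} by taking $t=\Ttvn{delete}+\Ttvn{choice}$ and checking, for each $r\in\{1,\ldots,n\}$, that the $\binom{n}{r}$ subsets of $U$ of size~$r$ can be told apart with at most $rt$ bit probes to a representation of the set at hand. The distinguishing procedure is the natural one: starting from any state $B(S)$ that $D$ may occupy with client set $S$, where $|S|=r$, repeat $r$ times the step ``call \Tvn{choice} to obtain an element $\ell$, then execute $\Tvn{delete}(\ell)$''. Since the client set has size $r$ and shrinks by exactly one in each round, \Tvn{choice} always returns a genuine member, these $r$ members are pairwise distinct, and after the last round the client set is empty; hence the set of $r$ returned values is exactly $S$, so the sequence of answers identifies~$S$.

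The one point that needs care is that \Tvn{delete} \emph{writes} to memory, whereas Theorem~\ref{thm:lower} counts only bit probes, i.e., reads. I would handle this by letting the distinguishing algorithm $\mathcal{A}$ maintain its own copy of every bit it writes while simulating the $r$ rounds of \Tvn{choice} and \Tvn{delete}. A read issued by the simulated operations is answered from $\mathcal{A}$'s records if the bit has already been written, and is otherwise forwarded to $B(S)$ as a real probe---its value there being correct precisely because $\mathcal{A}$'s writes mirror $D$'s. Thus no bit is probed for real more than once, and the number of real probes is at most the total number of reads over the $r$ rounds, which is bounded by $r(\Ttvn{choice}+\Ttvn{delete})=rt$. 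Both hypotheses of Theorem~\ref{thm:lower} are therefore satisfied with this value of~$t$, and the two claimed inequalities are exactly its conclusion.

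The remaining loose end is the hypothesis $n\ge 2$ of Theorem~\ref{thm:lower}. For $n=1$ the corollary is immediate, since \Tvn{choice} must already distinguish $\emptyset$ from $\{1\}$ and so reads at least one bit, giving $\Ttvn{choice}\ge 1$ and hence a product comfortably exceeding both right-hand sides; the same observation incidentally rules out the degenerate case $t=0$ when $n\ge 2$. The only mildly delicate ingredient is the reads-versus-writes bookkeeping of the previous paragraph; apart from that, the corollary is a direct instantiation of Theorem~\ref{thm:lower}.
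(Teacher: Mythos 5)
Your proposal is correct and follows essentially the same route as the paper: repeatedly call \Tvn{choice} and then \Tvn{delete} the returned element to output the size-$r$ client set, so Theorem~\ref{thm:lower} applies with $t=\Ttvn{delete}+\Ttvn{choice}$, with the case of small $n$ (and $t=0$) dismissed via $\Ttvn{choice}\ge 1$. Your bookkeeping that reads of bits already written by the simulated operations need not count as fresh probes to $B(S)$ is a detail the paper leaves implicit, and it is handled correctly.
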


\begin{proof}
The return value of \Tvn{choice}
cannot be independent of $D$'s
client set $S$, so we must have
$\Ttvn{choice}\ge 1$.
We can therefore assume without loss
of generality that $n\ge 2$.

Given knowledge of
$r=|S|$,
we can output $S$ with $r$ iterations of
a loop in which an element of $S$ is first obtained
with a call of \Tvn{choice} and subsequently
output and removed from $S$ with a call of \Tvn{delete}.
The procedure reads at most $r(\Ttvn{delete}+\Ttvn{choice})$
bits of $D$'s representation of $S$, i.e.,
Theorem~\ref{thm:lower} can be applied with
$t=\Ttvn{delete}+\Ttvn{choice}$.
\end{proof}

With a very similar argument we can obtain a lower
bound on the amortized complexity of
\Tvn{insert}, \Tvn{delete} and \Tvn{choice}.

\begin{corollary}
Let $n\in\TbbbN$ and $s\in\TbbbN_0$ and let $D$ be a systematic
choice dictionary with universe size $n$ that
never occupies more than $n+s$ bits.
Fix an arbitrary potential function for~$D$
and assume that every representation of the
empty client set has the same potential.
Let $\Ttvn{insert}$, $\Ttvn{delete}$
and $\Ttvn{choice}$ be upper bounds
on the worst-case amortized
number of bits read from memory during $D$'s execution of
\Tvn{insert}, \Tvn{delete} and \Tvn{choice}, respectively
(the three quantities may depend on~$n$).
Then
$(s+{1/{\ln 2}})
 (\Ttvn{insert}+\Ttvn{delete}+\Ttvn{choice})\ge {n/{(e\ln 2)}}$
and, if $s>0$,
$s(\Ttvn{insert}+\Ttvn{delete}+\Ttvn{choice})\ge{n/2}$.
\end{corollary}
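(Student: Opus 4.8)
The plan is to follow the proof of the preceding corollary, the one genuinely new ingredient being an amortization argument in which a potential difference is forced to telescope to zero. Write $t=\Ttvn{insert}+\Ttvn{delete}+\Ttvn{choice}$. As in that corollary I would fix an $r\in\{1,\ldots,n\}$, to be optimized at the end through Theorem~\ref{thm:lower}; since the proof of Theorem~\ref{thm:lower} only ever runs the distinguishing algorithm on a single representation of each set, it suffices to exhibit, for every $S\subseteq U=\{1,\ldots,n\}$ with $|S|=r$, a valid $D$-representation $B(S)$ of~$S$ and a bit-probe algorithm $\mathcal{A}$ that recovers $S$ from $B(S)$ using at most $rt$ probes, at most $r$ of which return a $1$ from among the first $n$ positions; Theorem~\ref{thm:lower} then yields the two claimed inequalities. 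We may assume $n\ge 2$, since for $n=1$ the same construction with $r=1$ shows that $t\ge 1$, which already implies both inequalities.

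First I would fix one representation $B_\emptyset$ of the empty client set --- the one in force right after initialization, say --- and, for each size-$r$ set $S$, consider the operation sequence $\sigma_S$ that starts in the configuration $B_\emptyset$, inserts the $r$ elements of $S$ one by one, and then performs $r$ rounds of the pair $(\Tvn{choice},\Tvn{delete}(\ell))$, where $\ell$ is the value just returned by \Tvn{choice}. Let $B(S)$ be the configuration reached after the insertions, with every bit not yet written by $D$ set to~$0$; since $D$ relies only on bits it has itself written, this zero-filling changes neither $D$'s future behavior nor its read counts, and it makes every $1$ among the first $n$ bits of $B(S)$ correspond to an element of~$S$, of which there are~$r$. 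By the correctness of $D$, the $r$ calls of \Tvn{choice} in $\sigma_S$ return the elements of $S$, each subsequent \Tvn{delete} removes one of them, and after $\sigma_S$ the client set is empty again. Writing $\Phi$ for the fixed potential and $\Phi_\emptyset$ for the common potential of all representations of the empty set, and summing the worst-case amortized bounds over the $3r$ operations of $\sigma_S$, the total number of bits read from memory during $\sigma_S$ is at most $r(\Ttvn{insert}+\Ttvn{delete}+\Ttvn{choice})+\Phi_\emptyset-\Phi_\emptyset=rt$. This is the one place where the hypothesis that all representations of the empty set share a potential is used: it is exactly what makes the two endpoint terms cancel.

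The algorithm $\mathcal{A}$ is then read off from the second half of $\sigma_S$: on input $B(S)$ it simulates the $r$ rounds of $(\Tvn{choice},\Tvn{delete})$, keeping in its own scratch memory every bit that $D$ writes and the value of every bit it has already probed, so that it probes each position of $B(S)$ at most once and only when $D$ genuinely reads a not-yet-touched position; it then outputs the values returned by its \Tvn{choice} calls, which list~$S$. The number of probes $\mathcal{A}$ makes is at most the number of memory reads in the second half of $\sigma_S$, hence at most $rt$, and --- since throughout these rounds the client set is a subset of~$S$ while the first-$n$ positions of $B(S)$ outside~$S$ hold~$0$ --- at most $r$ of the probes return a~$1$ from among the first $n$ positions. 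This is precisely what the proof of Theorem~\ref{thm:lower} needs, with the present value of~$t$ (that proof applies verbatim with $t$ a real number $\ge 1$, using the integer bound $\lfloor rt\rfloor$ on the number of probes), and we conclude $(s+1/\ln 2)\,t\ge n/(e\ln 2)$ and, if $s>0$, $s\,t\ge n/2$.

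The optimization over~$r$ and all the arithmetic are inherited unchanged from Theorem~\ref{thm:lower} and the previous corollary, so the step I expect to demand the most care --- the main obstacle --- is making the phrase ``worst-case amortized number of bits read'' precise enough for the telescoping to be valid. One must interpret $\Ttvn{insert}$, $\Ttvn{delete}$ and $\Ttvn{choice}$ with respect to the \emph{same} potential~$\Phi$, in the sense that for every individual operation in every reachable configuration the actual number of bits read plus the resulting change in~$\Phi$ is at most the stated amortized value; the closed loop $\emptyset\to S\to\emptyset$, together with the uniform-potential hypothesis, then collapses the sum of these changes to zero and leaves the bound $rt$ on the total number of reads.
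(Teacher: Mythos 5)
Your proof is correct and follows essentially the same route as the paper: the closed cycle that takes $D$ from an empty-set representation through $S$ and back to empty with exactly $r$ calls of each of \Tvn{insert}, \Tvn{delete} and \Tvn{choice}, the observation that the equal-potential assumption on empty-set representations makes the potential terms telescope so that total amortized reads bound total actual reads by $rt$, and then an appeal to Theorem~\ref{thm:lower}. Your additional bookkeeping (the explicit choice of $B(S)$, the zero-filling, the count of $1$-probes among the first $n$ positions, and the remarks on non-integer $t$ and $n=1$) merely spells out details the paper leaves implicit, without changing the argument.
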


\begin{proof}
As above, assume without loss of generality that $n\ge 2$.
For every $r\in\{1,\ldots,n\}$ and every
$S\subseteq\{1,\ldots,n\}$ with $|S|=r$, we can take $D$
from its initial state with empty client set
via a state in which its client set is $S$
and back to a state with empty client set
using exactly $r$ calls of each of
\Tvn{insert}, \Tvn{delete} and \Tvn{choice}.
By assumption, the final potential is the
same as the initial potential, so the total
amortized number of bits read is the same as the
total actual number of bits read.
Therefore Theorem~\ref{thm:lower} is applicable with
$t=\Ttvn{insert}+\Ttvn{delete}+\Ttvn{choice}$.
\end{proof}

With $n$, $s$ and $t$ as in Theorem~\ref{thm:lower},
the theorem states that $(s+O(1))t\ge\alpha n$,
where $\alpha={1/{(e\ln 2)}}\approx0.53$.
We complement this result by showing that for every
sufficiently easily computable
function $t:\TbbbN\to\TbbbN$ with
$t(n)=\omega(\log n)$ but $t(n)=o(n)$, there is a 2-color
systematic choice dictionary $D$ that, when
initialized for universe size $n\in\TbbbN$,
reads $t(n)+o(t(n))$ bits of its internal
representation during the execution of each
operation and has a redundancy $s(n)$
for which $s(n)t(n)=n+o(n)$.
$D$ is simple.
It is a three-level trie constructed as described
in Section~\ref{sec:trie}
with the degree sequence $(p_1,p_2,p_3)$,
where $p_1=t(n)$, $p_2=\Theta(\sqrt{t(n)\log n})$
and $p_3=n$.
The two bottom levels of choice dictionaries are realized
according to Lemma~\ref{lem:atomic-c} with $c=2$,
whereas the choice dictionary of the root
is an instance of that of Theorem~\ref{thm:nlogn},
again with $c=2$.
The redundancy of the overall choice dictionary is
$\Tceil{{n/{p_1}}}+O(1+({n/{(p_1 p_2)}})\log n)=
({n/{t(n)}})(1+o(1))$,
and it is easy to see that the number of bits read
during the execution of an operation is bounded by
$p_1+p_2+O(\log n)=t(n)(1+o(1))$.
The product of the two, indeed, is
$n+o(n)$.

\section{Restricted and Extended Choice Dictionaries}
\label{sec:other}

\subsection{A Data Structure with \Tvn{insert} and
 \Tvn{extract-choice}}
\label{subsec:mlog}%

The main result of this subsection (Theorem~\ref{thm:mlog})
is a choice-like dictionary with universe size~$n$
that stores a client set $S$ of size $m$
in fewer than $n$ bits even when $m$ is not much
smaller than~$n$.
More precisely, the number of bits used is
$O(m\log(2+{n/{(m+1)}})+1)$.
Since $\log{n\choose m}\ge\log(({n/m})^m)
=m\log({n/m})$ for $m>0$, the space used by our
data structure is within a constant factor of the
information-theoretic lower bound for most
combinations of $n$ and~$m$.
What we show is that this tight space bound still
admits certain dynamic operations.
More precisely, we can support insertion and the
operation \Tvn{extract-choice} that returns and
deletes an (arbitrary) element of~$S$,
but neither unrestricted deletion nor queries
about specific elements such as \Tvn{contains}.

There is an apparent conflict between fast insertion
and a very space-efficient representation in
fewer than $n$ bits.
To represent the client set $S$ using little space, we
can store it in difference form, i.e., as the sequence of
differences between successive elements of~$S$
(in sorted order).
With this representation, however, insertion is
easily seen to be prohibitively expensive.
On the other hand, insertion is easy if we store the
elements of $S$ in no particular order, but then we
need about $\log n$ bits per element of $S$,
which is excessive.
Our solution is to store $S$ permanently in
difference form, but to insert new elements into an
unsorted buffer.
The buffer is wasteful of space, and so has to be
sorted and merged into the rest of $S$ before it
becomes too large.
Because every operation is supposed to work in constant time,
this entails a certain technical complexity.
As a warm-up before tackling this, we illustrate the
use of the difference form by developing a data
structure of possible independent interest, a
space-efficient stack that requires its elements
to occur in sorted order at all times.

\begin{definition}
A \emph{bounded-universe sorted stack} is a data structure that,
for every $n\in\TbbbN$, can be
initialized for universe size~$n$ and
subsequently maintains an initially empty
sequence $(x_1,\ldots,x_m)$ with
$1\le x_1\le\cdots\le x_m\le n$ while supporting
the following operations:

\begin{description}
\item[\normalfont$\Tvn{sorted-push}(x)$]
($x\in\{1,\ldots,n\}$):
Replaces $(x_1,\ldots,x_m)$ by $(x_1,\ldots,x_m,x)$
if $x_m\le x$ and does nothing otherwise.
\item[\normalfont$\Tvn{pop}$:]
Replaces $(x_1,\ldots,x_m)$ by $(x_1,\ldots,x_{m-1})$
and returns $x_m$ if $m>0$; returns 0 and
does nothing else if $m=0$.
\end{description}
\end{definition}

\begin{lemma}
\label{lem:stack}%
There is a bounded-universe
sorted stack that, for arbitrary $n\in\TbbbN$,
can be initialized for universe size $n$ in constant time,
subsequently
supports \Tvn{sorted-push} and \Tvn{pop} in constant
time and, when it currently holds a sequence
$(x_1,\ldots,x_m)$ of $m\in\TbbbN_0$ elements, uses at most
$m\log(q+1)+O(m\log\log(q+4)+\log n+1)$
bits, where $q={{(n-1)}/{(m+1)}}$. 
\end{lemma}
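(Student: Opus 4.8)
The plan is to store the stack in \emph{difference form}. Writing $x_0=1$ and $d_i=x_i-x_{i-1}$ for $i=1,\ldots,m$, the $d_i$ are nonnegative integers with $\sum_{i=1}^m d_i=x_m-x_0=x_m-1\le n-1$. I would encode each $d_i$ by a self-delimiting integer code $\gamma$ with $|\gamma(d)|\le\log(d+2)+O(\log\log(d+4))$ that moreover is decodable by a \emph{right-to-left} scan --- for instance a mirror image of Elias's $\delta$-code applied to $d+1$, which writes the $\lfloor\log(d+1)\rfloor$ low-order bits of $d+1$ together with a short self-delimiting encoding of $\lfloor\log(d+1)\rfloor$, the latter placed so that a scan from the right meets it first. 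The structure then maintains a bit string $Z$ holding the concatenation $\gamma(d_1)\gamma(d_2)\cdots\gamma(d_m)$ --- so $\gamma(d_m)$ ends at the right end of $Z$ --- together with the bit length $L$ of $Z$, the element count $m$, the current top value $x_m$, and the parameter $n$, the last four in $O(\log n)$ bits.

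The operations are then implemented as follows. $\Tvn{sorted-push}(x)$ returns at once if $m>0$ and $x<x_m$; otherwise it appends $\gamma(x-x_m)$ --- or $\gamma(x-1)$ if $m=0$ --- to the right end of $Z$, increases $L$ accordingly, increments $m$, and sets $x_m:=x$. $\Tvn{pop}$ returns $0$ if $m=0$; otherwise it scans $Z$ leftward from bit position $L-1$ to delimit and decode the rightmost codeword $\gamma(d_m)$, then decreases $L$ by $|\gamma(d_m)|$ and $m$ by $1$, replaces $x_m$ by $x_m-d_m$, and returns the former value of $x_m$. Since $w\ge\log n$, every codeword has $|\gamma(d)|\le\log(n+1)+O(\log\log(n+3))=O(w)$ bits, so it spans $O(1)$ memory words and each operation reads and rewrites only $O(1)$ words --- with the customary shifting and masking for fields that are not word-aligned, and with the right-to-left decoding carried out by the standard word-RAM techniques of Lemma~\ref{lem:word} --- so both run in constant time using $O(w)$ transient space. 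Because the structure never reads a bit of $Z$ it has not previously written (the $\Tvn{pop}$ scan stays inside $[0,L)$), its initialization is just ``$L:=0$; $m:=0$; record $n$'', which takes constant time and needs no memory to be pre-cleared. I expect the delicate point to be precisely this constant-time $\Tvn{pop}$: an ordinary prefix code only parses left-to-right, so to excise the last codeword without rescanning $Z$ or keeping a prohibitively large table of codeword offsets one must use a code delimited from the right.

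For the space bound I would argue by concavity. With $m$ elements present the structure uses $L+O(\log n+1)=\sum_{i=1}^m|\gamma(d_i)|+O(\log n+1)$ bits, and $\sum_{i=1}^m|\gamma(d_i)|\le\sum_{i=1}^m\bigl(\log(d_i+2)+O(\log\log(d_i+4))\bigr)$. Since $x\mapsto\log x$ is concave on $(0,\infty)$ and $x\mapsto\log\log x$ is concave on $[4,\infty)$, Jensen's inequality applied with $\bar d:=\frac1m\sum_{i=1}^m d_i=\frac{x_m-1}{m}\le\frac{n-1}{m}$ yields $\sum_{i=1}^m\log(d_i+2)\le m\log(\bar d+2)$ and $\sum_{i=1}^m\log\log(d_i+4)\le m\log\log(\bar d+4)$. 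For $m\ge1$ we have $\frac{n-1}{m}\le 2q$ with $q=\frac{n-1}{m+1}$, hence $\bar d+2\le 2(q+1)$ and $\bar d+4\le 2q+4\le 2(q+4)$, so $m\log(\bar d+2)\le m\log(q+1)+m$ and $m\log\log(\bar d+4)\le m\log\log(q+4)+O(m)$; as $\log\log(q+4)\ge\log\log 4=1$, the stray $O(m)$ terms are absorbed into $O(m\log\log(q+4))$, giving the total $m\log(q+1)+O(m\log\log(q+4)+\log n+1)$ claimed in the lemma (the ``$+1$'' also covering the case $m=0$). All that then remains is to pin down a concrete $\delta$-style code and verify its constant-time encoding and right-to-left decoding on the word RAM, which is routine.
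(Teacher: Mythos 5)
Your proposal is correct and follows essentially the same route as the paper's proof: store the stack in difference form, encode each difference with an Elias-$\delta$-style self-delimiting code that can be parsed from the right (the paper builds exactly such a code by appending the length and then a unary length-of-length), support both operations in constant time since each codeword spans $O(1)$ words, and bound the total length via concavity and Jensen's inequality. The only cosmetic difference is that the paper adds a sentinel difference $x_{m+1}=n$ so that the $m+1$ gaps sum to exactly $n-1$ and Jensen applies with average $q$ directly, whereas you track $x_m$ explicitly and absorb a factor-2 slack into the lower-order terms; both yield the stated bound.
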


\begin{proof}
We encode integers using a scheme quite similar
to Elias' $\delta$ representation~\cite{Eli75}:
Every nonnegative integer $d$ can be represented in binary
in $\Tfloor{\lambda(d)}$ bits, where
$\lambda(d)=\log(d+1)+1$ for all $d\in\TbbbN_0$,
but this presupposes knowledge of the length of
the representation of $d$, i.e., of $\Tfloor{\lambda(d)}$.
To add this information, we append to the
representation of $d$ a sequence of
$\ell=\Tfloor{\lambda(\Tfloor{\lambda(d)})}$
bits that encode $\Tfloor{\lambda(d)}$, followed by another
$\ell+1$ bits that encode $\ell$ suitably in unary.
Altogether, this encodes $d$ in a string of
at most $\lambda(d)+3\lambda(\lambda(d))$ bits that,
read backwards, can be decoded in constant time without prior
knowledge of the length of the string.

With $x_0=1$ and $x_{m+1}=n$,
we store a bit string $B$ that encodes
$(d_1,\ldots,d_{m+1})$, where
$d_i=x_i-x_{i-1}$, for $i=1,\ldots,m+1$,
is encoded as described above
and the encodings of $d_1,\ldots,d_{m+1}$
are simply concatenated.
Before storing $B$ itself, we store the
encoding of its length
$|B|$, so that we can find the end of $B$ in
constant time.
By the implicit assumption $n,m<2^w$,
$O(w)$ bits suffice for this purpose.
Accordingly, we always store $|B|$ in a field of $\kappa w$ bits
for some suitable constant $\kappa\in\TbbbN$.
In order not to waste the last part of the field,
however, we move to there a suffix of $B$ of
the appropriate length.

It is easy to see that the operations
\Tvn{sorted-push} and \Tvn{pop} can
be supported in constant time.
$|B|$ is bounded by
$\sum_{i=1}^{m+1}(\lambda(d_i)+3\lambda(\lambda(d_i)))$.
Because $\sum_{i=1}^{m+1}d_i=n-1$ and
$\lambda$ and $\lambda\circ\lambda$ are concave
on the set of nonnegative real numbers,
$|B|\le(m+1)(\lambda(q)+3\lambda(\lambda(q)))$
by Jensen's inequality.
Since the number of bits used is
$|B|+O(\log|B|)$, the lemma follows.
\end{proof}

\begin{theorem}
\label{thm:mlog}%
There is a data structure $D$ with the following
properties:
First, for every $n\in\TbbbN$, $D$ can
be initialized for universe size~$n$
in constant time and subsequently maintains
an initially empty multiset $S$ with elements drawn from
$U=\{1,\ldots,n\}$ and executes the following
operations in constant time:

\begin{description}
\item[\normalfont$\Tvn{insert}(\ell)$]
($\ell\in U$): Inserts (another copy of) $\ell$ in $S$.
\item[\normalfont$\Tvn{extract-choice}$:]
Removes an arbitrary (copy of an) element of $S$
and returns it.
\end{description}

\noindent
Second, $D$ allows robust iteration over $S$
in constant time per element enumerated,
and when the current size of $S$
(counting each element with its multiplicity) is $m\in\TbbbN_0$,
the number of bits used by $D$ is
$O(m\log(2+{n/{(m+1)}})+1)$.
\end{theorem}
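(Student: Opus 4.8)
The plan rests on three observations. First, a sorted list of elements of $U$ can be stored very compactly in \emph{difference form}, essentially as in the sorted stack of Lemma~\ref{lem:stack}: a sorted sequence of $k$ elements of $\{1,\dots,n\}$ occupies $O(k\log(2+n/(k+1))+\log n)$ bits. Second, insertions are trivial and fast if new elements are appended, in no particular order, to an \emph{unsorted buffer}; but a buffer stores its elements at $\Theta(\log n)$ bits each and is therefore too wasteful to be allowed to grow large. Third, the two can be reconciled by periodically sorting a buffer and merging its contents into the sorted part of $S$. A naive realization with a single buffer and a single sorted reservoir fails, however: to amortize the $\Theta(m)$ cost of re-merging one would need a buffer of size $\Theta(m)$, i.e.\ of $\Theta(m\log n)$ bits, far exceeding the target space bound. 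The plan is therefore to use a hierarchy of sorted reservoirs $R_1,R_2,\dots$ whose sizes increase geometrically, so that modest amounts of freshly inserted data are merged into correspondingly small reservoirs, together with one or more unsorted buffers feeding the bottom of the hierarchy, and to carry out each merge \emph{incrementally}, spreading its work over the operations that follow its initiation.

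In more detail, the plan is the following. Maintain $S$ as the disjoint union of a short unsorted buffer (perhaps a small \emph{input} buffer together with a \emph{work} buffer currently being drained) and reservoirs $R_1,\dots,R_h$, where $R_i$ is empty or holds $\Theta(2^i)$ elements of $S$ encoded in difference form via Lemma~\ref{lem:stack}. An \Tvn{insert} appends to the input buffer. When the input buffer fills, it is frozen as a work buffer, a fresh empty input buffer is started, and a background process begins that sorts the work buffer and merges it (and, if the combined size warrants it, reservoirs $R_1,\dots,R_{i-1}$) into a single new reservoir $R_i$; this process executes a bounded number of steps at each subsequent operation and is arranged to finish before the input buffer fills again. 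An \Tvn{extract-choice} returns and removes the most recently inserted element of the input buffer if that buffer is non-empty; otherwise it pops the last (hence largest) element of the largest non-empty reservoir, exploiting the freedom to return an \emph{arbitrary} element; a reservoir emptied this way is discarded, and a reservoir only ever falls below its nominal size through \Tvn{extract-choice}, which only helps the space bound. Robust iteration is obtained by enumerating first the buffers and then the reservoirs $R_1,\dots,R_h$ in turn; since insertions only lengthen the current input buffer and \Tvn{extract-choice} removes only from the ``active'' end of a buffer or reservoir, while a merge relocates an affected element in a predictable way, a constant number of words (essentially the value and whereabouts of the element most recently enumerated) suffice to guarantee that every element present in $S$ throughout the iteration is enumerated exactly once and no other element is ever enumerated.

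For the space bound, the reservoirs contribute $\sum_{i} O\!\left(|R_i|\log(2+n/(|R_i|+1))\right)$ bits by Lemma~\ref{lem:stack}; because the sizes $|R_i|$ are geometrically increasing up to $\Theta(m)$, because $x\mapsto x\log(2+n/(x+1))$ is nondecreasing, and because $\sum_{i\ge0} i\,2^{-i}$ converges, this sum is $O(m\log(2+n/(m+1))+1)$, where $m=|S|$. The buffers are kept short enough --- empty when $S=\emptyset$, and otherwise of total size $O\!\left(1+\frac{m\log(2+n/(m+1))}{\log n}\right)$ --- that the $\Theta(\log n)$ bits they spend per element, together with the $O(\log n)$ bits of global bookkeeping, also fit within this bound. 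During a merge a new reservoir coexists with the work buffer and the reservoirs it consumes, but these are bounded in the same way.

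The hard part will be the scheduling that makes every operation run in constant time in the \emph{worst} case. One must choose the buffer capacities and the merge speed so that each incremental merge completes strictly before the buffer that triggered it overflows again; sort each buffer within the same constant-per-operation budget, which constrains how large buffers may be allowed to grow and forces the word-level manipulations of Lemma~\ref{lem:word} and of the encoding of Lemma~\ref{lem:stack} to be streamed in bounded pieces; and, most delicately, keep everything correct while these concerns interact --- an \Tvn{extract-choice} may legitimately shrink or empty a reservoir that is at that moment a source or the target of a running merge, so the merge procedure must tolerate its inputs and output changing underneath it, and the iteration must stay robust while a merge is in progress. Once this orchestration is pinned down, the remaining arguments --- the constant-time realization of the individual steps and the verification of the invariants --- are routine.
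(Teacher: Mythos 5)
There is a genuine gap, and it is not just the unfinished ``orchestration'' you defer to the end. The heart of the paper's proof is a mechanism you never supply: the merge of a buffer into a difference-form reservoir is carried out by table lookup on the compressed encoding, consuming $\Omega(\log n)$ bits of the reservoir (or one buffer element) per constant-time step, so that the whole merge costs $O(|S^*_k|)$ time -- proportional to the \emph{buffer} size, not to the reservoir size. With that, a single pair of reservoirs and two alternating buffers suffices: a merge is triggered only when the active buffer (at $\Theta(\log n)$ bits per element) occupies as many bits as the reservoir, so the buffer holds only $\Theta\bigl(m\log(2+n/(m+1))/\log n\bigr)$ elements, the space bound is respected, and the $O(|S^*_k|)$ merge can be completed by a constant-work-per-operation background process before the next trigger. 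Your opening claim that ``a single buffer and a single sorted reservoir fails'' because re-merging costs $\Theta(m)$ is therefore mistaken -- it presupposes an element-by-element merge, which is exactly what the paper avoids. (The paper also needs the auxiliary $O(\sqrt n)$-bit lookup tables, built and discarded in a second background process tied to the $2\sqrt n$-element threshold, and a nontrivial live-area/linked-list scheme threaded through the table-driven merge to keep iteration robust; none of this appears in your plan.)

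Your substitute -- a geometric hierarchy $R_1,\dots,R_h$ with buffers feeding the bottom -- cannot deliver worst-case constant time per operation as long as merges process elements one by one. Each element participates in $\Theta(\log m)$ level-to-level merges over its lifetime, so the total merge work over $N$ operations is $\Theta(N\log m)$, which no scheduling can hide inside $O(1)$ background steps per operation; equivalently, $\Theta(\log m)$ levels may have pending merges simultaneously, and each must be advanced at a constant rate to finish before its level is next rebuilt, giving $\Theta(\log m)$ work per operation (this is why Bentley--Saxe-style deamortization yields logarithmic, not constant, update time). So the missing ingredient is not a careful choice of buffer capacities and merge speeds but a way to merge faster than one element per step over the compressed representation -- precisely the word-parallel, table-lookup merge of the paper -- and once you have that, the hierarchy itself becomes unnecessary.
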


\begin{proof}
A single bit indicates whether $S=\emptyset$.
If this is not the case, $S$ is realized as the disjoint union
of four multisets,
$S_1$, $S_2$, $S^*_1$ and $S^*_2$.
$S_1$ and $S_2$ are called \emph{reservoirs}.
The elements of each reservoir are stored
in sorted order in a data structure that is
exactly as the bounded-universe sorted stack
of Lemma~\ref{lem:stack}, except that the
binary representation of each
difference between consecutive elements
is not only followed, but also
preceded by an encoding of its length.
Thus a reservoir can be read both in the
forward and in the backward direction,
and it respects the space bound even if
it contains all $m$ elements of~$S$.
$S^*_1$ and $S^*_2$ are called \emph{buffers}.
For $i=1,2$, the elements of $S^*_i$
are stored in no
particular order in an array of $|S^*_i|$ cells
of $\Tceil{\log n}$ bits each.
We also store $|S^*_1|$ and $|S^*_2|$, and the representations
of $S_1$, $S_2$, $S^*_1$, $S^*_2$,
$|S^*_1|$ and $|S^*_2|$ are interleaved in
memory so that the space occupied by them is within
a constant factor of the size of a largest among them.

At all times, exactly one buffer is called \emph{active}.
New elements are always inserted in the active buffer.
Consider a particular point in time and
let $k\in\{1,2\}$ be the index of the active
buffer at that time.
As soon as $S^*_k$ both contains at least
$2\sqrt{n}$ elements and occupies at least as many
bits as $S_k$, it stops
being active, and the other buffer, $S^*_{3-k}$,
which is empty at this time, takes over as the active buffer.
From this point on, in a background process carried out
piecemeal and interleaved with the execution of
calls of \Tvn{insert} and \Tvn{extract-choice},
$S^*_k$ is sorted in linear time with
2-phase radix sort and merged with $S_k$.
The resulting sequence is stored in $S_{3-k}$,
after which $S^*_k$ and $S_k$ are emptied.
While $S_{3-k}$ is under construction, its size
is artificially taken to be $\infty$, in the sense
that $S^*_{3-k}$ is kept active at least until
$S_{3-k}$ is finished.
The elements to be removed and returned by calls
of \Tvn{extract-choice} are taken from $S_k$
during the sorting of $S^*_k$ and from
$S_{3-k}$ during the merge of $S^*_k$ and $S_k$.
The background process should be fast enough
to keep $S_k$ nonempty during the sorting of $S^*_k$,
to keep $S_{3-k}$ nonempty once the
merge has started, and to complete before
$S^*_{3-k}$ occupies as many bits as $S_{3-k}$
will when it is finished.

In order for the background process to require only
constant time per operation executed by~$D$,
the merge of $S_k^*$ with $S_k$ to obtain
$S_{3-k}$ must happen in $O(|S_k^*|)$ time, which
generally means sublinear time with
respect to the number of elements in the reservoir $S_k$.
Using the compactness of the representation
of $S_k$, we achieve this by carrying out the merge using table lookup.
Recall that $S_k$ and $S_{3-k}$ are stored in
difference form.
During the merge, we therefore
keep track of
the element most recently removed from $S_k$ and the 
element most recently written to $S_{3-k}$,
so that future elements can be decoded or
encoded correctly.
As detailed above,
the elements of $S_k$ are represented through
variable-length bit \emph{segments}, each
complete with its length encoding.
At any given time during the merge, initial parts of
$S_k^*$ and $S_k$ have been merged, while
the remaining elements are still to be processed.
To continue the merge, we use table lookup to
find the last element, $x$, of $S_k$, if any, whose segment
is fully contained in the next $\Tfloor{({1/5)}\log n}$
bits of the representation of $S_k$.
If $x$ exists and is no larger than the next
element $x^*$ of $S_k^*$
($\infty$ if $S_k^*$ is exhausted), the elements of $S_k$
up to and including $x$ can be moved from $S_k$
to the output sequence $S_{3-k}$.
If $x$ exists but $x^*<x$, all remaining elements of $S_k$
no larger than $x^*$ can be identified
with a second table lookup
(applied to the next $\Tfloor{({1/5})\log n}$ bits of
the representation of $S_k$
and the difference between the first elements
of $S_k$ and $S_k^*$, a total of at most
$({2/5})\log n$ bits)
and moved to $S_{3-k}$.
In both cases, subsequently
the first remaining element of $S_k$
($\infty$ is $S_k$ is exhausted) can be
compared with $x^*$, and the smaller of the two
can be moved to $S_{3-k}$.
All of this takes constant time, and it consumes
$\Omega(\log n)$ bits of the
representation of either $S_k$ or $S_k^*$.
Therefore the total time needed to sort
$S^*_k$ and to merge it with $S_k$ is $O(|S^*_k|)$.
As anticipated above, this shows that executing
a constant number of steps of the
background process for each call of
\Tvn{insert} and \Tvn{extract-choice}
suffices to let the process finish in time.
Thus $D$ executes every operation in
constant time.

The tables needed for the merge occupy
$O(\sqrt{n})$ bits and can be constructed in
$O(\sqrt{n})$ time.
This happens
in another background process that is
advanced by a constant number of steps
(if it has not completed) whenever
the active buffer contains at least
$\sqrt{n}$ elements and that finishes before
the active buffer reaches $2\sqrt{n}$ elements.
Whenever the number of elements in the
active buffer drops below $\sqrt{n}$---in particular,
when an empty buffer becomes active---the tables
are discarded, so that the
space that they occupy is always within a
constant factor of that taken up by the active buffer.

In order to support robust iteration over~$S$,
we add more structural information to the data structure.
Call an element of $S$ \emph{live} if it was
present in $S$ at the time of the most recent preceding
call of $\Tvn{iterate}.\Tvn{init}$, if any,
and has not since been enumerated.
If an element of $S$ is not live, it is \emph{dead}.
An initial part of each reservoir or buffer
that was not emptied since the most recent preceding call
of $\Tvn{iterate}.\Tvn{init}$
(informally, that existed at that time)
is marked off as its \emph{live area}.
If one or more live areas are nonempty just before
a call of $\Tvn{iterate}.\Tvn{next}$, the next
element enumerated is the last element of some
live area.
If the last element of a live area is enumerated
or deleted in a call of \Tvn{extract-choice},
the live area shrinks by one element.
A call of $\Tvn{iterate}.\Tvn{init}$ sets the live
area of each reservoir and of each buffer to be the
entire reservoir or buffer.
This is the only occasion on which a live area expands.

When a buffer is sorted and merged with a reservoir
to create a new reservoir, the live elements can
no longer be kept in a contiguous area.
On the other hand, constant-time iteration
is possible only if the next live element
to be enumerated can be found in constant time.
For this reason reservoirs and buffers
created since the most recent preceding call of
$\Tvn{iterate}.\Tvn{init}$ have empty live areas,
and the live elements in each such
reservoir are kept in a linked list in the
opposite of the order in which they occur in the reservoir.
Each live element in the list is labeled by its distance,
measured in bits, to the next live element in the list.
It is easy to see that the labels do not add to
the total space requirements by more than a
constant factor (small labels could even
be represented in unary).
Any list labels present in the live area of
a reservoir are ignored---all elements in
the live area are alive.

Before a buffer is sorted in preparation for being
merged with a reservoir, each buffer element is
given a bit of satellite data that indicates
whether the element is alive or dead.
The bit is carried along with the element during
the sorting and used in the merge to equip the
resulting reservoir with the linked list of its
live elements described above.
The table that drives the merge,
instead of merely indicating a number of bits
that can be copied from the old to the new
reservoir, must be changed to supply a
``piece of reservoir'' complete with list labels.
The at most one label that points outside of
the current piece, namely to an element in
the part of the new reservoir that has already
been constructed
(the \emph{nonlocal pointer}),
must be filled in separately
and outside of the table lookup.
Correspondingly, it is convenient to split
the table lookup into two:
The first table lookup yields the piece of
reservoir that precedes the nonlocal pointer
(thus the piece specifies
only dead elements), and the second table lookup
yields the piece of reservoir that follows it
(if there is no nonlocal pointer, let the
first table lookup
provide the entire piece of reservoir).
Even with these additional computational
steps, the time and
space bounds established above remain valid.

Since an iteration enumerates only elements that
were present when the iteration started and were
not deleted before their enumeration, iterations
are
robust.
\end{proof}

As is easy to see, the data structure of
Theorem~\ref{thm:mlog} also supports constant-time \Tvn{choice}.
This operation,
however, is not likely to
be very useful except in the combination \Tvn{extract-choice}.

\subsection{A Data Structure with \Tvn{color} and \Tvn{setcolor}}

In this subsection we reconsider a data structure
of Dodis et al.~\cite{DodPT10} and extend it to support
constant-time initialization.
The data structure basically emulates
$c$-ary memory, for arbitrary $c\ge 2$,
on standard binary memory almost without
losing space.
This is essential
to the choice dictionaries
of Subsection~\ref{subsec:random}
and Section~\ref{sec:nonsystematic} in the case
where the number of colors is not a power of~2.

For $b=\Tfloor{\log n}$ the following lemma
essentially coincides with the result
of Dodis et al.
Only in extreme cases can it be useful
to choose smaller values for~$b$, but we need the
present form of Lemma~\ref{lem:succincter-t}
to prove Lemma~\ref{lem:unsystematic-tc}
in full generality.
In Lemma~\ref{lem:succincter-t} and Theorem~\ref{thm:succincter}
the sequence $(c_1,\ldots,c_n)$ is assumed to be represented
in a way that allows it to be communicated to the
initialization routine in constant time,
most naturally as a list of constant length
of (value, multiplicity) pairs.

\begin{lemma}
\label{lem:succincter-t}%
There is a data structure
that, for all given $n,b\in\TbbbN$
and for every given sequence
$(c_1,\ldots,c_n)$ of $n$ positive integers
with $|\{\ell\in\TbbbN:1\le\ell<n$ and
$c_\ell\not= c_{\ell+1}\}|=O(1)$,
can be initialized in constant time and subsequently
occupies $\sum_{\ell=1}^n\log c_\ell+O({n/{2^b}}+\log n+1)$ bits,
needs access to a table of $O(b^2)$ bits
that can be computed in $O(b)$ time and
depends only on $n$, $b$
and $(c_1,\ldots,c_n)$,
and maintains a sequence
$(a_1,\ldots,a_n)$ 
drawn from
$\{0,\ldots,c_1-1\}\times\cdots\times\{0,\ldots,c_n-1\}$
under constant-time reading and writing
of individual elements in the sequence.
The data structure does not initialize the sequence.
\end{lemma}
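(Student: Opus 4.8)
The plan is to adapt the ``Succincter'' spillover representation of Dodis, P\u atra\c scu and Thorup~\cite{DodPT10}, the three new requirements being the mixed (but $O(1)$-valued) radix, constant-time initialization, and the explicit parameter $b$ in place of $\Tfloor{\log n}$. First I would reduce to the uniform case: the $O(1)$ positions $\ell$ with $c_\ell\neq c_{\ell+1}$ split $\{1,\ldots,n\}$ into $O(1)$ intervals on each of which the radix is a constant $c$; I build one sub-structure per interval, lay them out consecutively at offsets computed in $O(1)$ time from the constant-length description of $(c_1,\ldots,c_n)$, and merge their tables. Each sub-structure then contributes an additive $O(\log n+1)$ to the space and $O(b)$ to the table-construction time, so summing over the $O(1)$ intervals stays within the stated bounds. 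If $c$ is a power of~$2$ the interval is handled by a plain bit array, so I may assume $2^{\Tfloor{\log c}}<c<2^{\Tfloor{\log c}+1}$.

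For a uniform interval I would group its positions into consecutive \emph{chunks} of $g=\max\{1,\Theta(b/\Tceil{\log c})\}$ elements, so that a chunk holds a value from a set of size $P=c^{g}$ of at most $O(b)$ bits. I represent the chunk sequence by a spillover chain: chunk $i$ receives a carry $\sigma_{i-1}$ from a universe $\Sigma_{i-1}$ with $|\Sigma_{i-1}|\in[2^{b},2^{b+1})$, forms the integer $C_i=\sigma_{i-1}P+d_i$, where $d_i$ encodes the chunk's elements, writes the ``regular'' part $\Tfloor{C_i/|\Sigma_i|}$ into a fixed-width field of $B_i$ bits, and passes on the carry $\sigma_i=C_i\bmod|\Sigma_i|$, with $|\Sigma_i|$ and $B_i$ chosen --- depending only on $n$, $b$, $c$ and~$i$, hence data-\emph{in}dependent --- so that $2^{B_i}|\Sigma_i|\ge|\Sigma_{i-1}|P$, and as tightly as the constraint $|\Sigma_i|\in[2^{b},2^{b+1})$ allows. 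Data-independence of the $B_i$ is what makes the regular fields sit at addresses computable in $O(1)$ time, since for uniform $c$ the $B_i$ take $O(1)$ distinct values and their prefix sums have a closed form; and the ``$\in[2^{b},2^{b+1})$'' slack makes each ceiling cost only a factor $1+O(2^{-b})$, so that the multiplicative losses telescope to $O(n/2^{b})$ extra bits overall. The carries themselves I store in a small secondary structure of $O(n/2^{b}+\log n)$ bits that supports $O(1)$-time retrieval of $\sigma_{i-1}$, and the $O(b^2)$-bit table holds the at most $g\le b$ powers $c^{0},c^{1},\ldots,c^{g}$, each of $O(b)$ bits, together with the $O(1)$ relevant values of $|\Sigma_i|$ and $B_i$ --- all that a chunk decode needs.

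To read or write $a_\ell$ I would locate its chunk $i$ and its offset within the chunk, fetch the fixed-address regular field and the carry $\sigma_{i-1}$, reconstruct $C_i$ and hence $d_i=C_i\bmod P$, extract or overwrite the appropriate base-$c$ digit of $d_i$ by table lookup, recompute $C_i$, and rewrite the field and the at most one affected carry --- all in constant time. Constant-time initialization follows because the only parts that must start in a known state are the secondary carry structure and a handful of counters, for which the ``initialization on the fly'' device of Lemma~\ref{lem:2.12} suffices; the large array of regular fields may stay uninitialized, exactly as the statement permits (``The data structure does not initialize the sequence'').

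I expect the main obstacle to be meeting three constraints on the chunking-and-carry scheme simultaneously: (i) the per-chunk regular width $B_i$ must be data-independent \emph{and} have prefix sums computable in $O(1)$ time, so that chunk fields are directly addressable; (ii) the carry $\sigma_{i-1}$ must be retrievable in $O(1)$ time from a secondary structure that itself fits in $O(n/2^{b}+\log n)$ bits, which forces a case analysis on~$b$ --- the regime $b=\Theta(\log n)$ being the genuinely tight one and small~$b$ being essentially free; and (iii) a chunk must be decodable in $O(1)$ time from a table of only $O(b^2)$ bits even when $c$ exceeds the word length, which is what pins down the choice of~$g$ and the table's contents. Checking that one choice of parameters satisfies all three and that the telescoped waste is really $O(n/2^{b})$ rather than, say, $O((n/2^{b})\log c)$, is the delicate part; the remainder is bookkeeping.
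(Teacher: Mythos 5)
Your outer reduction---splitting at the $O(1)$ positions where the radix changes and treating each constant-radix interval separately---matches the paper's first step, but the per-interval machinery you propose has a genuine gap, and it sits exactly where the real difficulty of the lemma lies. In your linear spillover chain, a uniform interval of length $n'$ is cut into roughly $n'\Tceil{\log c}/b$ chunks, and each carry $\sigma_i$ ranges over a universe of size about $2^b$, i.e., carries about $b$ bits. The carries are data-dependent, and together they encode $\Theta(n'\log c)$ bits---essentially the entire array---so they cannot be kept in a secondary structure of $O({n/{2^b}}+\log n)$ bits; and if they are not stored, the only way to recover the carry needed at chunk $i$ is to unwind the chain from its explicitly stored end, which costs time proportional to the number of chunks, not $O(1)$. (Your decoding step is also off: from the regular field $r_i=\Tfloor{C_i/|\Sigma_i|}$ and the incoming carry $\sigma_{i-1}$ one cannot in general reconstruct $C_i$; one needs the outgoing carry $\sigma_i$, since $C_i=r_i|\Sigma_i|+\sigma_i$.) Writes are worse: changing $d_i$ changes $\sigma_i$, hence $C_{i+1}$, hence $r_{i+1}$ and $\sigma_{i+1}$, and so on---a single write can alter every later field, so ``at most one affected carry'' is false and constant-time updates are impossible in the plain chain. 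Checkpointing every $K$th carry does not help: staying within $O({n/{2^b}})$ extra bits forces $K=\Omega(2^b\log c)$, which then becomes the access time, and writes still cascade. Arranging the spillover dependencies so that reads \emph{and} writes take constant time is precisely the nontrivial content of the Dodis--P\u atra\c scu--Thorup construction; it is not bookkeeping.

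The paper does not re-derive any of this. It invokes the construction of Dodis et al.\ as a black box, noting only three minor adaptations (the group size $r$ need only satisfy $c^r=\Omega(n^2)$ and $c^r=2^{O(w)}$; the big digit at the root of their binary tree may be drawn from a smaller domain, which handles non-divisibility; and their table can be computed in $O(b)$ time, constant time per level), and it obtains the $O({n/{2^b}})$ redundancy not from telescoping rounding losses of a chain but simply by cutting each constant-radix run further into $O(1+{n/{2^b}})$ pieces of length $O(2^b)$ and paying the $O(1)$-bit redundancy of Dodis et al.\ once per piece, with the $O(b^2)$-bit table shared among pieces of equal radix. The shortest repair of your argument is to do the same; otherwise you would have to reproduce the full spillover construction of Dodis et al., whose entire purpose is to organize the carry dependencies so that every read and write is served in constant time, rather than the linear chain sketched in your proposal.
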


\begin{proof}
Dodis et al.~\cite{DodPT10}
considered the fundamental problem of
realizing an array of $n$ entries,
each drawn from a set of the
form $\{0,\ldots,c-1\}$
for integer $c\ge 2$.
They
described a beautifully simple
construction that allows individual array entries
(called \emph{small digits} in what follows) to
be read and written in constant time, yet needs
just $n\log c+O(1)$ bits
(the authors even argue that
$\Tceil{n\log c}$ bits suffice).
A few details not considered by
Dodis et al.\ can easily be handled:
(1) The authors actually group $r$ small
digits into one
\emph{big digit} drawn from
$\{0,\ldots,c^r-1\}$, where
$r\in\TbbbN$ is chosen to make
$c^r=\Theta(n^2)$, and accordingly store
approximately $n/r$ big digits
instead of $n$ small digits.
The proof in fact is correct for arbitrary $r$
as long as $c^r$
is $\Omega(n^2)$
(but still $2^{O(w)}$, so that big
digits can be manipulated in constant time);
this allows us to handle
values of $c$ larger than $\Theta(n^2)$.
(2) The big digits are associated with the nodes
of a binary tree~$T$.
The proof remains correct if the big digit associated
with the root of $T$ in fact is drawn from a smaller domain
than the other big digits;
this allows us to deal with the fact that $r$
may not divide~$n$.
(3) The construction needs a
table $Y$ of $O((\log n)^2)$ bits
of precomputed numbers that depend on $n$ and~$c$.
The authors do not mention the time needed
to obtain
$Y$ from $n$ and $c$,
but it is easy to see that it can be done in
$O(\log n)$ time, namely constant time per
level in~$T$.
In particular,
a part of~$Y$ indicates
the size and memory layout of the data structure.
The same time allows us to compute the
powers $c^i$, for $i=2,\ldots,r$, which are needed
to extract small digits from big digits and
update small digits within big digits in
constant time.

Lemma~\ref{lem:succincter-t} follows by
splitting
the sequence
$(a_1,\ldots,a_n)$ 
into
$O(1+{n/{2^b}})$ subsequences
$(a_i,\ldots,a_j)$ with $c_i=\cdots=c_j$
and $j-i=O(2^b)$
and applying the
construction of Dodis et al.\ independently
to each such subsequence.
\end{proof}

For $n\in\TbbbN$ and $\epsilon>0$, let us call a sequence
$(c_1,\ldots,c_n)$ of $n$ positive integers 
\emph{$\epsilon$-balanced} if
$\epsilon (\log n)^2\sum_{\ell\in S}\log c_\ell
\le\sum_{\ell=1}^n\log c_\ell$
for all $S\subseteq\{1,\ldots,n\}$ with $|S|\le(\log n)^3$.
The theorem below requires $(c_1,\ldots,c_n)$ to
be $\epsilon$-balanced for some fixed~$\epsilon>0$.
While this requirement is necessary for our proof,
it is a technicality of scant interest.
In order for a sequence $(c_1,\ldots,c_n)$ not to be
$\epsilon$-balanced for any fixed $\epsilon>0$, at least some of
its elements must be very large relative to~$n$.
Indeed, provided that $c_i\ge 2$ for $i=1,\ldots,n$,
our implicit convention
$\log c_i=O(w)$ for $i=1,\ldots,n$ implies
that the condition of $\epsilon$-balance is
automatically satisfied for some fixed $\epsilon>0$ if
$w=O({n/{(\log n)^5}})$.

\begin{theorem}
\label{thm:succincter}%
For all fixed $\epsilon>0$,
there is a data structure
that, for all given $n\in\TbbbN$
and for every given
$\epsilon$-balanced sequence
$(c_1,\ldots,c_n)$ of $n$ integers
with $c_\ell\ge 2$ for $\ell=1,\ldots,n$
and $|\{\ell\in\TbbbN:1\le\ell<n$ and
$c_\ell\not= c_{\ell+1}\}|=O(1)$,
can be initialized in constant time and subsequently
occupies $\sum_{\ell=1}^n\log_2 c_\ell+O((\log n)^2+1)$ bits
and maintains a sequence
$(a_1,\ldots,a_n)$ of $n$ integers,
drawn from
$\{0,\ldots,c_1-1\}\times\cdots\times\{0,\ldots,c_n-1\}$,
under constant-time reading and writing
of individual elements in the sequence.
The data structure does not initialize the sequence.
For $\ell=1,\ldots,n$, the parameter $c_\ell$ may be presented
to the data structure in the form of a pair $(x_\ell,y_\ell)$
of positive integers with $c_\ell=x_\ell^{y_\ell}$ and
$y_\ell=n^{O(1)}$.
\end{theorem}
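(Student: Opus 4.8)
The plan is to build one instance of the data structure of Lemma~\ref{lem:succincter-t}, wrapped in a mechanism that turns its $\Theta(\log n)$-time table computation into a constant-time initialization. First I would split $\{1,\ldots,n\}$ into the $O(1)$ maximal runs on which $(c_1,\ldots,c_n)$ is constant and work on the whole sequence at once. Invoking Lemma~\ref{lem:succincter-t} with $b=\lceil\log n\rceil$ already yields constant-time reading and writing and space $\sum_{\ell=1}^n\log c_\ell+O(\log n)$; the presentation $c_\ell=x_\ell^{y_\ell}$ with $y_\ell=n^{O(1)}$ is needed only so that the initialization routine can compute $\lfloor\log c_\ell\rfloor$ and the powers of $c_\ell$ used by the construction of Dodis et al.\ even when $c_\ell$ is a large multiword integer, and $\epsilon$-balance is a mild technical hypothesis that bounds the redundancy charged to the $O(1)$ non-uniform runs and to the few positions whose $c_\ell$ dominates the total log-sum. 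The single ingredient of Lemma~\ref{lem:succincter-t} that is not constant-time is the precomputation of the table of $\Theta((\log n)^2)$ bits, which costs $\Theta(\log n)$ time (one step per level of the underlying tree). Eliminating that cost is the whole content of the theorem, and the enlarged redundancy $O((\log n)^2)$ is precisely the room reserved to keep the table inside the structure.

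To remove the obstruction I would reserve $O((\log n)^2)$ uninitialized bits for the table and leave them untouched at initialization. A background process, interleaved with the first $\Theta(\log n)$ operations a constant number of steps at a time---exactly in the spirit of the table-building background process in the proof of Theorem~\ref{thm:mlog}---fills the table in. Until the table is ready the structure of Lemma~\ref{lem:succincter-t} cannot be used, so during this warm-up phase reads and writes are served by a small side structure occupying a further $O((\log n)^2)$ bits; since the warm-up lasts only $\Theta(\log n)$ operations, it has to cope with only $O(\log n)$ distinct written positions (and reads of never-written positions may legitimately return arbitrary values, since the data structure is not required to initialize the sequence). Once the table is complete, a second background process migrates the recorded writes into the succincter structure, again $O(1)$ per operation and finishing within another $\Theta(\log n)$ operations, after which the side structure is discarded and the quiescent space is $\sum_{\ell=1}^n\log c_\ell+O((\log n)^2)$ bits in every quiescent state.

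The hard part, and what I expect to be the main obstacle, is keeping each individual operation worst-case constant-time while the two background processes are running: a read or write issued during the warm-up or during the migration must interact correctly both with the side structure and with the shrinking set of not-yet-migrated writes, and in particular a read of a position that was written during the warm-up must locate its current value in $O(1)$ time rather than scanning the $O(\log n)$ recorded writes. So the real work is to make the side structure support $O(1)$ lookup under incremental insertions with constant-time initialization and to schedule the two background processes so that neither the per-operation time ever exceeds a constant nor the quiescent space ever exceeds $\sum_{\ell=1}^n\log c_\ell+O((\log n)^2)$ bits; once that bookkeeping is in place, everything else reduces to Lemma~\ref{lem:succincter-t} and routine word-RAM manipulation.
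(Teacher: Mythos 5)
Your plan correctly isolates the table-construction cost and proposes the same two-phase warm-up/migration schedule as the paper, but it founders on the space accounting for the side structure, and that is precisely where the paper's actual work lies. During the warm-up you must buffer up to $\Theta(\log n)$ written values $a_\ell$, and each of these needs up to $\lceil\log c_\ell\rceil$ bits, which can be $\Theta(w)\gg(\log n)^2$ (the $c_\ell$ are constrained only by $\log c_\ell=O(w)$ and $\epsilon$-balance, and may be huge). So the side structure cannot live inside the $O((\log n)^2)$ bits of redundancy, as you assume; in general it needs memory proportional to $\sum_{\ell\in S}\log c_\ell$ for a set $S$ of $\Theta(\log n)$ positions, and the only place to put it is inside the memory that the permanent structure of Lemma~\ref{lem:succincter-t} will itself occupy. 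The paper therefore partitions each segment into $\Theta(\log n)$ sectors, splits the permanent structure so that one component sits exactly on a designated ``forbidden'' sector, lets the temporary structure $D^{\mathrm{T}}$ overlap only that component's memory, and---crucially---guarantees that no position in a forbidden sector receives a nonzero value during the first $\Theta(\log n)$ operations. Since the user chooses which positions to write, this guarantee requires an extra idea: a lazily defined bijection of the universe, induced by a permutation of the sectors that is fixed on demand as sectors receive their first nonzero value and that always steers written sectors away from the forbidden one. Nothing in your proposal addresses this collision between the temporary and the permanent structure, and without it the stated space bound is violated whenever the $c_\ell$ are large.

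Two smaller misreadings follow from this. The $\epsilon$-balance hypothesis is not a mild bookkeeping device for the $O(1)$ non-uniform runs; it is exactly what shows that the temporary structure ($\Theta(\log n)$ buffered values plus trie overhead) fits within the roughly $1/\Theta(\log n)$ fraction of $\sum_\ell\log c_\ell$ contributed by the forbidden sectors, even though small segments contribute no forbidden sector at all. And the part you single out as the main obstacle---constant-time lookup in the side structure with constant-time initialization---is the routine part: the paper simply uses a constant-depth trie with $n$ potential leaves (in the spirit of Lemma~\ref{lem:2.12}), which yields $O(1)$ reads and writes immediately, and runs the parallel-operation/migration phase essentially as you describe.
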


\begin{proof}
Let $Y$ be the table
used by the data structure
of Lemma~\ref{lem:succincter-t}.
Aside from the question of $(c_1,\ldots,c_n)$
being $\epsilon$-balanced,
the only essential difference between 
Theorem~\ref{thm:succincter} and
Lemma~\ref{lem:succincter-t} is that the
theorem does not assume $Y$ to be externally
available.
The theorem provides space for storing $Y$,
but no time for computing it before the first
operation must be served.
Assume without loss of generality that $n\ge 2$
and let $U=\{1,\ldots,n\}$ and $b=\Tfloor{\log n}$.
Recall that $Y$
can be computed from $(c_1,\ldots,c_n)$ in $O(b)$ time---if
necessary, this time bound also allows for the calculation
of $c_\ell$ from $x_\ell$ and $y_\ell$
for $\ell=1,\ldots,n$ via repeated squaring.

Consider first the special case $c_1=\cdots=c_n=c$.
We allocate first $\Theta(b^2)$ bits for $Y$,
then a block of $\Theta(b\log b)$ additional bits whose use
will be explained later, and finally
space for a data structure~$D\Tsup T$.
$D\Tsup T$ is a trie of constant depth with $n$ potential leaves,
the $\ell$th of which, for $\ell=1,\ldots,n$,
holds $a_\ell$ in $\Tceil{\log c}$ bits
if $a_\ell$ has changed from its
initial value of~0.
Disregarding the question of space,
$D\Tsup T$ can provide the functionality promised in
the theorem.
Choose the (constant) depth of $D\Tsup T$ to
be at least $4$.
We will use $D\Tsup T$ for at most $2 b$ operations.
Thus, even without paying particular attention
to economy of space, we can easily ensure
that the number of bits needed for $D\Tsup T$ is
$O(2 b (n^{{1/4}}\log n+\log c))$ and
therefore, for $n$ larger than a constant, at most
$({1/{(2 b+2)}})n\log c$.

We construct $Y$ in a background process
interleaved with the execution of the first
$b$ operations (the \emph{first phase}),
using $D\Tsup T$ to serve these $b$ operations.
At the end of the first phase,
when $Y$ is ready, we start running an
instance $D$ of the data structure of
Lemma~\ref{lem:succincter-t} in parallel
with $D\Tsup T$, making sure that every
$\ell\in U$ that
is written to
after the first phase
has the correct associated value
in $D$ and, if $\ell$ is still present in $D\Tsup T$,
in $D\Tsup T$.
Interleaved with the execution of the second
group of $b$ operations (the \emph{second phase}), we empty $D\Tsup T$
element by element, making
sure that every element deleted from $D\Tsup T$
has the correct value in $D$---informally, we transfer the element
from $D\Tsup T$ to $D$.
To determine $a_\ell$
for some $\ell\in\{1,\ldots,n\}$ in the second phase,
we first query $D\Tsup T$.
If $\ell$ is present in~$D\Tsup T$,
the associated value is returned.
Otherwise we return the value
associated with $\ell$ in~$D$.
All operations can
still be executed
in constant time
during the parallel operation of
$D$ and $D\Tsup T$.
After the second phase, $D\Tsup T$
is empty and can be considered to have disappeared.

A problem that was ignored until now is
that $D\Tsup T$ and $D$ must use the same memory area
during the second phase.
Partition $U$ into $2 b+2$ \emph{sectors}
$U_1,\ldots,U_{2 b+2}$ of consecutive elements,
all of the same size $n'$,
except that $U_{2 b+2}$ may be smaller.
We call $U_1$ the \emph{forbidden} sector
and $U_{2 b+2}$ the \emph{incomplete} sector.
$D$ is in fact split into three instances of
the data structure of Lemma~\ref{lem:succincter-t}:
$D_1$, whose universe is the forbidden sector $U_1$,
and two instances $D_*$ and $D_{2 b+2}$
whose universes are
$\bigcup_{i=2}^{2 b+1} U_i$ and $U_{2 b+2}$
(both translated suitably to begin at~1),
respectively.
At the end of the first phase, aided by $Y$,
we can compute the exact number of bits required
for each of $D_1$, $D_*$ and $D_{2 b+2}$ in constant time.
Overlapping the memory space used for $D\Tsup T$,
we allocate space first for $D_1$ and then for
$D_*$ and $D_{2 b+2}$.
Since $D_1$ occupies at least $n'\log c$ bits
and $D\Tsup T$ occupies at most $n'\log c$ bits,
$D_1$ is the only component of $D$ whose memory
area overlaps that of $D\Tsup T$.

During the transfer of elements from $D\Tsup T$ to $D$
in the second phase,
it is easy to handle
elements outside of the forbidden sector.
As concerns elements destined for $D_1$, however,
a problem arises because
$D_1$ overlaps $D\Tsup T$, which is still in use.
In order to circumvent this problem, we will
ensure that during the execution of the
first $2 b$ operations, no element
located in the forbidden sector
acquires a nonzero value.
We achieve this by storing the elements not directly, but
according to a rudimentary hash function that
is data-dependent and defined in a lazy manner.
The hash function takes the form of a bijection
$g$ from $U$ to itself.
Every element in the incomplete sector is mapped to
itself by $g$, and the rest of $g$ is induced
by a permutation $\pi$ of $I=\{1,\ldots,2 b+1\}$
in the following way:
For each $i\in I$, the $\ell$th element
of $U_i$ is mapped by $g$ to the $\ell$th element
of $U_{\pi(i)}$, for $\ell=1,\ldots,n'$.
Arguments in $U$ of operations to be executed
are mapped under $g$ before being passed on
to one of the three components of $D$, and
answers obtained from the components of $D$ are
mapped under $g^{-1}$ to obtain the
appropriate return values.
The permutation $\pi$ and its inverse must
be stored in the data structure, the necessary
space being furnished by the block of
$\Theta(b\log b)$ bits allocated but not used so far.

It remains to describe the permutation $\pi$ of~$I$.
When an element outside of the incomplete sector is
first given a nonzero value,
suppose that it belongs to $U_{i_1}$.
Then $\pi(i_1)$ is defined to be an arbitrary
element of $I\setminus\{1\}$, such as its minimum.
Similarly, if $U_{i_2}$ is the second sector
other than $U_{2 b+2}$ to receive
a nonzero value, $\pi(i_2)$ is defined to
be an arbitrary element of
$I\setminus\{1,\pi(i_1)\}$.
Continuing in the same manner, we can avoid
$1$ as a value of $\pi$ for the duration
of at least $|I|-1=2 b$ operations, which
was our goal.

Let us now turn to the general case.
While $(c_1,\ldots,c_n)$ may contain 
distinct values, by assumption there is a constant
$q\in\TbbbN$ for which $U$ can be partitioned
into $q$ \emph{segments} $V_1,\ldots,V_q$
of consecutive elements such that for
$i=1,\ldots,q$, $c_{\ell}=c_{\ell'}$
for all $\ell,\ell'\in V_i$.
Informally, our plan is to run the procedure described for
a single segment in parallel for all segments,
with a shared temporary data structure $D\Tsup T$
``hiding'' in the memory area that holds
the union of the forbidden sectors of all segments.

For $\ell\in U$,
if and when $a_\ell$ becomes nonzero
during the execution of the first $2 b$ operations,
$D\Tsup T$ now stores $a_\ell$
in $\Tceil{\log c_\ell}$ bits.
The total number of bits, $s\Tsup T$, required for $D\Tsup T$
is therefore $O(2 b n^{1/4}\log n)$ plus
$\sum_{\ell\in S}\Tceil{\log c_\ell}$ for a
set $S\subseteq U$ with $|S|\le 2 b$.
Let $Z=\sum_{\ell\in U}\log c_\ell$.
Since $(c_1,\ldots,c_n)$ is $\epsilon$-balanced
and $Z\ge n$,
$s\Tsup T\le({1/{(\epsilon (\log n)^2)}}+2 {b/n}
 +O(2 b n^{{-3}/4}\log n))Z$,
which, for $n$ larger than a constant, is at most
${Z/{(4 b+4)}}$.

Call a segment \emph{small} if it contains at
most $(2 b+2)^2$ elements, and
\emph{large} otherwise.
The reason for distinguishing between small and
large segments is that a
large segment can always be divided into $2 b+2$
sectors, all of the same size, except that
one segment may be smaller.
Because this is
not necessarily the case for a small segment,
the small segments do not contribute
forbidden sectors.
We must show that, even so, the forbidden sectors
together require enough space for the values
associated with their elements to ``cover''
the temporary data structure $D\Tsup T$.
If this is so, the parallel procedure will
work as intended:
During the first $2 b$ operations, a shared
permutation $\pi$ is defined so that the at
most $2 b$ elements that receive nonzero values
in the first and second phases avoid all
forbidden sectors, and in the second phase
the elements stored in $D\Tsup T$ are
transferred to at most $3 q$ instances
of the data structure of Lemma~\ref{lem:succincter-t}.

If $S$ is the set of elements in small segments,
then $|S|\le q (2 b+2)^2$, which, for $n$ larger
than a constant, is at most $(\log n)^3$.
Since $(c_1,\ldots,c_n)$ is $\epsilon$-balanced,
we may conclude, still for $n$ larger than a
constant, that
$\sum_{\ell\in S}\log c_\ell
\le{Z/{(\epsilon (\log n)^2)}}$.
If this relation holds and $S\Tsub F$ is the set of
elements in forbidden sectors, it is easy to see that
$\sum_{\ell\in S_{\mathrm{F}}}\log c_\ell
\ge({{{1-1/{(\epsilon (\log n)^2)}})Z}/{(2 b+2)}}.$
Thus, for $n$ larger than a constant,
$\sum_{\ell\in S_{\mathrm{F}}}\log c_\ell
\ge{Z/{(4 b+4)}}\ge s\Tsup T$,
from which the desired conclusion follows.
\end{proof}

\subsection{Choice Dictionaries with \Tvn{p-rank} and
 \Tvn{p-select}}
\label{subsec:random}

In this subsection we describe an extended choice
dictionary that supports the
additional operations
\Tvn{p-rank} and \Tvn{p-select}.
Before delving into the technical details, we
provide a brief overview of the main ideas
involved in the special case of constant
operation times and for the special application
of uniform random generation.

First, using methods that are fairly standard,
at least if suitable tables are assumed to be available,
the operations \Tvn{p-rank} and \Tvn{p-select} and
even \Tvn{rank} and \Tvn{select} can be supported
within \emph{segments} of
polylogarithmic size (Lemma~\ref{lem:p-t}):
One maintains a summation tree $T$ of constant depth
and almost-logarithmic degree, with each node storing the
number of elements of the client set $S$ below each
of its children, and the procedures of accumulating
prefix sums along a root-to-leaf path in $T$
(for \Tvn{rank}) and of searching within the
prefix sums along such a path (for \Tvn{select})
are carried out with table lookup.

At this point the uniform generation boils down
to choosing a random segment.
The choice should not be uniform, however.
Instead a segment should have a probability
or \emph{weight}
proportional to the number of elements of $S$
in the segment.
This partitions the segments
dynamically into a polylogarithmic
number of \emph{weight classes}, and the uniform
generation can proceed by first picking a 
random weight class, according to a suitable probability
distribution, and subsequently picking a segment
uniformly at random within the chosen weight class.
The latter task can be solved with a data structure
that we already have, namely that of Theorem~\ref{thm:nlogn}.
It is wasteful in terms of space, but
because it is applied to segments of
polylogarithmic size and not directly
to elements of the original universe, the space
requirements can be made sufficiently small.

As for picking a random weight class, the relevant
probability distribution changes dynamically and
can be almost arbitrary,
which renders the problem difficult.
What makes it manageable nonetheless is the fact
that the number of weight classes is only
polylogarithmic.
We solve the problem using a data structure of
P\v atra\c scu and Demaine~\cite{PatD06},
slightly modified to suit our needs
(Lemma~\ref{lem:prefSum}).
Conveniently, although it is more powerful
than what is required, the same data structure can also
be used within segments.
This ends the overview of this subsection.

If the operations \Tvn{rank} and \Tvn{select}
or \Tvn{p-rank} and \Tvn{p-select} are to be realized
with the trie-combination method of
Section~\ref{sec:trie}, the inner nodes
in the trie must be generalized.
As usual, identify the leaves of the trie, in
the order from left to right, with the integers
$1,\ldots,n$ and consider the colorless case
and \Tvn{rank} and \Tvn{select}.
An inner node $u$ with $d$ children, rather than maintaining
a subset of $\{1,\ldots,d\}$ or, what amounts to
the same, a bit vector of length~$d$, must maintain
a sequence $A$ of $d$ nonnegative integers, the $i$th
of which, for $i=1,\ldots,d$, is the number of
leaf descendants of the $i$th child of~$u$
that belong to the client set, and \Tvn{rank}
and \Tvn{select} must be generalized to the
functions \Tvn{sum} and \Tvn{search}
defined below.

Given a sequence $A$ of $m$ integers, where
$m\in\TbbbN$, let us denote its $j$th entry,
for $j=1,\ldots,m$, by $A[j]$.
Moreover, let $\sigma(A)$ be the
sequence of prefix sums of~$A$, i.e., the
sequence of length $m$ with
$\sigma(A)[j]=\sum_{i=1}^j A[i]$ for $j=1,\ldots,m$.
Note that $\sigma$ is a linear operator.
Say that an \emph{atomic ranking structure}
for $A$ is a data structure that can return
$\Tvn{rank}(x,A)$ in constant time for
arbitrary integer~$x$.
Let a \emph{searchable prefix-sums structure}
be a data structure that, for arbitrary
$n,b,\delta\in\TbbbN$ with $\delta\le b\le w$,
can be initialized for parameters
$(n,b,\delta)$ and subsequently
maintains a sequence $A$
of $n$ integers, initially $(0,\ldots,0)$,
under the following operations:

\begin{description}
\item[\normalfont$\Tvn{sum}(j)$]
($j\in\{0,\ldots,n\}$):
Returns $\sigma(A)[j]$ if $j>0$ and 0 if $j=0$.
\item[\normalfont$\Tvn{search}(x)$]
($x\in\{1,\ldots,2^b-1\}$):
Returns $\Tvn{rank}(x-1,\sigma(A))+1$.
\item[\normalfont$\Tvn{update}(j,\Delta)$]
($j\in\{1,\ldots,n\}$ and
$\Delta\in\{-(2^\delta-1),\ldots,2^\delta-1\}\cap
\{-A[j],\ldots,2^b-1-\sigma(A)[n]\}$):
Replaces $A[j]$ by $A[j]+\Delta$.
\end{description}

\noindent
The complicated precondition of \Tvn{update}
simply stipulates that $\Delta$ be a signed $\delta$-bit
quantity whose addition to $A[j]$ neither causes
$A[j]$ to become negative nor causes
some entry in $\sigma(A)$
to exceed $2^b-1$.
Negative values for $\Delta$ are assumed
to be represented suitably.
Following the initialization for parameters $(n,b,\delta)$,
we call $n$ the \emph{universe size},
$b$ the \emph{sum bit length} and
$\delta$ the \emph{update bit length} of a searchable
prefix-sums structure.
Although the definition does not list an operation
\Tvn{value} such that $\Tvn{value}(j)$ returns $A[j]$,
for $j=1,\ldots,n$, it can easily be derived as
$\Tvn{value}(j)=\Tvn{sum}(j)-\Tvn{sum}(j-1)$.

Assume that each inner node $u$ of a trie $T$ constructed
as described in Section~\ref{sec:trie} is equipped
with a searchable prefix-sums structure $D_u$ with
universe size $d$, where $d$ is the degree of~$u$,
sum bit length at least
$\Tceil{\log(n+1)}$ and update bit length~1.
Informally, the $i$th integer maintained by~$D_u$
will be the sum of the values ``below'' $u$'s
$i$th child.
With notation as in
Section~\ref{sec:trie}, \Tvn{rank} and \Tvn{select}
for the uncolored case
can be realized as follows:

\begin{description}
\item[\normalfont$\Tvn{rank}(\ell)$:]
Initialize a variable $s$ to 0.
Then, starting at the root $r$ of $T$ and as long as the
height of the current node $u$ is at least~2,
let $i=\Tvn{viachild}(u,\ell)$,
add $D_u.\Tvn{sum}(i-1)$ to $s$ and step
to the $i$th child of~$u$.
When a node $u$ of height~1 is reached,
return $s+D_u.\Tvn{sum}(\Tvn{viachild}(u,\ell))$.
\item[\normalfont$\Tvn{select}(k)$:]
Starting at $r$ and as long as the current node
$u$ is not a leaf,
let $i=D_u.\Tvn{search}(k)$,
subtract $D_u.\Tvn{sum}(i-1)$ from $k$
and step to the $i$th child of~$u$.
When a leaf $v$ is reached, return $\Tvn{leftindex}(v)$.
\end{description}

The operations \Tvn{insert} and \Tvn{delete} given
in Section~\ref{sec:trie} have to
be modified in minor ways.
E.g., the test $D_u.\Tvn{contains}(i)=1$
should be replaced by $D_u.\Tvn{value}(i)\ge 1$.
The details are left to the reader.
The remaining operations will not be needed.

A suitable searchable prefix-sums structure for our
purposes is the slight generalization of a data
structure due to
P\v atra\c scu and Demaine~\cite[Section~8]{PatD06}
expressed in the following lemma.
Our result differs from that of
\cite{PatD06} in that we allow
entries in the array $A$ to be zero and
distinguish between the sum bit length $b$
(which bounds the values in~$A$)
and the word size $w$
(which determines the computational power
of the RAM).
We provide a proof that is somewhat
simpler and more explicit than
that of~\cite{PatD06}.

\begin{lemma}
\label{lem:prefSum}%
There is a searchable prefix-sums structure that,
for arbitrary $n,b,\delta\in\TbbbN$ with
$\delta\le b\le w$, can be initialized
for parameters $(n,b,\delta)$
in constant time and subsequently
occupies $O(n b)$ bits and supports
\Tvn{sum}, \Tvn{search} and \Tvn{update}
in $O(1+{{\log n}/{\log(1+{w/\delta})}})$ time.
\end{lemma}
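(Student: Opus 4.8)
The plan is to realize the structure as a near-complete $B$-ary tree $T$ over the leaf set $\{1,\ldots,n\}$, in the style of Section~\ref{sec:trie}, with a single branching factor $B=\min\{n,\Theta(2+{w/\delta})\}$ at every level, so that all leaves have depth $h=O(1+{{\log n}/{\log(1+{w/\delta})}})$ and one can move between a node and its parent or its $i$th child in constant time by arithmetic on the node's (height, left-index) encoding. Identifying the leaves left to right with $1,\ldots,n$, each internal node $u$ of degree $d$ stores an \emph{aggregate array} $W_u[1\Ttwodots d]$, where $W_u[i]$ is the sum of the entries of $A$ below $u$'s $i$th child; a height-$1$ node's aggregate array is simply the block of $A$ under it. Then $\Tvn{sum}(j)$ walks from the root to leaf~$j$, adding the partial sum $W_u[1]+\cdots+W_u[i-1]$ to a running total at each node $u$ on the way, where $i=\Tvn{viachild}(u,j)$; $\Tvn{search}(x)$ walks from the root, at each $u$ finding the least $i$ with $W_u[1]+\cdots+W_u[i]$ at least the residual threshold, then subtracting $W_u[1]+\cdots+W_u[i-1]$ from the threshold and descending; and $\Tvn{update}(j,\Delta)$ adds $\Delta$ to the single entry $W_u[\Tvn{viachild}(u,j)]$ at each of the $O(h)$ nodes on the path, a change of less than $2^\delta$ in magnitude per node. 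The precondition of \Tvn{update} keeps the entries of $A$ nonnegative and their total below $2^b$, so every $W_u[i]$ fits in $b\le w$ bits, and the whole task reduces to one single-node problem: maintain $\le B$ nonnegative $b$-bit numbers under additions of magnitude less than $2^\delta$ to a single one of them, answering in $O(1)$ time on a $w$-bit word a query for a prefix sum and a query for the first position at which a prefix sum reaches a threshold.

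The heart of the proof is this one-node structure, which is in essence the dynamic partial-sums structure of P\v atra\c scu and Demaine with the minor twist of separating the sum bit length $b$ from the word length $w$. The reason a branching factor of $\Theta({w/\delta})$ is attainable, rather than the naive $\Theta({w/b})$, is that a single update perturbs each stored aggregate by less than $2^\delta$: one splits each $b$-bit aggregate into a frequently changing low part of $\Theta(\delta+\log B)$ bits that absorbs the updates and overflows only rarely, and higher parts kept at coarser granularity, overflows (each of magnitude $\pm1$) being propagated level by level. Choosing $B$ so that the packed low-part word, and each coarser word, occupies $O(w)$ bits lets every per-level manipulation --- replicating a constant to all fields by a multiplication, adding or comparing all fields in parallel, and locating the first field at which a running sum passes a threshold --- be carried out with the multiplication-and-mask primitives of Lemma~\ref{lem:word}, whose auxiliary constants (the values $1_{m,f}$ and their relatives) are here computable in $O(1)$ time, so no precomputed table is needed, and a node operation touches only $O(1)$ words. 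I would present this single-node structure explicitly, following and streamlining P\v atra\c scu and Demaine; I expect the delicate point to be checking that the overflow propagation stays $O(1)$ in the \emph{worst} case per update rather than only amortized.

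For the remaining requirements: $T$ has $O({n/B}+1)$ internal nodes, each occupying $O(Bb)$ bits for its aggregate array and the coarser auxiliary arrays, so a geometric sum gives $O(nb)$ bits overall. Since $A$ starts at $(0,\ldots,0)$, every aggregate in $T$ starts at $0$; to get a constant initialization time I would write nothing at setup and instead use the lazy-initialization technique recalled after Lemma~\ref{lem:2.12}, with a small companion structure recording which $\Theta(w)$-bit blocks have ever been written --- writes occur only along root-to-leaf paths during \Tvn{update} and touch $O(1)$ words per node, and an as-yet-unwritten block is read as all zeros, which is exactly the initial value. Because $\log n\le w$, the overhead of this technique is $O({{nb\log n}/w})=O(nb)$ bits, so the $O(nb)$ space bound and the $O(h)=O(1+{{\log n}/{\log(1+{w/\delta})}})$ time per operation are preserved; the derived $\Tvn{value}(j)=\Tvn{sum}(j)-\Tvn{sum}(j-1)$ needs no extra machinery.
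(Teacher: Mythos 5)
Your outer shell (a fixed-branching tree, walking a root-to-leaf path for \Tvn{sum}, \Tvn{search} and \Tvn{update}, lazy block initialization in the style of Lemma~\ref{lem:2.12}, and the $O(nb)$ accounting) matches the paper's reduction. But the proof stands or falls with the single-node structure, and there your sketch has a genuine gap. First, a parameter inconsistency: with $B=\Theta(w/\delta)$ and low parts of $\Theta(\delta+\log B)$ bits, the packed low-part word has $\Theta\bigl(w+({w/\delta})\log({w/\delta})\bigr)$ bits, which is not $O(w)$ when $\delta$ is small (e.g.\ $\delta=1$), so the word-parallel primitives of Lemma~\ref{lem:word} do not apply as stated; the paper sidesteps this by taking the node size $m=\Theta(\min\{\sqrt{w/\delta},n\})$, which costs nothing asymptotically since only $\log m=\Theta(\log(1+w/\delta))$ enters the height. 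Second, and more fundamentally, the low/high split with carry propagation does not explain how a node answers its queries in $O(1)$ worst case. The aggregates are $b$-bit quantities and cannot be packed, so \Tvn{sum} and \Tvn{search} need prefix sums of the high parts; but those high parts receive $\pm1$ overflow increments, and maintaining searchable prefix sums of $B$ large values under $\pm1$ updates is exactly the problem being solved, with $\delta=1$ --- the scheme as described is circular (or collapses back to rebuilding, i.e.\ to the paper's approach). Likewise, ``locating the first field at which a running sum passes a threshold'' with Lemma~\ref{lem:word} presumes the sums and the threshold residual fit in packed small fields, which they do not.

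The paper's proof resolves precisely these points with a phase-based snapshot scheme that your sketch has no counterpart for: time is divided into phases of $m$ updates; the exact prefix-sum vector $\sigma(A_{(k-1)m})$ and an atomic ranking structure for it are recomputed in the background at constant work per update; the updates of the current and previous phase are kept in packed \emph{small} vectors with offset $2h$, $h=2^{\delta+1}m$, which is where the bound $\delta$ on update size is exploited; \Tvn{sum} adds one big and two small components; and \Tvn{search} first locates the snapshot prefix sum nearest the query with the ranking structure and then ranks a truncated quantity $\Trho_h(x-x_0)$ against the packed vector $\sigma(\Trho_{2h}(A_t))$ shifted by a constant, with a dedicated argument (the chain of inequalities at the end of the paper's proof) showing the truncation cannot change the answer. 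That truncation argument, together with the background rebuild, is what delivers worst-case $O(1)$ per node; you correctly flag the worst-case-versus-amortized issue as the delicate point, but the mechanism you propose does not supply it, so the claimed bound is not established by your argument.
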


\begin{proof}
For the time being ignore the claim about constant-time initialization.
Choose $m$ as an integer with $m\ge 2$ and
$m=\Theta(\min\{\sqrt{{w/\delta}},n\})$.
As noted by P\v atra\c scu and Demaine,
it suffices to prove the lemma for
universe size at most~$m$.
This is because,
similarly as in the trie-combination method,
the overall data structure can
be organized as a tree $T$ of height
$O(1+{{\log n}/{\log m}})=
O(1+{{\log n}/{\log(1+{w/\delta})}})$, each of whose
$O({n/m})$ nodes contains a data structure for the
same problem, but for a sequence of
length at most~$m$.
Assume therefore that
the universe size is bounded by $m$
and, in fact, that it is exactly~$m$.

Let $h=2^{\delta+1}m$
and choose $f\in\TbbbN$ with
$f=O(\delta+\log m)=O(m\delta)=O({w/m})$
such that for each $a\in\TbbbN$ with $a\le 6 m h$, an integer
$x\in\{-a,\ldots,a\}$
can be encoded through the binary
representation of (the nonnegative
integer) $x+a$ in a field
of $f$ bits.
A sequence of $m$ integers encoded in this way,
called a \emph{small vector} with \emph{offset} $a$, fits in
$O(w)$ bits and can be manipulated in constant time.
In particular, provided that no overflow occurs,
$\sigma$ can be applied to a small vector
in constant time
through multiplication by $1_{m,f}$ and use of
the relation $\sigma(x)=\sigma(x+a)-\sigma(a)$.
Let a \emph{big vector} be a sequence of $m$
integers drawn from $\{0,\ldots,2^b-1\}$.
In the following, when an integer $a$ is used
in a context that requires a vector,
$a$ is shorthand for the vector $(a,\ldots,a)$.
For $a\in\TbbbN$ and
integer $x$, let
$\Trho_a(x)$ be the number in $\{-a,\ldots,a\}$
closest to $x$, i.e., $\Trho_a(x)=\min\{\max\{x,-a\},a\}$.

For simplicity, assume that the number of
calls of \Tvn{update} is infinite.
For $t=1,2,\ldots,$ if the $t$th update
is $\Tvn{update}(j,\Delta)$, briefly define $E_t$
as the sequence of length $m$ with
$E_t[j]=\Delta$ and $E_t[i]=0$ for
$i\in\{1,\ldots,m\}\setminus\{j\}$.
For arbitrary integers $s$ and $t$,
let $A_t=\sum_{i=1}^t E_i$ if $t>0$,
$A_t=0$ ($=(0,\ldots,0)$) if $t\le 0$,
$A_{s\Ttwodots t}=\sum_{i=s+1}^t E_i$ if
$0\le s<t$, and $A_{s\Ttwodots t}=0$ otherwise.
Let \emph{phase}~0 be the period of time from the
initialization until and including the
execution of the $m$th update and,
for $k=1,2,\ldots,$ let phase~$k$
be the time from the end of phase~$k-1$
until and including the execution of
the $((k+1)m)$th update.

We pretend to keep track of the number $t$
of updates executed so far;
it will be easy to see that it suffices
to know $t\bmod(2 m)$.
During phase $k$, for $k=0,1,\ldots,$
we store $A_t$ and $\sigma(A_{(k-1)m})$
as big vectors and
$A_{(k-1)m\Ttwodots k m}$,
$A_{k m\Ttwodots t}$ and
$\Trho_{2 h}(A_t)$ as small vectors with offset $2 h$.
Moreover, we have an atomic ranking dictionary
$D_{(k-1)m}$ for $\sigma(A_{(k-1)m})$.
Throughout the phase and piecemeal, interleaved with
the execution of updates, we add
$\sigma(A_{(k-1)m})$ and
$\sigma(A_{(k-1)m\Ttwodots k m})$
componentwise to obtain $\sigma(A_{k m})$
and compute an atomic ranking dictionary $D_{k m}$ for
$\sigma(A_{k m})$.
Since the latter can also be done in $O(m)$
time \cite[Corollary~8]{Hag98}, it suffices
to spend constant time per update on
this background process in order for
$\sigma(A_{k m})$ and $D_{k m}$
to be ready at the beginning of the next phase,
which is when they are needed.

To execute $\Tvn{update}(j,\Delta)$
in phase~$k$, for some $k\ge 0$,
add $\Delta$ to the $j$th components of
$A_t$ and $A_{k m\Ttwodots t}$ to obtain $A_{t+1}$
and $A_{k m\Ttwodots t+1}$, respectively,
replace the $j$th component of
$\Trho_{2 h}(A_t)$ by $\Trho_{2 h}(A_{t+1}[j])$
to obtain $\Trho_{2 h}(A_{t+1})$, and increment~$t$.
If subsequently $t\bmod m=0$ and hence $t=(k+1)m$,
prepare for phase~$k+1$ by initializing to zero
an integer variable that held $A_{(k-1)m\Ttwodots k m}$
in the phase that ends and will hold
$A_{(k+1)m\Ttwodots t}$ in the phase
that begins.

To execute $\Tvn{sum}(j)$ in phase~$k$, for some $k\ge 0$,
return the sum of the $j$th components
of $\sigma(A_{(k-1)m})$,
$\sigma(A_{(k-1)m\Ttwodots k m})$ and
$\sigma(A_{k m\Ttwodots t})$.

To support $\Tvn{search}$, it suffices to be able
to compute $\Tvn{rank}(x,\sigma(A_t))$ for arbitrary given
$x\in\{0,\ldots,2^b-1\}$.
To solve this problem
in phase~$k$, for some $k\ge 0$, let $s=(k-1)m$ and
use $D_s$ to identify a $j^*\in\{1,\ldots,m\}$
with $|x-\sigma(A_s)[j^*]|=\min\{|x-\sigma(A_s)[j]|:1\le j\le m\}$.
Then compute $x_0=\sigma(A_s)[j^*]$ and
$\widetilde{B}=
 \sigma(\Trho_{2 h}(A_t))
 -(\sigma(\Trho_{2 h}(A_t))-\sigma(A_{s\Ttwodots t}))[j^*]$
(of course, $A_{s\Ttwodots t}$ can be obtained as
$A_{s\Ttwodots k m}+A_{k m\Ttwodots t}$)
and return $\Tvn{rank}(\Trho_h(x-x_0),\widetilde{B})$.
By Lemma~\ref{lem:word}(d),
this can be done in constant time.

To see that the computation of
$\Tvn{rank}(x,\sigma(A_t))$ is correct, let
$B=\sigma(A_t)-(\sigma(A_t)-\sigma(A_{s\Ttwodots t}))[j^*]$
$=\sigma(A_t)-x_0$.
Informally, $B$ is the current state
(i.e., after $t$ updates)
of the sequence $\sigma(A)$ of prefix sums, but
``normalized'' through the subtraction of $x_0$
to have the value 0 at time $s$ and at the index~$j^*$.
Of course, instead of computing
$\Tvn{rank}(x,\sigma(A_t))$, we can just as well determine
$\Tvn{rank}(x-x_0,\sigma(A_t)-x_0)=\Tvn{rank}(x-x_0,B)$.
We cannot compute ranks in $B$ with
Lemma~\ref{lem:word}(d), however,
because $B$ may contain large values, and $x-x_0$
can also be large.
Instead of finding $\Tvn{rank}(x-x_0,B)$ directly,
we therefore compute and return $\Tvn{rank}(y,\widetilde{B})$,
where $y=R_h(x-x_0)$ and $\widetilde{B}$ can be
viewed as approximations of $x-x_0$ and $B$,
respectively, that contain only small values.
What remains to be shown is that the differences
between $B$ and $\widetilde{B}$ and between
$x-x_0$ and $y$ do not influence the result.

$B[j^*]$ and $\widetilde{B}[j^*]$ coincide and are small.
Indeed,
$|B[j^*]|=|\widetilde{B}[j^*]|=|\sigma(A_{s\Ttwodots t})[j^*]|
\le (t-s)\cdot 2^\delta<2^{\delta+1}m=h$.
$\widetilde{B}$ is defined similarly as $B$, but
where $B$ is $\sigma(A_t)$ plus a constant (vector),
$\widetilde{B}$ is $\sigma(R_{2 h}(A_t))$ plus a constant.
If we define the \emph{jump} in $B$ at $j\in\{1,\ldots,m-1\}$
as $B[j+1]-B[j]$ and the jump in $\widetilde{B}$ at~$j$
analogously, it follows that for $j=1,\ldots,m-1$,
either $B$ and $\widetilde{B}$ have the same jump
at $j$, or the jump in $\widetilde{B}$ at~$j$ is $2 h$.
We may conclude that for $j=1,\ldots,m$, if
$\widetilde{B}[j]\not=B[j]$, then $|\widetilde{B}[j]|>2 h-h=h$.
Because $|y|\le h$, we have
$\Tvn{rank}(y,\widetilde{B})=\Tvn{rank}(y,B)$.
In other words, for our purposes $\widetilde{B}$ is
a sufficiently good approximation of~$B$.

To finish the argument, we must demonstrate that
$\Tvn{rank}(y,B)=\Tvn{rank}(x-x_0,B)$.
The basic reason why this is so is
that if $x$ is far from $x_0$,
it is also far from all components of $\sigma(A_t)$,
so that even the very bad approximation $y$
of $x-x_0$ has the same rank in $B$ as $x-x_0$.
It suffices to show that if $y\not=x-x_0$, then
$|(x-x_0)-y|<|(x-x_0)-B[j]|$ for $j=1,\ldots,m$.
But if $y\not=x-x_0$
and $j\in\{1,\ldots,m\}$, then, by the choice of $j^*$,
\[
\eqalign{
|(x-x_0)-y|&=|x-x_0|-h
\le|x-\sigma(A_s)[j]|-h\cr
&\le|x-(B[j]+x_0)|+|(B[j]+x_0)-\sigma(A_s)[j]|-h\cr
&=|(x-x_0)-B[j]|+|\sigma(A_t)[j]-\sigma(A_s)[j]|-h\cr
&\le|(x-x_0)-B[j]|+(t-s)\cdot 2^\delta-h
<|(x-x_0)-B[j]|.\cr}
\]

For each node $u$ in the tree $T$, let $D_u$ be the
searchable prefix-sums structure at~$u$.
The leaves of $T$ can be identified with
the $n$ positions in the sequence of integers
maintained by the overall data structure, and when
$u$ is a node in $T$ and $v$ is a child of~$u$,
the value recorded for $v$ in $D_u$
is the sum $s_v$ of the values stored in the leaf
descendants of~$v$.
In order to achieve a constant initialization
time, we (re-)initialize $D_v$ only when $s_v$
changes from~0 to some other value.
Initializing $D_v$ involves initializing a
constant number of simple variables and
small vectors, which can certainly happen in
constant time, and initializing two big vectors
of $O(m b)$ bits each, which can be done
with the method of
Lemma~\ref{lem:2.12} using another $O(m b)$ bits.
\end{proof}

\begin{lemma}
\label{lem:p-t}%
There is a choice dictionary that, for arbitrary
$N,c,t,r\in\TbbbN$ with $r\log c=O(w)$,
can be initialized for universe
size $N$, $c$ colors and tradeoff parameters $t$ and $r$
in constant time
and subsequently occupies
$N\log c+O({{c N\log N}/{(r t)}}+(\log N)^2+1)$ bits and,
if $r=1$ or if given access to tables of $O(c^r)$ bits
that can be computed in $O(c^r)$ time
and depend only on $N$, $c$, $t$ and $r$, executes
\Tvn{color} in constant time and \Tvn{setcolor}, 
\Tvn{rank} and \Tvn{select}
in $O(t+{{\log N}/{\log w}})$ time.
\end{lemma}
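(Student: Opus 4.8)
The plan is to store the colors themselves in the space-optimal $c$-ary array of Theorem~\ref{thm:succincter}, to support \Tvn{rank} and \Tvn{select} at a coarse (block) granularity with the searchable prefix-sums structure of Lemma~\ref{lem:prefSum}, and to descend from block granularity to element granularity by table lookup (the lookup being trivial, and the tables unnecessary, when $r=1$). I assume $c\ge 2$, and I may assume $rt<N$: if $rt\ge N$ the whole universe is a single block and every operation is served in $O(t)$ time by the table-lookup layer alone, so that only the color array remains and the space bound collapses to that of Theorem~\ref{thm:succincter}.

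Set $B=rt$, partition $U=\{1,\dots,N\}$ into $M=\Tceil{N/B}$ consecutive \emph{blocks} of $B$ elements (the last possibly shorter), and partition each block into $t$ consecutive \emph{microblocks} of $r$ elements. The colors of a microblock form a tuple in $\{0,\dots,c-1\}^r$ occupying $r\Tceil{\log c}=O(w)$ bits; given this encoding and a color $j$, the precomputed tables (of $O(c^r)$ bits, unnecessary for $r=1$) return in $O(1)$ time the number of elements of color $j$ in any prefix of the microblock and the position within the microblock of the $k$th element of color~$j$. Summing over the $t$ microblocks of a block, one obtains in $O(t)$ time the number of elements of color $j$ in any prefix of a block and the $k$th element of color $j$ in a block.

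The data structure then consists of: (i) an instance of the array of Theorem~\ref{thm:succincter} holding $(\Tvn{color}(1),\dots,\Tvn{color}(N))$, with all $c_\ell=c$ (an all-equal sequence is $\epsilon$-balanced for a small fixed $\epsilon$ once $N$ exceeds a constant, small $N$ being handled directly), giving \Tvn{color} in $O(1)$ time and $N\log c+O((\log N)^2+1)$ bits; and (ii) for each $j\in\{0,\dots,c-1\}$, a searchable prefix-sums structure $D_j$ of universe size $M$, sum bit length $\Tceil{\log(N+1)}$ and update bit length~$1$, whose $i$th entry is the number of elements of color $j$ in the $i$th block. Each $D_j$ occupies $O(M\log N)=O(N\log N/(rt)+\log N)$ bits and serves its operations in $O(1+\log M/\log(1+w))=O(t+\log N/\log w)$ time; since $rt<N$ we have $\log N=O(N\log N/(rt))$, so the $c$ structures together fit in $O(cN\log N/(rt))$ bits. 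The operations: \Tvn{setcolor}$(j,\ell)$ reads the old color $j'$ of $\ell$ and, if $j'\ne j$, overwrites it, computes the block index $i$ of $\ell$, and calls $D_{j'}.\Tvn{update}(i,-1)$ and $D_j.\Tvn{update}(i,+1)$; \Tvn{rank}$_j(\ell)$ returns $D_j.\Tvn{sum}(i-1)$ plus the number of color-$j$ elements in the prefix of block $i$ up to $\ell$; \Tvn{select}$_j(k)$, for $k\le D_j.\Tvn{sum}(M)$, sets $i:=D_j.\Tvn{search}(k)$ and returns the global index of the $(k-D_j.\Tvn{sum}(i-1))$th color-$j$ element inside block $i$. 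Each operation touches $O(1)$ of the $D_j$ and does $O(t)$ work inside one block, for a total of $O(t+\log N/\log w)$.

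The point requiring care is constant-time initialization in the presence of $c$ structures $D_j$, of the initial configuration $S_0=U$, and of the fact that the array of Theorem~\ref{thm:succincter} does not self-initialize. I would handle it with the paper's initialization-on-the-fly machinery (Lemma~\ref{lem:2.12} and the discussion following it): the color array is activated block by block, its cells being zeroed only on that block's first \Tvn{setcolor} (cost $O(t)$, since a block spans $O(tw)$ bits), with the activated blocks recorded in $O(N/(rt))$ bits; each $D_j$ with $j\ge1$ is created (from the all-zero sequence, hence in $O(1)$ time by Lemma~\ref{lem:prefSum}) only when color $j$ first appears, using a liveness bit array initialized in $O(1)$ time by Lemma~\ref{lem:2.12} and $c$ statically reserved slots; and $D_0$, which must start from the vector of block sizes rather than from $(0,\dots,0)$, is obtained from a routine variant of Lemma~\ref{lem:prefSum} whose maintained sequence starts from a prescribed sequence with closed-form prefix sums, costing only an $O(1)$ additive correction throughout its code. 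This last adaptation, and the bookkeeping needed to keep every overhead inside the $O(cN\log N/(rt)+(\log N)^2+1)$ budget, is the only nonroutine part; everything else is as sketched in the overview preceding the lemma.
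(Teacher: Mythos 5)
Your three-layer plan (succinct color storage via Theorem~\ref{thm:succincter}, per-color searchable prefix sums over blocks of $rt$ elements via Lemma~\ref{lem:prefSum}, table lookup inside blocks) is the same architecture as the paper's, but there is a genuine gap where the first and third layers meet. You store the colors as $N$ individual entries (all $c_\ell=c$) of the structure of Theorem~\ref{thm:succincter}, whose interface guarantees constant-time access only to \emph{single} entries (and whose internals, the construction of Dodis et al., do not keep consecutive entries contiguously bit-packed). So the packed $r\Tceil{\log c}$-bit encoding of a microblock, which your table-lookup layer takes as given, can only be assembled entry by entry in $\Theta(r)$ time; scanning a block for \Tvn{rank} or \Tvn{select} then costs $\Theta(rt)$ instead of $O(t)$, and the claimed time bound fails for every nonconstant $r$. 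Storing the colors directly bit-packed in $\Tceil{\log c}$-bit fields would give you word-sized access but costs $N\Tceil{\log c}$ bits, losing the $N\log c$ space bound when $c$ is not a power of two. The paper resolves exactly this tension by making the entries of the Theorem~\ref{thm:succincter} array be \emph{big digits}, each encoding a group of $r'\approx r/2$ colors as a number in $\{0,\ldots,c^{r'}-1\}$, together with encoding and decoding tables that translate in constant time between a big digit and its loose bit-field representation; a range of $t$ big digits is then scanned with $t$ constant-time array reads plus constant-time table lookups. The group size of roughly $r/2$ (rather than $r$) is also what keeps all tables within the permitted $O(c^r)$ bits: a table indexed by an $r\Tceil{\log c}$-bit loose encoding has up to $2^{r\Tceil{\log c}}$ entries, which already for $c=3$ is $4^r$ and hence not $O(c^r)$, so your stated table budget is also violated as written.

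A secondary weak point is your treatment of color~$0$ at initialization. You propose a variant of Lemma~\ref{lem:prefSum} whose maintained sequence starts from the vector of block sizes ``with an $O(1)$ additive correction''; but \Tvn{search} is not a pointwise query, and the natural difference representation has negative entries, which the structure of Lemma~\ref{lem:prefSum} does not maintain, so this is not a routine modification and would need its correctness argument redone. The paper sidesteps it with a one-line trick: the color-$0$ structure stores, for each range, the number of elements of \emph{nonzero} color (the complement of the color-$0$ count), which is initially zero and therefore matches the all-zero initial state that Lemma~\ref{lem:prefSum} already provides; lazy initialization of the big digits themselves is tracked per range by a small systematic choice dictionary (Theorem~\ref{thm:systematic-2}) of negligible size. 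With these two repairs---big-digit entries with encode/decode tables, and complemented counts for color~$0$---your sketch essentially becomes the paper's proof.
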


\begin{proof}
Without loss of generality assume that $N\ge 2$.
View each of the $N$ color values to be maintained
as a \emph{small digit} in the range $\{0,\ldots,c-1\}$
and take $\rbar=\Tceil{\sqrt{{r/2}}}$ and
$r'=\rbar^2$ ($\rbar$ is introduced only for
the sake of the proof of
Theorem~\ref{thm:p}).
Partition the $N$ small digits into
$N'=\Tceil{{N/{r'}}}$ groups of $r'$
consecutive small digits each, except that the
last group may be smaller, and represent the
small digits in each group through a \emph{big digit}
in the range $\{0,\ldots,C-1\}$,
where $C=c^{r'}$, except that the last big digit
may come from a smaller range.
A natural scheme represents small digits
$a_0,\ldots,a_{r'-1}$ through the integer
$\sum_{i=0}^{r'-1}a_i c^i$, but from the
point of view of correctness, the
\emph{encoding function} can be an arbitrary
bijection from $\{0,\ldots,c-1\}^{r'}$ to
$\{0,\ldots,C-1\}$.
We realize the encoding function and its
inverse through tables $Y\Tsub E$ and $Y^{-1}\Tsub E$.
In more detail, the encoding table $Y\Tsub E$ maps
$(r'\Tceil{\log c})$-bit concatenations of
the binary representations of $r'$ small digits,
called a \emph{loose representation}
of the $r'$ small digits,
to the corresponding big digit, and the
decoding table $Y^{-1}\Tsub E$ realizes the
exact inverse mapping.
If the last group of small digits contains fewer
than $r'$ small digits, it needs separate
encoding and decoding tables.
This is easy to handle and will be
ignored in the following.

We maintain the sequence of $N'$ big
digits in an instance $D$ of the data structure of
Theorem~\ref{thm:succincter},
whose space requirements are
$N\log c+O((\log N)^2)$ bits.
Forming $N''=\Tceil{{{N'}/t}}=\Tceil{N/{(r' t)}}=O({N/{(r t)}}+1)$
\emph{ranges} $R_1,\ldots,R_{N''}$
of ${t}$ consecutive big digits each,
except that the last range may be smaller,
we initialize all big digits in a range
exactly when for the first time a
digit in the range acquires a nonzero value.
We keep track of the ranges that have been initialized
using an instance of the choice dictionary
of Theorem~\ref{thm:systematic-2} with
universe size $N''$ and therefore
negligible space requirements.

In order to support \Tvn{rank} and \Tvn{select},
we maintain for each $j\in\{0,\ldots,c-1\}$
in an instance $D_j$  of the searchable prefix-sums structure
of Lemma~\ref{lem:prefSum}, initialized with
sum bit length
$\Tceil{\log(N+1)}$ and update bit length $1$, a sequence
of $N''$ integers, the $i$th of which is the
number of occurrences of the color $j$ in the range $R_i$,
for $i=1,\ldots,N''$.
An exception concerns the color~0:
Instead of storing the number $n_{i,0}$ of
occurrences of 0 in $R_i$,
we store the complementary number $s_i-n_{i,0}$,
where $s_i$ is the number of small digits
in $R_i$ (usually $r' t$).
The reason is that $s_i-n_{i,0}$ is initially zero,
which matches the initial value provided by~$D_j$.
In the following we assume that $D_0$ is
modified to replace counts of occurrences
communicated to and from a caller by their
complementary numbers.
The number of bits needed for $D_0,\ldots,D_{c-1}$
is $O({{c N\log N}/{(r t)}}+\log N)$.

To execute \Tvn{color}, we obtain the
relevant big digit from $D$ and use $Y^{-1}\Tsub E$
to convert it to the corresponding loose
representation, after which answering
the query is trivial.
The realization of \Tvn{setcolor} is similar,
except that a call $\Tvn{setcolor}_j(\ell)$
must additionally call \Tvn{update} in
$D_{j_0}$ and $D_j$, where $j_0$ is the color
of $\ell$ just before the call under consideration.
For $r$ larger than a constant---the only case
in which $Y\Tsub E$ and $Y^{-1}\Tsub E$ are
actually needed---the tables occupy
$O(2^{\lceil\log c\rceil r'}r\log c)=O(c^r)$ bits
and can, if realized according to
the natural scheme discussed above, be computed in $O(c^r)$ time.

To execute $\Tvn{rank}(\ell)$,
where
$\ell\in\{1,\ldots,N\}$,
we first compute $i=\Tceil{{\ell/{(r't)}}}$ and
$m=\ell-(i-1)r't$ and find $j=\Tvn{color}(\ell)$.
The value to be returned is
$D_j.\Tvn{sum}(i-1)$ plus the number
of occurrences
of the color $j$ among the $m$ first small digits
of $R_i$.
We compute the latter quantity by obtaining
the at most $t$ big digits of $R_i$ from $D$ one
by one
and processing each in constant time as follows while
accumulating a count of the number of
relevant occurrences of $j$ seen:
After obtaining the loose representation of the
big digit at hand with the aid of $Y^{-1}\Tsub E$,
we use the algorithm of
Lemma~\ref{lem:word}(c) to reduce the problem
of counting the number of relevant occurrences
of~$j$ in the loose representation
to one of counting the total number of
1s in at most $r'$ fields,
each of which occupies $\Tceil{\log c}$ bits
and holds a value of either 0 or~1.
Finally the latter problem is solved
by lookup in a table $Y\Tsub R$.
For $r$ larger than a constant,
$Y\Tsub R$ occupies
$O(2^{\lceil\log c\rceil r'}r\log r)=O(c^r)$ bits
and can be computed in $O(c^r)$ time.

To execute $\Tvn{select}(j,k)$, where $j\in\{0,\ldots,c-1\}$ and
$k\in\{1,\ldots,N\}$,
we first compute $i=D_j.\Tvn{search}(k)$.
If $i\le N''$,
the $k$th occurrence of $j$ is in $R_i$,
and the value to be returned is the position
of the $m$th occurrence of $j$ in $R_i$,
where $m=k-D_j.\Tvn{sum}(i-1)$.
If $i=N''+1$, $k$ is larger than the total number
of occurrences of $j$, and we return~0.
To find the $m$th occurrence of $j$
in $R_i$ if $i\le N''$, we proceed similarly
as in the case of $\Tvn{rank}$ and
obtain the big digits of $R_i$ one by one
from~$D$.
Using $Y^{-1}\Tsub E$ and $Y\Tsub R$ and spending
$O(t)$ time, it is easy
to identify the big digit that contains the
$m$th occurrence of $j$ in $R_i$ and the number of
occurrences of $j$ before that big digit.
Again using $Y^{-1}\Tsub E$
and the algorithm of Lemma~\ref{lem:word}(c)
to replace occurrences of $j$ by occurrences
of~1 in fields of $\Tceil{\log c}$ bits,
we finish the computation by consulting an
appropriate table $Y\Tsub S$ that,
for $r$ larger than a constant,
also occupies $O(2^{\lceil\log c\rceil r'}r\log r)=O(c^r)$ bits
and can be computed in $O(c^r)$ time.

For $j\in\{0,\ldots,c-1\}$,
each operation on $D_j$ runs in
$O(1+{{\log N''}/{\log w}})=O(1+{{\log N}/{\log w}})$ time, 
and every consultation of $D$
takes constant time.
Therefore \Tvn{color} runs in constant time
and every other operation
runs in $O(t+{{\log N}/{\log w}})$ time.
\end{proof}

\begin{theorem}
\label{thm:p}%
For all fixed $\epsilon>0$,
there is a choice dictionary that, for arbitrary
$n,c,t\in\TbbbN$,
can be initialized for universe
size $n$, $c$ colors and tradeoff parameter $t$
in constant time
and subsequently occupies
$n\log_2 c+O({{c n\log c\log(1+t\log n)}/{(1+t\log n)}}+n^\epsilon)$
bits and executes \Tvn{color} in
constant time and \Tvn{setcolor},
\Tvn{p-rank} and \Tvn{p-select} (and hence \Tvn{choice}
and \Tvn{uniform-choice})
and, given $O(c\log n)$ additional bits,
robust iteration in $O(t)$ time.
\end{theorem}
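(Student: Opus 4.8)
The plan is to combine the three ingredients announced at the start of this subsection: within-\emph{segment} rank-select support from Lemma~\ref{lem:p-t}, the wasteful but constant-time choice dictionary of Theorem~\ref{thm:nlogn}, and the searchable prefix-sums structure of Lemma~\ref{lem:prefSum}; the gluing is essentially the trie-combination method of Section~\ref{sec:trie} with a single non-trivial inner level. Write $L=1+t\log n$ and partition $U=\{1,\ldots,n\}$ into $g=\Tceil{n/\sigma}$ consecutive \emph{segments} of $\sigma$ elements each, where $\sigma$ is a suitable fixed polynomial in $L$ and $\log n$ (so that $\log\sigma=O(\log L)=O(\log t+\log\log n)$ and $\log\sigma/\log w=O(t)$); roughly $\sigma=\Theta(L(\log n)/((\log c)\log L))$ suffices when $t$ is small or moderate, and when this would exceed $n$ one simply takes a single segment. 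Each segment carries an instance of the data structure of Lemma~\ref{lem:p-t} initialized for universe size $\sigma$, $c$ colors, tradeoff parameter $t$ and second parameter $r=\Theta((\epsilon\log n)/\log c)$; all these instances share one copy of the tables needed by Lemma~\ref{lem:p-t}, which occupy $O(c^r)=O(n^\epsilon)$ bits. Given the tables, each within-segment structure supports local \Tvn{color} in constant time and local \Tvn{setcolor}, \Tvn{rank} and \Tvn{select} in $O(t)$ time at a redundancy of $O(c\sigma\log\sigma/(rt)+(\log\sigma)^2+1)$ bits; the $\sigma\log c$ main terms sum to $n\log c$ and the redundancies, by the choice of $\sigma$, sum to $O(cn\log c(\log L)/L+n^\epsilon)$. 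Checking that $\sigma$ and $r$ can be chosen to meet the space, time and table-size budgets simultaneously is the longest, though routine, part of the argument.

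To lift local ranks and selections to the whole universe, for each color $j\in\{0,\ldots,c-1\}$ I would maintain two auxiliary structures over the segments, indexed by the \emph{$j$-weight} of a segment, i.e., the number $\kappa\in\{0,\ldots,\sigma\}$ of its elements of color $j$. The first is a $(\sigma+1)$-color choice dictionary of Theorem~\ref{thm:nlogn} with universe size $g$ in which a segment has ``color'' equal to its $j$-weight; this gives, in constant time, a segment's $j$-weight (\Tvn{color}), a ranking of a segment among the segments of equal $j$-weight and its inverse (\Tvn{p-rank} and \Tvn{p-select}), robust iteration per weight, and---by the precise bound of Theorem~\ref{thm:nlogn}---\Tvn{setcolor} in time $O(1)$ when a $j$-weight changes by one. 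The second is a searchable prefix-sums structure of Lemma~\ref{lem:prefSum} with universe size $\sigma+1$, sum bit length $\Tceil{\log(n+1)}$ and update bit length $O(\log\sigma)$, maintaining the sequence whose entry for weight $\kappa$ is $\kappa$ times the number of segments of $j$-weight $\kappa$ (the number of color-$j$ elements residing in such segments); since its universe is polylogarithmic while its update bit length $O(\log\sigma)$ is far below $\log(1+w)$, all its operations run in constant time. Finally, for each $j$ the set of segments of positive $j$-weight is kept in a colorless choice dictionary supporting constant-time robust iteration. Over all $c$ colors these occupy $O(cg\log g+c\sigma\log n)=O(cn\log c(\log L)/L+n^\epsilon)$ bits, again by the choice of $\sigma$.

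Fix a color $j$ and let $\pi_j$ number $S_j$ by ordering segments first by $j$-weight, then by the internal order of the $j$th Theorem~\ref{thm:nlogn} structure, and within a segment by local \Tvn{rank}. Then \Tvn{color}$(\ell)$ queries $\ell$'s within-segment structure; \Tvn{p-select}$(j,k)$ uses the $j$th prefix-sums structure to find the weight $\kappa$ and the rank $m$ within weight $\kappa$ of the $k$th element of $S_j$, then---every segment of $j$-weight $\kappa$ contributing exactly $\kappa$ elements---retrieves the $\Tceil{m/\kappa}$th such segment by a constant-time \Tvn{p-select} on the $j$th Theorem~\ref{thm:nlogn} structure and its $\bigl(m-(\Tceil{m/\kappa}-1)\kappa\bigr)$th color-$j$ element by a local \Tvn{select}, all in $O(t)$; and \Tvn{choice}$(j)$ and \Tvn{uniform-choice}$(j)$ reduce to $\Tvn{p-select}(j,1)$ and $\Tvn{p-select}(j,\Tvn{random}(|S_j|))$, with $|S_j|$ read off as the full prefix sum. \Tvn{p-rank}$(\ell)$ inverts this: with $j=\Tvn{color}(\ell)$, $Q$ the segment of $\ell$ and $\kappa$ its $j$-weight, compute the local rank $m'$ of $\ell$ in $Q$, the rank $s'$ of $Q$ among segments of $j$-weight $\kappa$ via \Tvn{p-rank} on the $j$th Theorem~\ref{thm:nlogn} structure, and return the prefix sum over weights below $\kappa$ plus $(s'-1)\kappa+m'$. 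Since $\pi_j$ depends only on the $j$-weights, the local ranks and the internal orders of the Theorem~\ref{thm:nlogn} structures, it is fixed between calls of \Tvn{setcolor}, as required. A call $\Tvn{setcolor}(j,\ell)$ with $\ell$ of current color $j_0\not=j$ performs one local \Tvn{setcolor} in $\ell$'s segment $Q$ ($O(t)$), lowering $Q$'s $j_0$-weight and raising its $j$-weight each by one; for each of $j_0$ and $j$ it then moves $Q$ between two adjacent weights in the corresponding Theorem~\ref{thm:nlogn} structure ($O(1)$), performs two \Tvn{update}s in the corresponding prefix-sums structure ($O(1)$), and updates the corresponding positive-weight dictionary---altogether $O(t)$. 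Robust iteration over $S_j$ iterates robustly over the segments of positive $j$-weight and, within each, robustly over its color-$j$ elements (a capability easily added to the within-segment structure), holding $O(\log n)$ bits of state per iteration, hence $O(c\log n)$ bits for $c$ concurrent iterations.

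It remains to secure constant-time initialization. The structures of Theorem~\ref{thm:nlogn}, Lemma~\ref{lem:prefSum} and Lemma~\ref{lem:p-t} (apart from its shared tables) all initialize in constant time, and the memory for a segment's within-segment structure together with the $c$ slots it uses in the auxiliary structures is allocated lazily when the segment first acquires an element of nonzero color, exactly as in the trie-combination method. The only object not preparable in constant time is the $O(n^\epsilon)$-bit table set of Lemma~\ref{lem:p-t}; as in the proof of Theorem~\ref{thm:succincter}, it is built by a background process interleaved with the first $\Theta(n^\epsilon)$ operations, during which the at most $\Theta(n^\epsilon)$ segments touched so far are served by a table-free fallback (say, an instance of the choice dictionary of Theorem~\ref{thm:nlogn} per touched segment, which is wasteful but touches only $\Theta(n^\epsilon)$ segments) and then migrated, $\Theta(1)$ per operation, to the Lemma~\ref{lem:p-t} representation once the tables are ready. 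I expect the main obstacles to be this deamortized, memory-overlapping table construction and the careful choice of $\sigma$; the one genuinely new idea---and what makes $O(t)$-time updates possible despite the $\Omega((\log n)/\log w)$ lower bound for dynamic exact \Tvn{rank}/\Tvn{select}---is the weight-class decomposition, which replaces a prefix-sum query over $\Theta(n/\sigma)$ segments by one over only polylogarithmically many weights plus a constant-time arbitrary-bijection query answered by Theorem~\ref{thm:nlogn}.
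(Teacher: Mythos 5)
Your construction is, in all essentials, the one in the paper: segments maintained by Lemma~\ref{lem:p-t} with shared $O(n^\epsilon)$-bit tables, for each color a Theorem~\ref{thm:nlogn} instance over the segments with the $j$-weights playing the role of colors (giving constant-time weight queries, weight-class rank/unrank and $O(1)$ updates when a weight changes by one), a per-color searchable prefix-sums structure of Lemma~\ref{lem:prefSum} indexed by weight, the same three-step composition for \Tvn{p-rank}/\Tvn{p-select}, and lazy allocation for constant-time initialization. The one place where what you wrote would actually fail is the step you call routine: the choice of the segment size and of the cutoff to a single segment. The paper takes $N=\Theta(t(\log n)^2)$ and falls back to a single application of Lemma~\ref{lem:p-t} already when $t\ge n^{1/3}$; you take $\sigma\approx L\log n/((\log c)\log L)$ and fall back only when $\sigma$ would exceed $n$, i.e.\ only for $t$ of order roughly $n/\mathrm{polylog}(n)$. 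In the intermediate regime the per-color auxiliary structures do not fit: their cost is $\Theta(c\sigma\log n)$ bits (your own accounting), which for $c=2$ and $t=n^{0.9}$ is $\Theta(n^{0.9}(\log n)^2)$ bits against an allowance of $O(n^{0.1}+n^\epsilon)$; the summed per-segment $(\log\sigma)^2$ terms also overshoot by a $\Theta(\log n)$ factor there. The repair is exactly the paper's: a single segment already meets the bound whenever $\log(t\log n)=\Theta(\log n)$, so the cutoff belongs at $t=n^{\Omega(1)}$, and for $t<n^{1/3}$ the choice $N=\Theta(t(\log n)^2)$ makes every auxiliary term $O(cn/(t\log n))$.

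Your handling of the tables also diverges from the paper and needs one extra safeguard. You build the Lemma~\ref{lem:p-t} tables by a background process with a per-segment fallback and a later migration, in the style of Theorem~\ref{thm:succincter}. That can be made to work space-wise, but migration changes the within-segment representation and hence the ordering that underlies $\pi_j$; if the background process is advanced during query operations, \Tvn{p-rank}/\Tvn{p-select} may return inconsistent values between two calls with no intervening \Tvn{setcolor}, violating the fixed-bijection requirement. You must therefore advance migration only inside update operations (or argue the migration preserves the ordering). The paper avoids migration altogether: the encoding/decoding tables are filled in lazily, with codes handed out in order of first appearance as in Lemma~\ref{lem:2.12}, and the rank/select helper tables are maintained incrementally, one constant-time entry per color change, so no hybrid phase and no fallback structure are needed.
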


\begin{proof}
For the time being ignore the claim about
constant-time initialization and assume
the tables $Y\Tsub E$, $Y^{-1}\Tsub E$,
$Y\Tsub R$ and $Y\Tsub S$
of the previous proof
to be available.
Let $K$ be an integer constant with $K\ge{1/\epsilon}$.
If $c^K\ge n$ choose $r=1$.
Otherwise let $q={{(\log n)}/{(K\log c)}}\ge 1$
and choose $r$ as a positive integer with
$r\le q$, but $r=\Omega(q)$.
For $t\ge n^{1/3}$, the claim follows from the previous
lemma, used with $N=n$. 
For $t<n^{1/3}$
we use the construction 
shown
in
Fig.~\ref{fig:select}.

\begin{figure}
\begin{center}
\epsffile{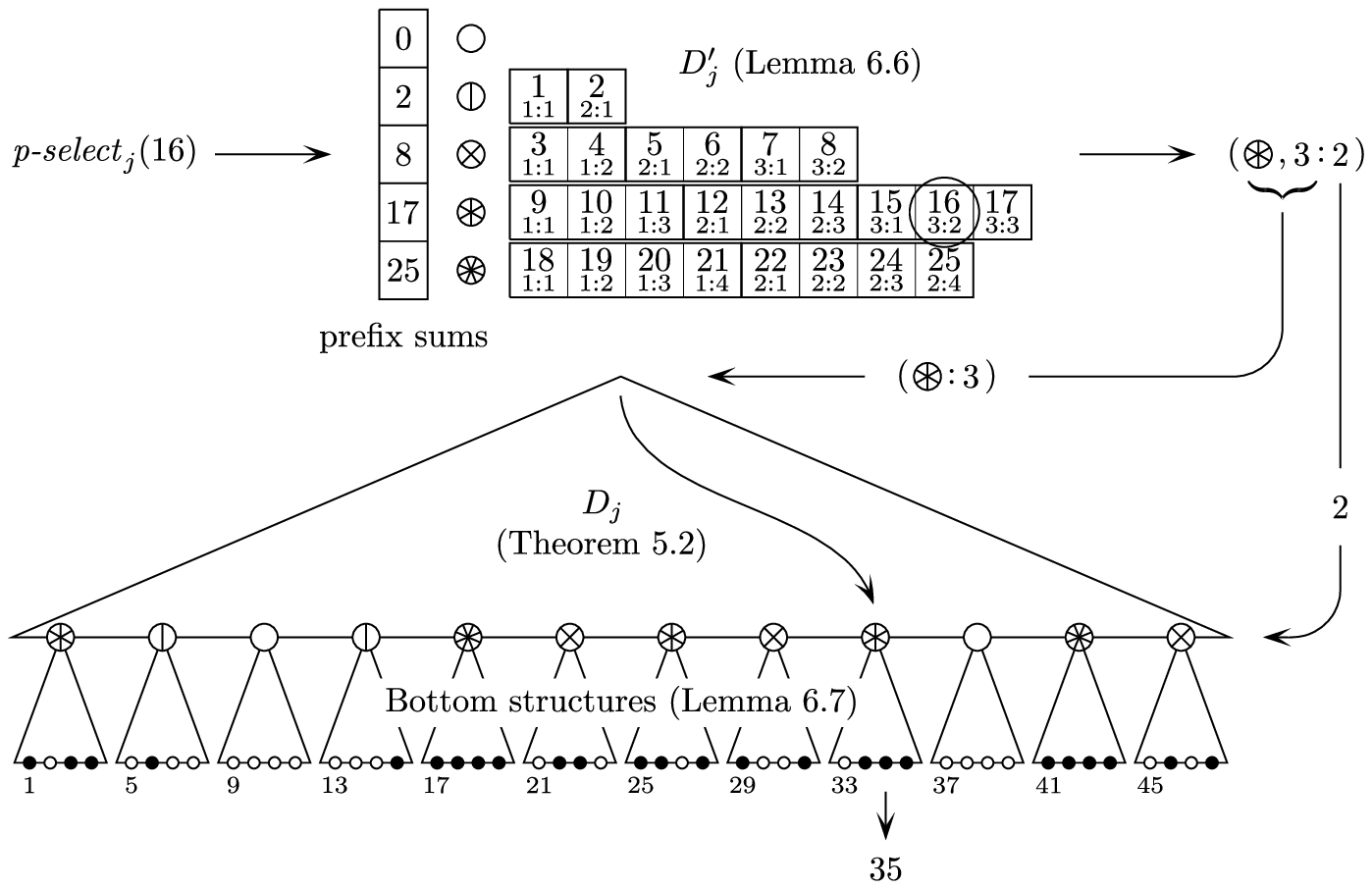}
\end{center}
\caption{An example evaluation of $\Tvn{p-select}_j$
for some color $j$ with
the combined data structure of Theorem~\ref{thm:p}.
First the argument of $\Tvn{p-select}_j$, 16,
is translated by $D'_j$
to the triple $(3,3,2)$ 
(shown as \Tmyw)
consisting of the relevant
weight, $\Tmys=3$, the index, 3, of the relevant segment of
weight 3 among all segments of weight 3, and the
index, 2, of the relevant element within that segment.
Then the relevant segment is identified with
the aid of $D_j$, and finally the relevant
element in that segment is located
with the corresponding bottom structure,
expressed in the form of
its global index, 35, and returned.
In the interest of clarity,
the figure assumes that the $\Tvn{p-select}_3$
function of $D_j$ in fact coincides
with $\Tvn{select}_3$.}
\label{fig:select}
\end{figure}

Computing $N$ as a positive integer with
$N=\Theta(t(\log n)^2)$, we partition the $n$
color values to be maintained into
$m=\Tceil{{n/N}}$ \emph{segments} of $N$
color values each, except that the last segment
may be smaller, and maintain each segment in an
instance of the data structure of
Lemma~\ref{lem:p-t} called a
\emph{bottom structure}.
The total number of bits occupied by all
bottom structures is
$m(N\log c+O({{c N\log N}/{(r t)}}+(\log N)^2+1))
=n\log c+O({{c n\log(t\log n)}/{(r t)}})
=n\log c+O({{c n\log c\log(t\log n)}/{(t\log n)}})$,
plus $O(c^r)=O(n^\epsilon)$ bits
for shared tables.

For each color $j\in\{0,\ldots,c-1\}$,
define the \emph{$j$-weight} of each segment
as the number of elements of color~$j$
in the segment.
We maintain the $j$-weights of all segments in an
instance $D_j$
of the data structure of Theorem~\ref{thm:nlogn},
with the $j$-weights
playing the role of the colors in~$D_j$.
Thus $D_j$ is initialized for universe size $m$
and $N+1$ colors.
With $a_i$ equal to the number of
elements of color~$j$ in segments of $j$-weight~$i$,
for $i=0,\ldots,N$, we also maintain the sequence
$(a_0,\ldots,a_N)$ in an instance $D'_j$ of the
searchable prefix-sum structure of Lemma~\ref{lem:prefSum},
initialized with sum bit length
$b=\Tceil{\log(n+1)}$
and update bit length
$\delta=\Tceil{\log(N+1)}$.
As in the previous proof, $D_0$ and $D'_0$ must
be treated slightly differently.
The total number of
bits occupied by
$D_0,D'_0,\ldots,D_{c-1},D'_{c-1}$ is
$O(c(m+N)\log(m+N)+c N\log n)=O({cn/{(t\log n)}})$.
As described at the end of Section~\ref{sec:trie},
we use an additional choice dictionary $D^*$ with
universe size $m+c$ and therefore
negligible space requirements to keep track
of and carry out the initialization of
the bottom structures and
$D_0,D'_0,\ldots,D_{c-1},D'_{c-1}$
as appropriate.

A \Tvn{color} query can be answered in
constant time by the relevant bottom structure.
Because $t\ge\log n$ or $\log N=O(\log w)$,
every operation of a bottom structure
executes in $O(t)$ time.
When a call of \Tvn{setcolor} changes the
color of an element, from $j_0$ to $j$, say,
this can be recorded in the relevant
bottom structure in $O(t)$ time, after which
the update must be reflected in $D_{j_0}$,
$D_j$, $D'_{j_0}$ and $D'_j$.
Since each of the two weight changes is
by 1 or $-1$, 
each of the updates of $D_{j_0}$ and $D_j$
can happen in constant time---from the
perspective of $D_{j_0}$ and $D_j$,
a color changes into a neighboring color.
Similarly, in each of $D'_{j_0}$ and $D'_j$,
the update changes two values in the sequence
maintained, each by at most~$N$.
A change of this magnitude is covered by the
update bit length of $D'_{j_0}$ and $D'_j$,
and the update can be executed
in $O(1+{{\log N}/{\log(1+{w/\delta})}})=
O(t)$ time.

Consider a call $\Tvn{p-select}_j(k)$
with $j\in\{0,\ldots,c-1\}$ and
$k\in\{1,\ldots,n\}$
(Fig.~\ref{fig:select}).
In the sequence $(a_0,\ldots,a_N)$ maintained
by $D'_j$, each element $a_i$ can be thought of
as representing $a_i$ elements of the top-level
universe $U=\{1,\ldots,n\}$, namely precisely those
that have color $j$ and
are located in segments of $j$-weight $i$.
In particular, $a_i$ is always a multiple of~$i$.
Provided that $k\le\sum_{i=0}^N a_i$,
$k$ designates a particular element $\ell\in U$
of color $j$ in a natural way:
First $i=D'_j.\Tvn{search}(k)$ selects
a particular $j$-weight, $i$, as the weight of~$\ell$.
Then $p=k-D'_j.\Tvn{sum}(i-1)$ is the index of
$\ell$ in the sequence of all elements of $U$
of color $j$ in segments of $j$-weight $i$, and finally
$q=\Tceil{{p/i}}$ is the index of the segment that contains~$\ell$,
among those of $j$-weight $i$,
and $p-(q-1)i$ is the index of $\ell$
within that segment.
Here ``index'' is to be understood as relative
to the orders imposed by the operation
$\Tvn{p-select}_i$ in $D_j$ and the
operation $\Tvn{select}_j$ in the
relevant bottom structure.
Altogether, the top-level call
$\Tvn{p-select}_j(\ell)$ reduces to one call of
each of \Tvn{search} and \Tvn{sum} in $D'_j$,
one call of \Tvn{p-select} in $D_j$, and one
call of \Tvn{select} in a bottom structure.
It can therefore be executed in $O(t)$ time.

To execute $\Tvn{p-rank}(\ell)$ for
$\ell\in U$, we first consult the relevant
bottom structure to find the color $j$
of $\ell$ and the index $k$ of $\ell$
among the elements of color $j$ in its segment $R$.
Then $D_j$ is queried for the $j$-weight $i$ of $R$ 
and the index $q$ of $R$ among the segments
of $j$-weight~$i$.
Finally the return value is obtained as
$D'_j.\Tvn{sum}(i-1)+(q-1)i+k$.
The procedure works in $O(t)$ time.

To equip the data structure with robust iteration,
we ``plant'' $c$ additional
instances of the choice dictionary of
Theorem~\ref{thm:nlogn}, one for each
color, on top of the bottom structures and
appeal to the general
trie-combination method of Section~\ref{sec:trie}.

Let us now drop the assumption that the
tables $Y\Tsub E$, $Y^{-1}\Tsub E$, $Y\Tsub R$
and $Y\Tsub S$ are available for free.
As in the proof of Lemma~\ref{lem:p-t},
define $\rbar=\Tceil{\sqrt{{r/2}}}$ and $r'=\rbar^2$.

Recall that
the task of $Y\Tsub E$ is to
map loose representations of sequences of
$r'$ small digits to big digits in an arbitrary
bijective manner and that $Y^{-1}\Tsub E$
should realize the inverse mapping.
We compute $Y\Tsub E$ and $Y^{-1}\Tsub E$
in a lazy fashion that combines techniques
used already in the proofs of
Lemmas \ref{lem:2.12} and~\ref{lem:permutation}.
We begin by setting $Y\Tsub E[0]:=0$
and $Y^{-1}\Tsub E[Y\Tsub E[0]]:=0$ and initializing
an integer $\mu$ to 0.
Subsequent loose representations are mapped to
the big digits $1,2,\ldots$ in the order in which
they present themselves to the encoding table.
More precisely, in order to compute
the big digit corresponding
to a loose representation $a$, we first check
whether $a$ was mapped previously in the same manner.
This is the case if $0\le Y\Tsub E[a]\le\mu$
and $Y^{-1}\Tsub E[Y\Tsub E[a]]=a$.
If so, the big digit corresponding to $a$
is simply $Y\Tsub E[a]$.
Otherwise $\mu$ is incremented, and the new
value of $\mu$ becomes the big digit corresponding
to~$a$, a fact recorded by executing
$Y\Tsub E[a]:=\mu$ and
$Y^{-1}\Tsub E[Y\Tsub E[a]]:=a$.
It is easy to see that whenever an entry in
$Y^{-1}\Tsub E$ is inspected by the data
structure of Lemma~\ref{lem:p-t}, it
has already been computed
(only encoded values are decoded).

$Y\Tsub R$ and $Y\Tsub S$ are also
provided in a lazy fashion, but present minor
additional technical difficulties.
We in fact realize $Y\Tsub R$ and $Y\Tsub S$
not as tables, but as constant-time functions
that carry out two table lookups each.

Recall that $Y\Tsub R$ operates on ``binarized'' loose
representations of big digits, ones in which
all occurrences of a color $j$ of interest have been
replaced by 1 and all occurrences of colors other than~$j$
have been replaced by~0, with each such value stored
in a field of $\Tceil{\log c}$ bits.
Correspondingly, define a \emph{big vector} to be
a sequence of $r'$ fields, each of $\Tceil{\log c}$
bits and containing a value drawn from $\{0,1\}$,
and view a big vector as composed of $\rbar$
\emph{blocks} of $\rbar$ fields each.
After a slight redefinition, the task of $Y\Tsub R$
is to map each pair $(a,p)$, where $a$ is a big
vector and $p\in\{1,\ldots,r'\}$, to the
sum of the $p$ first fields in~$a$.
We divide this task into two subtasks:
sum the fields in the first $i-1$ blocks in $a$,
where $i=\Tceil{{p/{\rbar}}}$;
and sum the first $p-(i-1)\rbar$
fields in the $i$th block in~$a$.

The first subtask is solved with a table $Y^{(1)}\Tsub R$:
For each big vector $a$, $Y^{(1)}\Tsub R[a]$ is the
sequence $(n_1,\ldots,n_{\rbar})$, where $n_i$ is the
sum of the fields in the $i$ first blocks in~$a$,
for $i=1,\ldots,\rbar$.
Thus $Y^{(1)}\Tsub R$ is a table of sequences of prefix sums.
Note that each sequence is of
$O(\rbar\log r')=O(w)$ bits, so that it can be handled
in constant time
(this is the reason for introducing $\rbar$).
Each use of the table needs only a single prefix sum that
must be picked out from the full sequence.
This organization of the table ensures that it can be
computed in a lazy fashion:
Each color change leads to at most two new big vectors,
the entry in $Y^{(1)}\Tsub R$ of each of which can be
computed in constant time using word parallelism
from an old entry.
The second subtask is handled in a very similar way
using a second table $Y^{(2)}\Tsub R$.

The task of $Y\Tsub S$ is to map each pair $(a,k)$,
where $a$ is a big vector and
$k\in\{1,\ldots,r'\}$, to the position of
the $k$th 1 in $a$, if any.
Again the task is divided into two subtasks,
each of which is handled in constant time. 
For the first subtask, we find the number $i$
of the block in $a$ that contains the $k$th 1---assume
for simplicity that there is such a block---by
computing
$i=\Tvn{rank}(k-1,Y^{(1)}\Tsub R[a])+1$ with the algorithm of
Lemma~\ref{lem:word}(d).
Let $n_{i-1}$ be the $(i-1)$th number in the
sequence $Y^{(1)}\Tsub R[a]$ (0 if $i=1$).
For the second subtask, we have to
locate the $(k-n_{i-1})$th 1 in the $i$th
block of~$a$.
This can be done in a similar way
using $Y^{(2)}\Tsub R$ in place of $Y^{(1)}\Tsub R$.

One may remark that the $O(c\log n)$ bits required to
carry out robust iteration are already contained in
the bound of the theorem except in the extreme case
$t=\Omega({n/{\log n}})$.
\end{proof}

If only \Tvn{p-select} and not \Tvn{p-rank} is
to be supported (e.g., if the only goal is to
realize the operation \Tvn{uniform-choice}),
it is possible to avoid the use of Lemma~\ref{lem:prefSum}
for $t=(\log n)^{O(1)}$.
In the context of Theorem~\ref{thm:p} and
with $N$ and $\delta$ defined as in its proof,
the ``bottom'' instances of the data structure of
Lemma~\ref{lem:prefSum}
(those incorporated, via Lemma~\ref{lem:p-t}, in
the bottom structures in the proof of
Theorem~\ref{thm:p} and in Fig.~\ref{fig:select})
can easily be replaced
by tries of constant height
of data structures that maintain the prefix sums
directly, realize \Tvn{update}
via a multiplication by
$1_{N,\delta}$, two shifts and an addition,
and execute \Tvn{search} with the algorithm
of Lemma~\ref{lem:word}(d).
In slightly greater generality, this method
yields a constant-time searchable prefix-sums structure
that maintains a sequence of $N+1$ integers,
each drawn from $\{0,\ldots,N\}$, under
arbitrary updates of single sequence elements.
Let us call such a structure an
\emph{$N$-structure}.

For the ``top'' instances $D'_j$ in the proof of
Theorem~\ref{thm:p} and in Fig.~\ref{fig:select},
avoiding Lemma~\ref{lem:prefSum} is more involved.
We sketch the construction.
The essential task of a top instance
$D'_j$ can be viewed as that of
maintaining a set of $s\le n$ indistinguishable
items, each with a weight in $\{0,\ldots,N\}$
(put differently, a multiset of weights),
under insertion and deletion of some (arbitrary) item
with a given weight and
an operation \Tvn{p-select} that maps each argument
$k\in\{1,\ldots,s\}$ to the pair $(i,p)$,
where $i$ is the weight of the
$k$th item and $p$ is its index 
within the set of items of weight~$i$,
for some ordering of the items.
A first solution to this problems stores the
items of weight $i$ in a doubly-linked list $L_i$,
for $i=0,\ldots,N$, and marks each list item
with its weight and
its distance to the end of its list.
The $N+1$ lists are stored compactly together in an
array $A$ of $s$ cells, each of $\Theta(\log n)$ bits,
and the positions in $A$ of the first items in each list
are recorded in a second array.
A new item of weight $i$
is stored in the first free cell in~$A$
and inserted at the beginning of $L_i$
and computes its distance-to-end value
as one more than that
of the formerly first item in~$L_i$.
To delete an item of weight $i\in\{0,\ldots,N\}$,
we first swap the first item in $L_i$ with the
item stored in the last used cell in~$A$
and then delete it,
which does not upset
the distance-to-end value of any other item.
To execute $\Tvn{p-select}(\ell)$, simply return
the pair of
the weight and one more than the distance-to-end value of the
item in $A[\ell]$.

Assume $n\ge 2$.
In order to reduce the space requirements
per color
from $O(n\log n)$ to $O({n/{(t\log n)}})$,
we aggregate the $s$ items into \emph{superitems}
of $N$ items of a common weight each, with up to $N-1$ items
left over for each weight in $\{0,\ldots,N\}$.
We store the numbers of left-over items
for each weight in an
$N$-structure $D\Tsub L$.
The superitems are maintained in $N+1$ lists
as described above.
Since their number is $O({n/N})$, the total
number of bits used is indeed $O({n/{(t\log n)}})$.
Consider an execution of $\Tvn{p-select}_j(k)$,
where $j$ is the color under consideration,
and let $n\Tsub L=D\Tsub L.\Tvn{sum}(N)$
be the total number of left-over items.
If $k\le n\Tsub L$,
compute $i=D\Tsub L.\Tvn{search}(k)$
and return the pair
$(i,k-D\Tsub L.\Tvn{sum}(i-1))$ as for the original
top-level structure $D'_j$.
Otherwise, with $k'=k-n\Tsub L$,
let $(i,p)$ be the
pair returned by the list-based structure,
called with argument $\Tceil{{{k'}/N}}$,
and return the pair $(i,(p-\Tceil{{k'}/N})N+k')$.
Thus the left-over items are numbered before
the items in superitems, and the global number of
an item in a superitem is $n\Tsub L$ plus $N$ times the
number of superitems before
its own superitem plus its number
within the superitem.

\section{Nonsystematic Choice Dictionaries}
\label{sec:nonsystematic}%

In this section we describe our most space-efficient
but also most complicated choice dictionaries.
We first consider the (somewhat easier) case
in which the number $c$ of colors is a power of~2---until
and including Theorem~\ref{thm:unsystematic-f}---and
subsequently detail the changes necessary to cope
with general values of~$c$.

As the reader may recall from the introduction,
the game is basically one of squeezing navigational
information into the leaves of a tree.
Lemma~\ref{lem:j-free} below describes a leaf that
can be in either the \emph{standard representation},
which offers no potential for storing
additional information,
or the \emph{$\jj$-free representation} for some
color $\jj$ that happens not to be represented
at the leaf.
In the latter case, information pertaining to the
tree path that ends at the leaf can be stored
in the leaf together with the usual information
kept there.
The proof of the central Lemma~\ref{lem:smalltree} describes how
to combine many such leaves to obtain a
tree that supports the operations \Tvn{color},
\Tvn{setcolor} and \Tvn{successor}.
We first address the overall data organization of the tree
and then discuss how to navigate in the tree,
after which the implementation of the query operations
\Tvn{color} and \Tvn{successor} is fairly
straightforward.
The final part of the proof of
Lemma~\ref{lem:smalltree} describes how to
re-establish the data-representation invariants
of the tree
after a call of the update operation \Tvn{setcolor}.
Lemma~\ref{lem:unsystematic-tf} essentially shows
how to put many such trees next to each other
to cover a larger universe, and
Theorem~\ref{thm:unsystematic-f} finally obviates
the need for precomputed tables.

In the following, let $f$ and $t$ be given
positive integers, take $w'={w/f}$ and $c=2^f$, assume
that $d={w/{(2 c^2 f t)}}$ is an integer and at
least~2 and let $N=d^t$.

\begin{lemma}
\label{lem:j-free}%
There is a choice dictionary $D$
with universe size $w'$ and for $c$ colors
that can be initialized in constant
time and subsequently occupies $w$ bits and
executes \Tvn{color} and \Tvn{setcolor}
in $O(f)$ time and
\Tvn{successor} in $O(c)$ time.
Moreover, during periods in which
$S_\jj=\emptyset$, where
$(S_0,\ldots,S_{c-1})$ is $D$'s client vector and
$\jj\in\{0,\ldots,c-1\}$,
$D$ supports two additional operations
that execute in $O(c)$ time:
Conversion from
the (initial) \emph{standard representation}
to the \emph{$\jj$-free representation}
and conversion back to
the standard representation.
When $D$ is in the $\jj$-free representation, $\jj$
must be supplied as an additional argument in
calls of \Tvn{color}, \Tvn{setcolor} and \Tvn{successor}
and in requests for conversion to the standard representation
(we will, however, suppress this in our notation).
In return, the $c d t$ bits
of the $\jj$-free representation of $D$
whose positions are multiples of $2 c f$ are unused,
i.e., free to hold unrelated information.

Alternatively, for arbitrary fixed $\epsilon>0$,
if given access to tables of
$O(c^{\epsilon c^2})$ bits that can be computed in
$O(c^{\epsilon c^2})$ time and depend
only on~$c$, $D$ can execute
\Tvn{color}, \Tvn{setcolor} and \Tvn{successor}
in constant time.
\end{lemma}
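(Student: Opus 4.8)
The plan is to realize $D$ as a single $w$-bit word that can carry either of two interchangeable layouts, to implement \Tvn{color}, \Tvn{setcolor} and \Tvn{successor} separately for each layout, and to implement the two conversions by a word-parallel re-encoding. In the \emph{standard representation} I store the $w'=w/f$ colour values in $w'$ consecutive $f$-bit fields, which exactly fills the word. Then \Tvn{color} and \Tvn{setcolor} are a masked field read and a masked field overwrite in $O(1)\subseteq O(f)$ time, and \Tvn{successor}$(j,\ell)$ is obtained in $O(1)\subseteq O(c)$ time by replacing every field of a copy by its bitwise exclusive-or with $j$ (via a multiplication by $1_{w',f}$), forcing the first $\ell$ fields to a nonzero value, and locating the first zero field with Lemma~\ref{lem:word}(b). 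Nothing is initialized, and the client vector is read off the fields on demand.

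For the \emph{$\jj$-free representation} I would fix, for each $\jj\in\{0,\dots,c-1\}$, the order-preserving bijection $\rho_\jj$ from $\{0,\dots,c-1\}\setminus\{\jj\}$ onto $\{0,\dots,c-2\}$; both $\rho_\jj$ and its inverse are evaluable in constant time, and while $S_\jj=\emptyset$ the renamed value of every element lies in $\{0,\dots,c-2\}$. Partition the $w'$ elements into $cdt$ blocks of $2c$ consecutive elements, and encode the $k$-th block by the base-$(c-1)$ integer $X_k=\sum_i a_i(c-1)^i$ built from its renamed values $a_0,\dots,a_{2c-1}$. A short calculation gives $(c-1)^{2c}<2^{2cf-1}$ for every $c=2^f\ge 2$ (indeed $2c\log(c/(c-1))>2$), so each $X_k$ fits in $2cf-1$ bits; storing $X_k$ in bit positions $2cfk+1,\dots,2cf(k+1)-1$ of the word leaves exactly the $cdt$ bits at positions that are multiples of $2cf$ unused, and $cdt(2cf-1)+cdt=w$. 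To read or write the value of element $\ell$ in this layout, compute its block $k$ and offset $i\in\{0,\dots,2c-1\}$ in $O(1)$ time, form $(c-1)^i$ by repeated squaring in $O(\log i)=O(f)$ time (legitimate: $(c-1)^{2c-1}<2^w$ fits in a word and we have multiplication, but no constant-time exponentiation), and then extract or replace the $i$-th base-$(c-1)$ digit of $X_k$ and apply $\rho_\jj^{-1}$ or $\rho_\jj$, all in $O(1)$ further time. Hence \Tvn{color} and \Tvn{setcolor} run in $O(f)$ time in the $\jj$-free layout too (for \Tvn{setcolor} using that the new colour differs from $\jj$, which holds throughout a period with $S_\jj=\emptyset$).

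The remaining operations — \Tvn{successor} in the $\jj$-free layout and the two conversions — I would carry out by processing all $cdt$ blocks simultaneously in a transient workspace of $\Theta(w)$ bits in which each block occupies a field wide enough to carry a few guard bits. In each of $2c$ rounds I would expose, in every field at once, the current least base-$(c-1)$ digit by a per-field division with remainder against the scalar $c-1$ (a multiplication by the precomputable reciprocal, a shift, and a subtraction, as in ordinary word-parallel division of packed integers) and then shift each field down by one digit. Running the rounds forward decodes every block into an $f$-bit-field word; running them backwards with Horner's rule re-encodes; together with an $O(1)$-per-round compaction/expansion between the workspace width and the $2cf$-bit regions, and the insertion or removal of the free bits, this converts either way in $O(c)$ time. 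For \Tvn{successor}$(j,\ell)$ (which returns $0$ at once if $j=\jj$) it suffices, without materializing the whole decoded word, to keep a running leftmost candidate: in each round I compare the freshly exposed digit level in all blocks against $\rho_\jj(j)$, mask out blocks ending at or before $\ell$, and update the candidate with Lemma~\ref{lem:word}(a), for a total of $O(c)$ time. For the alternative claim, tables depending only on $c$ of $O(c^{\epsilon c^2})$ bits let both inner loops be short-circuited — one table holds the $O(c)$ powers $(c-1)^0,\dots,(c-1)^{2c-1}$ (of $O(c\log c)$ bits in all), and tables indexed by the $\Theta(c\log c)$-bit encoding of a block, together with a colour, let a block be decoded, or answered for, with one lookup and let the $2c$ digit-level rounds be collapsed into $O(1)$ word operations on wider table-indexed chunks — so that \Tvn{color}, \Tvn{setcolor} and \Tvn{successor} all become constant-time.

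I expect the main obstacle to be exactly the word-parallel base-$(c-1)$ encoding and decoding of all blocks at once: budgeting the guard bits so that the per-field division against $c-1$ and the Horner re-encoding never overflow while the $cdt$ blocks still fit in $\Theta(w)$ transient bits, arranging the compaction so that the freed bits land precisely at the positions that are multiples of $2cf$, and, for the table variant, choosing the chunk width so that a constant number of table-driven steps covers the whole word while the tables stay within $O(c^{\epsilon c^2})$ bits. Everything else — the standard-layout operations, the renaming, the counting inequality $(c-1)^{2c}<2^{2cf-1}$, and the $O(f)$-time single-value access by repeated squaring — I expect to be routine.
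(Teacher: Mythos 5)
Most of your construction retraces the paper's proof: the standard layout with Lemma~\ref{lem:word}(b), the renaming of the surviving $c-1$ colors by the increasing bijection, re-encoding groups of digits in base $c-1$ so that each group spares one bit at a position that is a multiple of $2cf$, conversions by word-parallel Horner evaluation and by repeated division by $c-1$ realized as multiplication with a precomputed approximate reciprocal, and table lookup to accelerate the conversions in the alternative variant. Your arithmetic ($cdt(2cf-1)+cdt=w$, $(c-1)^{2c}\le 2^{2cf-1}$, free bits exactly at the multiples of $2cf$) is correct, and the guard-bit issue you flag as the main obstacle is real but resolvable exactly as in the paper (multiply by $\Tceil{c^{O(c)}/(c-1)}$ and process every second or third field to simulate the extra precision), so the $O(f)$- and $O(c)$-time claims and the free-bit claim would go through, up to a small fix in \Tvn{successor}: masking out ``blocks ending at or before $\ell$'' lets digits at offsets at most that of $\ell$ inside $\ell$'s own block be counted, so that block must be treated separately or the mask must depend on the round.

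The genuine gap is the constant-time, table-assisted \Tvn{successor}. Your layout uses flat blocks of $2c$ digits and hands out every spared bit as a free bit, so the representation contains no per-block, per-color occurrence information. To answer $\Tvn{successor}(j,\ell)$ you must find the leftmost block after $\ell$ in which the renamed color $\rho_{\jj}(j)$ occurs. A table of $O(c^{\epsilon c^2})$ bits can be indexed by only $O(\epsilon c^2\log c)$ bits, i.e., by $O(\epsilon c)$ of your $(2cf-1)$-bit blocks per lookup, whereas the word holds $cdt=w/(2cf)$ blocks and $dt$ is unbounded in terms of $c$; collapsing the $2c$ digit rounds into table-driven chunk steps therefore still leaves $\Omega(dt/\epsilon)$ chunks to inspect in the worst case, and there is no constant number of plain word operations that tests all packed base-$(c-1)$ blocks simultaneously for containing a given digit. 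This is precisely why the paper groups $2c$ small groups of $c$ digits into a big group of $2c^2$ digits and spends half of the $2c$ spared bits of each big group on per-color \emph{summary bits} at fixed, regularly spaced positions: for a query color these bits can be isolated by a mask and searched in constant time with Lemma~\ref{lem:word}(a), after which only the two relevant big groups need (table-accelerated, hence constant-time) conversion. To repair your variant you would have to coarsen the grouping and reintroduce such summary bits, which is essentially the paper's design; as written, the constant-time claim for \Tvn{successor} does not follow.
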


\begin{proof}
We can view $D$'s task as that of maintaining a
sequence of $w'$ \emph{digits} to base $c$.
The standard representation is simply the concatenation,
in order, of the $f$-bit binary representations of the 
$w'$ digits.
With this representation,
the operations can be carried out as for the
data structure of Lemma~\ref{lem:atomic-c}.

For each $\jj\in\{0,\ldots,c-1\}$, the
$\jj$-free representation partitions the $w'$ digits
into \emph{big groups}
of $2 c^2$ consecutive digits each and stores
each big group in $(2 c f-1)c$ rather
than $2 c^2 f$ bits, leaving free every bit whose
position is a multiple of $2 c f$,
as promised in the lemma.
First, using the increasing bijection
$\Tvn{skip}_\jj$ from $\{0,\ldots,c-1\}\setminus\{\jj\}$
to $\{0,\ldots,c-2\}$, the $2 c^2$ digits
to base $c$ of each big group are
transformed into $2 c^2$ digits to base $c-1$.
Call this the \emph{$\jj$-intermediate representation}.
Then the $2 c^2$ transformed digits are partitioned
into $2 c$ \emph{small groups} of $c$ consecutive digits each,
and each small group is viewed as a $c$-digit integer
written to base $c-1$ and is represented in binary
in $c f-1$ bits, which is possible because
$c\log(c-1)=c\log c+c\log(1-{1/c})
\le c f+c\ln(1-{1/c})\le c f-1$.
At this point, within each big group, the $2 c$ bits whose
positions are multiples of $c f$ are unused.
For $j=0,\ldots,c-2$, we store in the $(2 j+1)$st
such bit a \emph{summary bit} equal to 1 exactly if
the color $\Tvn{skip}^{-1}_{\jj}(j)$
occurs as a transformed digit in the big group.
The summary bits are redundant, but help us to
execute \Tvn{successor} in constant time.
One bit is wasted, and
the other half of the
$2 c$ bits are the promised free bits.

In order to convert $D$ from the standard to the
$\jj$-free representation, for some $\jj\in\{0,\ldots,c-1\}$
with $S_\jj=\emptyset$, first the function
$\Tvn{skip}_\jj$ is applied independently
to each digit.
Say that the $f$ consecutive bits in which
a digit is stored form a \emph{field}.
By Lemma~\ref{lem:word}(c), we can compute an
integer $z$, each of whose fields stores~1
if the corresponding digit is $\le\jj$
and 0 otherwise.
The function $\Tvn{skip}_\jj$ can now be applied in parallel
to all fields by a subtraction of $1_{w',f}-z$.
Subsequently, within each small group of $c$
digits, say $a_0,\ldots,a_{c-1}$, we must
convert $\sum_{i=0}^{c-1}a_i c^i$ to
$\sum_{i=0}^{c-1}a_i(c-1)^i$.
Since the digits $a_i$ are readily available
as the values of $f$-bit fields,
this can be done in $O(c)$ time for all small groups
using a word-parallel version of Horner's scheme
in a straightforward manner.
Finally, for
each 
$j\in\{0,\ldots,c-2\}$,
within each big group a summary bit
must be computed and
stored in the appropriate position
within the big group.
To this end, first apply the algorithm of
Lemma~\ref{lem:word}(c) at most twice, with $k=j$ and,
if $j>0$, with $k=j-1$, followed by bitwise
Boolean operations, to obtain an integer in
which the most significant digit of each
big group is zero, while the remaining bits
of the big group are also zero
if and only if the digit $j$ does not occur
in the big group.
A subtraction from $1_{{{w'}/{(2c^2)}},2c^2 f}\ll(2c^2 f-1)$
followed by the computation of \textsc{and}
and \textsc{xor} with the same number and
a suitable shift finishes the computation.

For the conversion in the other direction,
i.e., the conversion from
$\sum_{i=0}^{c-1}a_i(c-1)^i$ to
$\sum_{i=0}^{c-1}a_i c^i$ within each small group,
after clearing the bits whose positions are
multiples of $c f$ (those that held
summary and extraneous bits),
we compute the digits $a_0,\ldots,a_{c-1}$
by repeatedly obtaining the remainder modulo~$c-1$,
which yields the next digit, and keeping
only the integer part of the quotient with $c-1$.
Except for the division by $c-1$ with
truncation, the necessary steps are easily carried out in
constant time per digit and $O(c)$ time altogether.
Since division is not readily amenable to
word parallelism, we replace division by $c-1$
by multiplication by its approximate inverse.
More precisely, we carry out the division in
constant time using the relation
$\Tfloor{a/{(c-1)}}
=\Tfloor{{{a\cdot\Tceil{{{c^{2 c}}/{(c-1)}}}}/{c^{2 c}}}}$.
To see the validity of the relation for
all integers $a$ with $0\le a<c^c$,
simply observe that
${a/{(c-1)}}\le {{a\cdot\Tceil{{{c^{2 c}}/{(c-1)}}}}/{c^{2 c}}}
<{a/{(c-1)}}+{1/{c^c}}<{{(a+1)}/{(c-1)}}$
and note that there is no integer
strictly between ${a/{(c-1)}}$ and ${{(a+1)}/{(c-1)}}$.
The product $a\cdot\Tceil{{{c^{2 c}}/{(c-1)}}}$
may have more than $c f$ bits.
It has no more than $3 c f$ bits, however, so it can
be computed using ``triple precision'', which
we simulate by handling the small groups
in three rounds, each round operating only on
every third group.
Truncated division by $c^{2 c}$ is, of course, realized
as a right shift by $2 c f$ bit positions followed by
a ``masking away'' of the unwanted bits.
At the very end, to get from the
$\jj$-intermediate to the standard representation,
$\Tvn{skip}^{-1}_\jj$ must be
applied independently to each field.
This can be done similarly as described above
for $\Tvn{skip}_\jj$.

As detailed above, the conversion between
the standard and the $\jj$-intermediate representations
depends on $\jj$, but takes only constant time.
In contrast, the conversion between the
$\jj$-intermediate and the $\jj$-free representations
takes $\Theta(c)$ time, but is independent of~$\jj$.
This observation is important to
the proof of Theorem~\ref{thm:unsystematic-f}.

Assume that $D$ is in the $\jj$-free representation, for
some $\jj\in\{0,\ldots,c-1\}$, and that a
call $\Tvn{successor}(j,\ell)$ is to be
executed for some
$j\in\{0,\ldots,c-1\}\setminus\{\jj\}$
and $\ell\in\{0,\ldots,w'-1\}$.
Suppose, for ease of discussion, that the return value
$\ell'$ is nonzero, and let $G$ and $G'$ be the big
groups that contain the $(\ell+1)$st and the $(\ell')$th
digit, respectively.
Applying to $G$ a computation that, informally,
converts the single big group $G$ to the standard
representation, we can test whether
the $(\ell')$th digit belongs to $G$ and,
if so, find and return~$\ell'$.
Otherwise we locate $G'$ by applying an algorithm
of Lemma~\ref{lem:word}(a) to a suitable suffix of
those summary bits that pertain
to the color $j$, with all other bits cleared,
after which $\ell'$ can be found by converting
$G'$ to the standard representation
as done previously for~$G$.
The computation runs in $O(c)$ time, its
bottleneck being the conversions to the
standard representation.
It is easy to see that \Tvn{color} and \Tvn{setcolor}
can be executed in $O(f)$ time by computing
the relevant power of $c$ via repeated squaring.

Alternatively, the conversion of single big groups
from the $\jj$-intermediate
to the $\jj$-free representation
and vice versa can be carried out
by table lookup.
A table for each direction
of the conversion maps each sequence of
$c^2$ possible digits
to the corresponding other
representation and therefore has $O(c^{c^2})$ entries
of $O(c^2 f)$ bits each.
For fixed $\epsilon>0$ and
for $c$ larger than a suitable constant,
we can instead
use repeated table lookup,
mapping at most ${{\epsilon c}/2}$ small groups of
$c$ digits each at a time.
This reduces the number of bits in the tables
and the time needed to compute them to
$O(c^{3+{{\epsilon c^2}/2}})=O(c^{\epsilon c^2})$.
In the case of the conversion to the
$\jj$-free representation, each table entry
for at most ${{\epsilon c}/2}$ small groups
must provide suitable summary bits for the
small groups concerned, and the composition 
of such entries includes forming the bitwise
\textsc{or} of the partial summaries.
\end{proof}

\begin{lemma}
\label{lem:smalltree}%
There is a choice dictionary $D$ with
universe size $N w'$ and for $c$ colors
that can be initialized in constant time,
uses $N w+2$ bits and
supports \Tvn{color} in $O(t+f)$ time and
\Tvn{setcolor} 
and \Tvn{successor}
in $O(t+c)$ time.

Alternatively, for arbitrary fixed $\epsilon>0$,
if given access to tables of
$O(c^{\epsilon c^2})$ bits that can be computed in
$O(c^{\epsilon c^2})$ time and depend
only on~$c$, $D$ can execute
\Tvn{color} and \Tvn{successor}
in $O(t)$ time.
\end{lemma}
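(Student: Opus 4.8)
The plan is to realise $D$ as a complete $d$-ary trie $T$ of height $t$, with $N=d^{\,t}$ leaves; the $i$-th leaf ($1\le i\le N$) is an instance $D_i$ of the dictionary of Lemma~\ref{lem:j-free} (universe size~$w'$, $c$ colors), and the global element $\ell\in\{1,\dots,Nw'\}$ is kept in position $((\ell-1)\bmod w')+1$ of $D_{\lceil\ell/w'\rceil}$. The leaves occupy $Nw$ bits; two further bits serve for global bookkeeping (chiefly to drive lazy, constant-time initialization of the leaves in the manner of the discussion following Lemma~\ref{lem:2.12}), for a total of $Nw+2$ bits. Because $T$ is a perfectly regular trie, the height-$h$ ancestor of a leaf, the left-to-right index of a node, \Tvn{viachild}, etc., are computed by $O(1)$-time arithmetic per level exactly as in Section~\ref{sec:trie}. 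Consequently \Tvn{color}$(j,\ell)$ --- navigate down to $D_{\lceil\ell/w'\rceil}$ and forward the query --- runs in $O(t+f)$ time, and in $O(t)$ time when the tables of Lemma~\ref{lem:j-free} are available, the leaf call then being $O(1)$.

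The real work will be \Tvn{successor} (whence also \Tvn{choice}), which needs navigational information at the inner nodes of~$T$ even though these have no storage of their own. For an inner node $u$ of height $h\ge1$ with children $v_1,\dots,v_d$ and a color~$j$, let $C_j(u)\subseteq\{1,\dots,d\}$ collect the indices $i$ for which the subtree of $v_i$ contains an element of color~$j$. As anticipated in the introduction, I would exploit two facts. First, if every leaf below $u$ carries color~$j$ then $C_j(u)=\{1,\dots,d\}$ and a search for color~$j$ may step blindly from~$u$ to an arbitrary child, so nothing need be stored. Second, if some leaf below $u$ is $j$-empty, then --- together with a whole $j$-empty child subtree of~$u$ --- it affords the $cdt$ free bits of the $j$-free representation of Lemma~\ref{lem:j-free}, into which the $d$-bit vector $C_j(u)$ (plus a constant number of summary bits) can be deposited. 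Hence $C_j(u)$ is materialised only when it is non-trivial, and then inside the free bits of a $j$-empty leaf of $u$'s subtree that is kept, for this purpose, in the $j$-free representation. A counting argument shows everything fits: for each fixed~$j$ the $d$-bit vectors $C_j(u)$ over all $O(N/d)$ inner nodes total $O(N)$ bits, hence $O(cN)$ bits over all colors, against up to $N\cdot cdt$ free bits supplied by the $j$-free leaves; the slack factor $dt\ge2$ pays both for the fact that only leaves missing some color can host data and for the unavoidable looseness of the assignment.

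With the invariant in place, I would implement \Tvn{successor}$(j,\ell)$ (for $\ell\in\{1,\dots,Nw'-1\}$) by first calling Lemma~\ref{lem:j-free}'s \Tvn{successor} inside the leaf $L_0=D_{\lceil\ell/w'\rceil}$; if a positive value is returned it is converted to a global position and returned. Otherwise $D$ walks the root-to-$L_0$ spine, at each visited ancestor $u$ reading $C_j(u)$ (or taking it to be $\{1,\dots,d\}$ when unstored) to locate, at the lowest ancestor where one exists, the smallest child index exceeding the one just ascended from that lies in $C_j(u)$; it then descends into that child and thereafter always into its leftmost child lying in the relevant $C_j$, reaching a leaf whose first color-$j$ element (again via Lemma~\ref{lem:j-free}) is the answer. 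Each inner step is $O(1)$ --- the ``successor within a $d$-bit word'' queries use Lemma~\ref{lem:word}(a), legitimate since $d\le w$ --- and the two leaf calls cost $O(c)$ each, for a total of $O(t+c)$, and $O(t)$ with tables. \Tvn{setcolor}$(j,\ell)$ updates $L_0$ ($O(f)$, resp.\ $O(1)$) and then repairs the navigational data: when the set of colors present at $L_0$ changes, the affected vectors $C_{j'}(u)$ along the $O(t)$ ancestors must be refreshed, host leaves may have to be converted between the standard and a $\jj$-free representation ($O(c)$ each by Lemma~\ref{lem:j-free}), and, when a leaf hosting data acquires its last missing color and loses its free bits, that data must migrate to a fresh host within the same subtree. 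A suitable assignment rule --- placing $C_j(u)$ in a canonically chosen $j$-empty leaf reachable along the spine or descent path of $u$'s subtree, with ties broken by leftmost children, so that each leaf hosts at most one record per level, hence at most~$t$ records, and so that one \Tvn{setcolor} forces $O(t)$ refreshes, $O(1)$ conversions per level and $O(t)$ constant-size migrations --- keeps the cost at $O(t+c)$.

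The hard part is precisely the design and verification of this assignment rule, which must simultaneously (a) let every stored $C_j(u)$ be read in $O(1)$ time whenever \Tvn{successor} visits~$u$, since otherwise a $\Theta(t)$ detour per level would inflate the bound to $\Theta(t^2)$; (b) guarantee that whenever $C_j(u)$ is non-trivial a $j$-empty leaf below~$u$ is actually available and currently in, or cheaply convertible to, the $j$-free representation; and (c) confine the disturbance caused by a single \Tvn{setcolor} --- which can both destroy a host's free bits and resurrect a previously full leaf as eligible --- to the $O(t)$-bounded amount of rework claimed above. Essentially all of the (routine but voluminous) bit-level bookkeeping of the proof is spent here; the remaining pieces --- the arithmetic trie navigation, the two-bit lazy initialization, and the table-assisted speedups --- are standard. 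Finally, substituting the table-assisted variant of Lemma~\ref{lem:j-free} throughout, its $O(c^{\epsilon c^2})$-bit, $O(c^{\epsilon c^2})$-time, $c$-dependent tables being exactly those promised in the alternative statement, turns every $O(f)$ and $O(c)$ leaf cost except the conversions into $O(1)$, delivering \Tvn{color} and \Tvn{successor} in $O(t)$ time while \Tvn{setcolor} remains $O(t+c)$.
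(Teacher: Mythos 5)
There is a genuine gap: your proof reduces the lemma to the existence of an ``assignment rule'' satisfying your conditions (a)--(c) and then defers that rule as routine bookkeeping, but that rule \emph{is} the content of the lemma, and the sketch you give does not work as stated. Two concrete problems. First, constant-time access: when a descent reaches an inner node $u$, it must decide in $O(1)$ time whether $C_j(u)$ is stored at all and, if so, in which word it sits. Your candidate host (``a canonically chosen $j$-empty leaf below $u$, leftmost ties'') is not locatable in $O(1)$ from $u$: the vectors $C_j(\cdot)$ record which child subtrees \emph{contain} color $j$, not which ones contain a $j$-empty (deficient) leaf, so they give no help in finding the host, and a search for it would cost $\Theta(t)$ per level --- exactly the blow-up you flag in (a) but never eliminate. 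Second, updates: a single \Tvn{setcolor} can change which leaf is the ``canonical'' $j$-empty leaf for up to $t$ ancestors (or destroy a host's free bits), and relocating each displaced record again requires locating a new deficient host, which without extra navigational structure is another $\Theta(t)$ per record; your claim of ``$O(t)$ constant-size migrations'' therefore does not yield an $O(t+c)$ bound. The space-counting argument is the easy part; it is the dynamic placement and $O(1)$ addressability that need proof.

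The paper's proof is built precisely around solving these two problems, by a different data-placement scheme than yours. Instead of per-color, per-node vectors scattered over many hosts, it stores, for every node, a $c$-bit \emph{spectrum} (with an all-zero convention marking \emph{full} nodes, so that ``store nothing / search blindly'' is distinguishable in $O(1)$), concatenates the children's spectra into navigation vectors, and stores the navigation vectors of all nodes of a \emph{light path} together as one history of at most $c\,d\,t$ bits inside the free bits of a single deficient leaf (the path's \emph{proxy}). Crucially, that data is not kept at the proxy's own word but swapped into the word of the \emph{historian} --- the leftmost leaf descendant of the path's top node --- a position computable by pure arithmetic in $O(1)$; the historian's own leaf dictionary moves to the proxy's word, and Propositions~\ref{prop:navigation} and~\ref{prop:semi} show how descents recover spectra, histories and the implicit permutation in $O(1)$ per level. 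The update is then handled by a three-case analysis (the set of light paths is unchanged, gains one path, or loses one path), which bounds the relocation work by a constant number of word moves and representation conversions and includes a read-before-write consistency argument. None of this machinery --- the full/deficient distinction encoded in spectra, preferred children and light paths, the historian/proxy swap, the case analysis for \Tvn{setcolor} --- is reconstructed or replaced in your proposal, so the argument as written does not establish the stated time bounds.
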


\begin{proof}
Let $T=(V,E)$ be a complete $d$-ary outtree of
depth~$t$, whose leaves, in the order from
left to right, we will identify with
the integers $1,\ldots,N$.
Let $r$ be the root of $T$ and, for all $u\in V$,
take $T_u$ to be the maximal subtree of $T$
rooted at~$u$.

Let the client vector of $D$ be
$(S_0,\ldots,S_{c-1})$.
We divide the universe
$U=\{1,\ldots,N w'\}$ into $N$
\emph{segments} 
$U_1,\ldots,U_N$ of $w'$ consecutive integers each.
For each $u\in V$, call $u$ \emph{full}
if $S_j\cap U_i\not=\emptyset$ for all
$j\in\{0,\ldots,c-1\}$ and all leaf
descendants $i$ of $u$, and
\emph{deficient} otherwise.
Informally, a deficient node is one that has a
leaf descendant with a missing color.
For each $u\in V$, let the \emph{spectrum} of $u$
be the string $b_0\cdots b_{c-1}$ of $c$
bits defined as follows:
If $u$ is deficient,
then for $j=0,\ldots,c-1$, $b_j=1$
exactly if $S_j\cap U_i\not=\emptyset$
for some leaf descendant $i$ of~$u$
(informally, if the color~$j$ is represented in $T_u$).
If $u$ is full, as a special convention,
$b_0\cdots b_{c-1}=0\cdots 0$, a
bit combination that cannot otherwise occur.
If $b_0\cdots b_{c-1}=100\cdots 0$
(only the color~0 is represented in $T_u$),
we say that $u$ is \emph{empty};
this is initially the case for every node~$u$.
If a node in $T$ is deficient but not empty,
we call it \emph{light}.
For each inner node $u$ in $T$, define the
\emph{navigation vector} of $u$ to be the concatenation
$\gamma_1\cdots \gamma_d$, where $\gamma_1,\ldots,\gamma_d$ are the
spectra of the $d$ children of $u$ in the
order from left to right.

For $i=1,\ldots,N$, let $\mathcal{P}'_i$
be the semipartition
$(S_0\cap U_i,\ldots,S_{c-1}\cap U_i)$ of $U_i$
and let $\mathcal{P}_i$ be the semipartition
of $\{1,\ldots,w'\}$ obtained from $\mathcal{P}'_i$
by subtracting $(i-1)w'$ from every element
in each of its sets.
We do not store~$T$.
Instead, for $i=1,\ldots,N$, $\mathcal{P}_i$
is stored in an instance $D_i$ of the
choice dictionary of Lemma~\ref{lem:j-free}
called a \emph{leaf dictionary},
and $D_1,\ldots,D_N$ are in turn stored
in $N$ words $H_1,\ldots,H_N$ of $w$ bits each.
Two additional \emph{root bits} indicate whether the root $r$
of $T$ is full and whether it is empty.
It is helpful to think of $H_i$ as ``normally''
storing $D_i$,
for $i=1,\ldots,N$.
If this were always the case and all navigation vectors
were available, a call of $\Tvn{successor}(j,\ell)$
could essentially use navigation vectors to find
a path from $r$ to the leftmost leaf $i$ in $T$
such that $S_j\cap U_i$ contains an 
element larger than $\ell$, if any,
and the smallest such element
could be obtained through a call of
$D_i.\Tvn{successor}$.
Moreover, \Tvn{setcolor}
could update navigation vectors as appropriate.
However, we have no space left to store
navigation vectors, and so have to proceed differently.

The parent of every light node in $T$ other than $r$
is also light, and
every deficient inner node in $T$ has at least
one deficient child.
Let the \emph{preferred child} of a deficient
inner node be its leftmost light child if
it has at least one light child, and its leftmost
empty child otherwise.
Let $Q=(V_Q,E_Q)$ be the subgraph of $T$ induced
by the edge set $E_Q$ obtained as follows:
First include in $E_Q$ all edges
from a light inner node to its preferred child.
Then, for every empty node $v$
that has an incoming edge in $E_Q$,
include in $E_Q$ the edges on the path
from $v$ to its leftmost leaf descendant.
$Q$ is a collection of node-disjoint paths
called \emph{light paths}, each of which ends
at a leaf in~$T$.
When $P$ is a light path that starts at a (light) node
$u$ and ends at a (deficient) leaf $v$, we call $u$ the
\emph{top node}, $v$ the \emph{proxy}
and the leftmost leaf descendant of $u$
(that may coincide with~$v$)
the \emph{historian} of $P$ and of every node on~$P$.
A light node in $T$ that is neither the root nor a leaf
is a top node exactly if it is not the preferred child
of its parent, i.e., if it has at least one
light left sibling.
No proper ancestor of a top node $u$ can have
a descendant of $u$ as its leftmost leaf descendant,
so a leaf is the historian of at
most one light path.
If $h$ is the historian of a light path $P$, the
top node and the proxy of $P$ are also said to be the
top node and the proxy, respectively, of $h$.
These concepts are illustrated in Fig.~\ref{fig:lightpaths}.
A leaf $i$ cannot be the historian
of one light path and the proxy of another,
since otherwise the two corresponding
top nodes would both be ancestors of $i$
and the path between them would contain
only gray nodes and be
part of a light path, an impossibility.
A similar argument shows that in the
left-to-right order of the leaves of~$T$,
no historian or proxy lies
strictly between a historian and its proxy.

\begin{figure}
\begin{center}
\epsffile{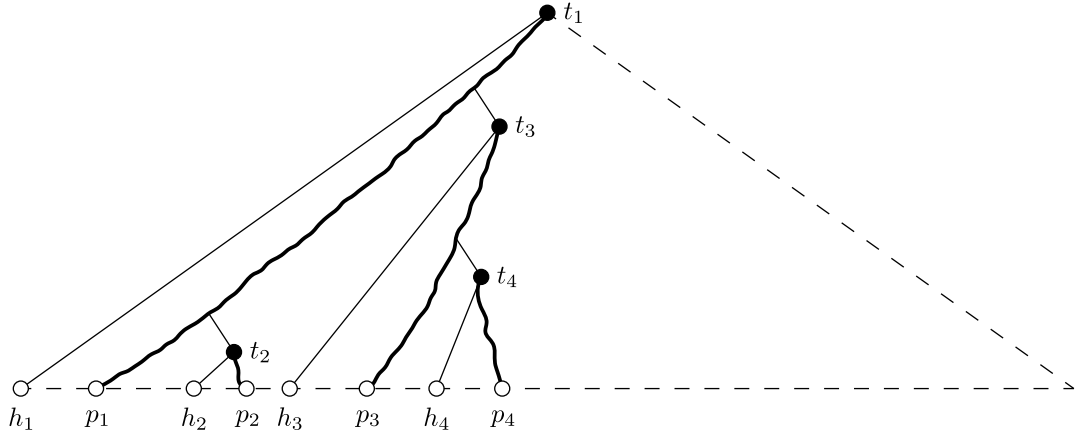}
\end{center}
\caption{Example light paths (drawn thicker).
Top nodes, historians and proxies are labeled
``$t$'', ``$h$'' and ``$p$'', respectively,
and a subscript identifies the associated
light path.}
\label{fig:lightpaths}
\end{figure}

Suppose that the nodes on a light path $P$
are $u_1,\ldots,u_k$, in that order.
Then the \emph{history} of
$P$ and of each of $u_1,\ldots,u_k$
is the concatenation of the
navigation vectors of $u_1,\ldots,u_{k-1}$,
in that order ($u_k$, as a leaf, has
no navigation vector).
An important fact to note is that if
a leaf dictionary
is in the $\jj$-free representation for some~$\jj\in\{0,\ldots,c-1\}$,
then it allows the history of a light path to be stored
in its $c d t$ free bits.
In order for this actually to be possible, we assume
that histories (and, by extension,
navigation vectors and spectra) are represented with 
gaps of $2 c f-1$ bit positions between consecutive bits,
so that a history spreads over up to an entire
$w$-bit word.
To fill up the word,
we store a history of $c d k$ bits
prefixed by $c d(t-k)$ arbitrary bits,
so that the positions in the word of the bits that make up
the navigation vector of a node $u$
depend only on the height of~$u$.

We represent $(S_0,\ldots,S_{c-1})$ using the following
storage scheme:
Let $\pi$ be the permutation of $\{1,\ldots,N\}$
that maps each $i\in\{1,\ldots,N\}$ to itself,
except that $\pi(p)=h$ and $\pi(h)=p$ for each
pair of a proxy $p$ and its historian~$h$.
Then the following holds for $i=1,\ldots,N$:
$D_i$ is stored in $H_{\pi(i)}$, and
\begin{itemize}
\item
if $i$ is a proxy (and therefore deficient),
$D_i$ is in the $\jj$-free representation,
where $\jj=\min\{j\in\TbbbN_0\mid 0\le j\le c-1$ and
$S_j\cap U_i=\emptyset\}$---we
will say that
$D_i$ is in the
\emph{compact representation}---and
$H_{\pi(i)}$ stores not only $D_i$, but also the history of $i$;
\item
if $i$ is empty but not a proxy,
$D_i$ may not have been initialized;
equivalently, $H_{\pi(i)}$ may hold an arbitrary value;
\item
in all remaining cases, i.e., if $i$ is neither a
proxy nor empty,
$D_i$
is in the standard representation.
\end{itemize}

\noindent
This paragraph tries to motivate the
not-so-natural storage scheme.
The usefulness of navigation vectors was
already observed above.
All nontrivial navigation vectors
are contained in the histories of the proxies.
As we have seen, if $p$ is a proxy and
therefore deficient, $\mathcal{P}_p$ can
be represented sufficiently compactly to allow
the history of $p$ to be stored with it.
However, when the history of a proxy $p$
is needed, $p$ is not known, so $H_p$ cannot
be located.
The top node of $p$ and hence also the
historian $h$ of $p$ are known, however, so
we store $D_p$ and the history of $p$
in $H_h$ rather than in $H_p$.
In return (unless $h=p$), $H_p$ must hold
(the standard representation of) $D_h$.
The terms ``historian'' and ``proxy'' serve as reminders
that a historian (more precisely, the
corresponding storage word) holds a history
(namely that of its proxy), whereas a proxy
holds the data of what may be
another leaf (namely of its historian).
The convention that $H_i$ may be arbitrary if
$i$ is an empty leaf that is not in use as a proxy
is necessary to guarantee
a constant initialization time.

If we represent a current node in $T$ as suggested
in Section~\ref{sec:trie}, i.e., through
the triple $(j,k,d^j)$, where $j$ is the height
in $T$ of the current node and $k$ is one more
than the number of nodes of height $j$ to its left,
we can navigate in $T$ as described
in Section~\ref{sec:trie}.
One operation that was not considered there and that
we need now is computing the leafmost leaf
descendant of the current node.
This is easy:
If the current node is (represented through)
$(j,k,d^j)$, its leftmost leaf descendant in $T$ is
$(0,(k-1)\cdot d^j+1,1)$.

\begin{proposition}
\label{prop:navigation}%
Let $u$ and $v$ be inner nodes in $T$ such that
$v$ is a child of $u$ and assume that we know
the navigation vector of~$u$, whether $u\in V_Q$
(i.e., whether $u$ belongs to a light path)
and, if $u$ is light, its history.
Then, in constant time, we can compute the spectrum
and the navigation vector of~$v$,
decide whether $v\in V_Q$
and whether $v$ is a top node and, if
$v$ is light, compute its history.
\end{proposition}

\begin{proof}
The spectrum of $v$ can be read off the
navigation vector of~$u$.
If $v$ is full or empty, its navigation vector
is trivial, namely the concatenation of $d$
copies of either $0\cdots 0$ or $100\cdots 0$,
$v$ is not a top node, and $v\in V_Q$
exactly if $u\in V_Q$ and $v$ is
$u$'s preferred child.
The latter condition can be tested in constant
time by inspection of
the navigation vector of $u$.
Assume now that $v$ is light, so that $v\in V_Q$.
Then $v$ is a top node exactly
if it has at least one light
left sibling.
If this is the case, the history of~$v$
is stored at $v$'s leftmost leaf descendant $h$
(more precisely, in $H_h$),
from where it can be retrieved in constant time.
If $v$ is not a top node, it belongs to the
same light path as $u$,
whose history is known by assumption.
The navigation vector of $v$ can be extracted
from $v$'s history in constant time.
\end{proof}

Proposition~\ref{prop:navigation} provides the general
step in an inductive argument to show that we can
traverse a root-to-leaf path in $T$ in constant
time per edge, always---until a
leaf is reached---knowing
the navigation vector of the current node,
whether it is a top node,
whether it belongs to a light path and, except
in the case of the root $r$ of $T$,
its spectrum.
As for the inductive basis, $r$
is a top node and belongs to a light
path if and only if $r$ is light,
and whether this is the case is
indicated by the two root bits.
If $r$ is a top node, its history
is available
in $H_1$ and, as above, the navigation vector
of $r$ can be extracted from its history in constant time.
If $r$ is not a top node, it is full or empty,
and its navigation vector is trivial, as above.

Our traversals of root-to-leaf paths in $T$
(called ``descents'') will be carried out by
starting at $r$ and repeatedly applying a
\emph{selection rule} at the current node~$u$
until a leaf is reached.
The selection rule indicates the child of $u$
at which the descent is to be continued.
We use three different selection rules that
we name for easier reference:

\begin{description}
\item[{\normalfont``leaf-seeking''$(i)$}]
($i$ is a leaf descendant of the current node~$u$):
Step to that child of $u$ that is an
ancestor of $i$.
\item[{\normalfont``proxy-seeking''}]
(the current node $u$ belongs to a light path):
Step from $u$ to its
preferred child.
\item[{\normalfont``color-seeking''$(j)$}]
($j\in\{0,\ldots,c-1\}$ and
the color $j$ is represented in $T_u$,
where $u$ is the current node):
Step from $u$ to the leftmost child of $u$
in whose spectrum the $(j+1)$st bit is set.
\end{description}

Using algorithms of Lemma~\ref{lem:word} for the rules
``proxy-seeking'' and ``color-seeking'',
we can apply each of the selection rules
above in constant time.

The permutation $\pi$ is not explicitly recorded.
As the following proposition shows, however,
we can compute $\pi(i)$ for arbitrary given
$i\in\{1,\ldots,N\}$.
What the proposition actually says is that we
can compute both $\pi(i)$ and the information
necessary to make sense of the 
contents of
$H_{\pi(i)}$.

\begin{proposition}
\label{prop:semi}%
Given $i\in\{1,\ldots,N\}$, in $O(t)$ time
we can compute $\pi(i)$, the spectrum of~$i$,
and whether $i$ is a proxy.
\end{proposition}

\begin{proof}
Use a first descent in $T$ with the selection rule
``leaf-seeking''$(i)$ to 
determine if $i\in V_Q$ and to
compute the spectrum of~$i$,
which will be known when the leaf $i$ is reached.
Now $i$ is a proxy exactly if $i\in V_Q$.
In a second descent in $T$,
initially again use the selection rule
``leaf-seeking''$(i)$.
Starting at the time when the current node
is first a top node, if ever, always remember
the historian $h$ of the
most recently visited top node
(the history stored in $H_h$ is used for navigational
purposes anyway).
If and when the current node becomes a top node
with $i$ as its leftmost leaf descendant,
(this will happen at some point exactly if
$i$ is a historian),
permanently change the
selection rule to ``proxy-seeking'' and
continue the descent.
Let $k$ be the leaf reached.
If the selection rule is still ``leaf-seeking'' at this time 
and $k=i\in V_Q$, 
$i$ is a proxy with historian $h$ and
$\pi(i)=h$;
otherwise $\pi(i)=k$.
\end{proof}

\noindent
$D$'s operations are implemented as follows:

\medskip\noindent
$\Tvn{color}$:
To execute $\Tvn{color}(\ell)$
for $\ell\in\{1,\ldots,N w'\}$, take
$i=\Tceil{{\ell/{w'}}}$ and $m=\ell-(i-1)w'$.
Thus $\ell$ is the $m$th element of the
$i$th segment $U_i$.
Use the algorithm of
Proposition~\ref{prop:semi} to compute
$\pi(i)$ and the related information.
If $i$ is empty, return~0.
Otherwise determine whether $D_i$
(stored in $H_{\pi(i)}$) is
in the $\jj$-free representation for some
$\jj$ and, if so, for which $\jj$
(Lemma~\ref{lem:word}(b)).
Using the information just computed to
consult $D_i$,
return $D_i.\Tvn{color}(m)$.

\medskip\noindent
$\Tvn{successor}$:
To execute $\Tvn{successor}(j,\ell)$
for $j\in\{0,\ldots,c-1\}$ and $\ell\in\{1,\ldots,N w'\}$,
initially proceed similarly as in
the case of \Tvn{color}:
Take $i=\Tceil{{\ell/{w'}}}$ and $m=\ell-(i-1)w'$
and use the algorithm of
Proposition~\ref{prop:semi} to compute
$\pi(i)$ and the related information.
If $i$ is empty, $m<w'$ and $j=0$,
return $\ell+1$.
If $i$ is nonempty, 
use $D_i$ (stored in $H_{\pi(i)}$) to
compute $k=D_i.\Tvn{successor}(j,m)$
and, if $k\not=0$, return $(i-1)w'+k$.

If no value was returned until this point
(the answer could not be established locally
in the $i$th segment),
again traverse the path $P$ in $T$ from $r$ to $i$.
With $X$ equal to the set of right siblings $u$
of inner nodes on $P$ 
such that the $(j+1)$st bit is set in $u$'s spectrum,
determine whether
$X=\emptyset$ and, if so, return~0.
Otherwise proceed as follows:
Compute the node~$u$
in $X$ that follows $i$ most closely in
a preorder traversal of~$T$
(thus $u$ is the closest right sibling in $X$
of the node on $P$ of maximum depth among
those with right siblings in~$X$).
Carry out a partial descent in $T$ that starts
at $u$ and uses the selection rule
``color-seeking''$(j)$ throughout.
Let $k$ be the leaf reached and
use the algorithm of
Proposition~\ref{prop:semi} to compute $\pi(k)$
and the related information.
If $k$ is empty, return $(k-1)w'+1$
(we must have $j=0$).
Otherwise use $D_k$ (stored in $H_{\pi(k)}$) to return
$(k-1)w'+D_k.\Tvn{successor}(j,0)$.

\medskip\noindent
$\Tvn{setcolor}$:
Let us use the terms ``old'' and ``new'' to refer
to states before and after the update
under consideration, respectively.
To execute $\Tvn{setcolor}(j,\ell)$
for $j\in\{0,\ldots,c-1\}$ and $\ell\in\{1,\ldots,N w'\}$,
once more take
$i=\Tceil{{\ell/{w'}}}$ and $m=\ell-(i-1)w'$.
Find the old color $j_0$ of $\ell$,
determine whether $S_{j_0}\cap U_{i}=\emptyset$
after the update and use this
to compute the new spectrum of $i$.
Save the old root bits and
traverse the path $P$ in $T$ from $r$ to $i$, collecting
the concatenation $\Gamma$ of the navigation vectors
of all inner nodes on $P$.
Use $\Gamma$ to traverse $P$ backwards
and update the histories of all old top nodes
encountered and the root bits to reflect the change,
if any, in the spectrum of~$i$.
Since the spectrum of every inner node
in $T$ is a simple function of those of
its children, this is a straightforward
bottom-up computation.
Also explicitly compute the concatenation
$\Gamma'$
of the
new navigation vectors of the inner nodes on~$P$.
Guided by $\Gamma$ and $\Gamma'$, we can traverse $P$
in the forward and backward directions
in constant time per node visited, always
knowing both the old and the new navigation vector
of the current node, even though the
histories stored in $H$ may be temporarily
inconsistent during the update.
What remains is to actually record the new
color of $\ell$ and to modify the light paths
implicit in the values in $H$ accordingly.
After describing a procedure for achieving this,
we will argue that whenever the procedure needs the
(old and new) navigation vector of a node
outside of $P$, the navigation vector can
be obtained from the history stored in
a word that has not (yet) been modified in the
course of the update.

We consider three cases.
In Case~1 the set of light paths does not change.
In Case~2 the update creates a new light path,
which may shorten a single existing light path.
In Case~3 the update destroys
a light path, which may
lengthen a single existing light path.
The update either leaves invariant the status of
every node in $T$ with respect to being
empty, light or full,
or it causes the same transition from light
to empty or full or from empty or full
to light at all nodes
on a last (bottom) part of $P$, while the
other nodes on $P$ remain light and no other
node changes its status.

\medskip
\emph{Case 1}:
$i\in V_Q$
is true after the update
if and only if it was true before the update.\\
If the update changes $i$'s status with respect to
being empty, light or full, we must have one of
two situations:
Either $i\not\in V_Q$ even when $i$ is light,
in which case $i$ has at least one
light leaf as a left sibling,
or $i\in V_Q$ even when $i$ is not light,
in which case the status of $i$ switches between
light and empty and the light path that ends at
$i$ when $i$ is empty coincides with the
light path that ends at $i$ when $i$ is light
(informally, the switch to light of some nodes on the
path but not its first node---which
is always light---only makes
those nodes ``more preferred children'').
It is now easy to see that the update does
not change the set of light paths.

Use the algorithm of Proposition~\ref{prop:semi}
to compute $\pi(i)$ and the related information.
If $i$ is a proxy before and therefore also after
the update, convert $D_i$
(found in $H_{\pi(i)}$) to the standard representation
after saving the history stored in its free bits
in a temporary variable, then execute
$D_{i}.\Tvn{setcolor}(j,m)$, and finally reconvert
$D_i$ to the compact representation
(which may be the $\jj$-free representation for
a different $\jj$) and
store the history saved in its free bits.
If $i$ was empty but not a proxy before the update,
let $D_i$ be a newly initialized leaf dictionary, execute
$D_{i}.\Tvn{setcolor}(j,m)$ and store $D_i$ in $H_{\pi(i)}$.
In the remaining case only execute
$D_{i}.\Tvn{setcolor}(j,m)$.

\medskip
\emph{Case 2}:
$i\in V_Q$ holds
after the update, but not before it.\\
In this case $i$ is the last node of
a new light path~$P'$.
Again traverse $P$ backwards to find the
first node $v$ on $P$ (i.e., the node on
$P$ of maximum height) that did not belong to
a light path before the update.
The update changes the status of $i$ and~$v$
from empty or full to light.
Moreover, after the update every proper descendant of $v$
on $P$ is the only light child of its parent
and therefore its preferred child.

If $v$ has at least one light left sibling
or is the root $r$ of $T$,
$v$ is not a preferred child even
after the update and so
must be an inner node in $T$,
i.e., we cannot have $v=i$ (and
nontheless be in Case~2).
Then $v$ is the top node of $P'$,
$i$ is its proxy,
and the leftmost leaf descendant $h$
of $v$ is the historian of~$P'$ 
(see Fig.~\ref{fig:setcolor}(a)).
Since neither $i$ nor $h$ was the leftmost
leaf descendant
of a light node in $T$ or
belonged to a light path before
the update, neither was a proxy or a
historian before the update.
Thus $\pi$ changes into a permutation $\pi'$
of $\{1,\ldots,N\}$ that coincides with $\pi$, except that
$\pi(i)=i=\pi'(h)$ and
$\pi(h)=h=\pi'(i)$.
Accordingly carry out the following steps:
If $i$ was empty before the update,
let $D_i$ be a newly initialized leaf dictionary.
Subsequently, whether or not $i$ was empty,
execute $D_{i}.\Tvn{setcolor}(j,m)$
and convert $D_i$ to the compact representation.
Then store $D_i$ together with
the new history of $i$, which
is a suffix of~$\Gamma'$, in $H_h$
while
saving the old value of $H_h$ in $H_i$ if $h\not=i$.

\begin{figure}
\begin{center}
\epsffile{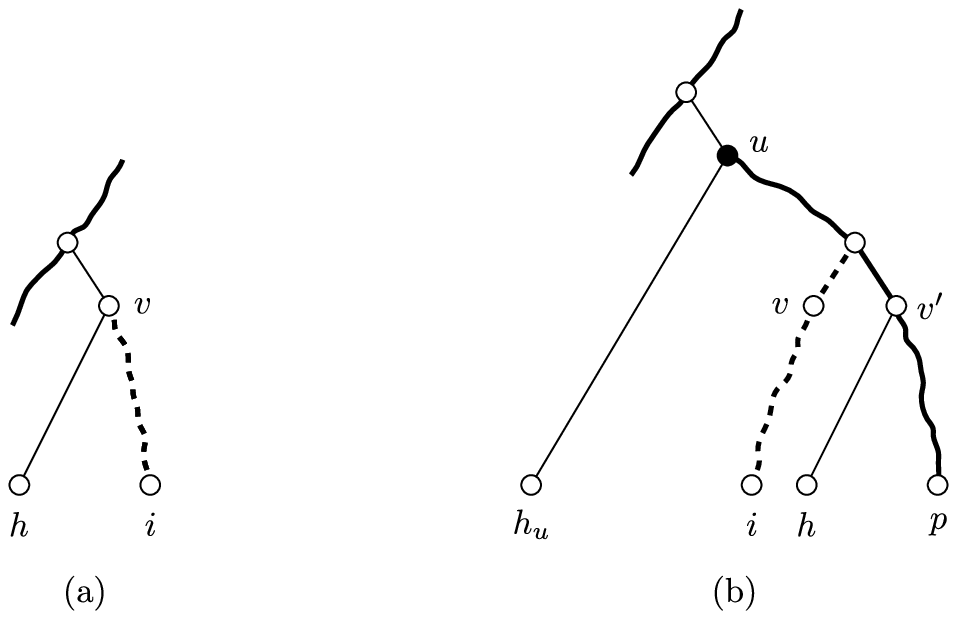}
\end{center}
\caption{(a) A new light path $P'$ (from $v$ to $i$)
is added without changes to the existing light paths.
(b) A new light path (from $u$ to $i$)
grabs an initial part of an old light path (from $u$ to $p$).}
\label{fig:setcolor}
\end{figure}

If $v$ has no light left sibling and is not $r$,
the situation is more complicated
(see Fig.~\ref{fig:setcolor}(b)).
This is because the update switches the
preferred child of the parent of $v$
from some sibling $v'$ of $v$ to~$v$.
Carry out a partial descent in $T$,
starting at $v'$ and with the
selection rule ``proxy-seeking'', to find the
end node $p$ of the old light path through $v'$
and let $h$ be the leftmost leaf descendant of~$v'$.
Also ascend in $T$ from $v$ until finding
the last (deepest) node $u$ on $P$
that was a top node before the update.
After the update $u$ is still a top node,
but with proxy $i$ instead of its old proxy $p$.
Assume first that $v'$ is a nonempty inner node in $T$.
Then $v'$ is a right sibling of $v$ and a new top node with
proxy $p$ and historian~$h$.
Let $h_u$ be the historian of $u$
(which does not change as a result of the update).
Before the update, $i$ was neither a proxy nor a
historian unless $i=h_u$,
and $h$ was neither a proxy nor a historian
unless $h=p$.
The update therefore changes $\pi$ into the permutation
$\pi'$ of $\{1,\ldots,N\}$ that coincides with $\pi$, except that

\medskip

\centerline{\vbox{\tabskip=1em\halign{$#$\hfil&#$=$&\hfil$#$\hfil&#$=$&\hfil$#$&
 \hskip 2em#\hfil\cr
\pi(h)&&h&&\pi'(p)&(only if $h\not=p$)\cr
\pi(p)&&h_u&&\pi'(i)\cr
\pi(i)&&i&&\pi'(h_u)&(only if $i\not=h_u$)\cr
\pi(h_u)&&p&&\pi'(h)\rlap{.}\cr
}}}

\medskip
\noindent
Accordingly,
move the old value of $H_h$ to $H_p$ (superfluous if $h=p$),
move the old value of $H_p$ to $H_i$
(superfluous if $i=h_u$),
let $D_i$ be a newly initialized leaf dictionary
if $i$ was empty before the update
and otherwise obtain $D_i$ from the old value of $H_{\pi(i)}$,
execute $D_i.\Tvn{setcolor}(j,m)$,
store the compact representation of $D_i$
together with the new history of $i$
(which is a suffix of $\Gamma'$) in $H_{h_u}$,
and finally move the old value of $H_{h_u}$ to $H_h$.
The latter value contains $D_{p}$ together with the
old history of $p$.
The new history of $p$ is a proper suffix of its
old history, but storing the latter does no harm
(the extra bits are considered unused anyway).

If $v'$ is empty, the steps executed are the same,
except that we refrain from moving
the old value of $H_{h_u}$ to $H_h$
(if $h=h_u$ the appropriate value was already stored in $H_h$,
and if $h\not=h_u$
the new value of $H_h$ can be arbitrary).
Finally, if $v'$ is a nonempty leaf,
the procedure is also the same, except that
at the end,
since $v'=p=h$ stops being a proxy, $D_p$
(stored in $H_p$) must be converted to the standard representation.
Its free bits contain no relevant information.

\medskip
\emph{Case 3}:
$i\in V_Q$ holds before the update, but not after it.\\
This case essentially entails undoing the steps
described for the previous case.
Traverse $P$ backwards to find the first node~$v$
on $P$ that ceases to belong to $V_Q$ as a result of the update.
The update changes the status of $i$ and $v$
from light to empty or full.

If $v$ was a top node before the update,
$i$ was its proxy, and neither $i$ nor the
old historian $h$ of $v$ is a proxy or a
historian after the update.
Thus $\pi$ changes into the permutation
$\pi'$ of $\{1,\ldots,N\}$ that coincides with $\pi$,
except that
$\pi(i)=h=\pi'(h)$ and
$\pi(h)=i=\pi'(i)$.
Accordingly obtain $D_i$ from $H_h$,
convert it to the standard representation,
execute $D_i.\Tvn{setcolor}(j,m)$
and store $D_i$ in $H_i$ while
saving the old value of $H_i$ in $H_h$ if $h\not=i$.

If $v$ was not a top node before the update,
let $v'$ be the preferred child of the parent of
$v$ after the update and let $h$
be the leftmost leaf descendant of~$v'$.
If $v'$ is empty take $p=h$, and otherwise let
$p$ be the leaf reached by
a partial descent in $T$ that starts at $v'$
and uses the selection rule ``proxy-seeking''.
Assume first that $v'$ is a nonempty inner node in~$T$.
Then $v'$ was a top node with proxy $p$
and historian $h$ before the update.
Ascend in $T$ from $v$ to find the last node $u$
on $P$ that was a top node before the update.
After the update $u$ is still a top node, but
with proxy $p$ instead of its old proxy~$i$,
and $v'$ is not a top node.
Let $h_u$ be the historian of $u$
(which does not change as a result of the update).
After the update, $i$ is neither a proxy nor a historian
unless $i=h_u$, and $h$ is neither a proxy nor
a historian unless $h=p$.
Therefore the update changes $\pi$ into the
permutation $\pi'$ of $\{1,\ldots,N\}$ that coincides
with $\pi$, except that

\medskip

\centerline{\vbox{\tabskip=1em\halign{$#$\hfil&#$=$&\hfil$#$\hfil&#$=$&\hfil$#$&
 \hskip 2em#\hfil\cr
\pi(h)&&p&&\pi'(h_u)\cr
\pi(h_u)&&i&&\pi'(i)&(only if $i\not=h_u$)\cr
\pi(i)&&h_u&&\pi'(p)\cr
\pi(p)&&h&&\pi'(h)&(only if $h\not=p$).\cr
}}}

\medskip
\noindent
Accordingly, 
move the old value of $H_h$ to $H_{h_u}$
after prefixing the history stored
in $H_h$ by the subsequence of $\Gamma'$ that pertains to
the part of $P$ from $u$ to the parent of $v'$,
convert $D_i$ (found in the old value of $H_{h_u}$)
to the standard representation,
execute $D_i.\Tvn{setcolor}(j,m)$
and store $D_i$ in $H_{\pi'(i)}$.
Finally move the old value of $H_i$ to $H_p$ if $i\not=h_u$
and move the old value of $H_p$ to~$H_h$
if $h\not=p$.

If $v'$ is empty, the steps executed are the same,
except that in place of the old value of $H_h$
(which can be arbitrary) we use a newly initialized
leaf dictionary, converted to the compact representation
and equipped with a history that shows every node
as being empty.
If $v'$ is a nonempty leaf, the procedure is also the same, except that
since $v'=h=p$ was not a proxy before the update, the old value of $H_h$,
before being moved to $H_{h_u}$, must be converted
to the compact representation and equipped
with a history equal to (the empty sequence prefixed by)
the subsequence of $\Gamma'$ that pertains to
the part of $P$ from $u$ to the parent of $v'$.

Observe that within each of Cases 1--3, all reading
from some of $H_1,\ldots,H_N$ can take place before
all writing to some of the same words.
Therefore the only possible source of inconsistency
in the data read
is the update of histories of nodes on~$P$
carried out before the computation
splits into Cases 1--3.
Most of the update conceptually happens on the
path $P$, on which we can navigate using
$\Gamma$ and $\Gamma'$.
The only occasion on which we need to navigate
outside of $P$ is during the partial descent
from $v'$ that takes place in Cases 2 and~3.
But if $v'$ is empty, the descent is trivial
and needs no inspection of histories, and if
$v'$ is light, it is necessarily to the
right of $v$, which implies that every history
inspected during the descent from $v'$ is stored
strictly to the right of every history of a
node on~$P$, and thus of every leaf
whose associated word might already have changed.

\medskip

Every operation of $D$ inspects or changes
$O(t)$ parts of histories stored in leaf
dictionaries, which takes $O(t)$ time.
In addition to this, $D.\Tvn{color}$
calls \Tvn{color} once in a leaf dictionary
and $D.\Tvn{successor}$ calls \Tvn{successor}
at most twice in a leaf dictionary.
Therefore these operations execute in $O(t+f)$
and $O(t+c)$ time, respectively.
$D.\Tvn{setcolor}$ carries out a constant number
of conversions between standard and
compact representations in
leaf dictionaries and therefore
altogether executes in $O(t+c)$ time.
If tables are available that allow the
leaf dictionaries to execute \Tvn{color}
and \Tvn{successor} in constant time,
$D$'s operations \Tvn{color} and \Tvn{successor}
work in $O(t)$ time (whereas the
time bound for \Tvn{setcolor} does not change).
This ends the proof of Lemma~\ref{lem:smalltree}.
\end{proof}

\begin{lemma}
\label{lem:unsystematic-tf}%
There is a choice dictionary that, for
arbitrary $n,f,t\in\TbbbN$,
can be initialized for universe size~$n$,
$c=2^f$ colors and tradeoff parameter $t$
in constant time and that subsequently occupies
$n f+O({{c n(c^2 f t}/w})^t+\log n)$ bits
and supports \Tvn{color} in $O(t+f)$ time and
\Tvn{setcolor},
\Tvn{choice} and, given $O(c\log n)$ additional bits,
robust iteration in $O(t+c)$ time.

Alternatively, for arbitrary fixed $\epsilon>0$,
if given access to tables of
$O(c^{\epsilon c^2})$ bits that can be computed in
$O(c^{\epsilon c^2})$ time and
depend only on~$c$, the data structure supports
\Tvn{color}, \Tvn{choice} and,
given $O(c\log n)$ additional bits,
robust iteration
in $O(t)$ time.
\end{lemma}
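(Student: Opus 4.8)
The plan is to place many copies of the small tree of Lemma~\ref{lem:smalltree} next to each other and glue them together with the trie-combination method of Section~\ref{sec:trie}, capping the construction with a shallow trie so that the overall universe reaches the required size~$n$. Throughout I write $c=2^f$, $d=w/(2c^2ft)$ and $N=d^t$, so that one small tree handles $Nw'$ elements in $Nw+2$ bits --- essentially the information-theoretic $nf$ bits when $n=Nw'$ --- and so that the awkward factor in the target redundancy satisfies $(c^2ft/w)^t=1/(2^tN)$. I first dispose of the degenerate cases: if $d$ fails to be an integer that is at least~$2$, equivalently $w<4c^2ft$, which forces $t=\Omega(w/(c^2f))$ so that the time budget $O(t)$ is very generous, or if $n$ is small enough that no genuine tree on top of the small trees is needed, the claim follows from Lemma~\ref{lem:atomic-c} (noting that $\lceil\log c\rceil=f$ for $c=2^f$) and, where helpful, from a single instance of Lemma~\ref{lem:smalltree} used with a reduced tradeoff parameter.

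In the main case I build an ordered trie~$T$ on the leaf set $\{1,\dots,n\}$ as in Section~\ref{sec:trie}: the nodes of height~$1$ have degree $Nw'$ and carry small trees of Lemma~\ref{lem:smalltree}, so there are $L=\lceil n/(Nw')\rceil$ of them (the rightmost one possibly over a smaller universe), and above them sits a trie of height $O(t)$ whose inner nodes carry cheap choice dictionaries --- a constant number of bottom levels realised with Lemma~\ref{lem:atomic-c} and the remaining $O(t)$ levels realised, in the manner of the proof of Theorem~\ref{thm:systematic-2}, either by dictionaries of Theorem~\ref{thm:nlogn} with degrees of order $(n/(Nw'))^{1/t}$ or, when the target redundancy is too small to absorb the $\Theta(q\log q)$-per-node overhead of Theorem~\ref{thm:nlogn}, by a more economical (recursively organised) navigation structure over the $L$ small trees. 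The colour machinery of Section~\ref{sec:trie} is applied verbatim: each inner node of height~$\ge 2$ carries $c$ independent $2$-colour dictionaries, one per colour, together with the auxiliary dictionary~$D^*$ recording which upper and lower trees have been initialised. An operation of the overall dictionary then spends $O(t+c)$ time --- $O(t+f)$ for \Tvn{color} --- inside the single small tree it visits, and $O(1)$ time at each of the $O(t)$ levels above it: \Tvn{color}, \Tvn{choice} (through \Tvn{p-select}) and robust iteration each cost $O(1)$ per node of a Theorem~\ref{thm:nlogn} dictionary, and a \Tvn{setcolor} propagated upward amounts at every level to an insertion or deletion, i.e.\ a \Tvn{setcolor} between the colours $0$ and $1$, which by the sharpened bound of Theorem~\ref{thm:nlogn} runs in $O(|1-0|+1)=O(1)$ time. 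A small tree realises $\Tvn{choice}(j)$ as $\Tvn{successor}(j,0)$, and robust iteration over the whole structure is obtained from the general recipe of Section~\ref{sec:trie}, iteration inside a small tree being the reduction of \Tvn{iterate} to \Tvn{successor} from Section~\ref{sec:preliminaries}; since the state of one iteration fits in $\lceil\log(n+1)\rceil$ bits, $c$ simultaneous iterations require the stated $O(c\log n)$ additional bits.

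For the space bound, the small trees contribute $L(Nw+2)=nf+O(L)$ bits and the dictionaries above contribute $O(L)$ inner nodes' worth of auxiliary data; the crux is to choose the upper degrees (and, in the residual regime, the recursive structure) so that, using $d\ge 2$ and $w=\Omega(\log n)$, both $O(L)$ and the upper-trie contribution are dominated by $cn/(2^tN)=cn(c^2ft/w)^t$, whence the total is $nf+O(cn(c^2ft/w)^t+\log n)$, the $O(\log n)$ term absorbing the stored parameters and the negligible dictionaries~$D^*$. Constant-time initialisation is obtained as usual by allocating small trees and upper dictionaries lazily with the technique built on Lemma~\ref{lem:2.12}. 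The alternative claim is then immediate: substituting the table-assisted variant of Lemma~\ref{lem:smalltree} for every small tree makes \Tvn{color} and \Tvn{successor} --- hence \Tvn{choice} and iteration --- run in $O(t)$ time inside a small tree, while \Tvn{setcolor} keeps its $O(t+c)$ bound, and the tables of $O(c^{\epsilon c^2})$ bits are shared among all small trees. I expect the main obstacle to be precisely this space accounting --- reconciling the per-node overhead of the navigation dictionaries in the upper trie with the extremely small target redundancy $cn(c^2ft/w)^t$ across all ranges of $f$, $t$ and $n$ --- together with the routine but delicate bookkeeping needed to keep the lazily allocated layout within the stated bound while retaining constant-time initialisation.
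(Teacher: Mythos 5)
Your overall plan---trees of Lemma~\ref{lem:smalltree} at the bottom, glued together by the trie-combination method of Section~\ref{sec:trie}---is indeed the paper's plan, but the step you yourself flag as the main obstacle, namely the space accounting for the navigation structure above the small trees, is exactly the step you have not closed, and your fallback (``a more economical (recursively organised) navigation structure'') is never specified. The paper closes it with a much simpler device than your $O(t)$-level upper trie: it takes the degree sequence $(Nw',n)$, so the trie has height~2 and the entire upper structure is a single node over universe size $m=\Tceil{n/(Nw')}$, equipped with $c+1$ instances of the \emph{systematic} choice dictionary of Theorem~\ref{thm:systematic-2} (one per color, plus one recording which lower structures have been initialized). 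The observation you are missing is the size of the budget: $cn(c^2ft/w)^t=cn/N\approx c\,m\,w'$, i.e.\ roughly $w/f$ bits per lower tree and per color, so \emph{any} upper dictionary spending $O(1)$ bits per element per color fits, and Theorem~\ref{thm:systematic-2} provides exactly that ($m+o(m)+O(\log m)$ bits per color) with $O(t)$-time operations and constant-time initialization. There is consequently no Theorem~\ref{thm:nlogn} overhead to reconcile and nothing to re-derive; your worry about that overhead is not baseless (for instance, with $t=1$ and $f=\Theta(\log w)$ a single Theorem~\ref{thm:nlogn} node of degree $m$ already overshoots the $\Theta(mw/f)$-per-color budget by a factor of $\Theta(f)$), which is precisely why leaving this piece as an unspecified ``recursive'' structure leaves the proof incomplete, whereas invoking Theorem~\ref{thm:systematic-2} as a black box settles it.

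A second concrete gap is your treatment of the degenerate case. When $w<4c^2ft$ you appeal to Lemma~\ref{lem:atomic-c}, but over universe size $n$ that dictionary executes \Tvn{successor}, and hence \Tvn{choice}, in $O(1+nf/w)$ time, which is not $O(t+c)$ once $n$ is large (say $n\gg w^2/(c^2f^2)$); and ``a single instance of Lemma~\ref{lem:smalltree} with a reduced tradeoff parameter'' cannot repair this, since one such tree covers only $Nw'$ elements, not an arbitrary $n$. The paper instead notes that when $c^2ft\ge w$ the redundancy term satisfies $(c^2ft/w)^t\ge 1$, so $c$ instances of Theorem~\ref{thm:systematic-2}, one per color, yield constant-time operations within the permitted $nf+O(cn)$ bits; and it removes the remaining integrality issue for $d$ by simulating a word length of $2w$ with constant slowdown, so that $d=w/(c^2ft)\ge 2$ may be assumed to be an integer. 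With these two repairs---the height-2 trie capped by Theorem~\ref{thm:systematic-2} and the correct degenerate-case fallback---your argument coincides with the paper's proof; the remaining ingredients you describe (colors handled via $c$ upper dictionaries plus $D^*$, \Tvn{choice} as \Tvn{successor}$(j,0)$, iteration via the reduction to \Tvn{successor} with $O(c\log n)$ state bits, and shared $O(c^{\epsilon c^2})$-bit tables for the alternative claim) are sound and match the paper.
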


\begin{proof}
A data structure composed of $c$ instances of
the data structure of Theorem~\ref{thm:systematic-2},
one for each color, can support each
operation in constant time.
Assume therefore without loss of generality that
$c^2 f t<w$.
A word RAM with a word length of $w$ bits
can simulate one with a word length of $2 w$ bits
with constant slowdown.
This allows us to assume not only that
$c^2 f t<w$, but that $w$ is a multiple of $c^2 f t$
and that
the available
word length is in fact $2 w$.
Therefore define
$d={{2 w}/{(2 c^2 f t)}}={w/{(c^2 f t)}}\ge 2$ and $N=d^t$
and let $w'={{2 w}/f}$, in accordance
with the conventions used in this
section until this point.
Define $m=\Tceil{{n/{(N w')}}}$.
If $m=1$, the result follows directly from
Lemma~\ref{lem:smalltree}.
The latter assumes the universe size to be
exactly $N w'$, but a tree with fewer than
$N$ leaves
can be accommodated with straightforward
changes---essentially, each node should
adapt to
the number of its children---and
an ``incomplete leaf'',
one whose universe size is smaller than~$w'$,
can be handled separately with
Lemma~\ref{lem:atomic-c}.

If $m\ge 2$, we employ the trie-combination method of
Section~\ref{sec:trie} with
the degree sequence $(N w',n)$,
so that the overall trie is of height~2.
The 2-color choice dictionaries
of (the root of) the upper trie of height~1,
one for each of the $c$ colors and one
to keep track of initialization, are
instances of the data structure of
Theorem~\ref{thm:systematic-2}.
The number of bits needed
for these choice dictionaries is
$O(c m)=O(c n/{d^t})
=O(c n({{c^2 f t}/w})^t)$.

The choice dictionaries of (the roots of) the
lower tries, also of height~1,
are instances of the data
structure of Lemma~\ref{lem:smalltree}.
The same simple changes as above to reduce the
universe size yield a data structure
suitable for use at the rightmost lower trie.
Each instance uses 
a number of bits equal to $2$ plus
$f$ times its universe size, so the total
number of bits needed by all instances is $n f+2 m$.
This shows the space bounds of the theorem.
The time bounds follow from those of
Lemma~\ref{lem:smalltree}.
\end{proof}

\begin{theorem}
\label{thm:unsystematic-f}%
For every fixed $\epsilon>0$,
there is a choice dictionary that, for
arbitrary $n,f,t\in\TbbbN$,
can be initialized for universe size~$n$,
$c=2^f$ colors and tradeoff parameter $t$
in constant time and subsequently occupies
$n f+O({{c n(c^2 f t}/w})^t+c^{\epsilon c^2}+\log n)$ bits
and supports
\Tvn{setcolor} in $O(t+c)$ time and
\Tvn{color}, \Tvn{choice} and,
given $O(c\log n)$ additional bits,
robust iteration in $O(t)$ time.
In particular, if $c=O(\sqrt{{{\log n}}/{\log\log n}})$,
there is a data structure with the functionality
indicated that occupies
$n f+O({{c n(c^2 f t}/w})^t+n^\epsilon)$ bits.
\end{theorem}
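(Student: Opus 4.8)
The plan is to deduce Theorem~\ref{thm:unsystematic-f} from Lemma~\ref{lem:unsystematic-tf} by supplying internally, and at no asymptotic cost in initialization time, the precomputed tables that the ``alternative'' part of that lemma assumes to be given from outside. This is exactly the situation already met when deriving Theorem~\ref{thm:succincter} from Lemma~\ref{lem:succincter-t} and at the end of the proof of Theorem~\ref{thm:p}: the tables of Lemma~\ref{lem:j-free} occupy $O(c^{\epsilon c^2})$ bits and are computable in $O(c^{\epsilon c^2})$ time, the space bound of the theorem already budgets those $O(c^{\epsilon c^2})$ bits, and only the time for computing them must be hidden. So I would allocate a block $Y$ for the tables, build $Y$ piecemeal in a background process interleaved with the first $\Theta(c^{\epsilon c^2})$ operations, serve those operations from a temporary, table-free substitute, and, once $Y$ is ready, hand over to an instance of the (alternative part of the) data structure of Lemma~\ref{lem:unsystematic-tf} that uses~$Y$.

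First, though, I would dispose of the regimes in which no tables are needed at all. If $c^2 ft\ge w$, then $(c^2 ft/w)^t\ge 1$, so $cn(c^2 ft/w)^t\ge cn$; as in the proof of Lemma~\ref{lem:unsystematic-tf}, an explicit $n$-digit colour array together with $c$ systematic two-colour choice dictionaries of Theorem~\ref{thm:systematic-2} (one per colour, the one for colour~$0$ maintaining the complement of~$S_0$) then supports \Tvn{color} in $O(1)$ time, \Tvn{setcolor}, \Tvn{choice} and robust iteration in $O(t)$ time, and fits in $nf+O(cn(c^2 ft/w)^t)$ bits, with the $c^{\epsilon c^2}$ term unused. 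Moreover, if $c$ is bounded by a constant $C=C(\epsilon)$ chosen large enough that $c^{\epsilon c^2}\le C^{\epsilon C^2}=O(1)$ for all $c\le C$, the tables occupy $O(1)$ bits and can simply be computed during a constant-time initialization, so the alternative part of Lemma~\ref{lem:unsystematic-tf} applies verbatim. Hence I may assume $c^2 ft<w$ and $c>C$, so that $c^{\epsilon c^2}=\omega(1)$ and there is real background work to spread out.

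In this remaining case I would follow the layout of the proof of Theorem~\ref{thm:succincter}: allocate first the $\Theta(c^{\epsilon c^2})$-bit block $Y$, then the space for the definitive structure $D$, and, overlapping $D$, the space for a temporary structure $D\Tsup T$. During a \emph{first phase} of $\Theta(c^{\epsilon c^2})$ operations, build $Y$ a constant number of table entries per operation and answer every operation through $D\Tsup T$; during a \emph{second phase} of comparable length, move the at most $O(c^{\epsilon c^2})$ elements of nonzero colour out of $D\Tsup T$ into $D$ one at a time, answering a query by first consulting $D\Tsup T$ and falling back to $D$ only for already-transferred elements; after the second phase $D\Tsup T$ is empty and disappears, and $D$, now with $Y$ available, runs at full speed. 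The memory sharing between $D\Tsup T$ and $D$ is handled exactly as in Theorem~\ref{thm:succincter}: partition $U$ into a bounded number of equal-size sectors, designate one of them \emph{forbidden}, compose all accesses with a lazily defined sector permutation so that no element of the forbidden sector acquires a nonzero colour during the first $2\cdot\Theta(c^{\epsilon c^2})$ operations, and place $D\Tsup T$ inside the memory reserved for the (never written, hence free) forbidden sector.

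The real obstacle is the design of $D\Tsup T$, which must simultaneously use no precomputed tables, execute \Tvn{color}, \Tvn{choice} and robust iteration in $O(t)$ time and \Tvn{setcolor} in $O(t+c)$ time, and fit — together with $Y$ and the permanent structure — inside $nf+O(cn(c^2 ft/w)^t+c^{\epsilon c^2}+\log n)$ bits while holding up to $O(c^{\epsilon c^2})$ coloured elements. The table-free part of Lemma~\ref{lem:unsystematic-tf} meets the space constraint but only gives \Tvn{color} in $O(t+f)$ and \Tvn{choice} in $O(t+c)$, whereas a sparse choice dictionary of Theorem~\ref{thm:m-c} meets the time bounds at the price of an $n^{\epsilon}$ space factor that is ruinous when $c$ is small (a sub-case already settled above). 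The route I would pursue is to run the table-free Lemma~\ref{lem:unsystematic-tf} structure as $D\Tsup T$ but, restricted to the $O(c^{\epsilon c^2})$ currently coloured elements, keep a small auxiliary dictionary from element to colour so that \Tvn{color} becomes one lookup and $c$ short per-colour lists so that \Tvn{choice} and iteration cost $O(1)$; the auxiliary part occupies $O(c^{\epsilon c^2}\log n)$ bits, and one then has to verify — by a short case analysis on the size of $c^{\epsilon c^2}$ relative to $n$, exploiting the freedom to choose the $\epsilon$ fed to Lemmas \ref{lem:unsystematic-tf} and~\ref{lem:j-free} and to Theorem~\ref{thm:m-c} smaller than the theorem's own $\epsilon$, and possibly using for $D\Tsup T$ a trie of depth $O(t)$ rather than constant depth — that $O(c^{\epsilon c^2}\log n)$ is absorbed by $c^{\epsilon c^2}+\log n$. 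Once $D\Tsup T$ is in place, the phased hand-over, the forbidden-sector permutation argument and the $O(t)$/$O(t+c)$ time accounting are a routine transcription of the corresponding parts of the proofs of Theorems \ref{thm:succincter} and~\ref{thm:p}, and the ``in particular'' clause then follows because $c=O(\sqrt{\log n/\log\log n})$ forces $c^{\epsilon c^2}=n^{O(\epsilon)}$.
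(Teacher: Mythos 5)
Your reduction of the extreme cases (constant $c$, and $c^2ft\ge w$) is fine, but in the remaining case your plan deviates from the paper and has a genuine gap exactly at the point you flag. You propose to precompute the entire $O(c^{\epsilon c^2})$-bit table in a background process spread over the first $\Theta(c^{\epsilon' c^2})$ operations, serving those operations from a temporary structure $D\Tsup T$ in the style of Theorem~\ref{thm:succincter}. During that (very long) first phase an adversary can give nonzero colors to one new element per operation, so $D\Tsup T$ must hold up to $\Theta(c^{\epsilon' c^2})$ colored elements, and your auxiliary element-to-color dictionary plus per-color lists cost $\Theta(\log n)$ bits per such element, i.e.\ $\Theta(c^{\epsilon' c^2}\log n)$ bits. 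This is \emph{not} absorbed by $O(cn(c^2ft/w)^t+c^{\epsilon c^2}+\log n)$: take $t=\Theta(\log n)$ and $w\ge 2c^2ft$, so the first term is $cn2^{-t}=O(c)$, and let $c$ grow slowly enough that $c^{\epsilon c^2}\le\log n$; then the permitted redundancy is $O(\log n)$, while your transient structure needs $c^{\epsilon' c^2}\log n=\omega(c^{\epsilon c^2}+\log n)$ bits. No choice of the internal $\epsilon'$ repairs this, because the $\log n$ factor does not shrink with $\epsilon'$ (a Theorem~\ref{thm:m-c}-style sparse dictionary has the same per-element $\log n$ cost, plus an $n^\epsilon$ factor). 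In addition, the forbidden-sector permutation of Theorem~\ref{thm:succincter} would have to be scaled from $\Theta(\log n)$ sectors to $\Theta(c^{\epsilon' c^2})$ sectors and re-verified against the layout of the Lemma~\ref{lem:unsystematic-tf} structure, which is further machinery your sketch does not supply.

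The paper's proof is much lighter and sidesteps the whole bootstrap. The tables are needed only to speed up \Tvn{color} and \Tvn{successor} of the leaf dictionary of Lemma~\ref{lem:j-free} from $O(c)$ to constant time, via the table $Y\Tsub C$ that converts a block of about $\epsilon c/2$ small groups between the compact and the $\jj$-intermediate representation. That table is filled in \emph{lazily, one entry at a time}: any block configuration that is ever looked up first came into existence during some earlier call of \Tvn{setcolor}, and since \Tvn{setcolor} is allowed $O(t+c)$ time in the theorem, that call can perform the $O(c)$-time direct conversion of the new block and then write the corresponding table entry; the very first call of \Tvn{setcolor} additionally spends $O(c)$ steps computing the table's size so that it can be allocated. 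Hence every entry inspected by the $O(t)$-time operations \Tvn{color}, \Tvn{choice} and iteration is already correct, and no temporary structure, phases, or sector permutation are needed. To repair your argument, drop the global background precomputation and charge each new table entry to the \Tvn{setcolor} call that creates the block.
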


\begin{proof}
We essentially use the data structure of
Lemma~\ref{lem:unsystematic-tf}, but
incorporate its tables 
into the data structure itself.
We need the tables exclusively to speed
up the operations \Tvn{color} and \Tvn{successor}
of the data structure of Lemma~\ref{lem:j-free}
from $O(c)$ time to constant time.
The only part of the realization of these two
operations described in the proof of
Lemma~\ref{lem:j-free} that needs more than
constant time without the use of
tables is a constant number of conversions
from the compact to the $\jj$-intermediate representation
within \emph{blocks}
of a certain number (approximately
${{\epsilon c}/2}$) of small groups.
Let $Y\Tsub C$ be the table, introduced in the proof of
Lemma~\ref{lem:j-free}, that realizes this conversion. We must prove
that $Y\Tsub C$ can
be computed in a lazy manner so that all
entries ever inspected have the correct value.
But this is easy:
Whenever a new block arises, it does
so in an execution of \Tvn{setcolor}, and the
time bound of \Tvn{setcolor} of
the present theorem allows for the
$O(c)$-time conversion of the block,
as described in the proof of Lemma~\ref{lem:j-free},
after which the relevant table entry can be filled in.
In the very first call of $\Tvn{setcolor}$,
we also use $O(c)$ initial steps to compute
the size of the table, so that the table can
be allocated before the other parts of the
data structure.
\end{proof}

Theorem~\ref{thm:unsystematic-f} can be
used to improve the lower-order terms of
Theorem~\ref{thm:systematic-2}.
Replacing all choice dictionaries at nodes of
height at least~2 in the construction of the proof of
Theorem~\ref{thm:systematic-2} by a single
instance of the choice dictionary of
Theorem~\ref{thm:unsystematic-f} with $c=2$,
we obtain a bound of
$n+{n/{(t w)}}+O(n({t/w})^t+\log n)$ bits.

Lemma \ref{lem:unsystematic-tf}
and Theorem~\ref{thm:unsystematic-f}
deal with the case in which the number $c$ of colors
is a power of~2.
We now turn to the case of general values of $c$
and first provide an analogue of
Lemma~\ref{lem:j-free}.

\begin{lemma}
\label{lem:j-free-c}%
Let $c$, $r$ and $K$ be positive integers
with $c\ge 2$ and $r\log c=O(w)$
and assume that $r$ is a multiple of $c$.
Then there is a choice dictionary $D$ with universe size
$2 r$ and for $c$ colors that can be initialized
in constant time and subsequently,
for integers $K'$, $q$ and $q'$ with $1\le K'\le K$,
$1\le q,q'\le\Tceil{{r/{(c K)}}}$
and $r=c q(K'-1)+c q'$, stores its
state as an element of
$\{0,\ldots,c^{2 c q}-1\}^{K'-1}\times\{0,\ldots,c^{2 c q'}-1\}$
and, if given access to tables of
$O(s)$ bits that can be computed in
$O(s)$ time and depend only on $c$, $r$ and $K$,
where $s=2^{\Tceil{\log c}\cdot 2 c\Tceil{{r/{(c K)}}}} r\log c
=O(c^{3 c+{{3 r}/K}} r)$,
executes
\Tvn{color}, \Tvn{setcolor} and
\Tvn{successor} in $O(K)$ time.
Moreover, during periods in which
$S_{\jj}=\emptyset$, where $(S_0,\ldots,S_{c-1})$
is $D$'s client vector and $\jj\in\{0,\ldots,c-1\}$,
$D$ supports $O(K)$-time conversion to and from
a $\jj$-free representation in which $D$ can
store ${r/c}$ unrelated bits, spaced apart by
gaps of $\Tfloor{2 c\log c}-1$ bits, that can be
read and written together in $O(K)$ time.
\end{lemma}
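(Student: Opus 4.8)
The plan is to follow the proof of Lemma~\ref{lem:j-free}, scaled up by a factor governed by~$K$ and with precomputed tables keeping the cost of processing one \emph{block} constant. As there, view $D$'s task as maintaining a sequence of $2r$ \emph{digits} to base~$c$, the $\ell$th recording the color of the $\ell$th universe element. In the \emph{standard representation} the $2r$ digits are cut into $K'$ consecutive blocks, the first $K'-1$ of $2cq$ digits and the last of $2cq'$ digits; the digits of a block are packed, most significant first, into one \emph{big digit} to base $c^{2cq}$ (respectively $c^{2cq'}$), stored in $\Tceil{2cq\log c}$ (respectively $\Tceil{2cq'\log c}$) consecutive bits, and the concatenation of the $K'$ big digits is exactly the claimed state. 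Since $r\log c=O(w)$ and $2cq,2cq'\le 2c\Tceil{r/(cK)}$, each block occupies $O(r\log c)=O(w)$ bits. One table maps the bits holding a block to the concatenation of the $\Tceil{\log c}$-bit binary representations of its digits---its \emph{loose representation}---and a second realizes the inverse; each has at most $2^{\Tceil{\log c}\cdot 2c\Tceil{r/(cK)}}$ entries of $O(r\log c)$ bits, depends only on $c$, $r$ and~$K$, fits in $O(s)$ bits, and can be computed in $O(s)$ time. With these tables, $\Tvn{color}$ converts the block containing the queried digit to its loose representation and reads off the digit; $\Tvn{setcolor}$ also overwrites the digit and converts the block back; and $\Tvn{successor}(j,\ell)$ searches, via Lemma~\ref{lem:word}(b) applied to the loose representation exclusive-ored field-wise with~$j$, for an occurrence of~$j$ in the block of~$\ell$ and, failing that, converts the $\le K'-1$ later blocks in turn, applying the same test to find the first block containing~$j$ and the position of the leftmost such digit in it. Each block is handled in $O(1)$ time, so all three operations run in $O(K)$ time.

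Next comes the $\jj$-free representation, available while $S_\jj=\emptyset$, when no digit equals~$\jj$. As in Lemma~\ref{lem:j-free} we apply to every digit the increasing bijection $\Tvn{skip}_\jj$ from $\{0,\ldots,c-1\}\setminus\{\jj\}$ onto $\{0,\ldots,c-2\}$, obtaining $2r$ digits to base~$c-1$; done block by block on the loose representation, this is one field-wise subtraction that decrements exactly the fields whose value exceeds~$\jj$ (identified with Lemma~\ref{lem:word}(c) using $k=\jj$), with $\Tvn{skip}_\jj^{-1}$ applied analogously. The $\jj$-free representation then cuts the $2r$ rebased digits into $r/c$ consecutive \emph{groups} of $2c$ digits and stores the $2c$ digits of a group---an integer in $\{0,\ldots,(c-1)^{2c}-1\}$---in the low $\Tfloor{2c\log c}-1$ bits of a $\Tfloor{2c\log c}$-bit field, leaving that field's top bit unused. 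This fits because $(1+1/(c-1))^c\ge e$ for every $c\ge 2$, so $2c\log(c-1)\le 2c\log c-2\log e<2c\log c-2$ and hence $\Tceil{2c\log(c-1)}\le\Tceil{2c\log c}-2\le\Tfloor{2c\log c}-1$. The $r/c$ unused bits occupy exactly the positions that are multiples of $\Tfloor{2c\log c}$, so they are spaced $\Tfloor{2c\log c}-1$ apart as claimed; and since $2cq$ and $2cq'$ are multiples of~$2c$, each block is a union of whole groups, so neither groups nor free bits straddle a block boundary. The $\jj$-free representation occupies $(r/c)\Tfloor{2c\log c}\le 2r\log c$ bits, at most what the standard representation uses, so the two can share the same bit region.

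Conversion in either direction is done one block at a time: having converted a block to base-$c$ loose form (one lookup) and rebased it to base~$c-1$ (constant-time word parallelism), one further lookup maps the whole loose base-$(c-1)$ digit string of the block to the $\jj$-free packed form and back. This last table is independent of~$\jj$, so only $O(1)$ tables are needed in all, with sizes and build times within the stated bounds; each block costs $O(1)$ and there are at most~$K$ of them, so conversion runs in $O(K)$ time, as do $\Tvn{color}$, $\Tvn{setcolor}$ and $\Tvn{successor}$ while $D$ is in the $\jj$-free representation---they merely unpack the affected block(s) on the fly, with~$\jj$ supplied as an extra argument. Reading or writing the $r/c$ free bits is again a pass over the blocks, collecting each block's free bits---at known, regularly spaced positions---into consecutive positions by one multiplication and shift in the style of Lemma~\ref{lem:word}. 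Initialization does nothing: the all-zeros content of $D$'s bit region is the standard representation of the initial client vector $(U,\emptyset,\ldots,\emptyset)$, and, as in Lemma~\ref{lem:j-free}, the caller rather than~$D$ remembers which representation is in force and passes~$\jj$ when needed. The one step that genuinely needs care is the repacking estimate above---that $2c$ rebased digits fit in $\Tfloor{2c\log c}-1$ bits---since that is what creates the promised free bit per group and fixes its position; the rest is a careful but routine reprise of the techniques of Lemmas~\ref{lem:word} and~\ref{lem:j-free}, organized so that each of the $O(K)$ blocks is touched only $O(1)$ times.
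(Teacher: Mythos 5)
Your proposal is correct and takes essentially the same route as the paper's proof: the same three representations (packed standard form per segment, a loose form of $\Tceil{\log c}$-bit fields, and a $\jj$-free form with groups of $2c$ base-$(c-1)$ digits in $\Tfloor{2c\log c}$-bit fields), the same key inequality $2c\log(c-1)\le\Tfloor{2c\log c}-1$ creating one free bit per group, conversion tables made independent of $\jj$ by applying $\Tvn{skip}_\jj$ and its inverse word-parallel in the loose form, and $O(K)$ time from constant work per segment. The only (immaterial) deviations are that you work block-locally where the paper converts the entire state to the loose representation and back around each operation, and your incidental remarks that the free bits sit at multiples of $\Tfloor{2c\log c}$ and can be gathered by a single multiplication are slightly imprecise (the gather needs more care or simply another table, which your $O(s)$ budget easily affords), exactly the level of detail the paper itself leaves implicit.
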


\begin{proof}
Write $r'={r/c}$ as $r'=(K'-1)q+q'$ for integers
$K'$, $q$ and $q'$ with $1\le K'\le K$ and
$1\le q,q'\le\Tceil{{r'}/K}$, which is clearly
possible (e.g., take $q=\Tceil{{r'}/K}$ and
choose $K'$ as large as possible).
$D$ operates with three types of representations.
In the \emph{standard representation}, the
$2 r$ digits, each drawn from $\{0,\ldots,c-1\}$,
are partitioned into \emph{segments} of consecutive
digits, $K'-1$ segments of $2 c q$ digits each and a
final segment of $2 c q'$ digits,
the digits within each segment are represented through an
integer in $\{0,\ldots,c^{2 c q}-1\}$ for the
$K'-1$ first segments and in
$\{0,\ldots,c^{2 c q'}-1\}$ for the final segment,
and $D$ stores the resulting $K$-tuple of integers.
In the \emph{compact representation}, i.e., the
$\jj$-free representation for some
$\jj\in\{0,\ldots,c-1\}$,
the digits are
partitioned into \emph{small groups} of
$2 c$ consecutive digits each, and the digits
in each small group are represented through
an integer in $\{0,\ldots,(c-1)^{2 c}-1\}$.
Since $2 c\log(c-1)\le 2 c\log c+2 c\ln(1-{1/c})
\le\Tfloor{2 c\log c}-1$, each small group can
be stored in a field of
$f=\Tfloor{2 c\log c}$ bits, with one bit
in the field left unused.
The summary bits of the proof of
Lemma~\ref{lem:j-free} are not needed
in the present data structure.
$D$ stores for each
segment the integer whose binary representation is
the concatenation of the bit sequences in the fields
of the small groups in the segment.
In the \emph{loose representation}, finally,
each of the $2 r$ digits is stored in
$\Tceil{\log c}$ bits, and the entire sequence
of $2 r$ digits occupies
$2 r\Tceil{\log c}=O(w)$ bits.

Conversion between the standard and compact
representations is carried out, segment by
segment, via the loose
representation and with the aid of tables.
This takes constant time per segment
and $O(K)$ time altogether.
As argued in the proof of Lemma~\ref{lem:j-free},
when $D$ is in the loose representation
and $\jj\in\{0,\ldots,c-1\}$,
the functions $\Tvn{skip}_{\jj}$ and $\Tvn{skip}^{-1}_{\jj}$
can be applied to all digits in constant time
using word parallelism.
Because of this, the tables that map to and from
the $\jj$-free representation can be made independent of~$\jj$.
As a consequence, the largest conversion tables
have at most $2^{\Tceil{\log c}\cdot 2 c\Tceil{r/{(c K)}}}$
entries of $O(r\log c)$ bits each,
so the tables are of total size
$O(s)$ bits and can be computed in $O(s)$ time.

When $D$ is in the loose representation, it
can execute \Tvn{color}, \Tvn{setcolor} and
\Tvn{successor} in constant time as shown
in the proof of Lemma~\ref{lem:j-free}.
Since each operation requires at most two
conversions between representations,
it can be carried out in $O(K)$ time.
There is one unused bit for every
small group, i.e., for every $2 c$ digits,
yielding a total of ${r/c}$ unused bits,
and within each segment the unused bits are spaced
apart by gaps of $f-1$ bits.
The unused bits can be read or written in
constant time per segment,
i.e., in $O(K)$ time altogether.
\end{proof}

Substituting Lemma~\ref{lem:j-free-c} for
Lemma~\ref{lem:j-free},
we can prove the following analogue
of Lemma~\ref{lem:unsystematic-tf}:

\begin{lemma}
\label{lem:unsystematic-tc}%
For every fixed $\delta>0$,
there is a choice dictionary that, for
arbitrary $n,c,t,r\in\TbbbN$ with $r\log c=O(w)$,
can be initialized for universe size~$n$,
$c$ colors and tradeoff parameters $t$ and $r$
in constant time and subsequently occupies
$n\log c+O(c n({{c^2 t}/r})^t+\log n+1)$
bits and,
if given access to tables of
$O(c^{\delta r+3 c})$ bits
that can be computed in $O(c^{\delta r+3 c})$
time and depend only on $c$ and $r$,
supports \Tvn{color}, \Tvn{setcolor},
\Tvn{choice} and,
given $O(c\log n)$ additional bits,
robust iteration in $O(t)$ time.
\end{lemma}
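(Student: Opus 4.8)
The plan is to reproduce, essentially verbatim, the two‑step chain from Lemma~\ref{lem:j-free} through Lemma~\ref{lem:smalltree} to Lemma~\ref{lem:unsystematic-tf}, but with the $c$‑color leaf dictionary of Lemma~\ref{lem:j-free-c} playing the role of the power‑of‑two leaf dictionary of Lemma~\ref{lem:j-free}. First I would dispose of the uninteresting regime: if $r<\alpha c^2 t$ for a suitable constant $\alpha$, then $c$ side‑by‑side instances of Theorem~\ref{thm:systematic-2}, one per color, already give all operations in $O(1)$ time and the claimed space bound is dominated by its error term; otherwise set $d=\Tfloor{r/(2c^2t)}\ge 2$, $N=d^t$ and $n_0=2r$. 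As usual, simulating a word RAM of a constant‑multiple word length with constant slowdown lets me assume the mild divisibility conditions required by Lemma~\ref{lem:j-free-c} (in particular that $r$ is a multiple of $c$). The parameter $K$ of Lemma~\ref{lem:j-free-c} I would fix to a constant depending only on $\delta$, large enough that the conversion tables of that lemma, of $O(c^{3c+3r/K}r)$ bits, fit within the promised $O(c^{\delta r+3c})$ bound (absorbing the polynomial‑in‑$r$ factor into a marginally larger constant in the exponent) and small enough that $O(K)=O(1)$, so that every operation of a leaf dictionary runs in constant time.

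The heart of the argument is an analogue of Lemma~\ref{lem:smalltree}: a $c$‑color choice dictionary of universe size $N n_0$ obtained from $N$ copies of the leaf dictionary of Lemma~\ref{lem:j-free-c} by exactly the tree construction in the proof of Lemma~\ref{lem:smalltree}. I would reuse, unchanged, the complete $d$‑ary tree $T$ of depth $t$, the division of the universe into $N$ segments of $n_0=2r$ consecutive elements, the notions of full/deficient/light/empty nodes, spectra, navigation vectors, light paths with their top nodes, proxies and historians, the permutation $\pi$ that swaps each proxy with its historian, the storage scheme in which a proxy leaf dictionary is kept in the $\jj$‑free (``compact'') representation of Lemma~\ref{lem:j-free-c} with the history of its light path tucked into the $r/c$ free bits, Propositions~\ref{prop:navigation} and~\ref{prop:semi}, the three selection rules, and the Case~1--3 analysis of \Tvn{setcolor}. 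Only two cosmetic changes are needed. First, since a history is a concatenation of at most $t$ navigation vectors, each of $cd$ bits, and the compact representation offers $r/c$ free bits, the choice $d\le r/(2c^2t)$ above guarantees that histories fit; the free bits of Lemma~\ref{lem:j-free-c} are already laid out with gaps of $\Tfloor{2 c\log c}-1$, which is exactly the spacing needed so that a navigation vector occupies positions depending only on the height of its node. Second, the ``word'' $H_i$ holding a leaf dictionary is now a fixed‑size block (the $K$‑tuple of the standard representation, or the $K$ fields of the compact representation), but since $K=O(1)$ this affects neither navigation nor the constant‑time conversions between representations. The outcome is a $c$‑color choice dictionary for universe $N n_0$, with constant‑time initialization and with \Tvn{color}, \Tvn{setcolor} and \Tvn{successor} in $O(t)$ time given the shared tables.

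To cover an arbitrary universe $U=\{1,\dots,n\}$ I would then combine these small trees as in the $m\ge 2$ case of Lemma~\ref{lem:unsystematic-tf}, via the trie‑combination method of Section~\ref{sec:trie} with degree sequence $(N n_0,n)$, handling the rightmost (truncated) small tree and a possible incomplete leaf with Lemma~\ref{lem:atomic-c} as there, and obtaining robust iteration from the reduction of \Tvn{iterate} to \Tvn{successor} (costing $O(c\log n)$ bits for $c$ simultaneous iterations). The delicate point, and the one I expect to be the main obstacle, is the space bound: since $\log c$ is no longer an integer, I cannot afford to round each of the $\Theta(n/r)$ leaf dictionaries up to a whole number of bits independently. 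Instead I would store the sequence of all leaf‑dictionary states as a single array of base‑$c^{2r}$ (and, for the truncated leaves, boundedly many smaller) big digits using the array structure of Lemma~\ref{lem:succincter-t} with a table shared by all small trees---legitimate because $2r\log c=O(w)$ and only $O(1)$ distinct digit ranges occur---so that the leaf dictionaries collectively occupy $\sum_\ell\log c_\ell+O(\log n+1)=n\log c+O(\log n+1)$ bits, with each individual leaf operation still constant time (a constant‑time read, a constant‑time leaf operation, a constant‑time write). What then remains is to verify that the remaining overhead---the root bits of the small trees, the choice dictionaries at the single inner trie node administering the $m=\Tceil{n/(N n_0)}$ super‑segments, and the bookkeeping for lazy initialization---telescopes to $O(cn(c^2t/r)^t+\log n+1)$; this is where one must, as in the power‑of‑two case, exploit that $cn/(N n_0)$ already has the shape of the claimed error term, and, where the inner administering node would itself be too expensive, recurse on the construction with the same parameters $c$ and $r$ rather than fall back on Theorem~\ref{thm:systematic-2}.
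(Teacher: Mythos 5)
Your route is the paper's own: swap the power-of-two leaf dictionary of Lemma~\ref{lem:j-free} for that of Lemma~\ref{lem:j-free-c} with $K$ a constant integer exceeding $3/\delta$, reuse the tree machinery of Lemma~\ref{lem:smalltree} verbatim with histories stored in the $r/c$ free bits (the inequality $c d t\le r/c$), pack the leaf states with Lemma~\ref{lem:succincter-t} so that they occupy $n\log c$ bits up to lower-order terms (the paper feeds in the states segment by segment, you per leaf---an immaterial difference as long as the entries are ordered so that the value ranges change only $O(1)$ times), and place the $m$ resulting trees under $c+1$ small dictionaries of Theorem~\ref{thm:systematic-2}; once $d$ is chosen correctly that top level costs only $O(c m)=O(c n(c^2t/r)^t)$ bits, so the recursion you hint at for the ``administering node'' is unnecessary.

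The genuine gap is in the parameter handling, and it matters because the lemma is a purely quantitative statement. You floor $d=\Tfloor{r/(2c^2t)}$ and push the fallback threshold to $r<\alpha c^2 t$ with $\alpha\approx 4$ (needed to get $d\ge 2$). Two consequences: (i) since $d$ can be as small as about $r/(4c^2t)$, the number of trees $m\approx n/(2 r d^t)$, and hence the $O(c m)$ bits of the surmounting dictionaries, grow to order $c n\,4^t(c^2t/r)^t/r$, exceeding the claimed $O(c n(c^2t/r)^t)$ by a factor $2^{\Theta(t)}$ that cannot be absorbed into the $O(\cdot)$; (ii) in the band $c^2 t<r<\alpha c^2 t$ your fallback of $c$ instances of Theorem~\ref{thm:systematic-2} uses about $c n$ bits, while the claimed bound there is $n\log c$ plus an error term that can be as small as $c n\alpha^{-t}+\log n$, so the fallback does not meet the bound either. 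The paper avoids both problems by adjusting $r$, not $d$: the fallback is used only when $c^2 t\ge r$, where $(c^2t/r)^t\ge 1$ and the error term is $\Omega(c n)$, and otherwise $r$ is rounded up to a multiple of $2c^2t$---which only decreases $(c^2t/r)^t$---with the constant-factor growth of $r$ compensated by running the construction with a proportionally smaller $\delta$ in the table-size exponent, after which $d=r/(c^2t)\ge 2$ exactly. Your remark that a constant-factor word-length simulation supplies the divisibility conditions does not apply here, since $r$ is a free parameter decoupled from $w$; the adjustment must be made to $r$ itself. A smaller slip of the same flavour: handling the incomplete leaf with Lemma~\ref{lem:atomic-c} wastes up to $2r(\Tceil{\log c}-\log c)=\Theta(r)$ bits, which need not be $O(\log n+1)$; the paper instead keeps those fewer than $2r$ color values as at most $K$ integers in an ``incomplete standard representation'', with only $O(1)$ bits of overhead.
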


\begin{proof}
Without loss of generality assume that
$c^2 t<r$ and, since an arbitrary increase
of $r$ by at most a constant factor can be
``compensated for'' by a corresponding
decrease in $\delta$, that $r$ is a multiple of
$2 c^2 t$.

We use a similar construction as in the
proof of Lemma~\ref{lem:unsystematic-tf}.
Instead of storing ${w/{\log c}}$ digits to base~$c$
in a $w$-bit word maintained
in an instance of the data structure of
Lemma~\ref{lem:j-free}, however, we now
store $2 r$ digits to base~$c$ in an
instance of the data structure of
Lemma~\ref{lem:j-free-c}, 
called a \emph{leaf dictionary}
and
initialized with
$K$ chosen as an integer
constant larger than $3/\delta$.
This choice of $K$ ensures that the tables used
by the data structure of Lemma~\ref{lem:j-free-c}
are of
$O(c^{\delta r+3 c})$ bits,
for $r$ larger than a constant,
and can be computed in $O(c^{\delta r+3 c})$ time.

The integers that constitute the states of all
leaf dictionaries---one for each segment---are stored in an instance
of the data structure of
Lemma~\ref{lem:succincter-t}, initialized with $b=r$.
This needs $n\log c+O({n/{2^b}}+\log n+1)=
n\log c+O(n({{c^2 t}/r})^t+\log n+1)$ bits, requires a
table of $O(b^2)=O(c^{\delta r})$ bits and
allows us to read and write states of
leaf dictionaries in constant time.
As a result, a leaf dictionary can execute
every operation in constant time.
The data structure of
Lemma~\ref{lem:j-free-c} may employ segments of two
different sizes (namely $2 c q$ and $2 c q'$),
which, in the context of Lemma~\ref{lem:succincter-t},
translates into a sequence $(c_1,\ldots,c_p)$
that contains two different values.
Because Lemma~\ref{lem:succincter-t} tolerates only
a constant number of \emph{changes}, i.e.,
positions $i\in\{1,\ldots,p-1\}$ with
$c_i\not=c_{i+1}$, we present the values of
segments to the data structure in an order
that ensures that $(c_1,\ldots,c_p)$ has
at most one change.

Take $d={r/{(c^2 t)}}$ and $N=d^t$.
Similarly as in the proof of Lemma~\ref{lem:unsystematic-tf},
the $n$ color values are stored in
$m=\Tfloor{{n/{(2 r N)}}}$ complete $d$-ary trees
of depth $t$, ``surmounted'' by $c+1$ instances
of the choice dictionary of
Theorem~\ref{thm:systematic-2} that need
$O(c m)=O({{c n}/{d^t}})=
O(c n({{c^2 t}/r})^t)$ bits,
and possibly one incomplete tree that can
be dealt with as indicated
in the proof of
Lemma~\ref{lem:unsystematic-tf}.
If the rightmost leaf is ``incomplete'',
we consider it not to belong to any of the trees.
Instead we handle the associated color values separately,
storing them as up to $K$ integers in an  
``incomplete standard representation''
that are converted to a loose representation
whenever we need to operate on them.

The crucial inequality
$c d t\le {r/c}$ shows that the free storage
offered by a leaf dictionary in the compact representation
is sufficient to hold the history of a leaf.
Using the same algorithms as in the proof of
Lemma~\ref{lem:smalltree}, we can therefore
execute \Tvn{color}, \Tvn{setcolor}, \Tvn{choice}
and robust iteration in $O(t)$ time.
\end{proof}

\Fpasteinsection{\unsystematic}

\begin{proof}
We use two data structures, $D\Tsup T$ and $D$,
that interact in a way described in greater
detail in the proof of Theorem~\ref{thm:succincter}.
The first $c$ operations are served by $D\Tsup T$,
while an interleaved background process computes
certain quantities needed by~$D$.
After $c$ operations $D$
is ready,
and during the next $c$ operations $D\Tsup T$
and $D$ work in parallel while a background
process gradually transfers the elements in
$D\Tsup T$ of nonzero color to $D$.
After $2 c$ operations $D\Tsup T$ is dropped.

$D\Tsup T$ is an instance of the
choice dictionary of Theorem~\ref{thm:m-c}.
Since it is used only
during the first $2 c$ operations,
it fits in $O(c^2 n^\epsilon)$ bits,
a negligible quantity in the present context.

Assume without loss of generality that
$c^\epsilon\le n+1$ (we could even assume $c\le\log(n+1)$).
The second data structure, $D$, is closely
related to that of Lemma~\ref{lem:unsystematic-tc},
initialized
with $\delta=\epsilon^2/2$ and
with $r$ chosen
as an integer with
${r/2}\le{{\log(n+1)}/{(\epsilon \log c)}}\le r$,
so that $c^{\delta r}\le (n+1)^\epsilon=O(n^\epsilon)$.
We incorporate the tables used by the choice
dictionary of the lemma
into $D$ itself.
What remains is essentially to show
how to compute the tables sufficiently fast.

The preprocessing for $D$ serves to obtain
$c^{2 c}$ and, with it,
the quantity $f=\Tfloor{2 c\log c}$
used by the data structure of
Lemma~\ref{lem:j-free-c},
as well as $c^{3 c}$, needed
to estimate the size of its tables in
preparation for their allocation.

The tables employed by the data structure of
Lemma~\ref{lem:j-free-c} are used to convert segments
in the loose representation to and from the standard and
compact representations.
Since all computation takes place on segments
in the loose representation, the two other representations can
in fact be arbitrary encodings of segments,
except that they should fit in the available space.
We can therefore deal with both the standard and
the compact representation as described in the
proof of Theorem~\ref{thm:p} in the case of the
tables $Y\Tsub E$ and $Y^{-1}\Tsub E$, i.e., hand
out the codes $0,1,2,\ldots$ in that order and
compute the tables in a lazy fashion.

Finally, as concerns the data structure of
Lemma~\ref{lem:succincter-t}, we can simply replace
it by the data structure of Theorem~\ref{thm:succincter},
which uses no external tables.
The sequence $(c_1,\ldots,c_p)$
of the previous proof is $\epsilon$-balanced
for some fixed $\epsilon>0$
because either all segments are of the same size
or the larger segments are at least as
many as the smaller segments.
For $i=1,\ldots,p$, $c_i$ is either $c^{2 c q}$
or $c^{2 c q'}$, where $q$ and $q'$ are
defined in the proof of Lemma~\ref{lem:j-free-c}.
Since here we have
$q,q'=O(\log n)$, the sequence $(c_1,\ldots,c_p)$
can be communicated to the data structure
of Theorem~\ref{thm:succincter} as a sequence
of the form $(x^{y_1},\ldots,x^{y_p})$,
where $x=c^{2 c}$ and each of $y_1,\ldots,y_p$
is either $q$ or $q'$
(the computation of $x$ was considered above).
\end{proof}

\section{Applications of Choice Dictionaries}

When considering algorithmic problems, we assume
that the input is provided in read-only memory and
the output is sent to write-only memory
and count only the bits of working memory used.
When the input includes a graph $G=(V,E)$,
we make the standard assumption that $V=\{1,\ldots,|V|\}$.

For all integers
$n$ and $k$ with $1\le k\le n$, a \emph{$k$-permutation}
of $\{1,\ldots,n\}$ is
a sequence of $k$ pairwise distinct elements
of $\{1,\ldots,n\}$.

\begin{theorem}
For all fixed $\epsilon>0$ and
for arbitrary $n,k,t\in\TbbbN$ with $1\le k\le n$,
a $k$-permutation of $\{1,\ldots,n\}$
can be drawn uniformly at
random from the set of all $k$-permutations
of $\{1,\ldots,n\}$
and output in $O(t k)$ time using
$n+O({{n\log(t\log n)}/{(t\log n)}}+n^\epsilon)$ bits of
working memory.
\end{theorem}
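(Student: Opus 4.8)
The plan is to use the choice dictionary of Theorem~\ref{thm:p} with $c=2$ colors and tradeoff parameter $t$, employing it as a space-efficient realization of a partial Fisher--Yates shuffle. Initialize such a dictionary $D$ for universe size $n$; after the constant-time initialization its client vector is $(S_0,S_1)$ with $S_0=U=\{1,\ldots,n\}$ and $S_1=\emptyset$. Think of $S_0$ as the pool of elements not yet used in the permutation under construction, and maintain in addition an integer $s$, initialized to $n$, that tracks $|S_0|$. Then repeat the following $k$ times: draw an integer $\rho$ uniformly at random from $\{1,\ldots,s\}$; compute $\ell:=\Tvn{p-select}(0,\rho)$ (this is exactly what a call $\Tvn{uniform-choice}(0)$ computes); output $\ell$; execute $\Tvn{setcolor}(1,\ell)$ to remove $\ell$ from the pool; and decrement $s$.

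For the distribution, note that at the start of the $i$th round $S_0$ has exactly $n-i+1$ elements and $\Tvn{p-select}(0,\cdot)$ is a bijection from $\{1,\ldots,|S_0|\}$ onto $S_0$, so feeding it a uniformly random argument returns a uniformly random element of $S_0$; this is true for every bijection, so the arbitrariness inherent in \Tvn{p-select} is immaterial. The only condition imposed by Theorem~\ref{thm:p} is that the bijection stay fixed between successive calls of \Tvn{setcolor}, which is respected here because each round performs exactly one \Tvn{p-select} before its following \Tvn{setcolor}. Hence each of the $n(n-1)\cdots(n-k+1)$ possible $k$-permutations of $\{1,\ldots,n\}$ is produced with probability $(n-k)!/n!$, i.e., uniformly, and exactly one of them is written to the write-only output.

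For the resources: the initialization of $D$ is constant-time, and each of the $k$ rounds performs one call of \Tvn{random}, one of \Tvn{p-select} and one of \Tvn{setcolor}, each in $O(t)$ time, plus $O(1)$ work for the output and the counter; since $k\ge 1$ this totals $O(tk)$. For space, Theorem~\ref{thm:p} with the fixed constant $c=2$ occupies $n\log_2 2+O\bigl(\frac{2n\log 2\cdot\log(1+t\log n)}{1+t\log n}+n^\epsilon\bigr)=n+O\bigl(\frac{n\log(t\log n)}{t\log n}+n^\epsilon\bigr)$ bits (the few combinations with $t\log n$ bounded are absorbed into the additive $O(n^\epsilon)$, since then $n$ itself is bounded); robust iteration is not needed, so the $O(c\log n)$ extra bits are not incurred, and the counter $s$ together with the parameters costs $O(\log n)$ further bits, which is negligible.

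There is no real obstacle here, as the heavy lifting is entirely in Theorem~\ref{thm:p}; the only points requiring care are to use the initially full set $S_0$ as the pool rather than building it up with $n$ insertions, so that the constant initialization time of $D$ is preserved, and to carry out the (routine) simplification of the space bound of Theorem~\ref{thm:p} in the case $c=2$ to obtain exactly the stated $n+O(n\log(t\log n)/(t\log n)+n^\epsilon)$ bits. Equivalently one may phrase the whole algorithm in terms of the virtual operation \Tvn{uniform-choice} of Subsection~\ref{subsec:random} together with \Tvn{setcolor} (using the \Tvn{size} extension of Section~\ref{sec:preliminaries} or the explicit counter $s$ above), which makes the reduction to Theorem~\ref{thm:p} completely transparent.
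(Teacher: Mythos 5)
Your proposal is correct and follows essentially the same route as the paper: initialize a 2-color instance of the choice dictionary of Theorem~\ref{thm:p} for universe size $n$ and, $k$ times, draw a uniformly random element of the pool of unused elements (the paper's $\overline{S}$, your $S_0$) via \Tvn{uniform-choice}/\Tvn{p-select}, output it, and move it to the used set with \Tvn{setcolor}. The extra details you supply (the bijection being fixed between \Tvn{setcolor} calls, the explicit counter, and the simplification of the space bound for $c=2$) are all consistent with the paper's argument.
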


\begin{proof}
Initialize a 2-color instance of the
choice dictionary of Theorem~\ref{thm:p}
for universe size $n$,
call its client set $S$ and,
$k$ times, draw an element
uniformly at random from $\overline{S}$,
output it and insert it in~$S$.
\end{proof}

\begin{corollary}
For all fixed $\epsilon>0$ and
for arbitrary $n,t\in\TbbbN$, a
permutation of $\{1,\ldots,n\}$
can be drawn uniformly at random
from the set of all permutations of $\{1,\ldots,n\}$
and output
in $O(t n)$ time using
$n+O({{n\log(t\log n)}/{(t\log n)}}+n^\epsilon)$ bits of
working memory.
\end{corollary}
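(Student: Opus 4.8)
The plan is to obtain the corollary as the special case $k=n$ of the preceding theorem. A permutation of $\{1,\ldots,n\}$ is exactly an $n$-permutation of $\{1,\ldots,n\}$: the two notions coincide, so the uniform distribution on the set of all $n$-permutations is the uniform distribution on the set of all permutations. Invoking the theorem above with $k=n$ therefore already produces an algorithm that draws a permutation of $\{1,\ldots,n\}$ uniformly at random and outputs it; concretely, one initializes a $2$-color instance of the choice dictionary of Theorem~\ref{thm:p} for universe size~$n$, lets $S$ be its client set, and repeats $n$ times the step of drawing an element of $\overline{S}$ uniformly at random (via \Tvn{uniform-choice}, i.e.\ via \Tvn{p-select} and one call of \Tvn{random}), outputting it, and inserting it into~$S$.

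The resource bounds follow by substituting $k=n$ into the bounds of the theorem: the running time is $O(t\cdot n)=O(t n)$, and the working memory is $n+O({{n\log(t\log n)}/{(t\log n)}}+n^\epsilon)$ bits, exactly as claimed. One may note that the last of the $n$ steps could be omitted, since at that point $\overline{S}$ is a singleton and its unique element is forced; this changes nothing asymptotically, so I would keep the uniform formulation for simplicity.

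There is essentially no obstacle. The only points worth recording are that Theorem~\ref{thm:p} guarantees $O(t)$ time per call of \Tvn{setcolor} (hence \Tvn{insert}) and of \Tvn{p-select} (hence \Tvn{uniform-choice}), and that its space bound $n\log_2 c+O({{c n\log c\log(1+t\log n)}/{(1+t\log n)}}+n^\epsilon)$ with $c=2$ specializes to $n+O({{n\log(t\log n)}/{(t\log n)}}+n^\epsilon)$; both are immediate. Hence the corollary reduces to a one-line application of the theorem, and the proof in the text can simply be: ``Apply the preceding theorem with $k=n$.''
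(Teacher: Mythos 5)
Your proposal is correct and is exactly the paper's (implicit) argument: the corollary is the special case $k=n$ of the preceding theorem, with the time bound $O(tk)$ becoming $O(tn)$ and the space bound unchanged. Nothing further is needed.
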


Simple as the algorithm of the corollary is,
we can prove that it is close to using
the minimum possible amount of working memory.
Assume that for all $n\in\TbbbN$, $L_n$ is
a finite language of binary strings such that
no string in $L_n$ is a proper prefix of
another string in $L_n$ ($L_n$ is \emph{prefix-free})
and $g_n$ is a function from $L_n$ to
$\{1,\ldots,n\}$.
For the lower bound below, we relax the requirements
for what it means to output a permutation
$\pi$ of $\{1,\ldots,n\}$.
Rather than demanding that the output be
a sequence of exactly $n$ $w$-bit integers,
the $i$th of which is $\pi(i)$, for $i=1,\ldots,n$,
we allow the output to be a sequence
$(u_1,\ldots,u_m)$ of a possibly variable
number of bit strings of possibly variable lengths
such that
the concatenation $u_1\cdots u_m$ can be written
in the form $v_1\cdots v_n$, where $v_i\in L_n$
and $g_n(v_i)=\pi(i)$ for $i=1,\ldots,n$.
For example, this allows several values of $\pi$
to be output in the same word or a nonstandard
representation of integers to be used.

\begin{theorem}
Let $\mathcal{A}$ be a randomized algorithm that
outputs a permutation
of $\{1,\ldots,n\}$ for some $n\in\TbbbN$,
with each of the $n!$ such
permutations being output with positive
probability.
Then $\mathcal{A}$ uses at least $n-\log_2(n+1)$ bits
of working memory.
\end{theorem}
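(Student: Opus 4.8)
The plan is to show that at every boundary between two output codewords the working memory of $\mathcal{A}$, together with the number of codewords written so far, must encode \emph{exactly} which elements of $\{1,\ldots,n\}$ have already been output, and then to count: there are $2^n$ subsets of $\{1,\ldots,n\}$, the ``codeword count'' takes only $n+1$ values, and if $\mathcal{A}$ never uses more than $s$ bits of working memory then it has at most $2^{s}$ reachable configurations (up to a constant factor from the $O(1)$ control state); so $2^n\le(n+1)\cdot 2^{s}$, which gives $s\ge n-\log_2(n+1)$.

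Before carrying this out I would make two harmless normalizations. First, by letting $\mathcal{A}$ emit its output one bit at a time I may assume the output is produced bit serially; this costs only a lower-order number of additional bits of working memory (a position counter of $O(\log w)$ bits), and since each $L_n$ is prefix-free it makes the following well defined: at every instant of a run the bits written so far decode uniquely as $v_1\cdots v_k$ for a unique $k\in\{0,\ldots,n\}$ with $g_n(v_i)$ equal to the $i$th value of the permutation eventually produced, possibly followed by a proper prefix of $v_{k+1}$. Second, I would view a run of $\mathcal{A}$ as determined by the sequence of random bits it reads, and its \emph{configuration} at an instant (the contents of working memory together with the $O(1)$ control state) as a function of the random bits read so far; crucially, because the output tape is write-only, the future behaviour of $\mathcal{A}$ from an instant depends \emph{only} on the configuration and on the random bits read afterwards, never on what has already been output.

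The heart of the argument is the following determinacy lemma, which I would prove next (here I use that \emph{every} run of $\mathcal{A}$ outputs a valid permutation): if in two runs the algorithm is at the instant where exactly $k$ codewords have been written and the two configurations coincide, then the two sets $A,A'$ of elements written so far (each of size $k$) coincide. Indeed, take any sequence $\rho$ of further random bits that completes the first run. From the common configuration, feeding $\rho$ produces the same $n-k$ further codewords in both runs, namely an enumeration of $\{1,\ldots,n\}\setminus A$. Hence the second run, continued with $\rho$, outputs the elements of $A'$ followed by the elements of $\{1,\ldots,n\}\setminus A$; since this must be a permutation of $\{1,\ldots,n\}$ and $|A'|=k=|A|$, it follows that $A=A'$. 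Given the lemma, the count is immediate: for every $A\subseteq\{1,\ldots,n\}$ there is a run in which, at the instant $|A|$ codewords have been written, the set written so far is exactly $A$ --- take any permutation whose first $|A|$ values enumerate $A$ (using $k=n$ for $A=\{1,\ldots,n\}$), which is output with positive probability --- so distinct subsets yield distinct pairs (configuration, $k$), and $2^n\le(n+1)\cdot 2^{s}$ as claimed.

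The step I expect to be the main obstacle is not the counting or even the transplant argument itself, but making the notion of ``configuration'' precise enough that the determinacy lemma is literally true in the word-RAM-with-write-only-output model --- in particular, handling the granularity of output cleanly (the bit-serial normalization above, so that the set $A$ at each codeword boundary is genuinely exposed, or an argument that works directly with word-at-a-time writes and variable-length $u_j$'s) and making sure that two runs sharing a configuration really do have identical futures on identical future random bits. Once the bookkeeping is in place, the conceptual content --- that write-only output makes the past invisible to the future, so any failure to remember the already-output set can be revealed by transplanting a completing continuation from a conflicting run --- is very short.
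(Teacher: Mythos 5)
Your proposal is correct in substance but takes a genuinely different route from the paper's. The paper fixes, for each run, a single instant: writing the output as $u_1\cdots u_m=v_1\cdots v_n$, it takes $k$ minimal with $|u_1\cdots u_k|\ge|v_1\cdots v_{\lfloor n/2\rfloor}|$ and considers the state just before $u_k$ is written; it then argues that the remaining output $u_k\cdots u_m$ must differ for distinct sets $\{\pi(1),\ldots,\pi(\lfloor n/2\rfloor)\}$, so at that one instant at least $\binom{n}{\lfloor n/2\rfloor}\ge 2^n/(n+1)$ states are needed. You instead work at every codeword boundary, prove a determinacy lemma by transplanting a completing sequence of random bits from one run into another run with the same configuration and the same count $k$, invoking the requirement that every run outputs a permutation, and then count the $2^n$ subsets against the at most $(n+1)\cdot 2^s$ pairs (configuration, number of codewords written). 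Amusingly, the factor $n+1$ arises for different reasons (the estimate $\binom{n}{\lfloor n/2\rfloor}\ge 2^n/(n+1)$ versus the $n+1$ possible values of $k$), yet the final bound coincides. Your transplant lemma is arguably more robust than the paper's one-line claim that the future output string cannot coincide for two distinct first-half sets, since it derives the contradiction from the semantic requirement of producing a permutation rather than from distinctness of raw bit strings.

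The one point where your route, as written, falls short of the stated bound is the bit-serial normalization: converting $\mathcal{A}$ into a machine that emits one bit per output operation costs $\Theta(\log w)$ extra working bits, so what you prove for the original $\mathcal{A}$ is only $s\ge n-\log_2(n+1)-O(\log w)$, not $s\ge n-\log_2(n+1)$. The paper avoids this entirely by defining its instant relative to $\mathcal{A}$'s own output operations (just before the operation that carries the output past the relevant codeword boundary), so no surgery on the machine is needed. To get the exact constant with your scheme, either adopt that choice of instant and rework the transplant argument to handle the at most one partially written codeword explicitly, or argue directly that the configuration at the last output operation preceding the boundary, together with $k$, still determines the already-output set. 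The remaining informalities you note (the $O(1)$ control state, identifying states with working-memory contents) are shared with the paper's own proof and need not concern you further.
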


\begin{proof}
Suppose that the output of $\mathcal{A}$ is a
sequence $(u_1,\ldots,u_m)$ and
write $u_1\cdots u_m=v_1\cdots v_n$
as discussed above.
Let $k\in\{1,\ldots,m\}$ be minimal
with $|u_1\cdots u_k|\ge|v_1\cdots v_{\lfloor{n/2}\rfloor}|$
and consider the point in time just
before $u_k$ is output.
Now $v_1\cdots v_{\lfloor{n/2}\rfloor}$
determines $\pi(1),\ldots,\pi(\Tfloor{{n/2}})$,
where $\pi$ is the permutation computed
by~$\mathcal{A}$.
There are
${n\choose{\Tfloor{{n/2}}}}$ possibilities
for the set $\{\pi(1),\ldots,\pi(\Tfloor{{n/2}})\}$,
each of which occurs with positive probability,
and $u_k\cdots u_m$ cannot be the same for
any two distinct such possibilities.
Therefore, at the point
in time under consideration there must be
at least ${n\choose{\Tfloor{{n/2}}}}$ possibilities
for the state of $\mathcal{A}$.
Since ${n\choose{\Tfloor{{n/2}}}}\ge{n\choose i}$
for $i=0,\ldots,n$
and $\sum_{i=0}^n{n\choose i}=2^n$,
${n\choose{\Tfloor{{n/2}}}}\ge{{2^n}/{(n+1)}}$,
which implies that
the number of bits used by $\mathcal{A}$
is at least
$\log({{2^n}/{(n+1)}})=n-\log(n+1)$.
\end{proof}

Given a directed or undirected
$n$-vertex graph $G=(V,E)$
and a permutation 
$\pi$ of $V$, i.e., a
bijection from $\{1,\ldots,n\}$ to~$V$,
we define a \emph{spanning forest} of $G$
\emph{consistent with} $\pi$ to be a sequence
$F=(T_1,\ldots,T_q)$, where $T_1,\ldots,T_q$
are vertex-disjoint outtrees that are
subtrees of $G$
(if $G$ is directed) or of the directed version of~$G$
(if $G$ is undirected) and the union of whose vertex sets
is $V$, such that for each $v\in V$, the root of
the tree in $\{T_1,\ldots,T_q\}$ that contains~$v$
is the first vertex in the sequence
$(\pi(1),\ldots,\pi(n))$ from which $v$ is reachable in~$G$.
If, in addition, every path in the union
of $T_1,\ldots,T_q$
is a shortest path in $G$, $F$ is a
\emph{shortest-path spanning forest} of $G$
consistent with~$\pi$.
Thus a spanning forest of $G$ consistent with~$\pi$
can be produced, e.g., by a depth-first search that,
whenever its stack of partially processed vertices
is empty, picks its new start vertex
as the first undiscovered vertex in the order
prescribed by~$\pi$.
If a breadth-first search is used instead of the
depth-first search, the result will be a
shortest-path spanning forest of~$G$
consistent with~$\pi$.

In the following, by computing a spanning forest
$F=(T_1,\ldots,T_q)$
of an $n$-vertex graph $G=(V,E)$ consistent with
a permutation $\pi$ of $G$ we will mean producing a
sequence $((u_1,v_1,k_1),\ldots,$\break$(u_n,v_n,k_n))$ of
triples with $u_i\in V\cup\{0\}$, $v_i\in V$
and $k_i\in\TbbbN$ for $i=1,\ldots,n$ such that
$k_1\le\cdots\le k_n$
and such that for $j=1,\ldots,q$,
$\{v_i\mid 1\le i\le n$ and $k_i=j\}$ and
$\{(u_i,v_i)\mid 1\le i\le n$,
$k_i=j$ and $u_i\not=0\}$ 
are precisely the vertex and edge sets of~$T_j$,
respectively.
If, in addition, for each $\ell\in\{1,\ldots,n\}$
with $u_\ell\not=0$ there is an $i\in\{1,\ldots,\ell-1\}$
with $v_i=u_\ell$, we say that $F$ is computed in
\emph{top-down order}.
Thus for $j=1,\ldots,q$, the root and
the edges of $T_j$ are to
be output (in a top-down order),
each with the index $j$ of its tree~$T_j$.
The meaning of a shortest-path spanning forest
$F=(T_1,\ldots,T_q)$
of $G$ consistent with $\pi$ (in top-down order)
is analogous,
except that each triple $(u_i,v_i,k_i)$
is extended by a
fourth component equal to the depth of $v_i$
in $T_{k_i}$.
Of course, computing a spanning forest of an
undirected graph also solves the
(suitably defined) connected-components
problem.

\begin{theorem}
\label{thm:cc}%
Given a directed or undirected graph $G=(V,E)$
with $n$ vertices and $m$ edges,
a permutation $\pi$ of $V$ and a $t\in\TbbbN$,
a spanning forest of $G$ consistent with $\pi$
can be computed in
top-down order in
$O((n+m)t\log(t+1))$ time with
$n+{n/t}+O(\log n)$ bits of working memory.
In particular, for every fixed $\epsilon>0$,
a spanning forest of $G$ consistent with $\pi$
can be computed in $O(n+m)$ time
with at most $(1+\epsilon)n$ bits.
\end{theorem}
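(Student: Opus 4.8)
The plan is to run a single traversal of $G$ that picks its roots in the order $\pi(1),\ldots,\pi(n)$ and whose only bulky state is a bit vector $B$ of $n$ bits (with $B[v]=1$ meaning that $v$ has been \emph{discovered}, i.e., already assigned to a tree) together with a bounded-capacity \emph{frontier} $F$ realized by the data structure of Theorem~\ref{thm:mlog}. I would fix the capacity of $F$ to be $\theta=\Theta({n/{(t\log(t+1))}})$, taken to be at least $1$ and with a small enough constant that Theorem~\ref{thm:mlog} guarantees $F$ to occupy at most ${n/t}+O(\log n)$ bits whenever it holds at most $\theta$ elements; together with $B$ and $O(1)$ further counters of $O(\log n)$ bits each (the current frontier size, the current tree index $j$, a ``dirty'' flag, and the index of the outer loop over $\pi$), this keeps the working memory at $n+{n/t}+O(\log n)$ bits. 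Whenever a vertex $u$ is discovered from a vertex $v$, I set $B[u]:=1$ and output $(v,u,j)$ (and for a fresh root $r$ I increment $j$, set $B[r]:=1$, and output $(0,r,j)$); since the parent of every non-root vertex is discovered, hence output, strictly before the vertex itself, and since all outputs of one tree carry the same index $j$ with $j$ increasing from tree to tree, the computation is automatically in top-down order.

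The main loop repeatedly performs \Tvn{extract-choice} on $F$ to obtain a discovered-but-unscanned vertex $v$, examines the out-neighbours of $v$ (the neighbours, if $G$ is undirected), and discovers each out-neighbour $u$ with $B[u]=0$ as above; such a $u$ is then inserted into $F$ unless $F$ is full, in which case $u$ is left \emph{lost} (discovered, unscanned, not in $F$) and the dirty flag is set. When $F$ becomes empty while the dirty flag is set, I \emph{rescan}: clear the dirty flag and then, for each $x$ with $B[x]=1$, examine the out-neighbours of $x$; on encountering one still undiscovered I conclude that $x$ is unscanned and re-insert $x$ into $F$ (setting the dirty flag again if $F$ has meanwhile become full). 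I proceed to the next candidate root only once $F$ is empty and the dirty flag is clear. Correctness of the rescan rests on the fact that a scanned vertex has all of its out-neighbours discovered, so ``$B[x]=1$ and $x$ has an undiscovered out-neighbour'' characterizes exactly the lost vertices; and since the processing of a tree is not abandoned until its whole reachable set has been discovered, every vertex ends up in the tree rooted at the first vertex of $\pi(1),\ldots,\pi(n)$ from which it is reachable in $G$.

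For the analysis, each vertex is extracted from $F$ exactly once over its lifetime --- call this its \emph{first scan} --- and the first scans together with all non-rescan bookkeeping cost $O(n+m)$ in total. Each rescan costs $O(n+m)$, so it remains to bound the number of rescans by $O(t\log(t+1))$. The key point is that a rescan is triggered only after a lost vertex has been created, which forces $F$ to have reached its capacity $\theta$; and from the moment $F$ first attains size $\theta$ until it next becomes empty, at least $\theta$ \Tvn{extract-choice} operations, hence at least $\theta$ first scans of distinct vertices, must occur. Since there are only $n$ first scans in all, the number of rescans is $O({n/\theta}+1)=O(t\log(t+1))$, and the overall running time is $O((n+m)\,t\log(t+1))$. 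The ``in particular'' claim follows by fixing $t=\Tceil{2/\epsilon}$: then $t=O(1)$, so $t\log(t+1)=O(1)$ and the time is $O(n+m)$, while ${n/t}+O(\log n)\le\epsilon n$ for all sufficiently large $n$, making the space at most $(1+\epsilon)n$ bits.

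The step I expect to be the main obstacle is precisely this frontier-overflow/rescan mechanism and its amortization: ordinary breadth- or depth-first search keeps a frontier (or stack) of up to $\Theta(n)$ vertices, which is far too large, so one must deliberately cap the frontier at $\theta=\Theta({n/{(t\log(t+1))}})$ vertices and recover the ``forgotten'' part of the search by rescanning the graph --- the delicate part being to show that $\Omega(\theta)$ genuinely new vertices are scanned between consecutive rescans, so that only $O(t\log(t+1))$ rescans ever occur, and to verify that detecting lost vertices by the local test above never reinserts a vertex belonging to an already completed tree.
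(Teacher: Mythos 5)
Your proposal is correct and takes essentially the same route as the paper's proof: the paper likewise uses an $n$-bit visited array plus a capacity-capped instance of the data structure of Theorem~\ref{thm:mlog} (capacity $s=\Theta(1+{n/{(t\log(t+1))}})$ so that it fits in ${n/t}+O(\log n)$ bits), and recovers from overflow by $O(n+m)$-time sweeps over the visited vertices, with the same charging argument that $\Omega(s)$ vertices are newly scanned between consecutive sweeps, giving $O(t\log(t+1))$ sweeps. The only (immaterial) difference is bookkeeping: the paper aborts and starts a global sweep the instant the set reaches capacity, letting the sweep re-process every visited vertex, whereas you drop overflowing vertices, set a dirty flag, and have the sweep reinsert only unscanned vertices detected via an undiscovered out-neighbour.
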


\begin{proof}
We use $n$ bits to mark each vertex as
\Tvn{unvisited} or \Tvn{visited}.
Initially all vertices are unvisited.
In addition, we store an
initially empty set $S$ of vertices in
an instance $D$ of the data structure
of Theorem~\ref{thm:mlog}.
We also maintain a current tree index $k$,
initially~0.

Compute $s\in\TbbbN$ with
$s=\Omega(1+{n/{(t\log(t+1))}})$
such that when $|S|\le s$, $D$ occupies
at most ${n/t}+O(\log n)$ bits.
We will ensure that $|S|\le s$ always holds,
so the space used by the algorithm
is as stated in the theorem.

In an outermost loop, we step through $V$ in the
order indicated by $\pi$, i.e.,
in the order $\pi(1),\ldots,\pi(n)$, and,
for each vertex $r$ found to be unvisited at this time,
increment~$k$
(a new tree $T_k$ is begun), mark $r$ as
visited, output
$(0,r,k)$ ($r$ is the root of $T_k$
and has no parent),
and insert $r$ in $S$.
Then, as long as $0<|S|<s$, we use
\Tvn{extract-choice} to delete a vertex $u$
from $S$ and \emph{process} $u$.
Processing $u$ means, for each unvisited (out)neighbor $v$ of~$u$,
marking $v$ as visited,
outputting $(u,v,k)$
($v$ belongs to $T_k$ and its parent is $u$), and
inserting~$v$ in~$S$.
If and immediately when $|S|$ reaches $s$, we abandon what
we are doing and start a \emph{global sweep}.
A global sweep reinitializes $D$ to reset
$S$ to $\emptyset$
(or achieves the same though a sequence of
calls of \Tvn{extract-choice})
and iterates over $V$, processing each visited vertex encountered.
If $|S|$ reaches $s$ during a global sweep,
the current global sweep is abandoned, and a new
global sweep is immediately begun.
Whenever $S$ becomes empty outside of a global sweep,
the current iteration of the outermost loop
terminates (no more vertices are reachable from $T_k$).

The algorithm is easily seen to be correct.
In particular, as long as $E$ contains an edge $(u,v)$
or $\{u,v\}$ such that $u$ is visited but $v$ is not,
$u$ belongs to $S$ or will eventually be processed
in a global sweep.
Outside of global sweeps, the running time of
the algorithm is $O(n+m)$.
Between any two global sweeps, at least $s$ vertices
are marked as visited.
Since this happens only once for each vertex,
the number of global sweeps is bounded by
$1+{n/s}=O(t\log(t+1))$.
A global sweep runs in $O(n+m)$ time, so the
total running time of the algorithm is
$O((n+m)t\log(t+1))$.
\end{proof}

\begin{theorem}
\label{thm:bfs}%
Given a directed or undirected graph $G=(V,E)$
with $n$ vertices and $m$ edges,
a permutation $\pi$ of $V$, a $t\in\TbbbN$
and a fixed $\epsilon>0$,
a shortest-path
spanning forest of $G$ consistent with $\pi$
can be computed in
top-down order in $O((n+m)t)$ time with
$n\log_2 3+O(n({t/{\log n}})^t+n^\epsilon)$ bits of working memory.
If $G$ is directed, its representation must
allow iteration over the inneighbors and
outneighbors of a given
vertex in time proportional to their number
plus a constant
(in the terminology
of~\cite{ElmHK15},
$G$ must be given with in/out adjacency lists).
\end{theorem}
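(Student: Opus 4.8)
The plan is to reduce the problem, in $O(n+m)$ time, to performing $O(n+m)$ operations on one $3$-color choice dictionary $D$ with robust iteration, and to realize $D$ by Theorem~\ref{thm:unsystematic-c} with $c=3$. With $c=3$ fixed, that theorem's space bound becomes $n\log_2 3+O(n({t/{\log n}})^t+n^\epsilon)$ bits, and the $O(c\log n)$ bits that robust iteration costs, together with the $O(\log n)$ bits needed for a handful of counters (the index $i$ into $\pi$, the current tree index $k$, and the current BFS distance $d$), are absorbed into the $n^\epsilon$ term; hence $D$ alone accounts for the whole working-memory bound. The three colors encode the usual breadth-first-search states: color~$0$ means ``not yet in any tree'', color~$1$ means ``currently on the frontier'', and color~$2$ means ``already processed'', and initially $S_0=V$.

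The algorithm runs the outer loop of Theorem~\ref{thm:cc}: it steps through $\pi(1),\ldots,\pi(n)$, and whenever $\Tvn{color}(\pi(i))=0$ it increments $k$, executes $\Tvn{setcolor}(1,\pi(i))$, outputs the root record $(0,\pi(i),k,0)$, and then explores $T_k$ in \emph{rounds} until $S_1=\emptyset$ again. Each round is a single robust iteration over $S_1$: when a frontier vertex $u$ is enumerated we scan its out-neighbours (its neighbours, if $G$ is undirected), and for each such neighbour $v$ with $\Tvn{color}(v)=0$ we execute $\Tvn{setcolor}(1,v)$ and output $(u,v,k,d+1)$, where $d$ is the current distance; after $u$ has been fully scanned we execute $\Tvn{setcolor}(2,u)$. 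Between consecutive rounds $d$ is incremented, and when $S_1$ becomes empty we resume the outer loop. There are $O(n)$ calls of \Tvn{setcolor} ($n$ into color~$1$ and $n$ into color~$2$), $O(m)$ calls of \Tvn{color}, $O(n)$ iteration suboperations, and $O(n)$ color queries for the $\pi$-scan, each costing $O(t)$ time, plus the $O(n+m)$ time for the adjacency scans themselves; the total is $O((n+m)t)$. In the directed case the in/out adjacency lists are used solely so that the out-neighbours of a vertex can be enumerated in time proportional to their number, exactly as the theorem allows.

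Correctness of the \emph{combinatorial} structure of the forest follows much as in Theorem~\ref{thm:cc}: a frontier vertex is recolored to~$2$ only after it has been enumerated, so by robustness of iteration any vertex that is never enumerated during a round stays in $S_1$ for that entire round and must therefore be enumerated --- hence every vertex that ever receives color~$1$ is eventually processed; since colors only advance $0\to1\to2$, once $S_1$ is empty no color-$2$ vertex has a color-$0$ neighbour, so $T_k$ spans exactly the set of vertices reachable from the roots chosen so far, and starting each new tree at the $\pi$-first undiscovered vertex makes the forest consistent with~$\pi$; top-down order holds because every vertex is output (as a root, or as the child $v$ of its parent $u$) strictly before it is first used as a parent. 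The step I expect to be the real obstacle is guaranteeing that the emitted depths are correct, i.e.\ that each $T_k$ is a genuine \emph{shortest-path} tree: this needs $\mathrm{dist}_G(v)=d+1$ whenever $u$ (at distance $d$) discovers $v$, which forces each round to process exactly the vertices at the current distance $d$, even though the very vertices being discovered are placed into the color class $S_1$ over which the round iterates. I would secure this by exploiting the concrete semantics of the robust iteration supplied by Theorem~\ref{thm:unsystematic-c} (in which a color class is swept essentially in index order via the \Tvn{successor}-based reduction of Section~\ref{sec:preliminaries}): one shows that a vertex promoted to $S_1$ during round $d$ cannot corrupt the depth assigned to any vertex of layer $d$ that the round still has to enumerate, and that whatever part of layer $d+1$ the round happens to enumerate early is recognized and set aside so that its processing is deferred to the next round (if necessary this costs one extra linear-time pass per round, which does not affect the time bound); a single $O(\log n)$-bit counter then tracks $d$, advancing by one per layer. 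With the depth bookkeeping in place, the fourth component of each output tuple is just $\mathrm{depth}(u)+1$, and the space bound is precisely that of $D$.
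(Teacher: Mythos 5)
Your reduction to a $3$-color choice dictionary, the accounting of $O(n+m)$ dictionary operations at $O(t)$ time each, and the space bookkeeping all match the paper, and you correctly locate the crux: during a round the frontier is iterated \emph{as a color class}, and vertices discovered in round $d$ are inserted into that very class, so robust iteration may enumerate layer-$(d+1)$ vertices in round $d$ and corrupt the emitted depths. But at exactly this point your argument has a genuine gap: ``recognized and set aside so that its processing is deferred to the next round'' is not a mechanism. You have no fourth color and no space to record a deferred set, and appealing to the index order of the \Tvn{successor}-based iteration does not help, because knowing the order in which vertices are enumerated tells you nothing about which BFS layer an enumerated gray vertex belongs to. Worse, in your scheme a vertex is recolored to~$2$ immediately after it is scanned, so the one local test that could distinguish the layers is destroyed: a layer-$(d+1)$ vertex discovered by $u$ already has a color-$2$ (in)neighbor, namely $u$ itself. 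Also, the fallback ``one extra linear-time pass per round'' is not harmless: the number of rounds is the total number of BFS layers, which can be $\Theta(n)$, so a pass linear in $n+m$ per round would give $\Theta(n(n+m))$ time.

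The paper resolves this with a two-phase structure per distance value. In the \emph{exploration round} it iterates over the gray vertices but processes $u$ only if $u$ is the current root or has a \emph{black} (in)neighbor; in a separate \emph{consolidation round} it blackens every gray vertex with no white (out)neighbor, and only then increments $d$. Because blackening is deferred to the consolidation round, a vertex grayed during exploration round $d$ cannot yet have a black in-neighbor (its layer-$d$ in-neighbors are still gray), so it fails the test and is automatically skipped until the next round, while every genuine layer-$d$ vertex passes it (its layer-$(d-1)$ in-neighbor was blackened earlier, or it is the root). Each vertex is gray for at most two exploration and two consolidation rounds, which gives the $O((n+m)t)$ bound. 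This is also why the theorem insists on in/out adjacency lists for directed graphs: the exploration test scans in-neighbors and the consolidation test scans out-neighbors, not merely (as you state) so that out-neighbors can be listed efficiently. Your proposal is missing this filtering idea, so as written the depths, and hence the shortest-path property, are not established.
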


\begin{proof}
Using a 3-color instance of the choice dictionary
of Theorem~\ref{thm:unsystematic-c},
we store for each vertex $v\in V$ a color:\
white, gray or black.
Initially all vertices are white.
We also store a current tree index, $k$,
initially 0, and a current distance counter, $d$.
In the following, the prefixes ``(in)'' and ``(out)''
are intended to apply if $G$ is directed;
if $G$ is undirected, they should be ignored.

In an outermost loop, we step through $V$ in
the order indicated by $\pi$ and,
for each vertex $r$ found to be white at this time,
increment~$k$
(a new tree $T_k$ is begun), set $d$ to 0,
color $r$ gray and output
$(0,r,k,0)$ ($r$ is the root $r_k$ of $T_k$, it
has no parent and its depth in $T_k$ is~$0$).
We also remember $r=r_k$
as the root
of the current tree.
Then, as long as at least one vertex is gray,
we carry out an \emph{exploration round}
followed by a \emph{consolidation round}
and then increment~$d$.

In the exploration round, we
iterate over the gray vertices.
For each gray vertex $u$, we test whether 
$u=r$ 
or
$u$ has one or more black (in)neighbors.
If this is the case, we
process all white (out)neighbors of $u$,
for each such vertex $v$
coloring $v$ gray and outputting $(u,v,k,d+1)$
($v$ belongs to $T_k$, its parent is $u$
and its depth in $T_k$ is $d+1$).
In the consolidation round, we again iterate
over the gray vertices, now coloring black each gray vertex
without white (out)neighbors.
If there are no gray vertices after a
consolidation round,
the current iteration of the outermost loop
terminates (no more vertices are reachable from $T_k$).

Consider a particular value of $k$ and
let $V_k$ be the set of vertices $v\in V$
reachable in $G$ from $r_k$
but not from any vertex before $r_k$ in the sequence
$(\pi(1),\ldots,\pi(n))$
(i.e.,
$V_k$ is the intended vertex set of the final tree $T_k$).
The following can be proved by induction on $d$:
Immediately before an exploration round,
suppose that $v\in V_k$
and let $d_v$ be the length of a shortest path
in $G$ from $r_k$ to $v$.
Then $v$ is black if $d_v<d$,
white if $d_v>d$,
gray or black if $d_v=d$ and $v$ has no
white (out)neighbors, and
gray if $d_v=d$ and $v$ has at least one
white (out)neighbor.
The exploration round may enumerate some vertices
at distance $d+1$ from $r_k$ that were colored
gray earlier in the same round, but
the test ``is gray and either equals $r_k$ or
has a black (in)neighbor''
employed in the exploration round
is passed precisely by the vertices that were
gray at the beginning of the
round.
It is now easy to see that the tuples output
by the exploration round are correct and that,
immediately after the exploration round,
$v$ is (still) black if $d_v<d$, gray
or black if $d_v=d$, gray
if $d_v=d+1$ and white if $d_v>d+1$.
The test ``is gray and has no white (out)neighbor'' employed
in the consolidation round is
passed precisely by those vertices in $V_k$
that either have distance $d$ from $r_k$ and
are not already black or have distance $d+1$ from $r_k$
and no white (out)neighbors.
Thus the induction hypothesis holds at the
beginning of the next exploration round, if any.

Since each vertex is gray for at most two
values of $d$, i.e., for at most two
exploration rounds and two consolidation rounds,
the running time of the algorithm is readily seen
to be $O((n+m)t)$.
The space bound follows from
Theorem~\ref{thm:unsystematic-c}.
\end{proof}

The vertex set $V^*$ of
a maximal clique in an undirected graph $G=(V,E)$ can be computed
greedily by starting with $V^*=\emptyset$ and
stepping through the vertices of~$G$, including 
each in $V^*$ if it is adjacent to all vertices already in~$V^*$.
With $n=|V|$ and $m=|E|$, this takes $O(n+m)$ time
and uses $n+O(\log n)$ bits.
Below we present an output-sensitive algorithm that is potentially
faster and uses only slightly more space.
For $u\in V$, let $N_G(u)$ be the neighborhood of~$u$,
i.e., $N_G(u)=\{v\in V\mid\{u,v\}\in E\}$.
Moreover, for $W\subseteq V$, denote by $\deg_G(W)$ the
total degree of the vertices in $W$,
i.e., $\deg_G(W)=\sum_{u\in W}|N_G(u)|$.

\begin{theorem}
\label{thm:clique}%
Given an undirected $n$-vertex graph $G=(V,E)$,
a $t\in\TbbbN$ and a fixed $\epsilon>0$,
the vertex set $V^*$ of a maximal clique in $G$
can be computed in $O(t\deg_G(V^*)+1)$ time
with $n\log_2 3+O(n({t/{\log n}})^t+n^\epsilon)$ bits of working memory.
If the adjacency lists of $G$ are sorted, i.e.,
each lists the neighbors of a vertex in sorted order,
the problem can be solved in $O(\deg_G(V^*)+1)$ time with
$n+O({{n(\log w)}/w}+\log n)$ bits of working memory.
\end{theorem}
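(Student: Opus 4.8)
The plan is to run the greedy extension process, but to maintain in a choice dictionary the \emph{candidate set} $C=\bigcap_{u\in V^*}N_G(u)$ of vertices that may still be added to the current clique $V^*$. Since $V^*$ is a clique and no vertex is its own neighbour, every element of $V^*$ automatically lies outside $C$, and $V^*$ is a maximal clique exactly when $C=\emptyset$. We set $v_1:=1$ (any start vertex yields \emph{some} maximal clique), output it, and let $C:=N_G(v_1)$ by iterating over the adjacency list of $v_1$. Thereafter we repeatedly take a vertex $v$ from $C$, output it, append it to $V^*$, and replace $C$ by $C\cap N_G(v)$; when $C$ becomes empty (equivalently, when a running count of $|C|$ hits $0$) we stop, having output $V^*$. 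The total degree of the vertices placed in $V^*$ is exactly $\deg_G(V^*)$, and because $C\subseteq N_G(v)$ after $v$ is the most recently added vertex, the candidate sets $C_1,C_2,\dots$ satisfy $|C_i|\le\deg_G(v_i)$, so $\sum_i|C_i|=O(\deg_G(V^*))$. Hence, if each choice-dictionary operation runs in time $\tau$ and every such operation is charged to a neighbour that is examined, the algorithm runs in $O(\tau\deg_G(V^*)+1)$ time; this gives the time bounds once $\tau=O(t)$ (part one) or $\tau=O(1)$ (part two).

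For the first statement I would store $C$ in the $3$-color choice dictionary of Theorem~\ref{thm:unsystematic-c}, applied with $\epsilon$ replaced by a sufficiently small positive constant so that its space bound collapses to $n\log_2 3+O(n({t/{\log n}})^t+n^\epsilon)$; color $0$ means ``not a candidate'', color $1$ means ``current candidate'', and color $2$ is a scratch mark. The update $C\leftarrow C\cap N_G(v)$ is performed as follows: iterate over $N_G(v)$ and recolor to $2$ every neighbour that currently has color $1$ (so the color-$2$ class becomes $C\cap N_G(v)$); then iterate robustly over the color-$1$ class, recoloring each enumerated vertex to $0$ (this clears the members of the \emph{old} $C$ that are not neighbours of $v$, in particular $v$ itself); finally iterate robustly over the color-$2$ class, recoloring each enumerated vertex to $1$. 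Robustness of the iteration is precisely what guarantees that these ``modify while enumerating'' loops visit every element of the class exactly once. The cost of this step is $O(t(\deg_G(v)+|C_{\mathrm{old}}|))$, which sums to $O(t\deg_G(V^*)+1)$ as above; the single vacuous \Tvn{choice} call that would occur when $V^*=\{1\}$ is already maximal is avoided by testing the count of $|C|$ before calling \Tvn{choice}.

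For the second statement, sorted adjacency lists let us replace the $3$-color structure by a plain $n$-bit array supporting only constant-time reads and writes of single bits and constant-time initialization --- for instance the $2$-color choice dictionary of Theorem~\ref{thm:systematic-2} with $t=1$, whose redundancy $O({n/w}+\log n)$ is within the claimed $O({{n\log w}/w}+\log n)$ --- with bit $u$ set iff $u\in C$, together with one stored integer holding the most recently added vertex $v$. The key point is that $C\subseteq N_G(v)$ at all times, so to update $C$ to $C\cap N_G(v')$ after adding a new vertex $v'\in C$ it suffices to \emph{merge} the two \emph{sorted} lists $N_G(v)$ and $N_G(v')$: scanning $N_G(v)$ we encounter all set bits, and we clear exactly those whose vertex does not occur in $N_G(v')$ (again including $v'$ itself). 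The next clique vertex is found by scanning $N_G(v')$ for the first set bit; if there is none, $C=\emptyset$ and $V^*$ is maximal. This touches $O(\deg_G(v)+\deg_G(v'))$ bit positions per step, hence $O(\deg_G(V^*)+1)$ in total, with $O(1)$ time per bit operation.

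The main obstacle --- and the reason sorted adjacency lists are assumed in the second part --- is carrying out the update $C\leftarrow C\cap N_G(v')$ within the tight space budget and without ever scanning the whole $n$-bit candidate array, which would cost $\Theta({n/w})$ per step and ruin the output-sensitive bound when $\deg_G(V^*)$ is small (e.g.\ an isolated start vertex). The observation that $C$ always sits inside the previously added vertex's neighbourhood, so that a two-list merge suffices, is what makes the second algorithm work. In the first part the corresponding difficulty is discharged by the extra color at the price of the $O(t)$ factor inherent in the more space-efficient dictionary, and the remaining care needed there is to verify that the robust-iteration semantics genuinely license the clear-and-recolor loops.
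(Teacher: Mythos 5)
Your proof is correct. The first half coincides with the paper's own argument: the same greedy loop with the candidate set stored as color~1 in the $3$-color dictionary of Theorem~\ref{thm:unsystematic-c} (with its parameter $\epsilon$ shrunk to absorb the constants, as you note), the same scan-and-recolor update via a scratch color~2 carried out with robust iteration over the two color classes, and an equivalent accounting of the recoloring cost against $\deg_G(V^*)$.

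The second half, however, takes a genuinely different and simpler route than the paper. The paper runs two phases: while $|W|\ge n/w$ it keeps $W$ as a plain bit vector, finds a member by an $O(n/w)$-time scan, and intersects $W$ with the sorted adjacency list of $u$ by a parallel scan over whole words in $O(\deg_G(u)+n/w)$ time; once $|W|<n/w$ it migrates $W$ into the colorless choice dictionary of Theorem~\ref{thm:unsystematic-f} and performs each subsequent intersection through the buffer structure of Theorem~\ref{thm:mlog}, which is where the $O(n\log w/w)$ term in the space bound comes from. You instead exploit the invariant $C\subseteq N_G(v)$ for the most recently added vertex $v$: a member of $C$ is found by probing only the bit positions listed in $N_G(v)$, and the update $C\leftarrow C\cap N_G(v')$ is a merge of the two sorted lists $N_G(v)$ and $N_G(v')$ that clears exactly the bits of $N_G(v)\setminus N_G(v')$ (including $v'$ itself). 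Every probe is thus charged to an adjacency-list entry of a clique vertex, each such list is scanned $O(1)$ times, and no $\Theta(n/w)$-time sweep, phase switch, or auxiliary dictionary is ever needed; the isolated-start-vertex case is handled in $O(1)$ time because initialization is constant-time. The only nontrivial ingredient is a single-bit array with constant-time initialization and constant-time reads and writes, which Theorem~\ref{thm:systematic-2} with $t=1$ supplies at redundancy $O(n/w+\log n)$ --- in fact slightly below the theorem's stated $O(n\log w/w+\log n)$. So your version buys a shorter argument and a marginally tighter space bound, at the price of relying on the specific structural invariant of this algorithm, whereas the paper's two-phase scheme is a more generic recipe for maintaining a shrinking subset under intersection with sorted lists.
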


\begin{proof}
We use the following algorithm:
Output the vertex 1 and initialize a set
$W$ to $N_G(1)$.
Then, as long as $W$ is nonempty,
output an element $u$ of $W$ and replace
$W$ by $W\cap N_G(u)$.
The correctness of the algorithm
is obvious---$W$ is always the set of neighbors common to all vertices that
were already output.

Without loss of generality assume in the
rest of the proof that $|N_G(1)|\ge 1$.
We store $W$ as the elements of
color~1 in a 3-color
instance of the choice dictionary
of Theorem~\ref{thm:unsystematic-c}.
To replace $W$ by $W\cap N_G(u)$,
temporarily color the elements of $W\cap N_G(u)$
with color~2 in a scan over $N_G(u)$
and subsequently replace
first the color 1 by 0 and then the
color 2 by 1 at all vertices.
Following the initialization,
the time needed is $O(t)$ times the sum of
$\deg_G(V^*)$
and the number of color decrements.
Since the number of color decrements is bounded by
the number of color increments,
which is at most $\deg_G(V^*)$,
the total running time is
$O(t\deg_G(V^*))$.

Assume now that the adjacency lists of $G$ are sorted.
Then, as long as $|W|\ge{n/w}$, we store $W$
differently, namely through its bit-vector representation.
We can find a vertex $u$ in $W$ in $O(1+{n/w})$ time,
and to replace $W$ by $W\cap N_G(u)$, we step
through the (sorted) adjacency list of $u$
and the bit-vector representation of $W$ in parallel,
clearing all bits in the latter that do not
correspond to neighbors of $u$.
This can be done in $O(\deg_G(u)+{n/w})$ time.
Since the procedure is carried out at most once
with $\deg_G(u)<{n/w}$, the total time used
until $|W|<{n/w}$ is $O(\deg_G(V^*))$.

When $|W|$ has dropped below ${n/w}$, we spend $O({n/w})$
time to extract $W$ from the bit vector and store the
set in a colorless instance $D$ of the choice dictionary
of Theorem~\ref{thm:unsystematic-f}.
To replace $W$ by $W\cap N_G(u)$, we scan
over $N_G(u)$ and store $W\cap N_G(u)$ in an instance $D'$
of the data structure of Theorem~\ref{thm:mlog},
after which we empty $D$, extract all elements
stored in $D'$ and insert them in~$D$.
The time spent in this part of the algorithm
is obviously $O(\deg_G(V^*))$.
Since $D'$ never contains more than $n/w$ elements,
it occupies $O({{n(\log w)}/w}+\log n)$ bits.
The space bound follows.
\end{proof}

\bibliography{all}

\begin{thebibliography}{10}

\bibitem{AhoHU74}
Alfred~V. Aho, John~E. Hopcroft, and Jeffrey~D. Ullman.
\newblock {\em The Design and Analysis of Computer Algorithms}.
\newblock Addison-Wesley, 1974.

\bibitem{Ajt88}
M.~Ajtai.
\newblock A lower bound for finding predecessors in {Y}ao's cell probe model.
\newblock {\em Combinatorica}, 8(3):235--247, 1988.

\bibitem{AjtFK84}
M.~Ajtai, M.~Fredman, and J.~Koml{\'{o}}s.
\newblock Hash functions for priority queues.
\newblock {\em Inform. Control}, 63(3):217--225, 1984.

\bibitem{AndT07}
Arne Andersson and Mikkel Thorup.
\newblock Dynamic ordered sets with exponential search trees.
\newblock {\em J. {ACM}}, 54(3), 2007.

\bibitem{AngV79}
D.~Angluin and L.~G. Valiant.
\newblock Fast probabilistic algorithms for {H}amiltonian circuits and
  matchings.
\newblock {\em J. Comput. Syst. Sci.}, 18(2):155--193, 1979.

\bibitem{Arr08}
Diego Arroyuelo.
\newblock An improved succinct representation for dynamic $k$-ary trees.
\newblock In {\em Proc. 19th Annual Symposium on Combinatorial Pattern Matching
  ({CPM} 2008)}, volume 5029 of {\em LNCS}, pages 277--289. Springer, 2008.

\bibitem{BanCR16}
Niranka Banerjee, Sankardeep Chakraborty, and Venkatesh Raman.
\newblock Improved space efficient algorithms for {BFS}, {DFS} and
  applications.
\newblock In {\em Proc. 22nd International Conference on Computing and
  Combinatorics ({COCOON} 2016)}, volume 9797 of {\em LNCS}, pages 119--130.
  Springer, 2016.

\bibitem{BarAHM12}
J{\'{e}}r{\'{e}}my Barbay, Luca~Castelli Aleardi, Meng He, and J.~Ian Munro.
\newblock Succinct representation of labeled graphs.
\newblock {\em Algorithmica}, 62(1):224--257, 2012.

\bibitem{BeaF02}
Paul Beame and Faith~E. Fich.
\newblock Optimal bounds for the predecessor problem and related problems.
\newblock {\em J. Comput. Syst. Sci.}, 65(1):38--72, 2002.

\bibitem{BosDDHM13}
Prosenjit Bose, Karim Dou{\"{\i}}eb, Vida Dujmovi{\'c}, John Howat, and Pat
  Morin.
\newblock Fast local searches and updates in bounded universes.
\newblock {\em Comput. Geom.}, 46(2):181--189, 2013.

\bibitem{BraK88}
Gilles Brassard and Sampath Kannan.
\newblock The generation of random permutations on the fly.
\newblock {\em Inform. Process. Lett.}, 28(4):207--212, 1988.

\bibitem{BriT93}
Preston Briggs and Linda Torczon.
\newblock An efficient representation for sparse sets.
\newblock {\em {ACM} Lett. Program. Lang. Syst.}, 2(1-4):59--69, 1993.

\bibitem{BroCDMS99}
Andrej Brodnik, Svante Carlsson, Erik~D. Demaine, J.~Ian Munro, and Robert
  Sedgewick.
\newblock Resizable arrays in optimal time and space.
\newblock In {\em Proc. 6th International Workshop on Algorithms and Data
  Structures ({WADS} 1999)}, volume 1663 of {\em LNCS}, pages 37--48. Springer,
  1999.

\bibitem{BroM99}
Andrej Brodnik and J.~Ian Munro.
\newblock Membership in constant time and almost-minimum space.
\newblock {\em {SIAM} J. Comput.}, 28(5):1627--1640, 1999.

\bibitem{Cla96}
David Clark.
\newblock {\em Compact Pat Trees}.
\newblock PhD thesis, University of Waterloo, 1996.

\bibitem{DavR11}
Pooya Davoodi and Satti~Srinivasa Rao.
\newblock Succinct dynamic cardinal trees with constant time operations for
  small alphabet.
\newblock In {\em Proc. 8th Annual Conference on Theory and Applications of
  Models of Computation ({TAMC} 2011)}, volume 6648 of {\em LNCS}, pages
  195--205. Springer, 2011.

\bibitem{DieKMHRT94}
Martin Dietzfelbinger, Anna Karlin, Kurt Mehlhorn, Friedhelm {Meyer auf der
  Heide}, Hans Rohnert, and Robert~E. Tarjan.
\newblock Dynamic perfect hashing: Upper and lower bounds.
\newblock {\em {SIAM} J. Comput.}, 23(4):738--761, 1994.

\bibitem{DodPT10}
Yevgeniy Dodis, Mihai P{\v a}tra{\c s}cu, and Mikkel Thorup.
\newblock Changing base without losing space.
\newblock In {\em Proc. 42nd {ACM} Symposium on Theory of Computing ({STOC}
  2010)}, pages 593--602. {ACM}, 2010.

\bibitem{Eli75}
Peter Elias.
\newblock Universal codeword sets and representations of the integers.
\newblock {\em {IEEE} Trans. Inform. Theory}, 21(2):194--203, 1975.

\bibitem{ElmHK15}
Amr Elmasry, Torben Hagerup, and Frank Kammer.
\newblock Space-efficient basic graph algorithms.
\newblock In {\em Proc. 32nd International Symposium on Theoretical Aspects of
  Computer Science ({STACS} 2015)}, volume~30 of {\em LIPIcs}, pages 288--301.
  Schloss Dagstuhl -- Leibniz-Zentrum f\"ur Informatik, 2015.

\bibitem{BoaKZ77}
P.~van Emde~Boas, R.~Kaas, and E.~Zijlstra.
\newblock Design and implementation of an efficient priority queue.
\newblock {\em Math. Syst. Theory}, 10(1):99--127, 1977.

\bibitem{FarM11}
Arash Farzan and J.~Ian Munro.
\newblock Succinct representation of dynamic trees.
\newblock {\em Theor. Comput. Sci.}, 412(24):2668--2678, 2011.

\bibitem{FerMM09}
Paolo Ferragina, Fabrizio Luccio, Giovanni Manzini, and S.~Muthukrishnan.
\newblock Compressing and indexing labeled trees, with applications.
\newblock {\em J. {ACM}}, 57(1), 2009.

\bibitem{FiaN93}
Amos Fiat and Moni Naor.
\newblock Implicit {O(1)} probe search.
\newblock {\em {SIAM} J. Comput.}, 22(1):1--10, 1993.

\bibitem{FreKS84}
Michael~L. Fredman, J{\'{a}}nos Koml{\'{o}}s, and Endre Szemer{\'{e}}di.
\newblock Storing a sparse table with {O}(1) worst case access time.
\newblock {\em J. {ACM}}, 31(3):538--544, 1984.

\bibitem{FreS89}
Michael~L. Fredman and Michael~E. Saks.
\newblock The cell probe complexity of dynamic data structures.
\newblock In {\em Proc. 21st Annual {ACM} Symposium on Theory of Computing
  ({STOC} 1989)}, pages 345--354. {ACM}, 1989.

\bibitem{FreW93}
Michael~L. Fredman and Dan~E. Willard.
\newblock Surpassing the information theoretic bound with fusion trees.
\newblock {\em J. Comput. Syst. Sci.}, 47(3):424--436, 1993.

\bibitem{FreK16}
Kimmo Fredriksson and Pekka Kilpel{\"{a}}inen.
\newblock Practically efficient array initialization.
\newblock {\em Softw. Pract. Exper.}, 46(4):435--467, 2016.

\bibitem{GalM07}
Anna G{\'{a}}l and Peter~Bro Miltersen.
\newblock The cell probe complexity of succinct data structures.
\newblock {\em Theor. Comput. Sci.}, 379(3):405--417, 2007.

\bibitem{GeaRRR06}
Richard~F. Geary, Naila Rahman, Rajeev Raman, and Venkatesh Raman.
\newblock A simple optimal representation for balanced parentheses.
\newblock {\em Theor. Comput. Sci.}, 368(3):231--246, 2006.

\bibitem{Gol07}
Alexander Golynski.
\newblock Optimal lower bounds for rank and select indexes.
\newblock {\em Theor. Comput. Sci.}, 387(3):348--359, 2007.

\bibitem{GolGGRR07}
Alexander Golynski, Roberto Grossi, Ankur Gupta, Rajeev Raman, and
  Satti~Srinivasa Rao.
\newblock On the size of succinct indices.
\newblock In {\em Proc. 15th Annual European Symposium on Algorithms ({ESA}
  2007)}, volume 4698 of {\em LNCS}, pages 371--382. Springer, 2007.

\bibitem{GupHSV07a}
Ankur Gupta, Wing{-}Kai Hon, Rahul Shah, and Jeffrey~Scott Vitter.
\newblock Compressed data structures: Dictionaries and data-aware measures.
\newblock {\em Theor. Comput. Sci.}, 387(3):313--331, 2007.

\bibitem{GupHSV07b}
Ankur Gupta, Wing{-}Kai Hon, Rahul Shah, and Jeffrey~Scott Vitter.
\newblock A framework for dynamizing succinct data structures.
\newblock In {\em Proc. 34th International Colloquium on Automata, Languages
  and Programming ({ICALP} 2007)}, volume 4596 of {\em LNCS}, pages 521--532.
  Springer, 2007.

\bibitem{Hag98}
Torben Hagerup.
\newblock Sorting and searching on the word {RAM}.
\newblock In {\em Proc. 15th Annual Symposium on Theoretical Aspects of
  Computer Science ({STACS} 1998)}, volume 1373 of {\em LNCS}, pages 366--398.
  Springer, 1998.

\bibitem{Hag15}
Torben Hagerup.
\newblock Easy multiple-precision divisors and word-{RAM} constants.
\newblock In {\em Proc. 40th International Symposium on Mathematical
  Foundations of Computer Science ({MFCS} 2015), Part {II}}, volume 9235 of
  {\em LNCS}, pages 372--383. Springer, 2015.

\bibitem{HagKL17}
Torben Hagerup, Frank Kammer, and Moritz Laudahn.
\newblock Space-efficient {E}uler partition and bipartite edge coloring.
\newblock In {\em Proc.\ 10th International Conference on Algorithms and
  Complexity (CIAC 2017)}, LNCS. Springer, 2017. To appear.

\bibitem{HeM10}
Meng He and J.~Ian Munro.
\newblock Succinct representations of dynamic strings.
\newblock In {\em Proc. 17th International Symposium on String Processing and
  Information Retrieval ({SPIRE} 2010)}, volume 6393 of {\em LNCS}, pages
  334--346. Springer, 2010.

\bibitem{HerS08}
Maurice Herlihy and Nir Shavit.
\newblock {\em The Art of Multiprocessor Programming}.
\newblock Morgan Kaufmann Publishers, 2008.

\bibitem{Jac89}
Guy Jacobson.
\newblock Space-efficient static trees and graphs.
\newblock In {\em Proc. 30th Annual Symposium on Foundations of Computer
  Science (FOCS 1989)}, pages 549--554. {IEEE} Computer Society, 1989.

\bibitem{Joh82}
Donald~B. Johnson.
\newblock A priority queue in which initialization and queue operations take
  ${O}(\log\log {D})$ time.
\newblock {\em Math. Syst. Theory}, 15(1):295--309, 1982.

\bibitem{KamKL16}
Frank Kammer, Dieter Kratsch, and Moritz Laudahn.
\newblock Space-efficient biconnected components and recognition of outerplanar
  graphs.
\newblock In {\em Proc. 41st International Symposium on Mathematical
  Foundations of Computer Science ({MFCS} 2016)}, volume~58 of {\em LIPIcs},
  pages 56:1--56:14. Schloss Dagstuhl -- Leibniz-Zentrum f\"ur Informatik,
  2016.

\bibitem{Knu10}
Donald~E. Knuth.
\newblock {\em The Art of Computer Programming, Volume {4A}: Combinatorial
  Algorithms, Part~1}.
\newblock Addison-Wesley, 2011.

\bibitem{KreS99}
Donald~L. Kreher and Douglas~R. Stinson.
\newblock {\em Combinatorial Algorithms: Generation, Enumeration, and Search}.
\newblock CRC Press, 1999.

\bibitem{LeeP09}
Sunho Lee and Kunsoo Park.
\newblock Dynamic rank/select structures with applications to run-length
  encoded texts.
\newblock {\em Theor. Comput. Sci.}, 410(43):4402--4413, 2009.

\bibitem{MakN08}
Veli M{\"{a}}kinen and Gonzalo Navarro.
\newblock Dynamic entropy-compressed sequences and full-text indexes.
\newblock {\em {ACM} Trans. Algorithms}, 4(3), 2008.

\bibitem{MunRRR12}
J.~Ian Munro, Rajeev Raman, Venkatesh Raman, and S.~Srinivasa Rao.
\newblock Succinct representations of permutations and functions.
\newblock {\em Theor. Comput. Sci.}, 438:74--88, 2012.

\bibitem{MunRS01}
J.~Ian Munro, Venkatesh Raman, and Adam~J. Storm.
\newblock Representing dynamic binary trees succinctly.
\newblock In {\em Proc. 12th Annual Symposium on Discrete Algorithms ({SODA}
  2001)}, pages 529--536. {ACM/SIAM}, 2001.

\bibitem{NavS14}
Gonzalo Navarro and Kunihiko Sadakane.
\newblock Fully functional static and dynamic succinct trees.
\newblock {\em {ACM} Trans. Algorithms}, 10(3):16:1--16:39, 2014.

\bibitem{Pag01}
Rasmus Pagh.
\newblock Low redundancy in static dictionaries with constant query time.
\newblock {\em {SIAM} J. Comput.}, 31(2):353--363, 2001.

\bibitem{Pat08}
Mihai P{\v a}tra{\c s}cu.
\newblock Succincter.
\newblock In {\em Proc. 49th Annual {IEEE} Symposium on Foundations of Computer
  Science ({FOCS} 2008)}, pages 305--313. {IEEE} Computer Society, 2008.

\bibitem{PatD06}
Mihai P{\v a}tra{\c s}cu and Erik~D. Demaine.
\newblock Logarithmic lower bounds in the cell-probe model.
\newblock {\em {SIAM} J. Comput.}, 35(4):932--963, 2006.

\bibitem{PatT14}
Mihai P{\v a}tra{\c s}cu and Mikkel Thorup.
\newblock Dynamic integer sets with optimal rank, select, and predecessor
  search.
\newblock In {\em Proc. 55th {IEEE} Annual Symposium on Foundations of Computer
  Science ({FOCS} 2014)}, pages 166--175. {IEEE} Computer Society, 2014.

\bibitem{Ram96}
Rajeev Raman.
\newblock Priority queues: Small, monotone and trans-dichotomous.
\newblock In {\em Proc. 40th Annual European Symposium on Algorithms ({ESA}
  1996)}, volume 1136 of {\em LNCS}, pages 121--137. Springer, 1996.

\bibitem{RamRR01}
Rajeev Raman, Venkatesh Raman, and S.~Srinivasa Rao.
\newblock Succinct dynamic data structures.
\newblock In {\em Proc 7th International Workshop on Algorithms and Data
  Structures ({WADS} 2001)}, volume 2125 of {\em LNCS}, pages 426--437.
  Springer, 2001.

\bibitem{RamRS07}
Rajeev Raman, Venkatesh Raman, and Srinivasa~Rao Satti.
\newblock Succinct indexable dictionaries with applications to encoding {\it
  k}-ary trees, prefix sums and multisets.
\newblock {\em {ACM} Trans. Algorithms}, 3(4), 2007.

\bibitem{RamR03}
Rajeev Raman and Satti~Srinivasa Rao.
\newblock Succinct dynamic dictionaries and trees.
\newblock In {\em Proc. 30th International Colloquium on Automata, Languages
  and Programming ({ICALP} 2003)}, volume 2719 of {\em LNCS}, pages 357--368.
  Springer, 2003.

\bibitem{RusNO11}
Lu{\'{\i}}s M.~S. Russo, Gonzalo Navarro, and Arlindo~L. Oliveira.
\newblock Fully compressed suffix trees.
\newblock {\em {ACM} Trans. Algorithms}, 7(4), 2011.

\bibitem{Ruz08}
Milan Ru{\v z}i{\'c}.
\newblock Constructing efficient dictionaries in close to sorting time.
\newblock In {\em Proc. 35th International Colloquium on Automata, Languages
  and Programming ({ICALP} 2008), Part {I}}, volume 5125 of {\em LNCS}, pages
  84--95. Springer, 2008.

\bibitem{SedW11}
Robert Sedgewick and Kevin Wayne.
\newblock {\em Algorithms}.
\newblock Addison-Wesley, 4th edition, 2011.

\bibitem{TarY79}
Robert~Endre Tarjan and Andrew~Chi{-}Chih Yao.
\newblock Storing a sparse table.
\newblock {\em Commun. {ACM}}, 22(11):606--611, 1979.

\bibitem{Tho00b}
Mikkel Thorup.
\newblock On {RAM} priority queues.
\newblock {\em {SIAM} J. Comput.}, 30(1):86--109, 2000.

\bibitem{Wil84}
Dan~E. Willard.
\newblock New trie data structures which support very fast search operations.
\newblock {\em J. Comput. Syst. Sci.}, 28(3):379--394, 1984.

\bibitem{Yao81}
Andrew~Chi{-}Chih Yao.
\newblock Should tables be sorted?
\newblock {\em J. {ACM}}, 28(3):615--628, 1981.

\end{thebibliography}

\end{document}